\numberwithin{equation}{section}
\newtheorem{theorem}{Theorem}
\newtheorem{lemma}{Lemma}
\newtheorem{corollary}{Corollary}
\newtheorem{Proposition}{Proposition}
\newtheorem{RHP}{RHP}
\DeclareMathOperator*{\res}{Res}
\begin{document}

\title{ Long-time asymptotic behavior of the  Novikov equation in  space-time  solitonic regions  }
\author{Yiling YANG$^1$\thanks{\ Email address: 19110180006@fudan.edu.cn } \  and  \  Engui FAN$^{1}$\thanks{\ Corresponding author and email address: faneg@fudan.edu.cn } }
\footnotetext[1]{ \  School of Mathematical Sciences  and Key Laboratory   for Nonlinear Science, Fudan   University, Shanghai 200433, P.R. China.}

\date{ }

\maketitle
\begin{abstract}
	\baselineskip=16pt

In this paper, we   study the long time asymptotic behavior for   the Cauchy problem    of the Novikov equation with $3\times 3$  matrix spectral problem
\begin{align}
	&u_{t}-u_{txx}+4 u_{x}=3uu_xu_{xx}+u^2u_{xxx}, \nonumber \\
	&u(x, 0)=u_{0}(x),\nonumber
\end{align}	
where $u_0(x)$    $u_0(x)\rightarrow \kappa>0, \ x\rightarrow \pm \infty$  and $u_0(x)-\kappa$ is
 assumed in the Schwarz space.
It is shown that
the solution of the  Cauchy problem  can be characterized   via a  Riemann-Hilbert  problem in a new scale $(y,t)$ with
$$y=x-\int_{x}^{\infty}\left( (u-u_{xx}+1)^{2/3} -1\right) ds.$$
 In different space-time solitonic regions of $\xi=y/t\in (-\infty,-1/8)\cup(1,+\infty) $ and $\xi \in(-1/8,1)$,  we  apply $\overline\partial$ steepest descent method  to
obtain  the different long time asymptotic expansions  of the solution $u(y,t)$.
 The corresponding  residual error order is $\mathcal{O}(t^{-1+\rho})$ and $\mathcal{O}(t^{-3/4})$ respectively
   from  a $\overline\partial$-equation.
Our result  implies that  soliton resolution   can
be characterized with  an $N(\Lambda)$-soliton whose parameters are modulated by
a sum of localized soliton-soliton
 interactions as one moves through the regions.\\
{\bf Keywords:}   Novikov equation;  Riemann-Hilbert problem,    $\overline\partial$   steepest descent method.\\
{\bf MSC:} 35Q51; 35Q15; 37K15; 35C20.

\end{abstract}

\baselineskip=17pt

\tableofcontents

\quad

\section {Introduction}

In this paper, we study the long time asymptotic behavior for the initial value problem  on a nonzero background  for the
Novikov equation
\begin{align}
	&u_{t}-u_{txx}+4 u_{x}=3uu_xu_{xx}+u^2u_{xxx},\label{Novikov1}\\
	&u(x, 0)=u_{0}(x), \quad x \in \mathbb{R},\  t>0,\\
	&u_{0}(x)\to\kappa>0,\ x\to\pm\infty,\label{Novikov3}
\end{align}	
where    $u=u(x,t)$  is a real-valued function of   $x$ and $t$. By introducing the momentum variable $m=u-u_{x x}$,  the Novikov equation (\ref{Novikov1}) can be rewritten as
\begin{align}
	&m_{t}+\left(m_xu+3mu_x\right)u=0,
\end{align}
or equivalently,
\begin{align}
	&(m^{2/3})_{t}+\left(u^2m^{2/3}\right)_x=0.\label{Novikov0}
\end{align}
The Novikov equation (\ref{Novikov1}) as a new integrable system was derived in the search for a classification of integrable
generalized Camassa-Holm equations of the form
\begin{align}
	(1-\partial_x^2)u_t=F(u,u_x,u_{xx},...)
\end{align}
possessing infinite hierarchies of higher symmetries.  A   scalar    Lax pair  involving  the third order derivative with respect to $x$
was given \cite{39,37}.  Further, by using the prolongation algebra method,
Hone and Wang proposed a $3\times 3$ matrix Lax pair and a bi-Hamiltonian structure for the Novikov equation (\ref{Novikov1}) \cite{HW29}. This Lax pair was used
to  construct   explicit  peakon solutions  on zero background, which  replicates  a feature characterizing
  the waves of great height-waves of largest amplitude that were exact solutions of the governing equations for water waves   \cite{HW29,CA1,CA2,CA3,TJF}.
   Hone et al. further  derived the
explicit formulas for multipeakon solutions of the Novikov equation (\ref{Novikov1})   \cite{HW28}.  By using  the  Hirota bilinear method,
   Matsuno  presented parametric representations of smooth multisoliton solutions
  for  the Novikov equation (\ref{Novikov1}) on a nonzero constant  background \cite{M36}.
He also demonstrated that a smooth soliton converged to a peakon in the limit where the constant background tended to $0$ while the velocity of the soliton is fixed.
Wu  et al.  obtained   N-soliton solutions to the Novikov equation  through Darboux transformations \cite{Wu6}.    Recently, Chang et al.
applied Pfaffian technique  to study multipeakons of the Novikov equation,
a link between the Novikov peakons and the finite Toda lattice of BKP type as well as  Hermite-Pade approximation to the Novikov peakon problem
\cite{CH2018, CH2022}.  Boutet de Monvel et al.  developed the inverse scattering theory  to
 the Novikov equation  (\ref{Novikov1}) with  nonzero constant background \cite{RHP},  where under a  simple  transformation
\begin{align}
		u(x, t)\rightarrow \kappa \tilde{u}(x-\kappa^2t, \kappa^2t)+\kappa,
\end{align}
the  Cauchy problem of Novikov equation (\ref{Novikov1})--(\ref{Novikov3}) reduces to the following  new Cauchy
problem on zero background
\begin{align}
	&(\tilde{m}^{2/3})_{t}+\left(\tilde{m}^{2/3}\left(u^{2}+2u\right)\right)_{x}=0, \  \tilde{m}=u-u_{x x}+1,\label{Novikov}\\
	&u(x,0)=u_0(x).\label{Novikov2}
\end{align}
where $u_0(x)$ satisfies the sign condition
\begin{equation*}
	u_0(x)- u_{0,xx}(x)+1>0.
\end{equation*}
Then there exists   a unique global solution $u(x, t)$  of the Novikov equation (\ref{Novikov1}),
 such that  $u(x,t)\to0$ as $x\to\pm\infty$ for all $t>0$  \cite{32}.

The Novikov equation (\ref{Novikov})  also  admits $3\times 3$  matrix spectral problem   as the Sasa-Satuma equation, Degasperis-Procesi  (DP) equation,  good Boussinesq equation,  three-wave equation
  \cite{Deift1982,Constantin1,Monvel3,Lenells1,RHP, Lenells2, Monvel2, Geng1,  Geng3, YF2}.
 However, the Novikov equation and  DP equation  possess  numerous common characteristics in their
  Riemann-Hilbert (RH)  problem  and   face   some    difficulties  \cite{RHP}.
One of the difficulties is the Lax pair associated with (\ref{Novikov}) has six  singularities at  $\varkappa_n=e^{\frac{n\pi i}{3}}$ for $n=1,...,6$.
  Boutet de Monvel et al.  developed  a   steepest descent approach to obtain the  long time asymptotic behavior for   the   DP
 equation \cite{Monvel2}.  However, to the best of our knowledge, the
the long time asymptotics  of the Novikov equation is still not presented yet.

\begin{figure}
\begin{center}
\begin{tikzpicture}
\draw[yellow!20, fill=yellow!20] (0,0)--(4,0)--(4,2)--(0, 2);
\draw[green!20, fill=green!20] (0,0 )--(4,2)--(0,2)--(0,0);
\draw[blue!20, fill=blue!20] (0,0 )--(-4,1.1)--(-4,2)--(0, 2)--(0,0);
\draw[yellow!20, fill=yellow!20] (0,0 )--(-4,0)--(-4,1.1)--(0,0);
\draw [ -> ] (-4.6,0)--(4.6,0);
\draw [ -> ](0,0)--(0,3.5);
\draw [red,thick  ](0,0 )--(4,2);
\draw [red,thick  ](0,0 )--(-4,1.1);
\node    at (0,-0.3)  {$0$};
\node    at (5,0)  {y};
\node    at (0,3.8 )  {t};
\node  [below]  at (1.3,1.8) {\footnotesize $0<\xi<1$};
\node  [below]  at (-1.5,1.8) {\footnotesize $-1/8<\xi<0 $};
\node  [below]  at (2.3,0.8) {\footnotesize $ \xi>1$};
\node  [below]  at (-3,0.6) {\footnotesize $ \xi<-1/8 $};
\node  [below]  at (-5,1.6) {\footnotesize $ \xi=-1/8 $};
\node  [below]  at (4.7,2.3) {\footnotesize $ \xi=1 $};
\end{tikzpicture}
\end{center}
\caption{\footnotesize  Asymptotic approximations    of the  Novikov equation  in  different space-time  solitonic regions,
where without stationary  phase points  in yellow region;  12  and  24  stationary   phase points in  green and  blue  regions   respectively.   }
\label{result1}
\end{figure}
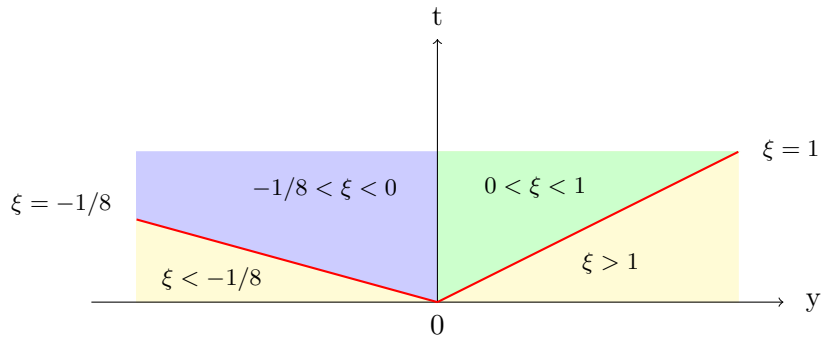

Motivated by their  work \cite{RHP,Monvel2},    we  carry out  the long-time analysis
for the Novikov equation in this paper.    We   obtain    different leading  order  asymptotic approximation  for the Novikov equation (\ref{Novikov})
in different  space-time  solitonic regions
(see Figure \ref{result1} and the detail results   are  given in the  Theorem \ref{last} in the section 8).  Our  main tool is  the   nonlinear $\bar{\partial}$ steepest approach introduced by McLaughlin-Miller,
  which was first  applied to analyze asymptotic of orthogonal polynomials \cite{MandM2006,MandM2008},
   later applied to analyze long time asymptotics of  integrable systems   \cite{DandMNLS,fNLS,Liu3,SandRNLS,YF1,YF3,YF4}.
The   first work on   the long-time behavior of nonlinear wave equations  with inverse scattering method
was  carried out  by Manakov   \cite{Manakov1974}.
Later,  this  method   was applied to other     integrable systems
 such as  KdV,  Landau-Lifshitz  and the reduced Maxwell-Bloch   system \cite{ZM1976,SPC,BRF,Foka}.
In 1993,
Deift and Zhou  developed a  nonlinear steepest descent method to rigorously obtain the long-time asymptotics behavior of the solution for the mKdV equation
 \cite{RN6}. Since then    this method
has been widely  applied on the   NLS equation, KdV equation, derivative NLS equation,
 Fokas-Lenells equation,  short-pulse equation and  Camassa-Holm equation,  etc. \cite{RN9,RN10,Grunert2009,MonvelCH,xu2015,xusp}.

Our  paper is arranged as follows.   In section \ref{sec2},   we recall  some main  results on
 the construction  process  of  RH  problem    \cite{RHP},  which will be used
 to analyze   long-time asymptotics  of the Novikov equation in our paper. Further  we demonstrate that  the norm of  reflection
  coefficient
  $\parallel r(z)\parallel_{L^\infty}<1$ to ensure the existence of solutions of  the RH problem $M (z)$.
  In section \ref{sec3},
a $T(z)$ function  is introduced to  normalize the RH problem
 $M (z)$ into  a new   RH problem  for  $M^{(1)}(z)$,  which  admits a regular discrete spectrum and  two  triangular  decompositions of the jump matrix
near $z=0$.
In section \ref{sec4},  by introducing a matrix-valued  function  $R(z)$  to make a continuous extension of  $M^{(1)}(z)$
into    a hybrid  $\bar{\partial}$-RH problem  for  $M^{(2)}(z)$.
  Further   we decompose  $M^{(2)}(z)$    into a
 pure RH   problem  for  $M^{R}(z)$ and a  pure $\bar{\partial}$ Problem for  $M^{(3)}(z)$.
 We find that  the contribution to the  $M^{R}(z)$ come from  three aspects  (
two aspects in the case  without  stationary phase points).
 The first   contribution comes from
 discrete spectrum, where  a  modified reflectionless RH problem $M^{(r)}(z)$   for the soliton components  is  solved   in Section \ref{sec6}.
 The second  contribution  in Section \ref{secpc}  comes from jump contours, which is   approximated by  a  local solvable  model $M^{lo}(z)$.
The third  contribution   is  from   a  pure jump RH problem near singularities $\varkappa_n$.
  Finally  the  residual  error function  is  is given  with  a small RH problem.
 In Section \ref{sec8},   we analyze  the $\bar{\partial}$-problem  for $M^{(3)}(z)$.
   Finally, in Section \ref{sec9},   based on  the result obtained above,   a relation formula
   is found
\begin{align}
 M(z) = M^{(3)}(z)E(z)M^{(r)}(z)R^{(2)}(z)^{-1}T(z)^{-1},\nonumber
\end{align}
from which   we then obtain the   long-time   asymptotic behavior  for the Novikov equation (\ref{Novikov}) via a reconstruction formula.

\section {Direct scattering  and basic   RH problem}\label{sec2}

To state our results precisely we introduce notation and function spaces  used in this paper.
If $I$ is an interval on the real line $\mathbb{R}$ and $X$ is a  Banach space, then $C^0(I,X)$ denotes the space of continuous functions on $I$ taking values in $X$. It is equipped with the norm
\begin{equation*}
	\|f\|_{C^{0}(I, X)}=\sup _{x \in I}\|f(x)\|_{X}.
\end{equation*}
Besides, we denote $C^0_B(X)$ as a  space of bounded continuous functions on $X$.

If the  entries  $f_1$ and $f_2$  are in space $X$,  then we call vector  $\vec{f}=(f_1,f_2)^T$  is in space $X$ with $\parallel \vec{f}\parallel_X\triangleq \parallel f_1\parallel_X+\parallel f_2\parallel_X$. Similarly, if every  entries of  matrix $A$ are in space $X$, then we call $A$ is also in space $X$.
We introduce  the normed spaces.
  A weighted $L^p(\mathbb{R})$ space is specified by
$$L^{p,s}(\mathbb{R})  =  \left\lbrace f(x)\in L^p(\mathbb{R}) | \hspace{0.1cm} |x|^sf(x)\in L^p(\mathbb{R}) \right\rbrace; $$
  A Sobolev space is defined by
$$W^{k,p}(\mathbb{R})  =  \left\lbrace f(x)\in L^p(\mathbb{R}) | \hspace{0.1cm} \partial^j f(x)\in L^p(\mathbb{R})  \text{ for }j=1,2,...,k \right\rbrace;$$
 A weighted Sobolev space  is defined by
 $$H^{k,s}(\mathbb{R})   =  \left\lbrace f(x)\in L^2(\mathbb{R}) | \hspace{0.1cm} (1+|x|^s)\partial^jf(x)\in L^2(\mathbb{R}),  \text{ for }j=1,...,k \right\rbrace.$$
 And the norm of $f(x)\in L^{p}(\mathbb{R})$ and $g(x)\in L^{p,s}(\mathbb{R})$ are  abbreviated to $\parallel f\parallel_{p}$,
  $\parallel g\parallel_{p,s}$, respectively.

\subsection{Spectral analysis}

\quad The Novikov equation (\ref{Novikov})      admits a Lax pair   \cite{HW29,RHP}
\begin{equation}
\breve{\Phi}_x (k) = \breve{X} \breve{\Phi}(k),\hspace{0.5cm}\breve{\Phi}_t(k) =\breve{T} \breve{\Phi}(k), \label{lax0}
\end{equation}
where
\begin{equation}
	\breve{X}=\left(\begin{array}{ccc}
		0&	k\tilde{m} & 1 \\
		0&0 & k\tilde{m}\\
		1&0&0
	\end{array}\right),\nonumber
\end{equation}
\begin{equation}
	\breve{T}=\left(\begin{array}{ccc}
		-(u+1)u_x+\frac{1}{3k^2}&	\frac{u_x}{k}-(u^2+2u)k\tilde{m} & u_x^2+1 \\
		\frac{u+1}{k}&-\frac{2}{3k^2} & -\frac{u_x}{k}-(u^2+2u)k\tilde{m}\\
		-u^2-2u & \frac{u+1}{k} &(u+1)u_x+\frac{1}{3k^2}
	\end{array}\right),
 \nonumber
\end{equation}
and  $k$ is a  spectral parameter.
In order to control the large $k$ behavior of the solutions of above Lax pair, we introduce a new transformation
\begin{align}
	\Phi (z) \triangleq P^{-1}(z)D^{-1}(x,t)\breve{\Phi}(k),
\end{align}
where  $z$ is a new uniformization variable to  avoid multi-value  of   eigenvalue $k$ and
\begin{align}
	&D(x,t)\triangleq\text{diag}\{q(x,t),q^{-1}(x,t),1\},\quad q=q(x,t)\triangleq \tilde{m}^{1/3}(x,t),\nonumber\\
	&P(z)\triangleq\left(\begin{array}{ccc}
		\lambda_1^2(z)&	\lambda_2^2(z) & \lambda_3^2(z) \\
		k (z) &k(z) & k(z)\\
			\lambda_1(z)&	\lambda_2(z) & \lambda_3(z)
	\end{array}\right),\label{P}\\
&P^{-1}(z)\triangleq\left(\begin{array}{ccc}
	\frac{1}{3\lambda_1^2(z)-1}&	0 & 0 \\
	0&\frac{1}{3\lambda_2^2(z)-1} & 0\\
	0&	0 & \frac{1}{3\lambda_3^2(z)-1}
\end{array}\right)\left(\begin{array}{ccc}
1&	\frac{z}{\lambda_1(z)} & \lambda_1(z) \\
1&\frac{z}{\lambda_2(z)} & \lambda_2(z)\\
1&	\frac{z}{\lambda_3(z)} & \lambda_3(z)
\end{array}\right)\label{P-1}.
\end{align}
Here $\lambda_j(z)$, $j=1,2,3 $   are  three  roots   of  the following  algebraic equation
\begin{align}
	\lambda^3(z)-\lambda (z)=k^2(z),
\end{align}
where
\begin{align}
	&k(z)^2 =\frac{1}{3\sqrt{3}}\left(z^3+\frac{1}{z^3} \right), \quad\lambda_j(z) =\frac{1}{\sqrt{3}}\left(\omega^j z+\frac{1}{\omega^j z} \right),\quad \omega=e^{\frac{2i\pi}{3}}.\label{lambda}
\end{align}
Obviously, $k(\kappa_n)=0$, for $\kappa_n=e^{\frac{i\pi}{6}+\frac{i\pi(n-1)}{3}}$, $n=1,...,6$.
By this transformation, $\Phi$ admits a new Lax pair as
\begin{align}
	& \Phi_x-q^2\Lambda(z)\Phi=U\Phi,\\
	&\Phi_t+\left[(u^2+2u)q^2\Lambda(z)-A(z) \right] \Phi=V\Phi,\label{laxphi}
\end{align}
where
\begin{align}
	&\Lambda(z)\triangleq\text{diag}\{\lambda_1(z),\lambda_2(z),\lambda_3(z)\},  \ \ A(z)\triangleq\frac{1}{3k^2}+\Lambda(z)^{-1},\nonumber\\
	&U\triangleq U_1U_2, \ \ V\triangleq U_1(V_1+V_2\Lambda),\label{laxU}\\
	&U_1\triangleq\text{diag} \left\{\frac{1}{3\lambda_1^2-1},\frac{1}{3\lambda_2^2-1},\frac{1}{3\lambda_3^2-1}\right\}, \nonumber\\
	&U_2\triangleq\left(\begin{array}{ccc}
		c_2\lambda_1&	c_1(\lambda_1\lambda_2-\lambda_2^2)+c_2\lambda_2 & c_1(\lambda_1\lambda_3-\lambda_3^2)+c_2\lambda_3 \\
		c_1(\lambda_1\lambda_2-\lambda_1^2)+c_2\lambda_1& c_2\lambda_2 & c_1(\lambda_3\lambda_2-\lambda_3^2)+c_2\lambda_3\\
		c_1(\lambda_1\lambda_3-\lambda_3^2)+c_2\lambda_1&	c_1(\lambda_3\lambda_2-\lambda_2^2)+c_2\lambda_2 & c_2\lambda_3
	\end{array}\right), \nonumber
\end{align}
with  $c_1=\frac{q_x}{q}$ and $c_2=q^{-2}-q^2$. $V_1$ has same  form of $U_1$ with $c_1$ and $c_2$ replaced by $c_3=-(u^2+2u)\frac{q_x}{q}$ and $c_4=(u^2+2u)q^2+\frac{u_x^2+1}{q^2}-1$ , respectively.
And we  denote $c_5=\frac{u_x}{q}$, $c_6=(u+1)q-1$, then $V_2$ can be written as
\begin{align}
	V_2\triangleq[V_2^{(jl)}]_{3\times3},\quad V_2^{(jl)}\triangleq c_5\left(\frac{1}{\lambda_l}-\frac{1}{\lambda_j} \right) +c_6\left( \frac{\lambda_j}{\lambda_l}+\frac{\lambda_l}{\lambda_j} \right).
\end{align}
Introduce a $3\times3$ diagonal function
$$Q(z;x,t)=y(x,t)\Lambda(z)+tA(z),$$
where
\begin{align}
&Q_x=q^2\Lambda, \quad Q_t=-\left(u^2+2u \right) q^2\Lambda-A, \nonumber\\
	&y(x,t)=x-\int_{x}^{\infty}\left( q^2(s,t)-1\right) ds\label{y}, \\
	&c_+(x,t)=\int_{x}^{\infty}\left( q^2(s,t)-1\right) ds.\nonumber
\end{align}
 Making a transformation
\begin{align}
	\Phi(z;x,t)=\mu(z;x,t)  e ^{Q},\label{transmu}
\end{align}
then  Lax pair (\ref{laxphi}) is changed to
\begin{align}
	\mu_x-[Q_x,\mu]=U\mu,\quad \mu_t-[Q_t,\mu]=V\mu,\label{laxmu}
\end{align}
whose  solutions  satisfy   the  Fredholm integral equations
\begin{equation}
	\mu^{\pm}(z;x,t)=I-\int_{x}^{\pm \infty}e^{-\hat{\Lambda}(z)\int_{x}^{s}q^2(v,t)dv}[U\mu_\pm(s,t;z)]ds\label{intmu}.
\end{equation}
We  define six  rays at $z=0$
$$\Sigma =\cup_{n=1}^6 L_n, \ \  \ L_n=e^{\frac{\pi(n-1)i}{3}}\mathbb{R}^+, \ n=1,\cdots,6, $$
which  divide the complex plane $ \mathbb{C}$   into  six open cones
$$S_n=\{z\in\mathbb{C};\arg z\in(   {(n-1)\pi}/{3}, { n \pi }/{3})\}, \   n=1, \cdots, 6,$$
see Figure \ref{figC}.   Denote the matrix
$$\mu^{\pm}=\left(  \mu^{\pm}_1, \mu^{\pm}_2, \mu^{\pm}_3 \right), $$
where  $ \mu^{\pm}_1$,  $ \mu^{\pm}_3$ and $ \mu^{\pm}_3$ are
the first, second and third column of $ \mu^{\pm}(z)$  respectively.
Then  from  (\ref{intmu}),   we can show that  $\left(\mu^{+}_1, \mu^{+}_2, \mu^{+}_3 \right)$ is  analytical  in   the domains
$$\left(\bar{S}_1\cup \bar{S}_2,\ \bar{S}_5\cup \bar{S}_6,\ \bar{S}_3\cup \bar{S}_4 \right).$$
  And   $\left(\mu^{-}_1, \mu^{-}_2, \mu^{-}_3 \right)$ is  analytical  in the domains
$$\left(\bar{S}_4\cup \bar{S}_5,\ \bar{S}_3\cup \bar{S}_2,\ \bar{S}_1\cup \bar{S}_6 \right).$$
 Here, $\bar{S}_j$ denotes the  closure of $S_j$ for $j=1,...,6$, respectively.
The initial points of integration $\infty_{j,l}$ are specified as follows for each matrix entry $(j,l)$ for $j,l=1,2,3$:
\begin{align}
	\infty_{j,l}=\left\{ \begin{array}{ll}
		+\infty,   &\text{if Re}\lambda_j\geq \text{Re}\lambda_l,\\[12pt]
		-\infty  , &\text{if Re}\lambda_j< \text{Re}\lambda_l.\\
	\end{array}\right.
\end{align}
Then (\ref{intmu}) can be rewritten as a system of   Fredholm integral equations
\begin{equation}
	\mu_{jl}(z;x,t)=I_{jl}-\int_{x}^{ \infty_{jl}}e^{-\int_{x}^{s}q^2(v,t)dv (\lambda_j(z)-\lambda_l(z))}[U\mu(s,t;z)]_{jl}ds\label{intmujl}.
\end{equation}

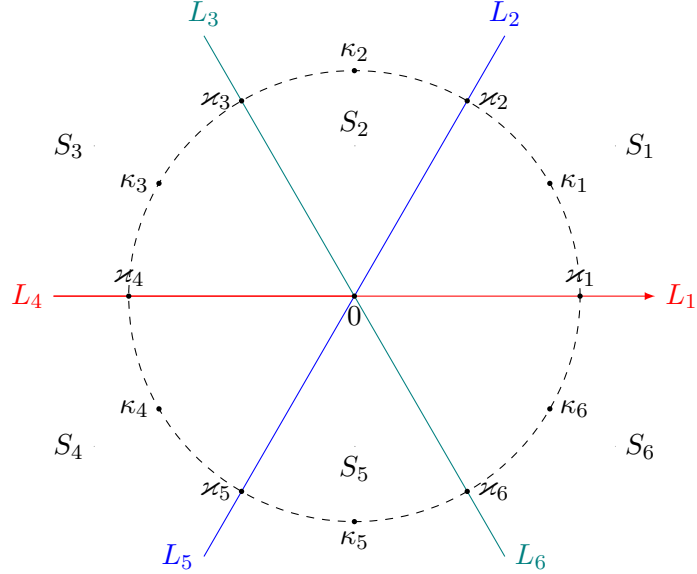
\begin{figure}[t]
	\centering
	\begin{tikzpicture}[node distance=2cm]
		\draw[red,-latex](-4,0)--(4,0)node[right]{$L_1$};
		\draw[blue](0,0)--(2,3.464)node[above]{$L_2$};
		\draw[dashed] (3,0) arc (0:360:3);
		\draw[teal](0,0)--(-2,3.464)node[above]{$L_3$};
		\draw[red ](0,0)--(-4,0)node[left]{$L_4$};
		\draw[teal ](0,0)--(2,-3.464)node[right]{$L_6$};
		\draw[blue ](0,0)--(-2,-3.464)node[left]{$L_5$};
		\coordinate (A) at (1.5,2.598);
		\coordinate (B) at (1.5,-2.598);
		\coordinate (C) at (-1.5,2.598);
		\coordinate (D) at (-1.5,-2.598);
		\coordinate (E) at (3,0);
		\coordinate (F) at (-3,0);
		\coordinate (G) at (-2.598,1.5);
		\coordinate (H) at (-2.598,-1.5);
		\coordinate (J) at (2.598,1.5);
		\coordinate (K) at (2.598,-1.5);
		\coordinate (L) at (0,3);
		\coordinate (M) at (0,-3);
		\coordinate (v) at (0,0);
		\fill (v) circle (1pt) node[below] {$0$};
		\fill (A) circle (1pt) node[right] {$\varkappa_2$};
		\fill (B) circle (1pt) node[right] {$\varkappa_6$};
		\fill (C) circle (1pt) node[left] {$\varkappa_3$};
		\fill (D) circle (1pt) node[left] {$\varkappa_5$};
		\fill (E) circle (1pt) node[above] {$\varkappa_1$};
		\fill (F) circle (1pt) node[above] {$\varkappa_4$};
		\fill (G) circle (1pt) node[left] {$\kappa_3$};
		\fill (H) circle (1pt) node[left] {$\kappa_4$};
		\fill (J) circle (1pt) node[right] {$\kappa_1$};
		\fill (K) circle (1pt) node[right] {$\kappa_6$};
		\fill (L) circle (1pt) node[above] {$\kappa_2$};
		\fill (M) circle (1pt) node[below] {$\kappa_5$};
		\coordinate (a) at (-3.464,2);
		\coordinate (s) at (-3.464,-2);
		\coordinate (d) at (3.464,2);
		\coordinate (f) at (3.464,-2);
		\coordinate (g) at (0,2);
		\coordinate (h) at (0,-2);
		\fill (a) circle (0pt) node[left] {$S_3$};
		\fill (s) circle (0pt) node[left] {$S_4$};
		\fill (d) circle (0pt) node[right] {$S_1$};
		\fill (f) circle (0pt) node[right] {$S_6$};
		\fill (g) circle (0pt) node[above] {$S_2$};
		\fill (h) circle (0pt) node[below] {$S_5$};
	\end{tikzpicture}
	\caption{\footnotesize  The $z$-plane is divided into six analytical  domains $S_n, n=1,...,6$ by
 six   rays  $L_n$.   $\varkappa_n$ and $\kappa_n$    are spectral   singularities.}
	\label{figC}
\end{figure}
 It was shown  that the eigenfunction  $\mu(z)$ defined by (\ref{transmu}) has the following properties  \cite{RHP}.

\begin{Proposition}
	 The	equations (\ref{intmu}) uniquely define a $3\times3$-matrix valued solutions $\mu (z) \triangleq \mu(z;x,t)$ of (\ref{laxmu})
	with the following properties:
\begin{itemize}
	\item [(1)] det $\mu(z)$=1;

\item [(2)]  $\mu(z)$ is piecewise meromorphic with respect to $\Sigma $, as function of the spectral parameter $z$;

\item [(3)]  $\mu(z)$ obeys the  symmetries $\mu(z)=\Gamma_1\overline{\mu(\bar{z})}\Gamma_1=\Gamma_2\overline{\mu(\omega^2\bar{z})}\Gamma_2=\Gamma_3\overline{\mu(\omega\bar{z})}\Gamma_3$ and $\mu(z)=\overline{\mu(\bar{z}^{-1})}$, where
	\begin{align}
		\Gamma_1=\left(\begin{array}{ccc}
			0&	1 & 0 \\
			1&0 & 0\\
			0&	0 & 1
		\end{array}\right),\quad \Gamma_2=\left(\begin{array}{ccc}
		0&	0 & 1 \\
		0&1 & 0\\
		1&	0 & 0
	\end{array}\right),\quad 					\Gamma_3=\left(\begin{array}{ccc}
	1&	0 & 0 \\
	0&0 & 1\\
	0&	1 & 0
\end{array}\right);
	\end{align}
\item [(4)] $\mu(z)$ has pole singularities at $\varkappa_n=e^{\frac{n\pi i}{3}}$ with $\mu=\mathcal{O}(\frac{1}{z-\varkappa_n})$ as $z\to\varkappa_n$;

\item [(5)] $\mu(z) \to I$ as $z\to\infty$, and for $z\in\mathbb{C}\setminus\Sigma $, $\mu(z)$ is bounded as $x\to-\infty$ and $\mu (z) \to I$ as $x\to+\infty$;
\end{itemize}
\end{Proposition}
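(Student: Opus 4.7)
The plan is to first solve the Fredholm system (\ref{intmujl}) by a standard Neumann iteration and then read the five claims off the resulting integral representation. The Schwartz assumption on $u_0-\kappa$ makes the potential $U$ integrable in $x$, and for each matrix entry $(j,l)$ the choice of integration endpoint $\infty_{jl}$ is dictated by the sign of $\operatorname{Re}(\lambda_j(z)-\lambda_l(z))$, so that the exponential factor $e^{-(\lambda_j(z)-\lambda_l(z))\int_x^s q^2(v,t)\,dv}$ stays bounded. This makes the contraction uniform in $z$ on compact subsets of each sector $S_n$ avoiding the singularities $\varkappa_n$, and produces a unique bounded $\mu^{\pm}$.

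Properties (1) and (5) follow almost immediately from this construction. The normalization $\mu^{+}\to I$ as $x\to+\infty$ is built into (\ref{intmu}) by design, and $\mu\to I$ as $z\to\infty$ follows because the prefactor $1/(3\lambda_j^2-1)$ in $U_1$ decays like $|z|^{-2}$, so the Neumann series converges to the identity. For (1), I would verify that the coefficient $U$ in the transformed Lax pair (\ref{laxmu}) is traceless: this is a consequence of $\sum_j\lambda_j(z)=0$ together with the identity $\sum_j \lambda_j/(3\lambda_j^2-1)=0$ (which holds since $3\lambda_j^2-1=P'(\lambda_j)$ for $P(\lambda)=\lambda^3-\lambda-k^2$). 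Hence $(\det\mu)_x=0$ and the normalization at $+\infty$ forces $\det\mu\equiv 1$.

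The main obstacle is the sectorial analyticity claim (2), which I would handle entry by entry. On each open sector $S_n$ the sign of $\operatorname{Re}(\lambda_j-\lambda_l)$ is constant, and it flips precisely when $z$ crosses one of the rays $L_n$; in the sector where that sign matches the choice of $\infty_{jl}$, dominated convergence gives analyticity of $\mu_{jl}^{\pm}$. The delicate part is the bookkeeping: six sectors, nine entries, and two Jost solutions, which must reassemble into the column-by-column analyticity regions $\bar S_1\cup\bar S_2$, $\bar S_5\cup\bar S_6$, $\bar S_3\cup\bar S_4$ claimed for $\mu^{+}$, and the analogous list for $\mu^{-}$.

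Finally, (3) and (4) are algebraic. The symmetries in (3) follow from the identities $\overline{k(\bar z)}=k(z)$, $k(\omega z)=k(z)$, and $\overline{k(\bar z^{-1})}=k(z)$ of the uniformization (\ref{lambda}), which induce the permutations $\lambda_1\leftrightarrow\lambda_2$, $\lambda_1\leftrightarrow\lambda_3$, $\lambda_2\leftrightarrow\lambda_3$ of the roots and the corresponding conjugations of $P(z)$, $U$, $Q$ and the integration contour in (\ref{intmu}); uniqueness of the Fredholm solution then promotes these to the listed identities on $\mu$. For (4), the original matrix $\breve\Phi$ from (\ref{lax0}) is regular at $z=\varkappa_n$, so the pole of $\mu$ is entirely structural, coming from the factor $1/(3\lambda_j(z)^2-1)$ in $P^{-1}(z)$ (see (\ref{P-1})): at $\varkappa_n=e^{n\pi i/3}$ two of the roots collide at $\pm 1/\sqrt 3$, producing a simple zero in $3\lambda_j^2-1$ and hence the claimed $\mathcal O((z-\varkappa_n)^{-1})$ pole.
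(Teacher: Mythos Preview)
The paper does not actually prove this proposition; it states the result and attributes it to reference~\cite{RHP} (Boutet de Monvel--Shepelsky--Zielinski). Your outline is the standard argument and matches what that reference does: Volterra iteration in each sector for existence/uniqueness and analyticity, tracelessness of $U$ (via $\sum_j \lambda_j/P'(\lambda_j)=0$ with $P(\lambda)=\lambda^3-\lambda-k^2$) for $\det\mu=1$, the permutation action on $(\lambda_1,\lambda_2,\lambda_3)$ induced by $z\mapsto\bar z,\ \omega z,\ \bar z^{-1}$ for the symmetries (3), and the simple zero of $3\lambda_j^2-1$ at $\varkappa_n$ in $P^{-1}(z)$ for the pole structure (4).

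One point in your sketch is not quite right. For $\mu\to I$ as $z\to\infty$ you invoke only the $|z|^{-2}$ decay of $1/(3\lambda_j^2-1)$ in $U_1$, but the entries of $U_2$ contain $\lambda_i\lambda_j$ and $\lambda_j^2$ terms growing like $|z|^2$, so the \emph{off-diagonal} entries of $U=U_1U_2$ are only $O(1)$ at infinity (e.g.\ $U_{12}\sim ic_1/\sqrt3$). The required decay of the off-diagonal entries of $\mu-I$ instead comes from the exponential kernel in (\ref{intmujl}): since $|\lambda_j-\lambda_l|\sim c|z|$ with real part of the correct sign in the relevant sector, a single integration by parts (or Laplace-type estimate) yields $O(z^{-1})$. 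The diagonal entries of $U$ genuinely are $O(z^{-1})$ and need no such help. With this correction your argument goes through.
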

Denote  $\mu_{\pm}(z;x,t)$ as   the limiting values of $\mu (z';x,t)$ as $z'\to z $  from the
positive or negative side of $L_n$,    then they   are related as follows
\begin{align}
	\mu_{+}(z)=\mu_{-}(z)e^Q v_ne^{-Q},\quad z\in L_n,\ n=1,\cdots,6,
\end{align}
where $v_n$ only depends on $z$ and  is completely determined by $u(x, 0)$, i.e., by the initial data for the Cauchy problem (\ref{Novikov}). Take  $L_1=\mathbb{R}^+$ and $L_4=\mathbb{R}^-$ as an example,  $v_n$ for $n=1,4$ has a   special matrix structure
\begin{align}
	v_n(z)= \left(\begin{array}{ccc}
		1&	-r_\pm(z) & 0 \\
		\bar{r}_\pm(z)& 1-|r_\pm(z)|^2 & 0\\
		0&	0 & 1
	\end{array}\right),\quad z\in\mathbb{R}^\pm.	
\end{align}
$r_\pm(z)$ are  single scalar
functions,  with $r_\pm(z)\in L^{\infty}(\mathbb{R}^\pm)$, and $r_\pm(z)=\mathcal{O}(z^{-1})$ as $z\to\pm\infty$. The symmetry of $\mu (z) $ gives that $r_\pm(z)=\overline{r_\pm(z^{-1})}$,
therefore it also has $r_\pm(z)\in L^{2}(\mathbb{R}^\pm)$ and $\lim_{z\to 0^\pm}r(z)_\pm=0$.
Moreover, the singularities at $\pm1$ give that $r_\pm(\pm1)=0$. So we may define  \textit{reflection coefficient }
\begin{align}
	r(z)=\left\{ \begin{array}{ll}
		r_\pm(z),   & z\in \mathbb{R}^\pm,\\[12pt]
		0  , & z=0.\\
	\end{array}\right.
\end{align}
Then $r\in L^\infty(\mathbb{R})\cap L^2(\mathbb{R})$ and $r(z)=\mathcal{O}(z^{-1})$ as $z\to\infty$.  In reference \cite{RHP, BC1984},  it was shown  that there exist at most a finite
number of simple poles $z_n$ of $\mu(z)$ lying in $S_1\cap\{z\in\mathbb{C};\ |z|>1\}$ and $w_m$  lying in $S_1\cap\{z\in\mathbb{C};\ |z|=1\}$.
And there are no poles except $\pm1$, $\pm\omega$ and $\pm\omega^2$ on the contour $\Sigma $. Note that, unlike the case of $2\times2$ matrix function,  the residue conditions have two special matrix forms with only one nonzero entry.

 To differentiate this two types of poles,  we denote them as $z_n$, $z_n^A$ and $w_m$, $w_m^A$, respectively.
Denote $N_1$, $N_1^A$, $N_2$ and $N_2^A$ as the number of $z_n$, $z_n^A$, $w_m$,  and $w_m^A$, respectively. The  symmetries  of $\mu$  imply  $\bar{z}_n^{-1}$ and $\frac{1}{\bar{z}_n^A}$ are also the poles of $\mu$ in $S_1$.  It is convenient to define  $\zeta_n=z_n$, and $\zeta_{n+N_1}=\bar{z}_n^{-1}$ for $n=1,\cdot\cdot\cdot,N_1$; $\zeta_{m+2N_1}=w_m$  for $m=1,\cdot\cdot\cdot,N_2$; $\zeta_{j+2N_1+N_2}=z_j^A$, and $\zeta_{j+2N_1+N_2+N_1^A}=\frac{1}{\bar{z}_n^A}$ for $j=1,\cdot\cdot\cdot,N_1^A$; $\zeta_{m+2N_1+2N_1^A+N_2}=w_l^A$  for $l=1,\cdot\cdot\cdot,N_2^A$. For the sake of brevity, let
\begin{align*}
	N_0=2N_1+2N_1^A+N_2+N_2^A.
\end{align*}
Moreover,  $\omega \zeta_n$, $\omega^2 \zeta_n$, $\bar{\zeta}_n$, $\omega\bar{\zeta}_n$,   $\omega^2\bar{\zeta}_n$  are also poles of $\mu(z)$ in $S_j$, $j=2,\cdots,6$. So for convenience, let $\zeta_{n+N_0}=\omega\bar{\zeta}_n$, $\zeta_{n+2N_0}=\omega\zeta_n,$ $\zeta_{n+3N_0}=\omega^2\bar{\zeta}_n$, $\zeta_{n+4N_0}=\omega^2\zeta_n$ and  $\zeta_{n+5N_0}=\bar{\zeta}_n$ for $n=1,...,N_0$.
 Therefore, the discrete spectrum is
\begin{equation}
	\mathcal{Z}=\left\{ \zeta_n\right\}_{n=1}^{6N_0}, \label{spectrals}
\end{equation}
with $\zeta_n\in S_1$ and $\bar{\zeta}_n\in S_6$. And the distribution  of $	\mathcal{Z}$ on the $z$-plane   is shown  in Figure \ref{fig:figure1}.
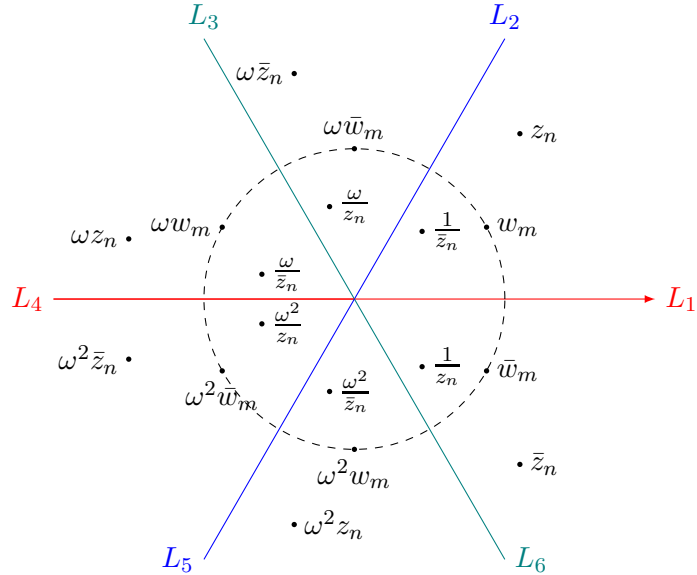
\begin{figure}[H]
	\centering
	\begin{tikzpicture}[node distance=2cm]
		\draw[red,-latex](-4,0)--(4,0)node[right]{$L_1$};
		\draw[blue](0,0)--(2,3.464)node[above]{$L_2$};
		\draw[teal](0,0)--(-2,3.464)node[above]{$L_3$};
		\draw[red](0,0)--(-4,0)node[left]{$L_4$};
		\draw[teal](0,0)--(2,-3.464)node[right]{$L_6$};
		\draw[blue](0,0)--(-2,-3.464)node[left]{$L_5$};
		\draw[dashed] (2,0) arc (0:360:2);
		\coordinate (A) at (2.2,2.2);
		\coordinate (B) at (2.2,-2.2);
		\coordinate (C) at (-0.8,3);
		\coordinate (D) at (-0.8,-3);
		\coordinate (E) at (0.9,0.9);
		\coordinate (F) at (0.9,-0.9);
		\coordinate (G) at (-3,0.8);
		\coordinate (H) at (-3,-0.8);
		\coordinate (J) at (1.7570508075688774,0.956);
		\coordinate (K) at (1.7570508075688774,-0.956);
		\coordinate (L) at (-1.7570508075688774,0.956);
		\coordinate (M) at (-1.7570508075688774,-0.956);
		\coordinate (a) at (0,2);
		\fill (a) circle (1pt) node[above] {$\omega \bar{w}_m$};
		\coordinate (s) at (0,-2);
		\fill (s) circle (1pt) node[below] {$\omega ^2w_m$};
		\coordinate (d) at (-0.33,1.23);
		\fill (d) circle (1pt) node[right] {$\frac{\omega}{z_n} $};
		\coordinate (f) at (-0.33,-1.23);
		\fill (f) circle (1pt) node[right] {$\frac{\omega^2}{\bar{z}_n}$};
		\coordinate (g) at (-1.23,0.33);
		\fill (g) circle (1pt) node[right] {$\frac{\omega}{\bar{z}_n} $};
		\coordinate (h) at (-1.23,-0.33);
		\fill (h) circle (1pt) node[right] {$\frac{\omega^2}{z_n}$};
		\fill (A) circle (1pt) node[right] {$z_n$};
		\fill (B) circle (1pt) node[right] {$\bar{z}_n$};
		\fill (C) circle (1pt) node[left] {$\omega \bar{z}_n$};
		\fill (D) circle (1pt) node[right] {$\omega^2 z_n$};
		\fill (E) circle (1pt) node[right] {$\frac{1}{\bar{z}_n}$};
		\fill (F) circle (1pt) node[right] {$\frac{1}{z_n}$};
		\fill (G) circle (1pt) node[left] {$\omega z_n$};
		\fill (H) circle (1pt) node[left] {$\omega^2\bar{z}_n$};
		\fill (J) circle (1pt) node[right] {$w_m$};
		\fill (K) circle (1pt) node[right] {$\bar{w}_m$};
		\fill (L) circle (1pt) node[left] {$\omega w_m$};
		\fill (M) circle (1pt) node[below] {$\omega^2\bar{w}_m$};
	\end{tikzpicture}
	\caption{\footnotesize Distribution of the discrete spectrum $\mathcal{Z}$ in the $z$-plane. }
	\label{fig:figure1}
\end{figure}
 Besides, the poles of $\mu(z)$ on $L_n$, $n=1,...,6$  may occur and they correspond to spectral singularities. Therefore, it is reasonable to give the following assumption:
As shown in  \cite{RHP},   denote   \textit{norming constant}    $c_n$.    Therefore, there are  residue conditions as
\begin{align}
	&\res_{z=\zeta_n}\mu(z) =\lim_{z\to \zeta_n}\mu(z) e^Q\left(\begin{array}{ccc}
		0&	-c_n & 0 \\
		0& 0 & 0\\
		0&	0 & 0
	\end{array}\right)e^{-Q},
\end{align}
for $n=1,...,2N_1+
N_2$ and
\begin{align}
	&\res_{z=\zeta_j}\mu(z)=\lim_{z\to \zeta_j}\mu(z) e^Q\left(\begin{array}{ccc}
		0&	0 & 0 \\
		0& 0 & -c_{j+2N_1+N_2}\\
		0&	0 & 0
	\end{array}\right)e^{-Q},\label{resrelation1}
\end{align}
for $j=1+2N_1+N_2,...,N_0$.
In addition,  the collection
$\sigma_d= \left\lbrace \zeta_n,C_n\right\rbrace^{6N_0}_{n=1}$
is called the \emph{scattering data} with $C_n=c_n$, $C_{n+N_0}=\omega\bar{c}_n$, $C_{n+2N_0}=\omega c_n,$ $C_{n+3N_0}=\omega^2\bar{c}_n$, $C_{n+4N_0}=\omega^2c_n$ and  $C_{n+5N_0}=\bar{c}_n$ for $n=1,...,N_0$.

 To deal with our following work,
we assume our initial data satisfies that $u_0\in\mathcal{S}(\mathbb{R})$ to generate  generic scattering data such  that $\mu(z)$  has no the poles
 on $L_n\setminus\{\varkappa_n\}$, $n=1,...,6$  and  at point  $z= e^{\frac{\pi i}{6}}$. And also  $r(z)$ belongs to $\mathcal{S}(\mathbb{R})$.
This statement will be given is section \ref{secr}.

\subsection{Set up of    RH problem}

\quad
We replace
the pair of variables  $(x, t)$ by $(y, t)$ with $y$ defined in (\ref{y}),
The price to pay for this is that the solution of the initial problem can be given only implicitly,
or parametrically. It will be given in terms of functions in the new scale, whereas the original scale will also be given in terms of functions in the new scale.
By the definition of the new scale $y(x, t)$, we denote the  phase function
\begin{equation}
	\theta_{jl}(z)=-i\left[ \xi (\lambda_j(z)-\lambda_l(z))+\left(\frac{1}{\lambda_j(z)}-\frac{1}{\lambda_l(z)} \right) \right] ,\label{theta}\quad  \xi=\frac{y}{t}.
\end{equation}
Especially,
\begin{equation}
	\theta_{12}(z)=\sqrt{3}\left(z-\frac{1}{z} \right) \left[ \xi -\frac{1}{z^2-1+z^{-2}} \right], \quad
\end{equation}
with $\theta_{23}(z)=\theta_{12}(\omega z),\text{ and }\theta_{31}(z)=\theta_{12}(\omega^2 z)$. Furthermore, denote $\theta_{12}(\zeta_n)\triangleq [\theta_{12}]_n$.
Let
\begin{equation}
	M(z;y,t)\triangleq \mu(z;x(y,t),t).
\end{equation}
which solves the following RH problem.
\begin{RHP}\label{RHP1}
	 Find a matrix-valued function $	M(z)\triangleq M(z;y,t)$ which satisfies:
	
	$\blacktriangleright$ Analyticity: $M(z)$ is meromorphic in $\mathbb{C}\setminus \Sigma $ and has single poles;
	
	$\blacktriangleright$ Symmetry: $M(z)=\Gamma_1\overline{M(\bar{z})}\Gamma_1=\Gamma_2\overline{M(\omega^2\bar{z})}\Gamma_2=\Gamma_3\overline{M(\omega\bar{z})}\Gamma_3$

\quad and $M(z)=\overline{M(\bar{z}^{-1})}$;
	
	$\blacktriangleright$ Jump condition: $M$ has continuous boundary values $M_\pm(z)$ on $L_n$
	\begin{equation}
		M_+(z)=M_-(z)V_n(z),\hspace{0.3cm} z \in L_n, \label{jumpv}
	\end{equation}
	where
	\begin{align}
		&V_1(z)=\left(\begin{array}{ccc}
			1&	-r(z)e^{it\theta_{12}} & 0 \\
			\bar{r}(z)e^{-it\theta_{12}} & 1-|r(z)|^2 & 0\\
			0&	0 & 1
		\end{array}\right),\\	
	&V_2(z)=\left(\begin{array}{ccc}
		1&	0 & 0 \\
		0 & 1 & -r(\omega z)e^{it\theta_{23}}\\
		0&	\bar{r}(\omega z)e^{-it\theta_{23}} & 1-|r(\omega z)|^2
	\end{array}\right),\\
	&V_3(z)=\left(\begin{array}{ccc}
	1-|r(\omega^2 z)|^2&	0 & \bar{r}(\omega^2 z)e^{it\theta_{13}} \\
	0 & 1 & 0\\
	-r(\omega^2 z)e^{-it\theta_{13}}	&	0 & 1
	\end{array}\right),\\
	&V_4(z)=\left(\begin{array}{ccc}
		1&	-r(z)e^{it\theta_{12}} & 0 \\
		\bar{r}(z)e^{-it\theta_{12}} & 1-|r(z)|^2 & 0\\
		0&	0 & 1
	\end{array}\right),\\	
&V_5(z)=\left(\begin{array}{ccc}
	1&	0 & 0 \\
	0 & 1 & -r(\omega z)e^{it\theta_{23}}\\
	0&	\bar{r}(\omega z)e^{-it\theta_{23}} & 1-|r(\omega z)|^2
\end{array}\right),\\
&V_6(z)=\left(\begin{array}{ccc}
	1-|r(\omega^2 z)|^2&	0 & \bar{r}(\omega^2 z)e^{it\theta_{13}} \\
	0 & 1 & 0\\
	-r(\omega^2 z)e^{-it\theta_{13}}	&	0 & 1
\end{array}\right)
	\end{align}
	
	$\blacktriangleright$ Asymptotic behaviors:
	\begin{align}
		&M(z) = I+\mathcal{O}(z^{-1}),\hspace{0.5cm}z \rightarrow \infty;
	\end{align}
	
	$\blacktriangleright$ Singularities: As   $z\to\varkappa_n =e^{\frac{i\pi(n-1)}{3}}$, $n = 1,...,6$,
 the  limit  of $M(z)$   has pole   singularities
	\begin{align}
		&M(z)=\frac{1}{z\mp1}\left(\begin{array}{ccc}
			\alpha_\pm &	\alpha_\pm & \beta_\pm \\
			-\alpha_\pm & -\alpha_\pm & -\beta_\pm\\
			0	&	0 & 0
		\end{array}\right)+\mathcal{O}(1),\ z\to\pm 1,\label{asyM1}\\
	&M(z)=\frac{1}{z\mp\omega^2}\left(\begin{array}{ccc}
		0 &	0 &  0\\
	\beta_\pm	 & \alpha_\pm &\alpha_\pm \\
		-\beta_\pm	&	-\alpha_\pm & -\alpha_\pm
	\end{array}\right)+\mathcal{O}(1),\ z\to\pm \omega^2,\\
	&M(z)=\frac{1}{z\mp\omega}\left(\begin{array}{ccc}
	-\alpha_\pm &	-\beta_\pm & -\alpha_\pm\\
	0	 & 0 &0 \\
	\alpha_\pm &	\beta_\pm & \alpha_\pm
	\end{array}\right)+\mathcal{O}(1),\ z\to\pm \omega\label{asymo},
	\end{align}
	with $\alpha_\pm=\alpha_\pm(y,t)=-\bar{\alpha}_\pm$, $\beta_\pm=\beta_\pm(y,t)=-\bar{\beta}_\pm$ and $M^{-1}$ has same specific
	matrix structure  with $\alpha_\pm$, $\beta_\pm$ replaced by $\tilde{\alpha}_\pm$, $\tilde{\beta}_\pm$. Moreover, $\left( \alpha_\pm,\ \beta_\pm\right) \neq0$ iff $\left( \tilde{\alpha}_\pm,\ \tilde{\beta}_\pm\right) \neq0$;
	
	$\blacktriangleright$ Residue conditions: $M(z)$ has simple poles at each point in $ \mathcal{Z}\cup \bar{\mathcal{Z}}$ with:\\
	for $n=1,...,2N_1+N_2$
	\begin{align}
		&\res_{z=\zeta_n}M=\lim_{z\to \zeta_n}M \left(\begin{array}{ccc}
			0&	-c_ne^{it[\theta_{12}]_n} & 0 \\
			0& 0 & 0\\
			0&	0 & 0
		\end{array}\right)\triangleq\lim_{z\to \zeta_n}M B_n,\label{RES1}\\
		&\res_{z=\omega\bar{\zeta}_n}M=\lim_{z\to \omega\bar{\zeta}_n}M \left(\begin{array}{ccc}
			0&	0 & -\omega\bar{c}_ne^{it\theta_{13}(\omega\bar{\zeta}_n)} \\
			0& 0 & 0\\
			0&	0 & 0
		\end{array}\right)\triangleq\lim_{z\to \zeta_n}M B_{n+N_0},\\
		&\res_{z=\omega\zeta_n}M=\lim_{z\to \omega\zeta_n}M \left(\begin{array}{ccc}
		0&	0 & 0 \\
		0& 0 & 0\\
		-\omega c_ne^{-it\theta_{13}(\omega\zeta_n)}&	0 & 0
		\end{array}\right)\triangleq\lim_{z\to \zeta_n}M B_{n+2N_0},\\
	&\res_{z=\omega^2\bar{\zeta}_n}M=\lim_{z\to \omega^2\bar{\zeta}_n}M \left(\begin{array}{ccc}
		0&	0 & 0 \\
		0& 0 & 0\\
		0& -\omega^2\bar{c}_ne^{-it\theta_{23}(\omega^2\bar{\zeta}_n)} & 0
	\end{array}\right)\triangleq\lim_{z\to \zeta_n}M B_{n+3N_0},\\
	&\res_{z=\omega^2\zeta_n}M=\lim_{z\to \omega^2\zeta_n}M \left(\begin{array}{ccc}
		0&	0 & 0 \\
		0& 0 & -\omega^2 c_ne^{it\theta_{23}(\omega^2\zeta_n)}\\
		0&	0 & 0
	\end{array}\right)\triangleq\lim_{z\to \zeta_n}MB_{n+4N_0},\\
	&\res_{z=\bar{\zeta}_n}M=\lim_{z\to \bar{\zeta}_n}M \left(\begin{array}{ccc}
	0&	0 & 0 \\
	-\bar{c}_ne^{-it[\theta_{12}]_n}& 0 & 0\\
	0&	0 & 0
\end{array}\right)\triangleq\lim_{z\to \zeta_n}M B_{n+5N_0},
\end{align}
and for $j=1+2N_1+N_2,...,2N_1^A+N_2^A+2N_1+N_2$,
\begin{align}
	&\res_{z=\omega\zeta_{j}}M=\lim_{z\to \omega\zeta_{j}}M \left(\begin{array}{ccc}
		0&	-\omega c_{j}e^{it\theta_{12}(\omega\zeta_{j})} & 0 \\
		0& 0 & 0\\
		0&	0 & 0
	\end{array}\right)\triangleq\lim_{z\to \zeta_n}M B_{n+2N_0},\label{RES2}\\
	&\res_{z=\bar{\zeta}_{j}}M=\lim_{z\to \bar{\zeta}_{j}}M \left(\begin{array}{ccc}
		0&	0 & -\bar{c}_{j}e^{it\theta_{13}(\bar{\zeta}_{j})} \\
		0& 0 & 0\\
		0&	0 & 0
	\end{array}\right)\triangleq\lim_{z\to \zeta_n}M B_{n+5N_0},\\
	&\res_{z=\omega^2\zeta_{j}}M=\lim_{z\to \omega^2\zeta_{j}}M \left(\begin{array}{ccc}
		0&	0 & 0 \\
		0& 0 & 0\\
		- \omega^2c_je^{-it\theta_{13}(\omega^2\zeta_{j})}&	0 & 0
	\end{array}\right)\triangleq\lim_{z\to \zeta_n}M B_{n+4N_0} ,\\
	&\res_{z=\omega\bar{\zeta}_j}M=\lim_{z\to \omega\bar{\zeta}_j}M \left(\begin{array}{ccc}
		0&	0 & 0 \\
		0& 0 & 0\\
		0& -\omega\bar{c}_je^{-it\theta_{23}(\omega\bar{\zeta}_j)} & 0
	\end{array}\right)\triangleq\lim_{z\to \zeta_n}M B_{n+N_0},\\
	&\res_{z=\zeta_j}M=\lim_{z\to \zeta_j}M \left(\begin{array}{ccc}
		0&	0 &  0\\
		0& 0 & - c_je^{it\theta_{23}(\zeta_j)}\\
		0&	0 & 0
	\end{array}\right)\triangleq\lim_{z\to \zeta_n}M B_{n},\\
	&\res_{z=\omega^2\bar{\zeta}_j}M=\lim_{z\to \omega^2\bar{\zeta}_j}M \left(\begin{array}{ccc}
		0&	0 & 0 \\
		-\omega^2\bar{c}_je^{-it\theta_{12}(\omega^2\bar{\zeta}_j)}& 0 & 0\\
		0&	0 & 0
	\end{array}\right)\triangleq\lim_{z\to \zeta_n}M B_{n+3N_0}\label{RES3}.
\end{align}
\end{RHP}
Denote $M(z;y,t)= (M_{jl}(z;y,t))_{jl=1}^3$,
then   solution of  Novikov equation (\ref{Novikov}) can be  obtained by the following reconstruction formula
\begin{align}
	u(x,t)=u(y(x,t),t)=&\frac{1}{2}\tilde{m}_1(y,t)\left(\frac{M_{33}(e^{\frac{i\pi}{6}};y,t)}{M_{11}(e^{\frac{i\pi}{6}};y,t)} \right)^{1/2}\nonumber\\
	&+ \frac{1}{2}\tilde{m}_3(y,t)\left(\frac{M_{33}(e^{\frac{i\pi}{6}};y,t)}{M_{11}(e^{\frac{i\pi}{6}};y,t)} \right)^{-1/2}-1 ,\label{recons u}
\end{align}
where
\begin{equation}
	x(y,t)=y+c_+(x,t)=y+\frac{1}{2} \ln\frac{M_{33}(e^{\frac{i\pi}{6}};y,t)}{M_{11}(e^{\frac{i\pi}{6}};y,t)} ,\label{recons x}
\end{equation}
and
\begin{align}
	\tilde{m}_l\triangleq\sum_{j=1}^3M_{jl}(e^{\frac{i\pi}{6}};y,t),\ l=1,2,3. \nonumber
\end{align}

\subsection{Scattering maps}\label{secr}
\quad In this section, our purpose is to demonstrate the following  proposition about reflection coefficient $r(z)$.
\begin{Proposition}\label{pror}
	If the initial data $u_0 \in  \mathcal{S}(\mathbb{R})$, then $r(z)$ belongs to $\mathcal{S}(\mathbb{R})$. In addition, there exist a fixed positive constant $C_{1,r}$ and a fixed  constant $C_{2,r}$ satisfying that if $u_0- u_{0,xx}> C_{2,r}>-1$, $C_{1,r}>\parallel u_0- u_{0,xx}\parallel_{L^1}$ and $C_{1,r}>\parallel u_0\parallel_{W^{j,3}}$, $j=1,\infty$, then $|r(z)|<1$ for $z\in\mathbb{R}$.
\end{Proposition}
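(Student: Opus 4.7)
The argument splits into two independent stages.

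\textbf{Stage 1 (Schwartz regularity of $r$).} The starting point is the Volterra system \eqref{intmujl} at $t=0$. Under the sign hypothesis the momentum $\tilde{m}_0=u_0-u_{0,xx}+1$ is strictly positive and $\tilde{m}_0-1\in\mathcal{S}(\mathbb{R})$, hence $q-1$, $c_1=q_x/q$ and $c_2=q^{-2}-q^2$ all lie in $\mathcal{S}(\mathbb{R})$. The potential $U(z;\cdot,0)$ in \eqref{laxU} is polynomial in $\lambda_j(z)^{\pm 1}$ with Schwartz coefficients in $x$, so iterating \eqref{intmujl} via a Neumann series yields $\mu(z;\cdot,0)-I$ with $L^1_x\cap L^\infty_x$ norms bounded uniformly for $z$ in any closed subset of $\bar S_j\setminus\{\varkappa_n\}$. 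The reflection coefficient $r_\pm(z)$ is then extracted from the jump relation $\mu_+=\mu_- e^{Q}v_1e^{-Q}$ as a specific bilinear expression in the boundary values of $\mu^\pm$. Smoothness of $r$ on $\mathbb{R}\setminus\{0\}$ comes from differentiating the Neumann iterates in $z$; polynomial decay is obtained by repeated integration by parts in \eqref{intmujl}, trading the oscillation $e^{-(\lambda_j-\lambda_l)\int q^2\,dv}$ for $x$-derivatives of $U$, producing $r(z)=\mathcal{O}(|z|^{-N})$ as $|z|\to\infty$ for every $N$. Finally, the involutive symmetry $r(z)=\overline{r(z^{-1})}$ transfers Schwartz decay at $\infty$ into smooth vanishing of every order at $z=0$, giving $r\in\mathcal{S}(\mathbb{R})$.

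\textbf{Stage 2 (Strict bound $|r(z)|<1$).} The $2\times 2$ block of $v_1$ has determinant $1$ and $(2,2)$-entry $1-|r(z)|^2$. The plan is to identify this entry with a Jost-type minor of $\mu^\pm(z;\cdot,0)$ and then show, via the Volterra estimate, that this minor is close to $1$ when $U$ is small in $L^1_x$. The sign condition $u_0-u_{0,xx}>C_{2,r}>-1$ keeps $q$ uniformly bounded away from $0$ and $\infty$, so $c_1,c_2$ depend continuously on $u_0-u_{0,xx}$ and $u_0$ in the $L^1\cap L^3$ topology. Hence the smallness assumptions $\|u_0-u_{0,xx}\|_{L^1},\|u_0\|_{W^{j,3}}<C_{1,r}$ translate into $\|U(z;\cdot,0)\|_{L^1_x}\leq\varepsilon$ uniformly in $z\in\mathbb{R}$, and the Neumann bound gives $\|\mu(z;\cdot,0)-I\|_\infty\leq C\varepsilon$. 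Substituting into the minor identity produces $|r(z)|^2\leq\varepsilon'<1$ once $\varepsilon$, and hence $C_{1,r}$, is chosen small enough.

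\textbf{Principal obstacle.} The delicate step is Stage 2: the $3\times 3$ structure of $v_1$, together with the spectral singularities at the six points $\varkappa_n$, makes the identification of $1-|r(z)|^2$ with an explicit minor of $\mu^\pm(z;\cdot,0)$ more subtle than the $2\times 2$ AKNS analogue. One must track the column-by-column analyticity domains listed after \eqref{intmu} and isolate the precise entries of $\mu^+$ and $\mu^-$ that enter the relevant $2\times 2$ submatrix of $e^{Q}v_1e^{-Q}$, while keeping control near $\varkappa_n$ where the Jost functions blow up. Once this algebraic identity is secured, the calibration of $C_{1,r}$ and $C_{2,r}$ is a routine smallness argument for Volterra kernels.
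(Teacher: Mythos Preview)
Your Stage 1 is in line with the paper, which essentially takes the Schwartz regularity of $r$ for granted once $u_0\in\mathcal S(\mathbb R)$.

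Your Stage 2, however, has a genuine gap. You assert that the smallness hypotheses yield $\|U(z;\cdot,0)\|_{L^1_x}\le\varepsilon$ \emph{uniformly in $z\in\mathbb R$}, and then run a single Volterra argument. This uniformity fails: the potential $U=U_1U_2$ in \eqref{laxU} carries the diagonal factor $U_1=\mathrm{diag}\bigl((3\lambda_j^2-1)^{-1}\bigr)$, and at $z=\pm1$ one has $3\lambda_1^2-1=3\lambda_2^2-1=0$, so the relevant entries of $U$ blow up like $(z\mp1)^{-1}$ (this is exactly Lemma~\ref{lemmahg} in the paper). No choice of $C_{1,r}$ makes $\|U(z;\cdot,0)\|_{L^1_x}$ small near $z=\pm1$, and your Neumann bound $\|\mu-I\|_\infty\le C\varepsilon$ cannot hold there. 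You flag the singularities $\varkappa_n$ as the ``principal obstacle'' but frame them as an algebraic issue in identifying the minor; in fact they obstruct the analytic estimate itself.

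The paper's remedy is to split $\mathbb R^+$ into $(0,1-\epsilon)\cup(1+\epsilon,\infty)$ and $(1-\epsilon,1+\epsilon)$. Away from $1$ it runs the Volterra argument you describe (bounding $s_{11}$ near $1$ and $s_{12}$ near $0$ directly, rather than via a minor identity). Near $1$ it uses a different mechanism: writing $S(z)=\mu_-^{-1}\mu_+$ and observing that $\mu_\pm$ share the \emph{same} singular leading term at $\varkappa_n$ (cf.\ \eqref{asyM1}--\eqref{asymo}), one sees that the singularities cancel in $S$, so $S(\pm1)=I$ and hence $r(\pm1)=0$. Continuity of $r$ then gives $|r|<\tfrac12$ on $(1-\epsilon,1+\epsilon)$ for $\epsilon$ small enough, independently of any smallness of $u_0$. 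Your argument needs this two-region decomposition and the cancellation mechanism at $\pm1$ to close.
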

The fact that $r(z)$ belongs to $\mathcal{S}(\mathbb{R})$ is obviously. In this Section, we will give that $|r(\pm1)|=0$ and $r(z)$ is continuous  near $z=\pm 1$. So, it is reasonable to assume that there a fixed small sufficiently positive constant $\epsilon$ satisfying:\\
1.  $(z-1)\mu_\pm(z)\neq 0 $ for $z$ in $(1-\epsilon,1+\epsilon)$;\\
2. for $z\in(1-\epsilon,1+\epsilon)$, $|r(z)|<\frac{1}{2}$.

Since   $\mu_\pm$ are two fundamental matrix solutions of the  Lax  pair (\ref{laxmu}),  there exists a linear  relation between $\mu_+$ and $\mu_-$, namely
\begin{equation}
	\mu_+(z;x,t)=\mu_-(z;x,t)S(z),\label{scattering}
\end{equation}
where $S(z)$ is called scattering matrix  and   only depended on $z$.
Take $x\to-\infty$, then we obtain the integral representation  of $S(z)$ as:
\begin{equation}
	S(z)=I-\int_{-\infty}^{+ \infty}e^{-\hat{\Lambda}(z)\int_{-\infty}^{x}(q^2(v,t)-1)dv}[U\mu_+(x,t;z)]ds\label{sz}.
\end{equation}
The entries of $S(z)$ are defined and continuous for $z$ in
\begin{align*}
	\left(\begin{array}{ccc}
		S_1\cup S_2&	\mathbb{R}^+ & \omega\mathbb{R}^+ \\
		\mathbb{R}^+& S_3\cup S_4 & \omega^2\mathbb{R}^+\\
		\omega\mathbb{R}^+&	\omega^2\mathbb{R}^+ & S_5\cup S_6
	\end{array}\right)\setminus\{\pm 1,\pm\omega,\pm\omega^2\}.
\end{align*}
Further  by  symmetries, $S(z)$ can be written as
\begin{align}
	S(z)=\left(\begin{array}{ccc}
		s_{11}(z) &	s_{12}(z) & \overline{s_{12}(\omega\bar{z})} \\
		\overline{s_{12}(\bar{z})} & \overline{s_{11}(\bar{z})} & s_{12}(\omega z)\\
		s_{12}(\omega^2 z)&	\overline{s_{12}(\omega^2\bar{z})} & \overline{s_{11}(\omega^2\bar{z})}
	\end{array}\right),\ s_{11}(z)=\overline{s_{11}(\omega\bar{z})}.
\end{align}
Rewrite (\ref{scattering}) as
$$S(z)=\mu_-(z;x,t)^{-1}\mu_+(z;x,t).$$
 Note that $\mu_\pm(z;x,t)$ has same asymptotic behavior  (\ref{asyM1})-(\ref{asymo}), we see that
  $\pm1$, $\pm\omega$, $\pm\omega^2$ are not the singularity of $S(z)$ any more.  Then when the initial data $u_0 \in  \mathcal{S}(\mathbb{R})$,  $r(z)$ belongs to $\mathcal{S}(\mathbb{R})$ too. In addition,  $S(\varkappa_j)\sim I$, for $j=1,...,6$. It gives that $r(\pm1)=0$.
Furthermore, $$\det S(z)=1,\ \ \ S(z)\sim I\  as\ z\to\infty.$$
And the reflection coefficient can be represented as $r(z)=\frac{s_{11}(z)}{s_{12}(z)}$ for $z\in\mathbb{R}^+$.

$S^A(z)=[s_{jl}^A(z)]_{j,l=1}^3=(S(z)^{-1})^T$ is the cofactor matrix of $S(z)$, with integral representation
\begin{equation}
	S^A(z)=I+\int_{-\infty}^{+ \infty}e^{\int_{-\hat{\Lambda}(z)\infty}^{x}q^2(v,t)-1dv}[U^T\mu_+^A(x,t;z)]ds\label{sAz}.
\end{equation}
The entries of it are defined and continuous for $z$ in
\begin{align*}
	\left(\begin{array}{ccc}
		S_4\cup S_5&	\mathbb{R}^- & \omega\mathbb{R}^- \\
		\mathbb{R}^-& S_1\cup S_6 & \omega^2\mathbb{R}^-\\
		\omega\mathbb{R}^-&	\omega^2\mathbb{R}^- & S_2\cup S_3
	\end{array}\right)\setminus\{\pm 1,\pm\omega,\pm\omega^2\}.
\end{align*}
It has same symmetry as $S(z)$ and $s_{11}^A(z)=\overline{s_{11}^A(\omega\bar{z})}$. And  it also does not have singularity at $\pm 1,\ \pm\omega,\ \pm\omega^2$.
Similarly, for $z\in\mathbb{R} $, the reflection coefficient can be represented as $r(z)=\frac{s_{11}^A(z)}{s_{12}^A(z)}$.

Now we begin to demonstrate Proposition \ref{pror}.  We only consider the  $x$-part of Lax pair to give the proof of proposition \ref{pror} in this section. In fact,  taking account of $t$-part of Lax pair and through the standard direct scattering transform, it can be deduced that  $r(z)$ have linear time evolution:  $r(z,t)=e^{\frac{i}{4\lambda^2(z)}(z-\frac{1}{z})}r(z,0)$. Recall that $r(z)$ has different representation on $\mathbb{R}^+$ and $\mathbb{R}^-$. In this section, we only consider the $L^2$-integrability and the boundedness of $r(z)$ for $z\in\mathbb{R}^+$. The case of $z\in\mathbb{R}^-$ can be obtain through same analysis.
Because of  the singularity on $1$ of the spectral problem (\ref{laxmu}), we divide our approach into three cases: \\
$$(0,1-\epsilon),\ (1-\epsilon,1+\epsilon),\ (1+\epsilon,+\infty).$$

Remark: If $(s_0,s_1)=0$, the analysis will be more simple. The property near $z=1$ do not need to be considered separately.

For $z>0$, $z$ away from $1$, recall $r(z)=\frac{s_{12}(z)}{s_{11}(z)}$. So,
the fact that the $L^{\infty}$-norm of $ r(z)$ for $z$ in $(1+\epsilon,+\infty)\cup(0,1-\epsilon)$ is controlled by $u_0$  is equivalent to the following two proposition:
the  maximal value, the  minimal value  of $s_{11}(z)$  and the  maximal value $s_{12}(z)$  for $z$ in $(1+\epsilon,+\infty)\cup(0,1-\epsilon)$ is controlled by $u_0$
 From  (\ref{sz}), we give the  integral equations of $s_{12}(z)$ and $s_{11}(z)$:
\begin{align}
	&s_{11}(z)=1-\int_{\mathbb{R}}\sum_{ k=1 }^3U_{1k}(y,z)[\mu_+]_{k1}(y,z)dy,\label{s11e}\\
	&s_{12}(z)=-\int_{\mathbb{R}}e^{i(z-1/z)p(y)}\sum_{ k=1 }^3U_{1k}(y,z)[\mu_+]_{k2}(y,z)dy\label{s12e}.
\end{align}
Here, $U_{ij}(y,z)$ is the entire of (\ref{laxU}), $p(y)=y-\int_{y}^{+\infty}\left( q^2(s,0)-1\right) ds$. It has special structure as $U_{ij}(y,z)=h_{ij}(z)c_2(y)+g_{ij}(z)c_1(y)$, with $c_1(y)=\frac{q_x}{q}(y,0)$, $c_2(y)=q^{-2}(y,0)-q^2(y,0)$ and
\begin{align}
	h_{ij}(z)=\frac{\lambda_j(z)}{3\lambda_i(z)^2-1},\ g_{ij}(z)=\frac{\lambda_j(z)\lambda_i(z)-\lambda_j(z)^2}{3\lambda_i(z)^2-1}.
\end{align}
\begin{lemma}\label{lemmac12}
	For $u_0\in\mathcal{S}(\mathbb{R})$, we have that
	\begin{align}
		\parallel c_1(y)\parallel_1\leq&(1+C_{r,2})^{-1}\parallel u_0\parallel_{W^{1,3}},\\
		\parallel c_2(y)\parallel_1\leq&(1+C_{r,2})^{-2/3}\left( 1+(1+2\parallel u_0\parallel_{W^{\infty,3}})^{1/3}\right) \nonumber\\
		&\left( 1+(1+2\parallel u_0\parallel_{W^{\infty,3}})^{2/3}\right)\parallel u_0-[u_0]_{xx}\parallel_1 .
	\end{align}

\end{lemma}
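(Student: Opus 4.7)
The plan is to reduce each bound to a pointwise estimate by explicit algebra in terms of $\tilde m_0 := u_0-u_{0,xx}+1 = q(\cdot,0)^3$, then combine this with two uniform a priori bounds: the lower bound $\tilde m_0 \geq 1+C_{2,r}$ coming from the sign assumption $u_0-u_{0,xx} > C_{2,r} > -1$, and an upper bound $\tilde m_0 \leq 1+2\|u_0\|_{W^{\infty,3}}$ obtained by using one-dimensional Sobolev embedding to control $\|u_0\|_\infty$ and $\|u_{0,xx}\|_\infty$.

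For $c_1$, differentiating $q^3=\tilde m_0$ and dividing by $q$ gives
$$c_1 = \frac{q_x}{q} = \frac{\tilde m_{0,x}}{3\tilde m_0} = \frac{u_{0,x}-u_{0,xxx}}{3\tilde m_0},$$
so the lower bound on $\tilde m_0$ immediately yields $|c_1| \leq \tfrac{1}{3}(1+C_{2,r})^{-1}|u_{0,x}-u_{0,xxx}|$. For $c_2$, I would set $a:=\tilde m_0^{1/3}$ and factor
$$c_2 = a^{-2}-a^2 = -a^{-2}(a-1)(a+1)(a^2+1).$$
The identity $a-1 = (a^3-1)/(a^2+a+1) = (u_0-u_{0,xx})/(\tilde m_0^{2/3}+\tilde m_0^{1/3}+1)$ extracts a factor of $u_0-u_{0,xx}$ from this product, giving
$$|c_2| = \frac{\tilde m_0^{-2/3}(\tilde m_0^{1/3}+1)(\tilde m_0^{2/3}+1)}{\tilde m_0^{2/3}+\tilde m_0^{1/3}+1}\,|u_0-u_{0,xx}|.$$
Discarding the denominator ($\geq 1$) and inserting both uniform bounds on $\tilde m_0$ produces exactly the prefactor appearing in the statement, and a single integration in $y$ concludes the estimate on $\|c_2\|_1$.

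The only subtle step is finishing the $c_1$ bound: converting the pointwise inequality above into $\|c_1\|_1 \leq (1+C_{2,r})^{-1}\|u_0\|_{W^{1,3}}$ requires dominating $\|u_{0,x}-u_{0,xxx}\|_1$ by a multiple of $\|u_0\|_{W^{1,3}}$. I would split this as $\|u_{0,x}\|_1+\|u_{0,xxx}\|_1$ and apply a one-dimensional Gagliardo--Nirenberg interpolation together with the Schwartz decay of $u_0$ (this is where the constant $C_{1,r}$ absorbs the implicit prefactors). This transfer of $L^3$-Sobolev regularity into an $L^1$-bound on derivatives up to third order is the only non-algebraic ingredient in the proof; everything else reduces to the factorization of $a^2-a^{-2}$ and the uniform bounds on $\tilde m_0$.
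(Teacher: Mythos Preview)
Your factorisation for $c_2$ and your pointwise bound $|c_1|\le \tfrac13(1+C_{2,r})^{-1}|u_{0,x}-u_{0,xxx}|$ are exactly what the paper does; the paper's proof is just the same algebra written with $q_0$ in place of $\tilde m_0^{1/3}$ (it records $|1-q_0|=|u_0-u_{0,xx}|/(1+q_0+q_0^2)$ and then $|1-q_0^4|=|1-q_0|(1+q_0)(1+q_0^2)$, which is your identity repackaged).

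The one place you diverge is the final step for $c_1$. You read $\|u_0\|_{W^{1,3}}$ as the standard norm ``one derivative in $L^3$'' and then try to extract $\|u_{0,xxx}\|_1$ from it by Gagliardo--Nirenberg plus Schwartz decay. That route cannot yield the clean prefactor $(1+C_{2,r})^{-1}$ in the statement: no interpolation will manufacture an $L^1$ bound on a \emph{third} derivative from first-derivative $L^3$ data alone, and if you invoke extra Schwartz control the resulting constant depends on other norms of $u_0$, contradicting the stated inequality.

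The resolution is that in this paper the indices in $W^{j,3}$ are used with the roles swapped relative to the standard convention (despite the formal definition given earlier). You can see this from how $\|u_0\|_{W^{\infty,3}}$ is used to bound $|u_0|+|u_{0,xx}|$ pointwise in both the statement and the proof --- an $L^3$-based norm could not do that with constant $2$. Under this intended reading, $\|u_0\|_{W^{1,3}}$ already dominates $\|u_{0,x}\|_1+\|u_{0,xxx}\|_1$ directly, and the ``subtle step'' you worry about is a triviality. With that correction your argument and the paper's coincide.
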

\begin{proof}
	Denote $q(y,0)\triangleq q_0(y)$. When $u_0\in\mathcal{S}(\mathbb{R})$ with $u_0-[u_0]_{xx}>C_{r,2}$,
	\begin{align*}
		&(1+C_{r,2})^{1/3}\leq|q_0(y)|\leq1+2\parallel u_0\parallel_{W^{\infty,3}},\
	\parallel	q_0'(y)\parallel_1\leq (1+C_{r,2})^{-2/3}\parallel u_0\parallel_{W^{1,3}},\\
	&\parallel1-q_0^4\parallel_1\leq \left( 1+(1+2\parallel u_0\parallel_{W^{\infty,3}})^{1/3}\right) \left( 1+(1+2\parallel u_0\parallel_{W^{\infty,3}})^{2/3}\right)\parallel u_0-[u_0]_{xx}\parallel_1 .
	\end{align*}
Here we use the fact:
\begin{align*}
	|1-q_0|=\frac{|u_0-[u_0]_{xx}|}{1+q_0+q_0^2}\leq|u_0-[u_0]_{xx}|.
\end{align*}
Thus it is easily to obtain the result.
\end{proof}
The following property comes soon by simple calculation:
\begin{lemma}\label{lemmahg}
	For $i,j=1,2,3$, $h_{ij}(z)$ and $g_{ij}(z)$ belong in $L^p(\mathbb{R}\setminus\{z;\ |z-1|<\epsilon\})$, $\infty\geq p>1$. $h_{ij}(z),\ g_{ij}(z)\sim \frac{1}{z-1}$ when $i\neq3$ as $z\to 1$. $h_{ij}(z)\sim \frac{1}{z}$, $g_{ij}(z)\sim 1$  as $z\to \infty$,  and   $h_{ij}(z)\sim z$, $g_{ij}(z)\sim 1$  as $z\to 0$.
\end{lemma}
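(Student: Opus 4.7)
The lemma is a direct calculation from the closed form $\lambda_j(z)=\tfrac{1}{\sqrt{3}}(\omega^j z+\omega^{-j}z^{-1})$. First I would compute the denominator explicitly: substituting into $3\lambda_i^2-1$ gives the clean rational expression $\omega^{2i}z^2+1+\omega^{-2i}z^{-2}$. Multiplying through by $z^2$ turns the zero set into a quadratic in $z^2$, from which one reads off that on the real line this vanishes only at $z=\pm 1$ and only when $i\in\{1,2\}$; for $i=3$ the denominator reduces to $z^2+1+z^{-2}$, which is strictly positive on $\mathbb{R}\setminus\{0\}$. A one-line differentiation check confirms that each real zero is simple.

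Next, the three asymptotic regimes follow by reading off leading powers. As $z\to\infty$, $\lambda_j(z)\sim \omega^j z/\sqrt{3}$ and $3\lambda_i(z)^2-1\sim\omega^{2i}z^2$, which immediately give $h_{ij}(z)=O(z^{-1})$ and $g_{ij}(z)=O(1)$; as $z\to 0$, the dual expansion $\lambda_j(z)\sim \omega^{-j}/(\sqrt{3}\,z)$ and $3\lambda_i^2-1\sim \omega^{-2i}z^{-2}$ yields $h_{ij}(z)=O(z)$ and $g_{ij}(z)=O(1)$. At $z\to 1$ with $i\neq 3$, one checks $\lambda_1(1)=\lambda_2(1)=-1/\sqrt{3}$ and $\lambda_3(1)=2/\sqrt{3}$, all finite, so the simple zero of the denominator produces $h_{ij}(z)\sim c_{ij}/(z-1)$ for some constant $c_{ij}\neq 0$. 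For $g_{ij}$ the numerator $\lambda_j(\lambda_i-\lambda_j)$ may itself vanish at $z=1$ (for example $g_{ii}\equiv 0$, and $\lambda_1(z)-\lambda_2(z)=\tfrac{\omega-\omega^2}{\sqrt{3}}(z-z^{-1})$ has a simple zero there), so the order of the pole is at most one, which matches the stated bound $g_{ij}\sim 1/(z-1)$ understood as an upper estimate on the singularity.

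The $L^p$ integrability on $\mathbb{R}\setminus\{|z-1|<\epsilon\}$ then combines three pieces: continuity of both functions on this closed set away from $z=\pm 1$; the implicit exclusion of a neighborhood of $z=-1$ (by the odd symmetry $\lambda_j(-z)=-\lambda_j(z)$ the denominator has an identical simple zero at $z=-1$, which must therefore also be removed); and the decay $h_{ij}(z)=O(z^{-1})$ at $|z|\to\infty$, which makes $|h_{ij}|^p$ integrable near infinity for every $p>1$. For $g_{ij}$ the $L^\infty$ bound is immediate from the boundedness on the restricted domain, while the $L^p$ claim for finite $p$ is read as an $L^p_{\mathrm{loc}}$ statement (off-diagonal $g_{ij}$ tends to a nonzero constant at infinity, so global $L^p$ integrability for $p<\infty$ cannot hold without that interpretation).

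The main obstacle is not mathematical depth but bookkeeping: keeping track of which zeros in the numerator of $g_{ij}$ partially cancel the denominator's zero at $z=1$ so that one records the correct at-most-simple pole order rather than overstating the singularity, and being explicit that the symmetric exclusion point $z=-1$ is part of the tacit excluded set.
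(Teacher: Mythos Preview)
The paper gives no proof of this lemma beyond the single sentence ``comes soon by simple calculation,'' so your explicit computation is already more than what the paper supplies, and your approach---reading off the rational form $3\lambda_i^2-1=\omega^{2i}z^2+1+\omega^{-2i}z^{-2}$ and expanding $\lambda_j$ at each regime---is exactly the intended one.

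Your two caveats are well taken and in fact expose imprecisions in the lemma as stated. First, the symmetric simple zero of $3\lambda_i^2-1$ at $z=-1$ is indeed present, but the surrounding text makes clear the authors are working on $\mathbb{R}^+$ (the constant $C(h,g,\epsilon)$ immediately after the lemma is defined as a supremum over $(0,1-\epsilon)\cup(1+\epsilon,+\infty)$), so the domain in the lemma should be read as $\mathbb{R}^+\setminus\{|z-1|<\epsilon\}$. Second, your observation that off-diagonal $g_{ij}$ tends to a nonzero constant at infinity and hence cannot lie in $L^p$ for finite $p$ on an unbounded set is correct; the paper's subsequent arguments use only the $L^\infty$ bound, so the $L^p$ claim for $g_{ij}$ with $p<\infty$ is a harmless overstatement. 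Your proof is correct and complete for the content the paper actually needs.
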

We denote the positive constant
$C(h,g,\epsilon)$ relying on $h_{ij}(z),\ g_{ij}(z)$ and $\epsilon$ with
$$C(h,g,\epsilon)=\underset{i,j=1,2,3}{\max}\left\lbrace \parallel h_{ij}\parallel_{L^\infty\left( (1+\epsilon,+\infty)\cup(0,1-\epsilon)\right) },\  \parallel g_{ij}\parallel_{L^\infty\left( (1+\epsilon,+\infty)\cup(0,1-\epsilon)\right) } \right\rbrace .$$
Then by (\ref{s11e})-(\ref{s12e}), it requires  us to research the property of $\mu_+$ in following   subsections.

However  unlike the focusing  NLS  equation   \cite{fNLS},   the   Novikov equation admits  a  $3\times3$ matrix  Lax pair,
 $1-|r(z)|^2\geq 0$ does not always  hold.
  In fact, from the proof in following Subsection, we can   ensure $|r(z)| < 1$ for  $z$ not near $1$, because its $\infty$-norm can be controlled by initial value.
And as we assuming previous, $s_{11}^A(\omega^2z)\neq0$ on $l_1=\mathbb{R}^+\setminus\{1\}$, which implies $s_{11}^A(\omega^2z)=|s_{11}(z)|^2(1-|r(z)|^2)\neq0$ on $\mathbb{R}^+\setminus\{1\}$. So  $1-|r(z)|^2\neq0$  on $\mathbb{R}^+$ except $z=1$, and it does note change sign by its continuity. Thus,  $|r(z)| <1$  on $\mathbb{R}^+$. Similarly, $|r(z)| <1$  on $\mathbb{R}^-$.

First, our goal is to prove that $L^{\infty}$-norm of $ \mu^+(z)$ for $z$ in $(1+\epsilon,+\infty)\cup(0,1-\epsilon)$ is controlled by $u_0$. Here, We give the details of  $z\in(1+\epsilon,+\infty)$. The case $z\in(0,1-\epsilon)$ can be demonstrated following the same way.  Recall the integral equation   (\ref{intmu}) and set $t=0$,
 we only consider its first column $  \mu^{+}_1(z) $, the other columns can be obtained analogously.

Introduce the integral operator $T_+$:
\begin{align}\label{T1}
	T_+(f)(x,z)=\int_{x}^{+\infty}K_+(x,y,z)f(y,z)dy,  
\end{align}
where integral kernel $K_+(x,y,z)=\left( [K_+(x,y,z)]_{ij}\right)_{3\times3} $ is a $3\times3$ matrix with entire
\begin{align}\label{K}
	 [K_+(x,y,z)]_{ij}=e^{(\lambda_i(z)-\lambda_1(z))(p(x)-p(y))}U_{ij}(y,z).
\end{align}
Here, $U_{ij}(y,z)$ is the entire of (\ref{laxU}) with $t=0$. And we denote  $e_1=(1,0,0)^T$,
then (\ref{laxmu}) trans to
\begin{align}\label{eqn}
	&[\mu_+]_1=e_1+T_+([\mu_+]_1).	
\end{align}

\begin{lemma}\label{lemma4}
	$T_+$ is a integral operator defined above, then  it is a  bounded operator on $L^{\infty}\left( \mathbb{R}\times (1+\epsilon,+\infty)\cup(0,1-\epsilon)\right) $. And its norm is under control by $u_0$.
\end{lemma}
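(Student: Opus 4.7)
The plan is to establish the $L^\infty$ bound by a Schur-type estimate: since $(T_+f)_i(x,z)=\int_x^{+\infty}\sum_j[K_+(x,y,z)]_{ij}f_j(y,z)\,dy$, one has
\[
\|T_+f\|_{L^\infty}\leq\sup_{x,z}\Bigl(\sum_{i,j}\int_x^{+\infty}|[K_+(x,y,z)]_{ij}|\,dy\Bigr)\|f\|_{L^\infty},
\]
so the problem reduces to a uniform kernel-integral bound for $z\in(0,1-\epsilon)\cup(1+\epsilon,+\infty)$.

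The main step is to bound the exponential factor in (\ref{K}). A direct calculation from (\ref{lambda}) gives, for real positive $z$,
\[
\operatorname{Re}\lambda_3(z)=\frac{z+z^{-1}}{\sqrt{3}},\qquad \operatorname{Re}\lambda_1(z)=\operatorname{Re}\lambda_2(z)=-\frac{z+z^{-1}}{2\sqrt{3}},
\]
so $\operatorname{Re}(\lambda_i(z)-\lambda_1(z))\geq0$ for $i=1,2,3$. Since $p'(y)=q_0^2(y)>0$ implies $p(x)-p(y)\leq 0$ for $y\geq x$, the modulus of the exponential in $[K_+]_{ij}$ is bounded by $1$ uniformly in $x\leq y$ and in $z$. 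This sign property is precisely what justifies taking $+\infty$ as the upper limit of integration for every row of the first column in (\ref{T1}); verifying it uniformly in $z$ on the prescribed punctured half-line is the only delicate point of the argument.

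It remains to bound $|U_{ij}(y,z)|$. Using the decomposition $U_{ij}=h_{ij}(z)c_2(y)+g_{ij}(z)c_1(y)$ together with Lemma~\ref{lemmahg}, one has $|h_{ij}(z)|,|g_{ij}(z)|\leq C(h,g,\epsilon)$ on the prescribed $z$-region, and hence
\[
|[K_+(x,y,z)]_{ij}|\leq C(h,g,\epsilon)\bigl(|c_1(y)|+|c_2(y)|\bigr),\qquad y\geq x.
\]
Integrating in $y$ and invoking Lemma~\ref{lemmac12} controls $\int_x^{+\infty}|[K_+(x,y,z)]_{ij}|\,dy$ by $C(h,g,\epsilon)(\|c_1\|_1+\|c_2\|_1)$, which in turn is bounded by a polynomial in $\|u_0\|_{W^{1,3}}$, $\|u_0\|_{W^{\infty,3}}$, $\|u_0-u_{0,xx}\|_1$ and $C_{r,2}$. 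Summing over the three rows and inserting into the Schur estimate yields $\|T_+f\|_\infty\leq C(u_0,\epsilon)\|f\|_\infty$. Oscillation in $\operatorname{Im}(\lambda_i-\lambda_1)$ plays no role for this absolute bound; it will become relevant only later when extracting decay in $t$.
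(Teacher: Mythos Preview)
Your proof is correct and follows essentially the same approach as the paper's: both reduce the operator bound to a pointwise kernel estimate and then integrate $|U_{ij}(y,z)|\le C(h,g,\epsilon)(|c_1(y)|+|c_2(y)|)$ in $y$. The paper's version is terser---it writes out the three-component integral and asserts the bound ``immediately''---whereas you explicitly verify that $\operatorname{Re}(\lambda_i(z)-\lambda_1(z))\ge 0$ for real positive $z$ and that $p(x)-p(y)\le 0$ for $y\ge x$, making the exponential bound $\le 1$ transparent; this is the step the paper leaves to the reader.
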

\begin{proof}
	For any function $f(x,z)=(f_1,f_2,f_3 )^T$ in $L^{\infty}\left( \mathbb{R}\times (1+\epsilon,+\infty)\cup(0,1-\epsilon)\right) $,
	by the definition (\ref{T1}), we have 
	\begin{align}
		T_+(f)(x,z)=&\int_{x}^{+\infty}
		\left(\begin{array}{cc}
			\sum_{j=1}^3 U_{1j}f_j\\
			\left( \sum_{j=1}^3 U_{2j}f_j\right)e^{(\lambda_2(z)-\lambda_1(z))(p(x)-p(y))} \\
			\left( \sum_{j=1}^3 U_{3j}f_j\right)e^{(\lambda_3(z)-\lambda_1(z))(p(x)-p(y))}
		\end{array}\right)
		dy.\label{Te1}
	\end{align}
It  immediately derive that  
\begin{align}
	|T_+(f)(x,z)|\lesssim& C(h,g,\epsilon)\left( \int_{x}^{+\infty}|c_2(y)|dy+\int_{x}^{+\infty}|c_1(y)|dy\right) \parallel f\parallel_\infty.
\end{align}
So $T_+\in\mathcal{B}\left( L^{\infty}\left( \mathbb{R}\times (1+\epsilon,+\infty)\cup(0,1-\epsilon)\right) \right) ,$ with $$\parallel T_+ \parallel\leq C(h,g,\epsilon)\left( \int_{x}^{+\infty}|c_2(y)|dy+\int_{x}^{+\infty}|c_1(y)|dy\right).$$
\end{proof}

Consider the Volterra operator $T_+^n$ with
\begin{align}
	T_+^n(f)(x,z)=T_+\left( T_+^{n-1}(f)\right) (x,z)
\end{align}
Denote $K^n_+$ is the integral kernel of   $[T_+]^n$ as
\begin{align}
	K^n_+(x,y,z)=\int_{x}^{y}\int_{x}^{y_{n-1}}...\int_{x}^{y_2}K_+(x,y_1,z)K_+(y_1,y_2,z)...K_+(y_{n-1},y,z)dy_1...dy_{n-1},\nonumber
\end{align}
with
\begin{align}
	|K^n_+(x,y,z)|\leq& \frac{C(h,g,\epsilon)^n}{(n-1)!}\left( |c_1(y)|+|c_2(y)|\right)\nonumber\\
	&\left[ \left( \int_{x}^{+\infty}|c_2(y)|dy\right)^{n-1} +\left( \int_{x}^{+\infty}|c_1(y)|dy\right)^{n-1}\right]  .
\end{align}
Then  the standard Volterra theory gives the of the operator $(I-T_+)^{-1}=\sum_{n=1}^\infty T^n$ with following operator norm:
\begin{align}
	&\parallel (I-T_+)^{-1} \parallel\leq e^{C(h,g,\epsilon)\left(\parallel c_1(y) \parallel_{L^1}+\parallel c_2(y) \parallel_{L^1}\right) }.
\end{align}

The equations (\ref{eqn})  are  solvable with:
\begin{align}
	& \mu^+_1(x,z)=(I-T_+)^{-1}(T_+(e_1))(x,z).
\end{align}
It  leads to the  boundedness of $[\mu_+]_1(x,z)$ on $\mathbb{R}\times\big(  (1+\epsilon,+\infty)\cup(0,1-\epsilon)\big) $ with
\begin{align}
	\parallel \mu^+_1(x,z)\parallel_{L^\infty}\leq& e^{C(h,g,\epsilon)\left(\parallel c_1(y) \parallel_{L^1}+\parallel c_2(y) \parallel_{L^1}\right) }C(h,g,\epsilon)\left( \parallel c_1(y) \parallel_{L^1}+\parallel c_2(y) \parallel_{L^1}\right). \label{mu}
\end{align}
Then we begin to estimate $s_{12}(z)$ and $s_{11}(z)$. (\ref{s11e}) and (\ref{s12e}) give that for $z\in(1+\epsilon,+\infty)\cup(0,1-\epsilon)$,
\begin{align}
	&|s_{11}(z)-1|\leq \sum_{ k=1 }^3\parallel U_{1k}\parallel_1 \parallel[\mu_+]_{k1}\parallel_\infty,\\
	&|s_{12}(z)|\leq \sum_{ k=1 }^3\parallel U_{1k}\parallel_1 \parallel[\mu_+]_{k2}\parallel_\infty.
\end{align}
So the $\infty$-norm of $|s_{11}(z)-1|$ and $|s_{12}(z)|$ are controlled by the initial value. And when $\sum_{ k=1 }^3\parallel U_{1k}\parallel_1 \parallel[\mu_+]_{k1}\parallel_\infty$ small enough,
we deduce that $2>s_{11}(z)>C(u_0)>0$ with a positive constant $C(u_0)$ depending on $u_0$. Thus,
\begin{align}
	|r(z)|=&|\frac{b(z)}{a(z)}|<\frac{\sum_{ k=1 }^3\parallel U_{1k}\parallel_1 \parallel[\mu_+]_{k2}\parallel_\infty}{C(u_0)}.
\end{align}
Together with (\ref{mu}), Lemma \ref{lemmac12} and Lemma \ref{lemmahg}, $|r(z)|$ is  proportional to $\parallel c_1 \parallel_1$ and $\parallel c_2 \parallel_1$, namely $\parallel u_0-[u_0]_{xx}\parallel_1$, $\parallel u_0\parallel_{W^{\infty,3}}$ and $\parallel u_0\parallel_{W^{1,3}}$. Hence, there exists a positive constant $C_{r,1}$ to control above three norm of $u_0$ and admit that $|r(z)|<1$ for $z\in\mathbb{R}^+\setminus(1-\epsilon,1+\epsilon).$

Thus when initial datum $u_0\in\mathcal{S}(\mathbb{R})$, $s_{11}(z)$ and $s_{12}(z)$ are  continuous on $\mathbb{R}^+\setminus(1-\epsilon,1+\epsilon)$. In fact, take $\epsilon\to 0$, then we obtain that  $s_{11}(z)$ and $s_{12}(z)$ are  continuous on $\mathbb{R}^+\setminus\{1\}$. Moreover, as we assuming  preceding, $s_{11}(z)\neq0$ on $\mathbb{R}^+$. So $r(z)$ is continuous on $\mathbb{R}^+\setminus\{1\}$. Then the asymptotic property: $r(z)\to0$ as $z\to1$ in $\mathbb{R}^+$ gives that $z=1$ is a removable singular point of $r(z)$.

\section{Normalization of the RH problem}\label{sec3}

\subsection{Stationary phase points and decay domains  }
The long-time asymptotic  of the  RHP \ref{RHP1}  is affected by the growth and decay of the exponential function $e^{\pm2it\theta_{jl}}$  with
\begin{align}
	&\theta_{12}(z)=\sqrt{3}\left(z-\frac{1}{z} \right) \left[ \xi -\frac{1}{z^2-1+z^{-2}} \right],\\
	&\theta_{13}(z)=-\theta_{12}(\omega^2z),\ \theta_{23}(z)=\theta_{12}(\omega z).\label{42}
\end{align}
which is appearing in both the jump relation and the residue conditions. So we need control the real part of the phase functions  $\pm2it\theta_{jl}$.
 In fact, from (\ref{42}), we only need consider  the phase points and signature tables of    $ \theta_{12}$. Therefore, in this section, we introduce  a new transform  $M(z)\to M^{(1)}(z)$,  which  make that the  $M^{(1)}(z)$ is well behaved as $t\to \infty$ along any characteristic line.
  For the oscillatory term $e^{2it\theta_{12}}$ as $t\to \infty$, we consider the real stationary points of $\theta_{12}(z)$ on $ \mathbb{R}=L_1\cup L_4$. Let
  $$\xi=\frac{y}{t}, \ \ \breve{k}=z-\frac{1}{z},$$
   then
\begin{align}
	\frac{\partial\theta_{12}}{\partial z}= \frac{\partial\theta_{12}}{\partial \breve{k}}\frac{\partial\breve{k}}{\partial z}=\sqrt{3}\left(\xi-\frac{1-\breve{k}^2}{(\breve{k}^2+1)^2} \right)\left(1+\frac{1}{z^2} \right).
\end{align}
Therefore the  stationary points satisfy the equation
\begin{align}
\xi(\breve{k}^2)^2+(2\xi+1)\breve{k}^2+\xi-1=0,
\end{align}
from which,  we found that
\begin{itemize}
\item[Case I:]  \  For the domains  $\xi<- {1}/{8}$ and $\xi>1$, there is  no  stationary point on $\mathbb{R}$;

\item[Case II:]\ For the domain   $- {1}/{8}<\xi<0$,   on $\mathbb{R}$ there are 8 stationary points,  which reorder  as
$$\xi_1>\xi_2>\xi_3>\xi_4>\xi_5>\xi_6>\xi_7>\xi_8,$$
 with $\xi_1=1/\xi_4=-1/\xi_5=-\xi_8$ and $\xi_2=1/\xi_3=-1/\xi_6=-\xi_7$;

\item[Case III:]\  For the domain  $0\leq\xi<1$,  on $\mathbb{R}$  there are 4 stationary points,  which reorder  as
$$\xi_1>\xi_2>\xi_3>\xi_4,$$
 with $\xi_1=1/\xi_2=-1/\xi_3=-\xi_4$;
\end{itemize}

Next we consider  the signature table  of $\text{Im}\theta_{12}$, which is determined by
the real part of $it\theta_{12}$
\begin{align}
&\text{Re}(it\theta_{12})=-\sqrt{3}t\text{Im}\theta_{12}=-t\text{Im}z\left(1+|z|^{-2}  \right)\xi+\nonumber\\
& \dfrac{\sqrt{3}t\text{Im}z\left(1+|z|^{-2}\right) \left( -|z|^6-|z|^4+4\text{Re}^2z|z|^2-|z|^2\right) }{|z|^8+1+2[(\text{Re}^2z-\text{Im}^2z)^2-4\text{Re}^2z\text{Im}^2z]-2(1+|z|^4)(\text{Re}^2z-\text{Im}^2z)+|z|^4 }  .\label{Reitheta}
\end{align}
For the above case I, the signature table  of $\text{Im}\theta_{12}(z)$ are shown by (a) and (d)  in Figure \ref{figtheta};
For  the above  Case II and Case III,  the signature tables  of $\text{Im}\theta_{12}(z)$
 are shown by (b) and (c) in Figure \ref{figtheta}, respectively.

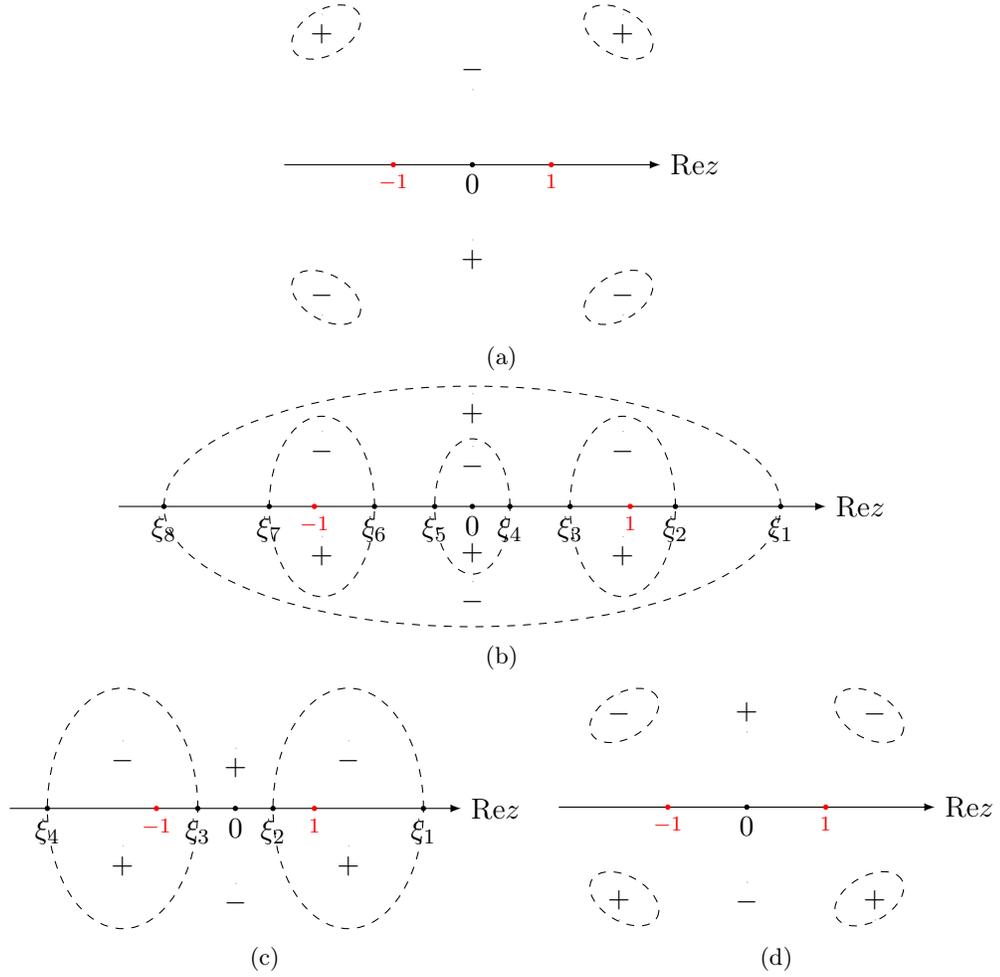
\begin{figure}[htp]
	\centering
		\subfigure[]{
			\begin{tikzpicture}
				\draw[-latex](-2.5,0)--(2.5,0)node[right]{ Re$z$};
				\draw [rotate=-30,dashed](0.8,2.5) ellipse (0.5 and 0.3);
				\draw [rotate=30,dashed](0.8,-2.5) ellipse (0.5 and 0.3);
				\draw [rotate=30,dashed](-0.8,2.5) ellipse (0.5 and 0.3);
				\draw [rotate=-30,dashed](-0.8,-2.5) ellipse (0.5 and 0.3);
				\coordinate (I) at (0,0);
				\fill (I) circle (1pt) node[below] {$0$};
				\coordinate (a) at (0,-1);
				\fill (a) circle (0pt) node[below] {$+$};
				\coordinate (aa) at (2,2);
				\fill (aa) circle (0pt) node[below] {$+$};
				\coordinate (as) at (-2,2);
				\fill (as) circle (0pt) node[below] {$+$};
				\coordinate (f) at (2,-2);
				\fill (f) circle (0pt) node[above] {$-$};
				\coordinate (ff) at (-2,-2);
				\fill (ff) circle (0pt) node[above] {$-$};
				\coordinate (s) at (0,1);
				\fill (s) circle (0pt) node[above] {$-$};
				\coordinate (c) at (-1.05,0);
				\fill[red] (c) circle (1pt) node[below] {\scriptsize$-1$};
				\coordinate (d) at (1.05,0);
				\fill[red] (d) circle (1pt) node[below] {\scriptsize$1$};
			\end{tikzpicture}
		}
	\subfigure[]{
		\begin{tikzpicture}
			\draw[-latex](-4.7,0)--(4.7,0)node[right]{ Re$z$};
			\draw [dashed](0,0) ellipse (0.5 and 0.9);
			\draw [dashed](2,0) ellipse (0.7 and 1.2);
			\draw [dashed](-2,0) ellipse (0.7 and 1.2);
			\draw [dashed](0,0) ellipse (4.1 and 1.6);
			\coordinate (a) at (0,-0.35);
			\fill (a) circle (0pt) node[below] {$+$};
			\coordinate (aa) at (0,1.5);
			\fill (aa) circle (0pt) node[below] {$+$};
			\coordinate (s) at (0,0.8);
			\fill (s) circle (0pt) node[below] {$-$};
			\coordinate (sa) at (0,-1);
			\fill (sa) circle (0pt) node[below] {$-$};
			\coordinate (f) at (2,-0.4);
			\fill (f) circle (0pt) node[below] {$+$};
			\coordinate (d) at (2,1);
			\fill (d) circle (0pt) node[below] {$-$};
			\coordinate (ff) at (-2,-0.4);
			\fill (ff) circle (0pt) node[below] {$+$};
			\coordinate (dd) at (-2,1);
			\fill (dd) circle (0pt) node[below] {$-$};
			\coordinate (I) at (0,0);
			\fill (I) circle (1pt) node[below] {$0$};
			\coordinate (c) at (-2.1,0);
			\fill[red] (c) circle (1pt) node[below] {\scriptsize$-1$};
			\coordinate (D) at (2.1,0);
			\fill[red] (D) circle (1pt) node[below] {\scriptsize$1$};
			\coordinate (A) at (-4.1,0);
			\fill (A) circle (1pt) node[below] {$\xi_8$};
			\coordinate (b) at (-2.7,0);
			\fill (b) circle (1pt) node[below] {$\xi_7$};
			\coordinate (C) at (-0.5,0);
			\fill (C) circle (1pt) node[below] {$\xi_5$};
			\coordinate (d) at (-1.3,0);
			\fill (d) circle (1pt) node[below] {$\xi_6$};
			\coordinate (E) at (4.1,0);
			\fill (E) circle (1pt) node[below] {$\xi_1$};
			\coordinate (R) at (2.7,0);
			\fill (R) circle (1pt) node[below] {$\xi_2$};
			\coordinate (T) at (0.5,0);
			\fill (T) circle (1pt) node[below] {$\xi_4$};
			\coordinate (Y) at (1.3,0);
			\fill (Y) circle (1pt) node[below] {$\xi_3$};
		\end{tikzpicture}
	}
	\subfigure[]{
	\begin{tikzpicture}
		\draw[-latex](-3,0)--(3,0)node[right]{ Re$z$};
		\draw [dashed](1.5,0) ellipse (1 and 1.6);
		\draw [dashed](-1.5,0) ellipse (1 and 1.6);
		\coordinate (I) at (0,0);
		\fill (I) circle (1pt) node[below] {$0$};
		\coordinate (A) at (-2.5,0);
		\fill (A) circle (1pt) node[below] {$\xi_4$};
		\coordinate (b) at (-0.5,0);
		\fill (b) circle (1pt) node[below] {$\xi_3$};
		\coordinate (e) at (2.5,0);
		\fill (e) circle (1pt) node[below] {$\xi_1$};
		\coordinate (f) at (0.5,0);
		\fill (f) circle (1pt) node[below] {$\xi_2$};
		\coordinate (c) at (-1.05,0);
		\fill[red] (c) circle (1pt) node[below] {\scriptsize$-1$};
		\coordinate (d) at (1.05,0);
		\fill[red] (d) circle (1pt) node[below] {\scriptsize$1$};
		\coordinate (s) at (0,0.8);
		\coordinate (sa) at (0,-1);
		\fill (sa) circle (0pt) node[below] {$-$};
		\fill (s) circle (0pt) node[below] {$+$};
		\coordinate (k) at (1.5,0.9);
		\coordinate (l) at (1.5,-0.5);
		\fill (k) circle (0pt) node[below] {$-$};
		\fill (l) circle (0pt) node[below] {$+$};
		\coordinate (g) at (-1.5,0.9);
		\coordinate (h) at (-1.5,-0.5);
		\fill (g) circle (0pt) node[below] {$-$};
		\fill (h) circle (0pt) node[below] {$+$};
	\end{tikzpicture}
}
	\subfigure[]{
	\begin{tikzpicture}
		\draw[-latex](-2.5,0)--(2.5,0)node[right]{ Re$z$};
		\draw [rotate=-30,dashed](0.8,1.87) ellipse (0.5 and 0.3);
		\draw [rotate=30,dashed](0.8,-1.87) ellipse (0.5 and 0.3);
		\draw [rotate=30,dashed](-0.8,1.87) ellipse (0.5 and 0.3);
		\draw [rotate=-30,dashed](-0.8,-1.87) ellipse (0.5 and 0.3);
		\coordinate (I) at (0,0);
		\fill (I) circle (1pt) node[below] {$0$};
		\coordinate (a) at (0,-1);
		\fill (a) circle (0pt) node[below] {$-$};
		\coordinate (s) at (0,1);
		\fill (s) circle (0pt) node[above] {$+$};
		\coordinate (c) at (-1.05,0);
		\fill[red] (c) circle (1pt) node[below] {\scriptsize$-1$};
		\coordinate (d) at (1.05,0);
		\fill[red] (d) circle (1pt) node[below] {\scriptsize$1$};
		\coordinate (aa) at (1.7,1.5);
		\fill (aa) circle (0pt) node[below] {$-$};
		\coordinate (as) at (-1.7,1.5);
		\fill (as) circle (0pt) node[below] {$-$};
		\coordinate (f) at (1.7,-1.5);
		\fill (f) circle (0pt) node[above] {$+$};
		\coordinate (ff) at (-1.7,-1.5);
		\fill (ff) circle (0pt) node[above] {$+$};
	\end{tikzpicture}
}
	\caption{\footnotesize The signature table  of $\text{Im}\theta_{12}$: The figure (a), (b), (c),(d) are corresponding to the cases $\xi<-\frac{1}{8}$,  $-\frac{1}{8}<\xi<0$, $0\leq\xi<1$,  $ \xi>1$, respectively. Notice that the line of dashes  only  represent the critical  lines of the sign of $\text{Im}\theta_{12}$ approximatively.  In the  region with sign "$+$", $\text{Im}\theta_{12}>0$. It implies that $|e^{it\theta_{12}}|\to 0$ as $t\to\infty$. And $\text{Im}\theta_{12}<0$ in the  region with sign "$-$", which implies  $|e^{-it\theta_{12}}|\to 0$ as $t\to\infty$. }
	\label{figtheta}
\end{figure}

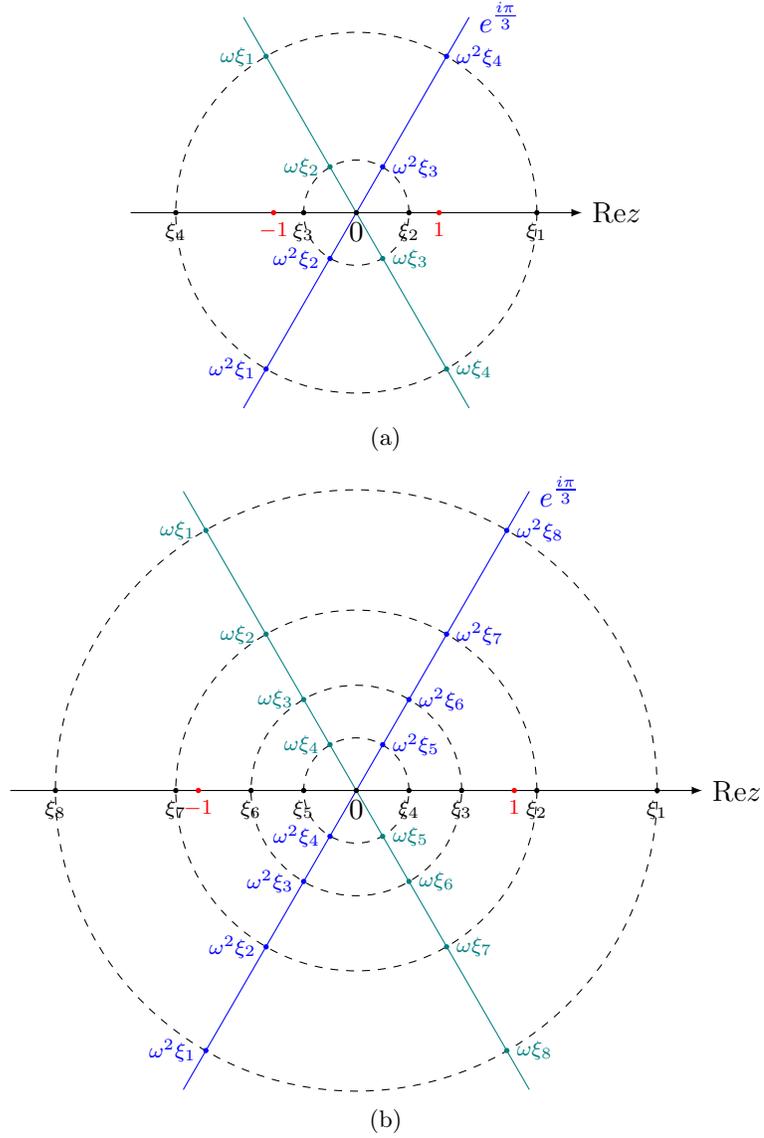
\begin{figure}[htp]
	\centering
	\subfigure[]{
		\begin{tikzpicture}
			\draw[-latex](0,0)--(3,0)node[right]{ Re$z$};
			\draw[blue ](0,0)--(1.5,2.6)node[right]{ $e^{\frac{i\pi}{3}}$};
			\draw[teal](0,0)--(-1.5,2.6);
			\draw[teal](0,0)--(1.5,-2.6);
			\draw[blue](0,0)--(-1.5,-2.6);
			\draw[ ](0,0)--(-3,0);
			\draw[dashed] (0.7,0) arc (0:360:0.7);
			\draw[dashed] (2.4,0) arc (0:360:2.4);
			\coordinate (I) at (0,0);
			\fill (I) circle (1pt) node[below] {$0$};
			\coordinate (A) at (-2.4,0);
			\fill (A) circle (1pt) node[below] {\scriptsize$\xi_4$};
			\coordinate (b) at (-0.7,0);
			\fill (b) circle (1pt) node[below] {\scriptsize$\xi_3$};
			\coordinate (e) at (2.4,0);
			\fill (e) circle (1pt) node[below] {\scriptsize$\xi_1$};
			\coordinate (f) at (0.7,0);
			\fill (f) circle (1pt) node[below] {\scriptsize$\xi_2$};
			\coordinate (c) at (-1.1,0);
			\fill[red] (c) circle (1pt) node[below] {\scriptsize$-1$};
			\coordinate (d) at (1.1,0);
			\fill[red] (d) circle (1pt) node[below] {\scriptsize$1$};
			\coordinate (A1) at (0.35,0.61);
			\fill[blue] (A1) circle (1pt) node[right] {\scriptsize$\omega^2\xi_3$};
			\coordinate (A2) at (0.35,-0.61);
			\fill[teal] (A2) circle (1pt) node[right] {\scriptsize$\omega\xi_3$};
			\coordinate (A3) at (-0.35,0.61);
			\fill[teal] (A3) circle (1pt) node[left] {\scriptsize$\omega\xi_2$};
			\coordinate (A4) at (-0.35,-0.61);
			\fill[blue] (A4) circle (1pt) node[left] {\scriptsize$\omega^2\xi_2$};
			\coordinate (n1) at (1.2,2.08);
			\fill[blue] (n1) circle (1pt) node[right] {\scriptsize$\omega^2\xi_4$};
			\coordinate (m1) at (1.2,-2.08);
			\fill[teal] (m1) circle (1pt) node[right] {\scriptsize$\omega\xi_4$};
			\coordinate (j1) at (-1.2,2.08);
			\fill[teal] (j1) circle (1pt) node[left] {\scriptsize$\omega\xi_1$};
			\coordinate[blue] (k1) at (-1.2,-2.08);
			\fill[blue] (k1) circle (1pt) node[left] {\scriptsize$\omega^2\xi_1$};
		\end{tikzpicture}
	}
	\subfigure[]{
		\begin{tikzpicture}
			\draw[-latex](0,0)--(4.6,0)node[right]{ Re$z$};
			\draw[ ](0,0)--(-4.6,0);
			\draw[blue ](0,0)--(2.3,3.98)node[right]{ $e^{\frac{i\pi}{3}}$};	
			\draw[teal ](0,0)--(2.3,-3.98);
			\draw[teal ](0,0)--(-2.3,3.98);
			\draw[blue ](0,0)--(-2.3,-3.98);
			\draw[dashed] (0.7,0) arc (0:360:0.7);
			\draw[dashed] (2.4,0) arc (0:360:2.4);
			\draw[dashed] (1.4,0) arc (0:360:1.4);
			\draw[dashed] (4,0) arc (0:360:4);
			\coordinate (I) at (0,0);
			\fill (I) circle (1pt) node[below] {$0$};
			\coordinate (c) at (-2.1,0);
			\fill[red] (c) circle (1pt) node[below] {\scriptsize$-1$};
			\coordinate (D) at (2.1,0);
			\fill[red] (D) circle (1pt) node[below] {\scriptsize$1$};
			\coordinate (A) at (-4,0);
			\fill (A) circle (1pt) node[below] {\scriptsize$\xi_8$};
			\coordinate (b) at (-2.4,0);
			\fill (b) circle (1pt) node[below] {\scriptsize$\xi_7$};
			\coordinate (C) at (-0.7,0);
			\fill (C) circle (1pt) node[below] {\scriptsize$\xi_5$};
			\coordinate (d) at (-1.4,0);
			\fill (d) circle (1pt) node[below] {\scriptsize$\xi_6$};
			\coordinate (E) at (4,0);
			\fill (E) circle (1pt) node[below] {\scriptsize$\xi_1$};
			\coordinate (R) at (2.4,0);
			\fill (R) circle (1pt) node[below] {\scriptsize$\xi_2$};
			\coordinate (T) at (0.7,0);
			\fill (T) circle (1pt) node[below] {\scriptsize$\xi_4$};
			\coordinate (Y) at (1.4,0);
			\fill (Y) circle (1pt) node[below] {\scriptsize$\xi_3$};
			\coordinate (A1) at (0.35,0.61);
			\fill[blue] (A1) circle (1pt) node[right] {\scriptsize$\omega^2\xi_5$};
			\coordinate (A2) at (0.35,-0.61);
			\fill[teal] (A2) circle (1pt) node[right] {\scriptsize$\omega\xi_5$};
			\coordinate (A3) at (-0.35,0.61);
			\fill[teal] (A3) circle (1pt) node[left] {\scriptsize$\omega\xi_4$};
			\coordinate (A4) at (-0.35,-0.61);
			\fill[blue] (A4) circle (1pt) node[left] {\scriptsize$\omega^2\xi_4$};
			\coordinate (n1) at (1.2,2.08);
			\fill[blue] (n1) circle (1pt) node[right] {\scriptsize$\omega^2\xi_7$};
			\coordinate (m1) at (1.2,-2.08);
			\fill[teal] (m1) circle (1pt) node[right] {\scriptsize$\omega\xi_7$};
			\coordinate (j1) at (-1.2,2.08);
			\fill[teal] (j1) circle (1pt) node[left] {\scriptsize$\omega\xi_2$};
			\coordinate[blue] (k1) at (-1.2,-2.08);
			\fill[blue] (k1) circle (1pt) node[left] {\scriptsize$\omega^2\xi_2$};
			\coordinate (n2) at (0.7,1.21);
			\fill[blue] (n2) circle (1pt) node[right] {\scriptsize$\omega^2\xi_6$};
			\coordinate (m2) at (0.7,-1.21);
			\fill[teal] (m2) circle (1pt) node[right] {\scriptsize$\omega\xi_6$};
			\coordinate (j2) at (-0.7,1.21);
			\fill[teal] (j2) circle (1pt) node[left] {\scriptsize$\omega\xi_3$};
			\coordinate[blue] (k2) at (-0.7,-1.21);
			\fill[blue] (k2) circle (1pt) node[left] {\scriptsize$\omega^2\xi_3$};
			\coordinate (n23) at (2,3.46);
			\fill[blue] (n23) circle (1pt) node[right] {\scriptsize$\omega^2\xi_8$};
			\coordinate (m23) at (2,-3.46);
			\fill[teal] (m23) circle (1pt) node[right] {\scriptsize$\omega\xi_8$};
			\coordinate (j23) at (-2,3.46);
			\fill[teal] (j23) circle (1pt) node[left] {\scriptsize$\omega\xi_1$};
			\coordinate[blue] (k3) at (-2,-3.46);
			\fill[blue] (k3) circle (1pt) node[left] {\scriptsize$\omega^2\xi_1$};
		\end{tikzpicture}
	}
	\caption{\footnotesize The distribution of stationary phase points  on the contour $\Sigma$:
  The figures (a) and (b) are corresponding to  the  cases   $- {1}/{8}<\xi<0$ and $0\leq\xi<1$, respectively.   }
	\label{figpoint}
\end{figure}

To  uniformly deal with above  the  Case I,  Case II and  Case III,  we introduce some necessary  notations.
We denote the number of stationary phase points on the $ \mathbb{R}$  as
\begin{align}
	p(\xi)=\left\{ \begin{array}{ll}
		0,   &\text{as } \xi>1 \text{ and } \xi<-\frac{1}{8},\\[10pt]
		4 , &\text{as } 0\leq\xi<1,\\[10pt]
		8,   &\text{as } -\frac{1}{8}<\xi<0.
	\end{array}\right. \label{r36}
\end{align}
From the relation  (\ref{42}),   if $\xi_j, j=1,...,p(\xi)$  is   phase points  on the real axis $ \mathbb{R}$,
then  $\omega \xi_j$  and $\omega^2 \xi_j$ for $j=1,...,p(\xi)$ are also the stationary phase points on $\omega\mathbb{R}$ and $\omega^2\mathbb{R}$ respectively.
We use  $\xi_{n,j} :=\omega^n\xi_j$   denoting    the  $j th$  phase point on the $\omega^n\mathbb{R}$ with  $ n=0,1,2;\ j=1,\cdots, p(\xi)$.
Altogether, there are $3p(\xi)$  phase points  corresponding to  12 phase points for the case $0\leq\xi<1$ and 24 phase points
  $-\frac{1}{8}<\xi<0$, respectively.   See  Figure \ref{figpoint}.

On the  $ \mathbb{R} $, denote $\xi_0=-\infty$, $\xi_{p(\xi)+1}=+\infty$, and introduce some  intervals  when $j=1,...,p(\xi)$, for $0\leq\xi<1$
\begin{align}
I_{j1}=I_{j2}=\left\{ \begin{array}{ll}
\left( \frac{\xi_j+\xi_{j+1}}{2},\xi_j\right) ,\    & j\text{ is  odd number} ,\\[10pt]
\left(\xi_j ,\frac{\xi_j+\xi_{j-1}}{2}\right),   &j\text{ is  even number},
\end{array}\right.\label{In1}\\
I_{j3}=I_{j4}=\left\{ \begin{array}{ll}
\left(\xi_j ,\frac{\xi_j+\xi_{j-1}}{2}\right),\    & j\text{ is  odd number} ,\\[10pt]
\left( \frac{\xi_j+\xi_{j+1}}{2},\xi_j\right) ,   &j\text{ is  even number},
\end{array}\right.
\end{align}
and for $- {1}/{8}<\xi<0$,
\begin{align}
I_{j1}=I_{j2}=\left\{ \begin{array}{ll}
\left(\xi_j ,\frac{\xi_j+\xi_{j-1}}{2}\right),\    & j\text{ is  odd number} ,\\[10pt]
\left( \frac{\xi_j+\xi_{j+1}}{2},\xi_j\right) ,   &j\text{ is  even number},
\end{array}\right.\\
I_{j3}=I_{j4}=\left\{ \begin{array}{ll}
\left( \frac{\xi_j+\xi_{j+1}}{2},\xi_j\right) ,\    & j\text{ is  odd number} ,\\[10pt]
\left(\xi_j ,\frac{\xi_j+\xi_{j-1}}{2}\right),   &j\text{ is  even number},
\end{array}\right..\label{In2}
\end{align}
 On the $\omega^n\mathbb{R}, n=0,1,2$,  the  interval division   can be got from those above
$$I_{jk}^\omega=\{\omega z:   z \in I_{jk}\}, \ \ I_{jk}^{\omega^2} =\{   \omega^2z:  z\in I_{jk}\}.$$
As illustrative example, the  interval   division  on the $\mathbb{R}$  are shown in Figure \ref{phase}.
\begin{figure}[h]
	\subfigure[]{
		\begin{tikzpicture}
			\draw[->](-5,0)--(5,0)node[right]{ Re$z$};
			\draw(2.5,0)--(2.5,0.1)node[above]{\scriptsize$\frac{\xi_1+\xi_2}{2}$};
			\draw(2.5,0)--(2.5,-0.1);
			\draw(-2.5,0)--(-2.5,0.1)node[above]{\scriptsize$\frac{\xi_3+\xi_4}{2}$};
			\draw(-2.5,0)--(-2.5,-0.1);
			\draw(0,0)--(0,0.1)node[above]{\scriptsize$\frac{\xi_3+\xi_2}{2}$};
			\draw(0,0)--(0,-0.1);
			\coordinate (I) at (0,0);
			\fill (I) circle (1pt) node[below] {$0$};
			\coordinate (A) at (-4,0);
			\fill (A) circle (1pt) node[below] {$\xi_4$};
			\coordinate (b) at (-1,0);
			\fill (b) circle (1pt) node[below] {$\xi_3$};
			\coordinate (e) at (4,0);
			\fill (e) circle (1pt) node[below] {$\xi_1$};
			\coordinate (f) at (1,0);
			\fill (f) circle (1pt) node[below] {$\xi_2$};
			\coordinate (ke) at (4.7,0.1);
			\fill (ke) circle (0pt) node[below] {\footnotesize$I_{13}$};
			\coordinate (k1e) at (4.7,-0.1);
			\fill (k1e) circle (0pt) node[above] {\footnotesize$I_{14}$};
			\coordinate (le) at (3.3,0.1);
			\fill (le) circle (0pt) node[below] {\footnotesize$I_{12}$};
			\coordinate (l1e) at (3.3,-0.1);
			\fill (l1e) circle (0pt) node[above] {\footnotesize$I_{11}$};
			\coordinate (n2) at (0.57,0.1);
			\fill (n2) circle (0pt) node[below] {\footnotesize$I_{23}$};
			\coordinate (n12) at (0.57,-0.1);
			\fill (n12) circle (0pt) node[above] {\footnotesize$I_{24}$};
			\coordinate (m2) at (1.8,0.1);
			\fill (m2) circle (0pt) node[below] {\footnotesize$I_{22}$};
			\coordinate (m12) at (1.8,-0.1);
			\fill (m12) circle (0pt) node[above] {\footnotesize$I_{21}$};
			\coordinate (k) at (-4.7,0.1);
			\fill (k) circle (0pt) node[below] {\footnotesize$I_{43}$};
			\coordinate (k1) at (-4.7,-0.1);
			\fill (k1) circle (0pt) node[above] {\footnotesize$I_{44}$};
			\coordinate (l) at (-3.2,0.1);
			\fill (l) circle (0pt) node[below] {\footnotesize$I_{42}$};
			\coordinate (l1) at (-3.2,-0.1);
			\fill (l1) circle (0pt) node[above] {\footnotesize$I_{41}$};
			\coordinate (n) at (-0.5,0.1);
			\fill (n) circle (0pt) node[below] {\footnotesize$I_{33}$};
			\coordinate (n1) at (-0.5,-0.1);
			\fill (n1) circle (0pt) node[above] {\footnotesize$I_{34}$};
			\coordinate (m) at (-1.7,0.1);
			\fill (m) circle (0pt) node[below] {\footnotesize$I_{32}$};
			\coordinate (m1) at (-1.7,-0.1);
			\fill (m1) circle (0pt) node[above] {\footnotesize$I_{31}$};
			\coordinate (c) at (-2.05,0);
			\fill[red] (c) circle (1pt) node[below] {\scriptsize$-1$};
			\coordinate (d) at (2.05,0);
			\fill[red] (d) circle (1pt) node[below] {\scriptsize$1$};
			\end{tikzpicture}
			\label{pcase1}}
		\subfigure[]{
		\begin{tikzpicture}
		\draw[->](-6.5,0)--(6.5,0)node[right]{ Re$z$};
		\coordinate (I) at (0,0);
		\fill (I) circle (1pt) node[below] {$0$};
		\coordinate (c) at (-3,0);
		\fill[red] (c) circle (1pt) node[below] {\scriptsize$-1$};
		\coordinate (D) at (3,0);
		\fill[red] (D) circle (1pt) node[below] {\scriptsize$1$};
		\draw(-1.5,0)--(-1.5,0.1)node[above]{\scriptsize$\frac{\xi_5+\xi_6}{2}$};
		\draw(-1.5,0)--(-1.5,-0.1);
		\draw(-4.7,0)--(-4.7,0.1)node[above]{\scriptsize$\frac{\xi_7+\xi_8}{2}$};
		\draw(-4.7,0)--(-4.7,-0.1);
		\draw(-3.1,0)--(-3.1,0.1)node[above]{\scriptsize$\frac{\xi_7+\xi_6}{2}$};
		\draw(-3.1,0)--(-3.1,-0.1);
		\draw(1.5,0)--(1.5,0.1)node[above]{\scriptsize$\frac{\xi_3+\xi_4}{2}$};
		\draw(1.5,0)--(1.5,-0.1);
		\draw(4.7,0)--(4.7,0.1)node[above]{\scriptsize$\frac{\xi_1+\xi_2}{2}$};
		\draw(4.7,0)--(4.7,-0.1);
		\draw(3.1,0)--(3.1,0.1)node[above]{\scriptsize$\frac{\xi_3+\xi_2}{2}$};
		\draw(3.1,0)--(3.1,-0.1);
		\draw(0,0)--(0,0.1)node[above]{\scriptsize$\frac{\xi_4+\xi_5}{2}$};
		\draw(0,0)--(0,-0.1);
		\coordinate (A) at (-5.4,0);
		\fill (A) circle (1pt) node[below] {$\xi_8$};
		\coordinate (b) at (-4,0);
		\fill (b) circle (1pt) node[below] {$\xi_7$};
		\coordinate (C) at (-0.8,0);
		\fill (C) circle (1pt) node[below] {$\xi_5$};
		\coordinate (d) at (-2.2,0);
		\fill (d) circle (1pt) node[below] {$\xi_6$};
		\coordinate (E) at (5.4,0);
		\fill (E) circle (1pt) node[below] {$\xi_1$};
		\coordinate (R) at (4,0);
		\fill (R) circle (1pt) node[below] {$\xi_2$};
		\coordinate (T) at (0.8,0);
		\fill (T) circle (1pt) node[below] {$\xi_4$};
		\coordinate (Y) at (2.2,0);
		\fill (Y) circle (1pt) node[below] {$\xi_3$};
		\coordinate (q) at (6.2,-0.1);
		\fill (q) circle (0pt) node[above] {\tiny$I_{11}$};
		\coordinate (q1) at (6.2,0.05);
		\fill (q1) circle (0pt) node[below] {\tiny$I_{12}$};
		\coordinate (w) at (4.95,-0.1);
		\fill (w) circle (0pt) node[above] {\tiny$I_{14}$};
		\coordinate (w1) at (4.95,0.1);
		\fill (w1) circle (0pt) node[below] {\tiny$I_{13}$};
		\coordinate (e) at (4.47,-0.1);
		\fill (e) circle (0pt) node[above] {\tiny$I_{24}$};
		\coordinate (e1) at (4.47,0.1);
		\fill (e1) circle (0pt) node[below] {\tiny$I_{23}$};
		\coordinate (r) at (3.4,-0.1);
		\fill (r) circle (0pt) node[above] {\tiny$I_{21}$};
		\coordinate (r1) at (3.4,0.05);
		\fill (r1) circle (0pt) node[below] {\tiny$I_{22}$};
		\coordinate (t) at (2.7,-0.1);
		\fill (t) circle (0pt) node[above] {\tiny$I_{31}$};
		\coordinate (t1) at (2.7,0.05);
		\fill (t1) circle (0pt) node[below] {\tiny$I_{32}$};
		\coordinate (y) at (1.75,-0.1);
		\fill (y) circle (0pt) node[above] {\tiny$I_{34}$};
		\coordinate (y1) at (1.75,0.1);
		\fill (y1) circle (0pt) node[below] {\tiny$I_{33}$};
		\coordinate (l) at (1.26,-0.1);
		\fill (l) circle (0pt) node[above] {\tiny$I_{44}$};
		\coordinate (l1) at (1.26,0.1);
		\fill (l1) circle (0pt) node[below] {\tiny$I_{43}$};
		\coordinate (k) at (0.3,-0.1);
		\fill (k) circle (0pt) node[above] {\tiny$I_{41}$};
		\coordinate (k1) at (0.3,0.1);
		\fill (k1) circle (0pt) node[below] {\tiny$I_{42}$};
		\coordinate (q8) at (-6.2,-0.1);
		\fill (q8) circle (0pt) node[above] {\tiny$I_{81}$};
		\coordinate (q18) at (-6.2,0.05);
		\fill (q18) circle (0pt) node[below] {\tiny$I_{82}$};
		\coordinate (w8) at (-4.95,-0.1);
		\fill (w8) circle (0pt) node[above] {\tiny$I_{84}$};
		\coordinate (w18) at (-4.95,0.1);
		\fill (w18) circle (0pt) node[below] {\tiny$I_{83}$};
		\coordinate (e7) at (-4.47,-0.1);
		\fill (e7) circle (0pt) node[above] {\tiny$I_{74}$};
		\coordinate (e17) at (-4.47,0.1);
		\fill (e17) circle (0pt) node[below] {\tiny$I_{73}$};
		\coordinate (7r) at (-3.4,-0.1);
		\fill (7r) circle (0pt) node[above] {\tiny$I_{71}$};
		\coordinate (r17) at (-3.4,0.05);
		\fill (r17) circle (0pt) node[below] {\tiny$I_{72}$};
		\coordinate (t6) at (-2.7,-0.1);
		\fill (t6) circle (0pt) node[above] {\tiny$I_{61}$};
		\coordinate (t16) at (-2.7,0.05);
		\fill (t16) circle (0pt) node[below] {\tiny$I_{62}$};
		\coordinate (y6) at (-1.75,-0.1);
		\fill (y6) circle (0pt) node[above] {\tiny$I_{64}$};
		\coordinate (y16) at (-1.75,0.1);
		\fill (y16) circle (0pt) node[below] {\tiny$I_{63}$};
		\coordinate (l5) at (-1.26,-0.1);
		\fill (l5) circle (0pt) node[above] {\tiny$I_{54}$};
		\coordinate (l15) at (-1.26,0.1);
		\fill (l15) circle (0pt) node[below] {\tiny$I_{53}$};
		\coordinate (k5) at (-0.3,-0.1);
		\fill (k5) circle (0pt) node[above] {\tiny$I_{51}$};
		\coordinate (k15) at (-0.3,0.1);
		\fill (k15) circle (0pt) node[below] {\tiny$I_{52}$};
		\end{tikzpicture}
		\label{phase2}}
	\caption{\footnotesize Figure (a) corresponds to  the case $0<\xi<1$  with  four stationary phase points $\xi_1,...\xi_4$.
     Figure  (b)   corresponds  to  the case  $- {1}/{8}<\xi<0$  with eight stationary phase points $\xi_1,...\xi_8$.}
	\label{phase}
\end{figure}
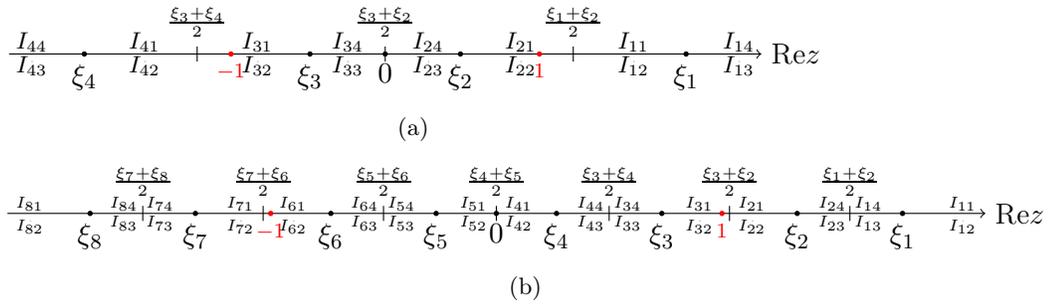

\subsection{Conjugation}

For convenience, we denote
$$\mathcal{N}\triangleq\left\lbrace 1,...,N_0\right\rbrace, \   \tilde{\mathcal{N}}\triangleq\left\lbrace 1,...,2N_1+N_2\right\rbrace, \ \tilde{\mathcal{N}}^A\triangleq\left\lbrace 1+2N_1+N_2,...,N_0\right\rbrace.$$
Further  to distinguish different  type of zeros, we introduce a small positive constant $\delta_0$ to give the  partitions $\Delta,\nabla$ and $\lozenge$  of $\mathcal{N}$   as follow

\begin{align}
&\nabla_1=\left\lbrace j \in \tilde{\mathcal{N}}  ;\text{Im}\theta_{12}(\zeta_j)> \delta_0\right\rbrace,
\Delta_1=\left\lbrace j \in \tilde{\mathcal{N}} ;\text{Im}\theta_{12}(\zeta_j)< 0\right\rbrace,\nonumber\\
&\nabla_2=\left\lbrace i \tilde{\mathcal{N}}^A  ;\text{Im}\theta_{23}(\zeta_i)> \delta_0\right\rbrace,
\Delta_2=\left\lbrace i \in \tilde{\mathcal{N}}^A ;\text{Im}\theta_{23}(\zeta_i)< 0\right\rbrace, \\
&\lozenge_1=\left\lbrace j_0 \in \tilde{\mathcal{N}} ;0\leq\text{Im}\theta_{12}(\zeta_{j_0})\leq\delta_0\right\rbrace,\nonumber\\	
&\lozenge_2=\left\lbrace i_0 \in \tilde{\mathcal{N}}^A  ;0\leq\text{Im}\theta_{23}(\zeta_{i_0})\leq\delta_0\right\rbrace, \\	
&\nabla=\nabla_1\cup\nabla_2,
\Delta=\Delta_1\cup\Delta_2,\nonumber\\
&\lozenge=\{n\in\mathcal{N};\exists k\in\{1,...,5\},n-kN_0\in\lozenge_1\cup\lozenge_2\}.\label{devide}
\end{align}
For $\zeta_n$ with $n\in\Delta$, the residue of $M(z)$ at $\zeta_n$ in (\ref{RES1}) and (\ref{RES2})  grows without bound as $t\to\infty$. Similarly, for $\zeta_n$ with $n\in\nabla$, the residue are  approaching to  $0$. Denote two constants $\mathcal{N}(\lozenge)=|\lozenge|$ and
\begin{equation}
	\rho_0=\min_{n\in\mathcal{N}\setminus \lozenge}\left\lbrace |\text{Im}\theta_n|\right\rbrace >0.\label{rho0}
\end{equation}

For the poles $\zeta_n$ with $ \mathcal{N}\setminus \lozenge$, we want to trap them for jumps along small closed loops enclosing themselves,
respectively. And the jump matrix $V_j(z)$ (\ref{jumpv})  also needs to  be restricted for $j=1,...,6$. Recall the   factorizations of  the jump matrix:\\
For, j=1,4,
\begin{align}
	&V_j(z)=\left(\begin{array}{ccc}
	1 & 0 &0\\
	\bar{r} e^{-it\theta_{12}} & 1 &0\\
	0 & 0 &1
\end{array}\right)\left(\begin{array}{ccc}
1 & -r e^{it\theta_{12}} &0\\
0 & 1&0\\
0 & 0 &1
\end{array}\right)\\
&=\left(\begin{array}{ccc}
	1 & \frac{-r e^{it\theta_{12}}}{1-|r|^2}&0\\
	0 & 1&0\\
	0 & 0 &1
\end{array}\right)\left(\begin{array}{ccc}
\frac{1 }{1-|r|^2} & 0 & 0\\
0 & 1-|r|^2 & 0\\
0 & 0 &1
\end{array}\right)\left(\begin{array}{ccc}
1 & 0 & 0\\
\frac{\bar{r} e^{-it\theta_{12}}}{1-|r|^2} & 1 & 0\\
0 & 0 &1
\end{array}\right);\nonumber
\end{align}
For, j=2,5,
\begin{align}
	&V_j(z)=\left(\begin{array}{ccc}
		1 & 0 &0\\
		0 & 1 &0\\
		0 & \bar{r}(\omega z) e^{-it\theta_{23}} &1
	\end{array}\right)\left(\begin{array}{ccc}
		1 & 0 &0\\
		0 & 1&-r(\omega z) e^{it\theta_{23}}\\
		0 & 0 &1
	\end{array}\right)\\
	&=\left(\begin{array}{ccc}
		1 & 0&0\\
		0 & 1&\frac{-r(\omega z) e^{it\theta_{23}}}{1-|r(\omega z)|^2}\\
		0 & 0 &1
	\end{array}\right)\left(\begin{array}{ccc}
		1 & 0 & 0\\
		0 & \frac{1 }{1-|r(\omega z)|^2} & 0\\
		0 & 0 &1-|r(\omega z)|^2
	\end{array}\right)\left(\begin{array}{ccc}
		1 & 0 & 0\\
	0 & 1 & 0\\
		0 & 	\frac{\bar{r}(\omega z) e^{-it\theta_{23}}}{1-|r(\omega z)|^2} &1
	\end{array}\right);\nonumber
\end{align}
For, j=3,6,
\begin{align}
	&V_j(z)=\left(\begin{array}{ccc}
		1 & 0 &\bar{r}(\omega^2 z) e^{it\theta_{13}}\\
		0 & 1 &0\\
		0 & 0 &1
	\end{array}\right)\left(\begin{array}{ccc}
		1 & 0 &0\\
		0 & 1&0\\
		-r(\omega^2 z) e^{-it\theta_{13}} & 0 &1
	\end{array}\right)\\
	&=\left(\begin{array}{ccc}
		1 & 0&0\\
		0 & 1&0\\
		\frac{-r(\omega^2 z) e^{-it\theta_{13}}}{1-|r(\omega^2 z)|^2} & 0 &1
	\end{array}\right)\left(\begin{array}{ccc}
		1-|r(\omega^2 z)|^2 & 0 & 0\\
		0 & 1 & 0\\
		0 & 0 &\frac{1}{1-|r(\omega^2 z)|^2}
	\end{array}\right)\left(\begin{array}{ccc}
		1 & 0 & 	\frac{\bar{r}(\omega^2 z) e^{it\theta_{13}}}{1-|r(\omega^2 z)|^2}\\
		0 & 1 & 0\\
		0 & 0 &1
	\end{array}\right).\nonumber
\end{align}
We will utilize  these factorizations to deform the jump contours so that the oscillating factor $e^{\pm it\theta_{12}}$ are decaying in corresponding region, respectively.

 Introduce the notations
\begin{align}
	I(\xi)=\left\{ \begin{array}{ll}
		\emptyset,   &\text{as } \xi>1,\\[4pt]
		(-\infty,\xi_8)\cup_{j=1}^3(\xi_{2j+1},\xi_{2j})\cup(\xi_1,+\infty),   &\text{as } -\frac{1}{8}<\xi<0,\\[4pt]
		(\xi_4,\xi_3)\cup(\xi_2,\xi_1),   &\text{as } 0\leq\xi<1,\\[4pt]
		\mathbb{R},   &\text{as } \xi<-\frac{1}{8},\\
	\end{array}\right.
\end{align}
and $I^\omega(\xi)=\{  \omega z:  z\in   I(\xi)\}$, $I^{\omega^2}(\xi)=\{ \omega^2 z: z\in I(\xi)\}$. Here  we denote
the integration on $\emptyset$ is zero.

Consider the following scalar RH problem \\
$\blacktriangleright$ $\delta_1(z)$	is analytic in $\mathbb{C}\setminus\mathbb{R}$;\\
$\blacktriangleright$ $\delta_1(z)$ has jump relation:
\begin{align}
& \delta_{1,+} (z)=\delta_{1,-}(z)(1-|r(z)|^2), \ z\in I(\xi),\nonumber\\
&\delta_{1,+}(z)=\delta_{1,-}(z), \ z\in \mathbb{R}\setminus I(\xi).\nonumber
\end{align}
$\blacktriangleright$ $\delta_1(z)\to 1$ as $z\to\infty$.

 By Plemelj  formula, this RH problem admits solution
\begin{align}
	\delta_1 (z)&=\delta_1 (z,\xi)=\exp\left(-i\int _{I(\xi)}\dfrac{\nu(s) ds}{s-z}\right),
\end{align}
where $\nu(z) =-\frac{1}{2\pi }\log (1-|r(z)|^2).$

We  define
\begin{align}
\Pi(z)&=\prod_{n\in \Delta_1}\dfrac{z-\zeta_n}{z-\bar{\zeta}_n},\
\Pi^A(z)=\prod_{m\in \Delta_2}\dfrac{z-\omega\zeta_m}{z-\omega^2\bar{\zeta}_m};\label{Pi}\\
H(z)&=H(z,\xi)=\Pi^A(z)\Pi(z)\delta_1 (z,\xi)^{-1};\\
T_1(z)&=T_1(z,\xi)= \frac{H(\omega^2z)}{H(z)};\label{T}\\ T_2(z)&=T_2(z,\xi)=T_1(\omega z)= \frac{H(z)}{H(\omega z)};\\ T_3(z)&=T_3(z,\xi)=T_1(\omega^2 z)= \frac{H(\omega z)}{H(\omega^2 z)};\\
T_{ij}(z)&=T_{ij}(z,\xi)=\frac{T_i(z)}{T_j(z)},\ i,j=1,2,3\label{Tij}.
\end{align}
In  the above formulas, we choose the principal branch of power and logarithm functions. Additionally, Introduce   a positive constant $\tilde{\varrho}=\frac{1}{6}\underset{j\neq i\in \mathcal{N}}{\min} |\zeta_i-\zeta_j|$ and a set of  characteristic functions $\mathcal{X}(z;\xi,j)$ on  the interval $\eta(\xi,j)\xi_j- \tilde{\varrho}<\eta(\xi,j)z<\eta(\xi,j)\xi_j$ for $j=1,...,p(\xi)$, respectively. And $\eta(\xi,j)$ is a constant depend on $\xi$ and $j$:
\begin{align}
\eta(\xi,j)=\left\{ \begin{array}{ll}
		(-1)^j,   &\text{as } -\frac{1}{8}<\xi<0;\\
		(-1)^{j+1},   &\text{as } 0\leq\xi<1.
	\end{array}\right.\label{eta}
\end{align}

\begin{Proposition}\label{proT}
	The function defined by (\ref{T}) and (\ref{Tij}) has following properties
\begin{itemize}
\item[{\rm (a)}]  $T_1$ is meromorphic in $\mathbb{C}\setminus \mathbb{R}$, and for each $n\in\Delta_1$, $T_1(z)$ has a simple pole at $\zeta_n$ and a simple zero
 at $\bar{\zeta}_n$, for each $m\in\Delta_2$, $T_1(z)$ has a simple pole at $\omega\zeta_m$ and a simple zero at $\omega\bar{\zeta}_m$;

\item[{\rm (b)}]  $\overline{T_1(\bar{z})}=T_1(\omega z)=T_1(z^{-1})$;

\item[{\rm (c)}] For $z\in I(\xi)$, as z approaching the real axis from above and below, $\delta_1(z)$ has boundary values $ \delta_{1,\pm}$, which satisfies:
	\begin{equation}
	 \delta_{1,-}(z)=(1-|r(z)|^2) \delta_{1,+}(z),\hspace{0.5cm}z\in I(\xi),
	\end{equation}
	which gives that
	\begin{align}
	& T_{1,-}(z)=(1-|r(z)|^2) T_{1,+ }(z),\hspace{0.5cm}z\in I(\xi),\\
&T_{1,+ }(z)=(1-|r(\omega^2z)|^2)T_{1,- }(z),\hspace{0.5cm}z\in I^{\omega}(\xi);
	\end{align}

	\item[{\rm (d)}]  $\lim_{z\to \infty}T_1(z)\triangleq T_1(\infty)$ with $T_1(\infty)=1$;

\item[{\rm (e)}]  $T_1(e^{\frac{i\pi}{6}})$ exists as a constants;

\item[{\rm (f)}]  $T_1(z)$ is continuous at $z=0$, and
	\begin{equation}
	T_1(0)=T_1(\infty)=1 \label{T0};
	\end{equation}

\item[{\rm (g)}]  As $z\to \xi_j$ along any ray $\xi_j+e^{i\phi}\mathbb{R}^+$ with $|\phi|<\pi$ ,
	\begin{align}
	|T_{12}(z,\xi)-T_{12}^{(j)}(\xi)\left[\eta(\xi,j) (z-\xi_j)\right] ^{i\eta(\xi,j)\nu(\xi_j)}|\lesssim \parallel r\parallel_{H^{1,1}(\mathbb{R})}|z-\xi_j|^{1/2},\label{T-TJ}
	\end{align}
	where $T_{12}^{(j)}(\xi) $ is the complex unit
	\begin{align}
	T_{12}^{(j)}(\xi)&= \prod_{n\in \Delta_1}\left(\dfrac{\xi_j-\zeta_n}{\xi_j-\bar{\zeta}_n}\right) ^{-2}\dfrac{\xi_j-\omega\zeta_n}{\xi_j-\omega\bar{\zeta}_n}\dfrac{\xi_j-\omega^2\zeta_n}{\xi_j-\omega^2\bar{\zeta}_n}\nonumber\\
	&\prod_{m\in \Delta_2}\left( \dfrac{\xi_j-\omega\zeta_m}{\xi_j-\omega^2\bar{\zeta}_m}\right) ^{-2}\dfrac{\xi_j-\omega^2\zeta_m}{\xi_j-\bar{\zeta}_m}\dfrac{\xi_j-\zeta_m}{\xi_j-\omega\bar{\zeta}_m}e^{2i\beta(\xi_j,\xi)}, \ \ j=1,...,p(\xi).\nonumber
	\end{align}
  In above function,
	\begin{align}
	\beta_j(z,\xi)
	&=\int_{I(\xi)}\frac{\nu(s)-\mathcal{X}(z;\xi,j)\nu(\xi_j)}{s-z}ds-\eta(\xi,j)\nu(\xi_j)\log\left(\eta(\xi,j) (z-\xi_j+\tilde{\varrho})\right).\nonumber
	\end{align}
\end{itemize}
\end{Proposition}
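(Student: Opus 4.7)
The plan is to verify each property directly from the Cauchy-integral representation $\delta_1(z)=\exp(-i\int_{I(\xi)}\nu(s)(s-z)^{-1}\,ds)$ and the finite Blaschke products $\Pi,\Pi^A$, treating (a)--(f) as bookkeeping and devoting the bulk of the argument to the local expansion in (g).

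Parts (a), (d), (e), (f) are essentially immediate from the definitions. Analyticity of $\delta_1$ off $I(\xi)\subset\mathbb{R}$ is built into its integral formula, so $H=\Pi^A\Pi\delta_1^{-1}$ is meromorphic in $\mathbb{C}\setminus\mathbb{R}$ with zeros and poles inherited from $\Pi,\Pi^A$; the pole/zero list for $T_1(z)=H(\omega^2 z)/H(z)$ claimed in (a) is obtained by transporting these under $z\mapsto\omega^2 z$. For (d) one uses $\Pi,\Pi^A,\delta_1\to 1$ at $\infty$, and for (f) the triviality $T_1(0)=H(0)/H(0)=1$ together with the fact that $H(0)$ is finite and nonzero: $0$ avoids the discrete zero/pole set, and when $0\in I(\xi)$ the integrand $\nu(s)/s$ is integrable because $r(0)=0$ forces $\nu(s)=O(s^2)$ near $s=0$ (Proposition \ref{pror}). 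For (e), $e^{i\pi/6}$ and $\omega^2 e^{i\pi/6}=-i$ lie off $\mathbb{R}$ and, by the genericity assumption on $u_0$, are not spectral singularities, so both $H(e^{i\pi/6})$ and $H(-i)$ are finite and nonzero.

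For (b), taking complex conjugates of the integral and using $\nu\in\mathbb{R}$ yields $\overline{\delta_1(\bar{z})}=\delta_1(z)^{-1}$, which together with $\overline{\Pi(\bar{z})}=\Pi(z)^{-1}$ and $\overline{\Pi^A(\bar{z})}=\Pi^A(z)^{-1}$ gives $\overline{H(\bar{z})}=H(z)^{-1}$; applying this identity to numerator and denominator of $T_1$ produces $\overline{T_1(\bar{z})}=T_1(\omega z)$. The companion identity $T_1(z^{-1})=T_1(z)$ reduces to $\delta_1(z^{-1})=\delta_1(z)$, which follows from the change of variables $s\mapsto 1/u$ using $\nu(1/u)=\nu(u)$ and the inversion-symmetry of $I(\xi)$, the constant correction $-\int\nu(u)/u\,du$ vanishing by antisymmetry; the corresponding identity for $\Pi,\Pi^A$ uses the discrete symmetry $\zeta_n\mapsto 1/\bar{\zeta}_n$ of the spectrum. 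Part (c) is Plemelj: $H(\omega^2 z)$ is analytic across $\mathbb{R}$, so the jump of $T_1$ on $I(\xi)$ is that of $H(z)^{-1}$, which by the scalar RH problem for $\delta_1$ equals multiplication by $1-|r(z)|^2$; the jump on $I^\omega(\xi)$ then follows by applying the symmetry in (b).

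The technical core is (g). Writing $T_{12}(z)=H(\omega^2 z)H(\omega z)/H(z)^2$, the factors $H(\omega z)$ and $H(\omega^2 z)$ are analytic in a fixed neighborhood of $\xi_j\in\mathbb{R}$ because $\omega\xi_j,\omega^2\xi_j$ stay bounded away from $\mathbb{R}$; evaluating them at $\xi_j$ contributes only the smooth Blaschke-product factors entering the complex unit $T_{12}^{(j)}(\xi)$. All non-smooth behavior sits in $\delta_1(z)^2$ inside $H(z)^{-2}$. I would split the exponent using the characteristic function $\mathcal{X}(s;\xi,j)$,
\begin{equation*}
-i\int_{I(\xi)}\frac{\nu(s)}{s-z}\,ds = -i\nu(\xi_j)\int_{I(\xi)}\frac{\mathcal{X}(s;\xi,j)}{s-z}\,ds - i\int_{I(\xi)}\frac{\nu(s)-\nu(\xi_j)\mathcal{X}(s;\xi,j)}{s-z}\,ds.
\end{equation*}
The first integral evaluates to an elementary logarithm and, after exponentiation in $T_{12}$, contributes precisely the power $[\eta(\xi,j)(z-\xi_j)]^{i\eta(\xi,j)\nu(\xi_j)}$, with the sign $\eta(\xi,j)$ encoding whether $\xi_j$ is a left or right endpoint of its sub-interval of $I(\xi)$. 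The second integrand is Hölder $1/2$ at $s=\xi_j$ thanks to $r\in\mathcal{S}(\mathbb{R})$, so the corresponding Cauchy integral evaluated at $z$ differs from its value at $\xi_j$ (folded into the constant $\beta_j(\xi_j,\xi)$) by $O(\|r\|_{H^{1,1}}|z-\xi_j|^{1/2})$ via standard Plemelj--Sokhotski estimates. I anticipate the main obstacle to be the sign and branch bookkeeping: matching the orientations $\eta(\xi,j)=\pm 1$ and the choice of $\mathcal{X}$ so that the explicit product $T_{12}^{(j)}(\xi)$ and the phase $e^{2i\beta(\xi_j,\xi)}$ emerge precisely as stated; once this is settled, the remaining Hölder estimate is routine.
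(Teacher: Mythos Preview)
Your approach matches the paper's: parts (a)--(f) are dispatched by direct inspection of the definitions and the Plemelj formula, and (g) by isolating the explicit logarithm from the Cauchy integral for $\delta_1$ and bounding the remainder via the H\"older-$\tfrac12$ estimate $|\beta_j(z,\xi)-\beta_j(\xi_j,\xi)|\lesssim\|r\|_{H^{1}}|z-\xi_j|^{1/2}$ (the paper writes $\delta_1=\exp\bigl(i\beta_j+i\eta\nu(\xi_j)\log(\eta(z-\xi_j))\bigr)$ and cites \cite{fNLS} for this standard step). One bookkeeping slip in (b): the inversion identity required is $T_1(z^{-1})=T_1(\omega z)$, not $T_1(z^{-1})=T_1(z)$; your ingredients (the relation between $\delta_1(z^{-1})$ and $\delta_1(z)$, and the discrete symmetry $\zeta_n\leftrightarrow 1/\bar\zeta_n$) are the right ones, but after passing through $T_1=H(\omega^2\,\cdot)/H(\cdot)$ the rotation by $\omega$ reappears on the right-hand side.
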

\begin{proof}
	 Properties (a), (b), (d) and (f)  can be obtain by simple calculation from the definition of $T_1(z)$ and $T_{ij}$ in (\ref{T}) and (\ref{Tij}).  (c)  follows from the Plemelj formula. And for (g), analogously to  \cite{fNLS}, rewrite
	 \begin{align}
	 \delta_1 (z,\xi)=\exp\left(i\beta_j(z,\xi)+\nu(\xi_j)\eta(\xi,j)i\log\left( \eta(\xi,j)(z-\xi_j)\right)  \right) ,
	 \end{align}
	 and note the fact that
	 \begin{align}
	 |(z-\xi_j)^{\eta(\xi,j)i\nu(\xi_j)}|\leq e^{-\pi\nu(\xi_j)=\sqrt{1+|r(\xi_j)|^2}}\label{key},
	 \end{align}
	 and
	 $$|\beta_j(z,\xi)-\beta_j(\xi_j,\xi)|\lesssim \parallel r\parallel_{H^{1,0}(\mathbb{R})}|z-\xi_j|^{1/2}.$$
	 Then, the result  follows promptly.
	 For brevity, we  omit computation.
\end{proof}

The (g) in  Proposition \ref{proT} and the definition in (\ref{T})-(\ref{Tij}) also give that  as $z\to \omega\xi_j$ along any ray $\omega\xi_j+e^{i\phi}\mathbb{R}^+$ with $|\phi|<\pi$ ,
\begin{align}
	|T_{31}(z,\xi)-T_{31}^{(j)}(\xi)\left( \eta(\xi,j)( z-\omega\xi_j)\right) ^{\eta(\xi,j)i\nu(\xi_j)}|\lesssim \parallel r\parallel_{H^{1,1}(\mathbb{R})}|z-\omega\xi_j|^{1/2},\label{T-T31}
\end{align}
where $T_{31}^{(j)}(\xi)= $ is the complex unit
\begin{align}
	T_{31}^{(j)}(\xi)&= \prod_{n\in \Delta_1}\dfrac{\omega\xi_j-\zeta_n}{\omega\xi_j-\bar{\zeta}_n}\left(\dfrac{\omega\xi_j-\omega\zeta_n}{\omega\xi_j-\omega\bar{\zeta}_n}\right) ^{-2}\dfrac{\omega\xi_j-\omega^2\zeta_n}{\omega\xi_j-\omega^2\bar{\zeta}_n}\nonumber\\
	&\prod_{m\in \Delta_2} \dfrac{\omega\xi_j-\omega\zeta_m}{\omega\xi_j-\omega^2\bar{\zeta}_m}\left(\dfrac{\omega\xi_j-\omega^2\zeta_m}{\omega\xi_j-\bar{\zeta}_m}\right) ^{-2}\dfrac{\omega\xi_j-\zeta_m}{\omega\xi_j-\omega\bar{\zeta}_m}e^{2i\beta(\omega\xi_j,\xi)},
\end{align}
for  $j=1,...,p(\xi)$.
 And as $z\to \omega^2\xi_j$ along any ray $\omega^2\xi_j+e^{i\phi}\mathbb{R}^+$ with $|\phi|<\pi$ ,
 \begin{align}
 	|T_{23}(z,\xi)-T_{23}^{(j)}(\xi)\left( \eta(\xi,j)( z-\omega^2\xi_j)\right) ^{\eta(\xi,j)i\nu(\xi_j)}|\lesssim \parallel r\parallel_{H^{1,1}(\mathbb{R})}|z-\omega\xi_j|^{1/2},\label{T-T23}
 \end{align}
 where $T_{23}^{(j)}(\xi)= $ is the complex unit
 \begin{align}
 	T_{23}^{(j)}(\xi)&= \prod_{n\in \Delta_1}\dfrac{\omega^2\xi_j-\zeta_n}{\omega^2\xi_j-\bar{\zeta}_n}\dfrac{\omega^2\xi_j-\omega\zeta_n}{\omega^2\xi_j-\omega\bar{\zeta}_n}\left(\dfrac{\omega^2\xi_j-\omega^2\zeta_n}{\omega^2\xi_j-\omega^2\bar{\zeta}_n}\right) ^{-2}\nonumber\\
 	&\prod_{m\in \Delta_2} \dfrac{\omega^2\xi_j-\omega\zeta_m}{\omega^2\xi_j-\omega^2\bar{\zeta}_m}\dfrac{\omega^2\xi_j-\omega^2\zeta_m}{\omega^2\xi_j-\bar{\zeta}_m}\left(\dfrac{\omega^2\xi_j-\zeta_m}{\omega^2\xi_j-\omega\bar{\zeta}_m}\right) ^{-2}e^{2i\beta(\omega^2\xi_j,\xi)},
 \end{align}
 for  $j=1,...,p(\xi)$.

Additionally, introduce   a positive constant $\varrho$:
\begin{align}
	\varrho=\frac{1}{4}\min&\left\lbrace \min_{j\in \mathcal{N}} |\text{Im}\zeta_j| ,\min_{j\in \mathcal{N},\arg(z)=\frac{\pi i}{3}}|\zeta_j-z|,\min_{j\in \mathcal{N}\setminus\lozenge,\text{Im}\theta_{ik}(z)=0}|\zeta_j-z|,\right.\nonumber\\
	&\left. \min_{  j\in \mathcal{N}}|\zeta_j-e^{\frac{i\pi}{6}}|,\min_{  j\neq k\in \mathcal{N}}|\zeta_j-\zeta_k|\right\rbrace .
\end{align}
By above definition and the symmetry of poles and $\theta$, for every $n$ satisfies that exist $k\in\{0,...,5\}$ make $n-kN_0\in\mathcal{N}\setminus\lozenge$,
 The small disks $\mathbb{D}_n\triangleq\mathbb{D}(\zeta_n,\varrho)$  are pairwise disjoint,  also  disjoint  with  critical lines
$\left\lbrace z\in \mathbb{C};\text{Im} \theta(z)=0 \right\rbrace $, as well as  the contours  $\mathbb{R}$, $\omega\mathbb{R}$ and $\omega^2\mathbb{R}$. Besides, $e^{\frac{i\pi}{6}}\notin \mathbb{D}_n$.  Denote
 a piecewise matrix function
\begin{equation}
	G(z)=\left\{ \begin{array}{ll}
		I-\frac{B_n}{z-\zeta_n},   & z\in\mathbb{D}_n,n-kN_0\in\nabla,k\in\{0,...,5\};\\[12pt]
		\left(\begin{array}{ccc}
			1& 0& 0 \\
			-\frac{z-\zeta_n}{C_{n}e^{it\theta_{12}(\zeta_{n})}}& 1 & 0\\
			0&	0 & 1
		\end{array}\right),   & z\in\mathbb{D}_n,n\in\Delta_1 \text{ or }n-2N_0\in\Delta_2;\\[12pt]
	\left(\begin{array}{ccc}
		1&	0 & 0 \\
		0& 1 & 0\\
		0& -\frac{z-\zeta_n}{C_{n}e^{it\theta_{23}(\zeta_{n})}} & 1
	\end{array}\right),   & z\in\mathbb{D}_n,n-N_0\in\Delta_1 \text{ or }n-5N_0\in\Delta_2;\\[12pt]
	\left(\begin{array}{ccc}
		1&	0 & -\frac{z-\zeta_n}{C_{n}e^{-it\theta_{13}(\zeta_{n})}} \\
		0& 1 & 0\\
		0&	0 & 1
	\end{array}\right),   & z\in\mathbb{D}_n,n-2N_0\in\Delta_1 \text{ or }n-4N_0\in\Delta_2;\\[12pt]
	\left(\begin{array}{ccc}
		1&	0 &  0\\
		0& 1 &  -\frac{z-\zeta_n}{C_{n}e^{-it\theta_{23}(\zeta_{n})}}\\
		0&	0 & 1
	\end{array}\right),   & z\in\mathbb{D}_n,n-3N_0\in\Delta_1 \text{ or }n-N_0\in\Delta_2;\\[12pt]
	\left(\begin{array}{ccc}
		1&	0 & 0 \\
		0& 1 & 0\\
		0 &-\frac{z-\zeta_n}{C_{n}e^{it\theta_{23}(\zeta_{n})}}	 & 1
	\end{array}\right),   & z\in\mathbb{D}_n,n-4N_0\in\Delta_1 \text{ or }n\in\Delta_2;\\[12pt]
	\left(\begin{array}{ccc}
	1&	-\frac{z-\zeta_n}{C_{n}e^{-it\theta_{12}(\zeta_{n})}} & 0 \\
	0& 1 & 0\\
	0&	0 & 1
	\end{array}\right),   & z\in\mathbb{D}_n,n-5N_0\in\Delta_1\text{ or } n-3N_0\in\Delta_2;\\[12pt]
	I & 	z \text{ in elsewhere};
	\end{array}\right.,\label{funcG}
\end{equation}
where $B_n$ are nilpotent matrixs defined in (\ref{RES1})-(\ref{RES3}).
Then by using $G(z)$ and $$T(z)=\text{diag}\{T_1(z),T_2(z),T_3(z)\},$$the new  matrix-valued   function $M^{(1)}(z)$  is defined as
\begin{equation}
M^{(1)}(z;y,t)\triangleq M^{(1)}(z)=M(z)G(z)T(z),\label{transm1}
\end{equation}
which then satisfies the following RH problem.

\begin{figure}[htp]
	\centering	
	\begin{tikzpicture}[node distance=2cm]
		\draw[red,-latex](-4,0)--(4,0)node[right]{$L_1$};
		\draw[blue ](0,0)--(2,3.464)node[above]{$L_2$};
		\draw[teal](0,0)--(-2,3.464)node[above]{$L_3$};
		\draw[red](0,0)--(-4,0)node[left]{$L_4$};
		\draw[teal](0,0)--(2,-3.464)node[right]{$L_6$};
		\draw[blue](0,0)--(-2,-3.464)node[left]{$L_5$};
		\draw[dashed] (2,0) arc (0:360:2);
		\coordinate (A) at (2.2,2.2);
		\draw (2.3,2.2) arc (0:360:0.1);
		\coordinate (B) at (2.2,-2.2);
		\draw (2.3,-2.2) arc (0:360:0.1);
		\coordinate (C) at (-0.8,3);
		\draw (-0.7,3) arc (0:360:0.1);
		\coordinate (D) at (-0.8,-3);
		\draw (-0.7,-3) arc (0:360:0.1);
		\coordinate (E) at (0.9,0.9);
		\draw (1,0.9) arc (0:360:0.1);
		\coordinate (F) at (0.9,-0.9);
		\draw (1,-0.9) arc (0:360:0.1);
		\coordinate (G) at (-3,0.8);
		\draw (-2.9,0.8) arc (0:360:0.1);
		\coordinate (H) at (-3,-0.8);
		\draw (-2.9,-0.8) arc (0:360:0.1);
		\coordinate (J) at (1.7570508075688774,0.956);
		\coordinate (K) at (1.7570508075688774,-0.956);
		\coordinate (L) at (-1.7570508075688774,0.956);
		\coordinate (M) at (-1.7570508075688774,-0.956);
		\coordinate (a) at (0,2);
		\fill (a) circle (1pt) node[above] {$\omega \bar{w}_m$};
		\coordinate (s) at (0,-2);
		\fill (s) circle (1pt) node[below] {$\omega ^2w_m$};
		\coordinate (d) at (-0.33,1.23);
		\fill (d) circle (1pt) node[right] {$\frac{\omega}{z_n} $};
		\draw (-0.23,1.23) arc (0:360:0.1);
		\coordinate (f) at (-0.33,-1.23);
		\fill (f) circle (1pt) node[right] {$\frac{\omega^2}{\bar{z}_n}$};
		\draw (-0.23,-1.23) arc (0:360:0.1);
		\coordinate (g) at (-1.23,0.33);
		\fill (g) circle (1pt) node[right] {$\frac{\omega}{\bar{z}_n} $};
		\draw (-1.13,0.33) arc (0:360:0.1);
		\coordinate (h) at (-1.23,-0.33);
		\fill (h) circle (1pt) node[right] {$\frac{\omega^2}{z_n}$};
		\draw (-1.13,-0.33) arc (0:360:0.1);
		\fill (A) circle (1pt) node[right] {$z_n$};
		\fill (B) circle (1pt) node[right] {$\bar{z}_n$};
		\fill (C) circle (1pt) node[left] {$\omega \bar{z}_n$};
		\fill (D) circle (1pt) node[right] {$\omega^2 z_n$};
		\fill (E) circle (1pt) node[right] {$\frac{1}{\bar{z}_n}$};
		\fill (F) circle (1pt) node[right] {$\frac{1}{z_n}$};
		\fill (G) circle (1pt) node[left] {$\omega z_n$};
		\fill (H) circle (1pt) node[left] {$\omega^2\bar{z}_n$};
		\fill (J) circle (1pt) node[right] {$w_m$};
		\fill (K) circle (1pt) node[right] {$\bar{w}_m$};
		\fill (L) circle (1pt) node[left] {$\omega w_m$};
		\fill (M) circle (1pt) node[below] {$\omega^2\bar{w}_m$};
	\end{tikzpicture}
	\caption{\footnotesize  $\mathbb{R}$, $\omega\mathbb{R}$, $\omega^2\mathbb{R}$ and  the small circles  constitute $\Sigma^{(1)}$. the curve $\{z;\text{Im}\theta_{kj}(z)=0\}$ for $k,j=1,2,3$ absences because we can't obtain its accurate   graph. In this figure, $0\leq$Im$\theta_{12}(w_m)\leq\delta_0 $, it remains the pole of $M^{(1)}$. Then  $\omega w_m$, $\omega^2 w_m$, $\bar{w}_m$, $\omega\bar{w}_m$ and   $\omega^2\bar{w}_m$ also remain. Simultaneously, the pole $z_n$ and its symmetry point are changed to jump on  small circles.   }	\label{fig:zero}
\end{figure}
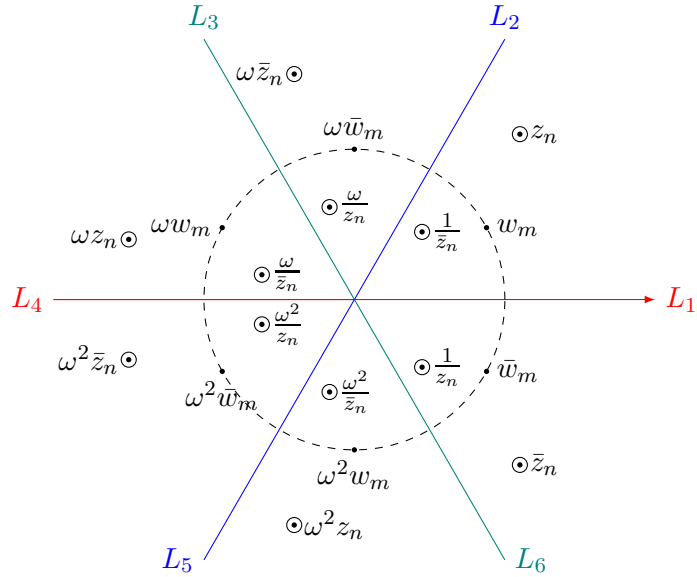
\begin{RHP}\label{RHP3}
	Find a matrix-valued function  $  M^{(1)}(z )$ which satisfies:
	
	$\blacktriangleright$ Analyticity: $M^{(1)}(z)$ is meromorphic in $\mathbb{C}\setminus \Sigma^{(1)}$, where
	\begin{equation}
		\Sigma^{(1)}=\Sigma \cup\left( \underset{n\in\mathcal{N}\setminus\lozenge,k=0,..,5}{\cup}\partial\mathbb{D}_{n+kN_0} \right)  ,
	\end{equation}
	is shown in Figure \ref{fig:zero};
	
	$\blacktriangleright$ Symmetry: $M^{(1)}(z)=\Gamma_1\overline{M^{(1)}(\bar{z})}\Gamma_1=\Gamma_2\overline{M^{(1)}(\omega^2\bar{z})}\Gamma_2=\Gamma_3\overline{M^{(1)}(\omega\bar{z})}\Gamma_3$

\quad and $M^{(1)}(z)=\overline{M^{(1)}(\bar{z}^{-1})}$;
	
	$\blacktriangleright$ Jump condition: $M^{(1)}$ has continuous boundary values $M^{(1)}_\pm$ on $\Sigma^{(1)}$ and
	\begin{equation}
		M^{(1)}_+(z)=M^{(1)}_-(z)V^{(1)}(z),\hspace{0.5cm}z \in \Sigma^{(1)},
	\end{equation}
	where
	\footnotesize{\begin{equation}
		V^{(1)}(z)=\left\{\begin{array}{ll}	\left(\begin{array}{ccc}
				1 & 0 &0\\
				\bar{r}T_{12} e^{-it\theta_{12}} & 1 &0\\
				0 & 0 &1
			\end{array}\right)\left(\begin{array}{ccc}
				1 & -r T_{21}e^{it\theta_{12}} &0\\
				0 & 1&0\\
				0 & 0 &1
			\end{array}\right),   & z\in 	\mathbb{R}\setminus I(\xi);\\[12pt]
			\left(\begin{array}{ccc}
				1 & \frac{rT_{21} e^{it\theta_{12}}}{1-|r|^2}&0\\
				0 & 1&0\\
				0 & 0 &1
			\end{array}\right)\left(\begin{array}{ccc}
				1 & 0 & 0\\
				\frac{\bar{r}T_{12} e^{-it\theta_{12}}}{1-|r|^2} & 1 & 0\\
				0 & 0 &1
			\end{array}\right),   & z\in I(\xi);\\[12pt]
		\left(\begin{array}{ccc}
			1 & 0 &0\\
			0 & 1 &0\\
			0 & \bar{r}(\omega z) e^{-it\theta_{23}}T_{23} &1
		\end{array}\right)\left(\begin{array}{ccc}
			1 & 0 &0\\
			0 & 1&-r(\omega z)T_{32} e^{it\theta_{23}}\\
			0 & 0 &1
		\end{array}\right),   & z\in 	\omega^2\mathbb{R}\setminus I^\omega(\xi);\\[12pt]
		\left(\begin{array}{ccc}
			1 & 0&0\\
			0 & 1&\frac{-r(\omega z)T_{32} e^{it\theta_{23}}}{1-|r(\omega z)|^2}\\
			0 & 0 &1
		\end{array}\right)\left(\begin{array}{ccc}
			1 & 0 & 0\\
			0 & 1 & 0\\
			0 & 	\frac{\bar{r}(\omega z) e^{-it\theta_{23}}T_{23}}{1-|r(\omega z)|^2} &1
		\end{array}\right),   & z\in I^\omega(\xi);\\[12pt]
	\left(\begin{array}{ccc}
		1 & 0 &\bar{r}(\omega^2 z)T_{31} e^{it\theta_{13}}\\
		0 & 1 &0\\
		0 & 0 &1
	\end{array}\right)\left(\begin{array}{ccc}
		1 & 0 &0\\
		0 & 1&0\\
		-r(\omega^2 z)T_{13} e^{-it\theta_{13}} & 0 &1
	\end{array}\right),   & z\in 	\omega\mathbb{R}\setminus I^{\omega^2}(\xi);\\[12pt]
	\left(\begin{array}{ccc}
		1 & 0&0\\
		0 & 1&0\\
		\frac{-r(\omega^2 z)T_{13} e^{-it\theta_{13}}}{1-|r(\omega^2 z)|^2} & 0 &1
	\end{array}\right)\left(\begin{array}{ccc}
		1 & 0 & 	\frac{\bar{r}(\omega^2 z)T_{31} e^{it\theta_{13}}}{1-|r(\omega^2 z)|^2}\\
		0 & 1 & 0\\
		0 & 0 &1
	\end{array}\right),   & z\in I^{\omega^2}(\xi);\\[12pt]
	T^{-1}(z)G(z)T(z),   & 	z\in\partial\mathbb{D}_n\cap (\underset{k=1}{\cup^3}S_{2k});\\[12pt]
		T^{-1}(z)G^{-1}(z)T(z),    & z\in\partial\mathbb{D}_n\cap (\underset{k=1}{\cup^3}S_{2k-1});\\
		\end{array}\right. \label{jumpv1}
	\end{equation}}

	\normalsize$\blacktriangleright$ Asymptotic behaviors:
	\begin{align}
		M^{(1)}(z;y,t) =& I+\mathcal{O}(z^{-1}),\hspace{0.5cm}z \rightarrow \infty;
\end{align}

		$\blacktriangleright$ Singularities:  As $z\to \varkappa_l=e^{\frac{i\pi(l-1)}{3}}$, $l = 1,...,6$,   the limit of $M^{(1)}(z)$  has  pole singularities
	\begin{align}
		&M^{(1)}(z)=\frac{1}{z\mp1}\left(\begin{array}{ccc}
			\alpha_\pm &	\alpha_\pm & \beta_\pm \\
			-\alpha_\pm & -\alpha_\pm & -\beta_\pm\\
			0	&	0 & 0
		\end{array}\right)T(\pm1)+\mathcal{O}(1),\ z\to\pm 1,\\
		&M^{(1)}(z)=\frac{1}{z\mp\omega^2}\left(\begin{array}{ccc}
			0 &	0 &  0\\
			\beta_\pm	 & \alpha_\pm &\alpha_\pm \\
			-\beta_\pm	&	-\alpha_\pm & -\alpha_\pm
		\end{array}\right)T(\pm\omega^2)+\mathcal{O}(1),\ z\to\pm \omega^2,\\
		&M^{(1)}(z)=\frac{1}{z\mp\omega}\left(\begin{array}{ccc}
			-\alpha_\pm &	-\beta_\pm & -\alpha_\pm\\
			0	 & 0 &0 \\
			\alpha_\pm &	\beta_\pm & \alpha_\pm
		\end{array}\right)T(\pm\omega)+\mathcal{O}(1),\ z\to\pm \omega,
	\end{align}
	with $\alpha_\pm=\alpha_\pm(y,t)=-\bar{\alpha}_\pm$, $\beta_\pm=\beta_\pm(y,t)=-\bar{\beta}_\pm$ and $M^{(1)}(z)^{-1}$ has same specific
	matrix structure with $\alpha_\pm$, $\beta_\pm$ replaced by $\tilde{\alpha}_\pm$, $\tilde{\beta}_\pm$;
	
	$\blacktriangleright$  Residue conditions: $M^{(1)}$ has simple poles at each point $\zeta_n$ and $\bar{\zeta}_n$ for $n\in\lozenge$ with
	\begin{align}
		&\res_{z=\zeta_n}M^{(1)}(z)=\lim_{z\to \zeta_n}M^{(1)}(z)\left[ T^{-1}(z)B_nT(z)\right] .
	\end{align}
\end{RHP}

\begin{proof}
	The triangular factors  (\ref{funcG})  trades 	poles $\zeta_n$  and $\bar{\zeta}_n$  to jumps on the disk boundaries $\partial \mathbb{D}_n$ and $\partial \overline{\mathbb{D}}_n$, respectively  for $n\in\mathcal{N}\setminus\lozenge$. Then by simple calculation we can obtain the residues condition and jump condition from (\ref{RES1}), (\ref{RES2}) (\ref{jumpv}), (\ref{funcG}) and (\ref{transm1}). The   analyticity and symmetry of $M^{(1)}(z)$ are directly from its definition, the Proposition \ref{proT}, (\ref{funcG}) and the identities of $M$. As for asymptotic behaviors, from $\lim_{z\to i}G(z)=\lim_{z\to \infty}G(z)=I$ and (d) of  Proposition \ref{proT}, we  obtain the asymptotic behaviors of $M^{(1)}(z)$.
\end{proof}

\section{Hybrid $\bar{\partial}$-RH problem }\label{sec4}

\quad  In this section, by defining  a  new  transformation for
$M^{(1)}(z)$,   we make continuous extension to the jump matrix $V^{(1)}(z)$  to   remove the jump contours  $ \mathbb{R}\cup\omega\mathbb{R}\cup\omega^2\mathbb{R}$,
so that we can  take  advantage of the decay/growth of  $e^{2it\theta_{jk}(z)}$  on new jump contours.

\subsection{Opening $\bar\partial$-lenses in space-time   regions  $\xi<- {1}/{8}$ and $\xi>1$}

  From (\ref{r36}),  there is no phase point in the space-time   regions  $\xi<- {1}/{8}$ and $\xi>1$,
 for which  we can  open the contours at $z=0$. For this purpose, we  introduce  some notations.

Fixed the angle  $ 0<\varphi< \frac{\pi}{6} $   sufficiently small,  near the
jump contours  $ \mathbb{R}\cup\omega\mathbb{R}\cup\omega^2\mathbb{R}$,
we define domains
\begin{align}
	&\Omega_{ 2n+1}^k =\left\lbrace z\in\mathbb{C}: \  n\pi \leq \arg (\omega^k z)  \leq   n\pi+\varphi \right\rbrace,  \label{dom1}\\
	&\Omega_{2n+2}^k =\left\lbrace z\in\mathbb{C}:  \  (n+1)\pi -\varphi \leq \arg (\omega^k z)  \leq  (n+1)\pi   \right\rbrace. \label{dom6}
\end{align}
where $n=0,1; \ k=0, 1, 2 $. The   boundaries  of these domains are  the following rays
\begin{align}
&\Sigma_n^k=e^{(n-1)\pi i/2+i\varphi}R_+,\hspace{0.5cm} n=1,3,\\
&\Sigma_n^k=e^{n  \pi i/2-i\varphi}R_+,\hspace{0.5cm} n=2,4.
\end{align}
where $ k=0, 1, 2 $.   In addition, for these cases, let
\begin{align}
&\Omega(\xi)=\underset{n=1,...,4}{\cup}\left( \Omega_n^0\cup\Omega_n^1\cup\Omega_n^{2}\right) ,\\
&\Sigma^{(2)}(\xi)=\underset{n\in\mathcal{N}\setminus\lozenge,k=0,..,5}{\cup}\partial\mathbb{D}_{n+kN_0}  ,\\
& {\Sigma}^k(\xi)=\Sigma_1^k\cup\Sigma_2^k\cup\Sigma_{3}^k\cup\Sigma_{4}^k, \ \ k=0, 1, 2,\\
& {\Sigma}(\xi)= {\Sigma}^0(\xi) \cup  {\Sigma}^1 (\xi)\cup  {\Sigma}^2 (\xi),
\end{align}
 which are shown in Figure \ref{figR2}.

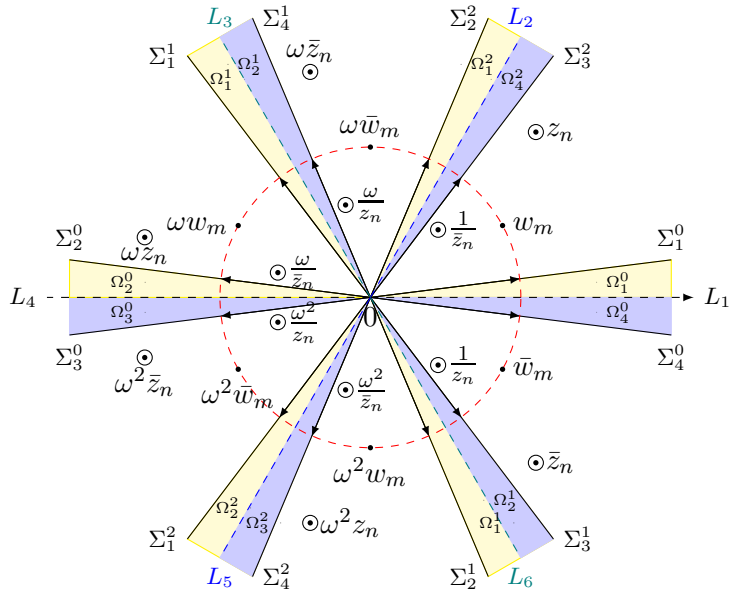
\begin{figure}[htp]
	\centering
	\begin{tikzpicture}[node distance=2cm]
        \draw[yellow, fill=yellow!20] (0,0)--(4,0)--(4,0.5)--(0,0);
        \draw[blue!20, fill=blue!20] (0,0)--(4,0)--(4,-0.5)--(0,0);
         \draw[blue!20, fill=blue!20] (0,0)--(-4,0)--(-4,-0.5)--(0,0);
         \draw[yellow, fill=yellow!20] (0,0)--(-4,0)--(-4, 0.5)--(0,0);
         	\draw[yellow, fill=yellow!20] (0,0)--(2.433,3.214)--(1.567,3.714)--(0,0)--(-2.433,-3.214)--(-1.567,-3.714)--(0,0);
		\draw[blue!20, fill=blue!20] (0,0)--(2.433,3.214)--(2,3.464)--(0,0);
	  \draw[blue!20, fill=blue!20] (0,0)--(-1.567,-3.714)--(-2,-3.464)--(0,0);
		\draw[yellow, fill=yellow!20] (0,0)--(-2.433,3.214)--(-1.567,3.714)--(0,0)--(2.433,-3.214)--(1.567,-3.714)--(0,0);
	\draw[blue!20, fill=blue!20] (0,0)--(-1.567,3.714)--(-2,3.464)--(0,0);
	\draw[blue!20, fill=blue!20] (0,0)--(2.433,-3.214)--(2,-3.464)--(0,0);
		\draw(0,0)--(-1.567,3.714)node[right]{\footnotesize $\Sigma_4^1$};
	    \draw[-latex](0,0)--(-1.567/2,3.714/2);
		\draw(0,0)--(-2.433,3.214)node[left]{\footnotesize $\Sigma_1^1$};
         \draw[-latex](0,0)--(-2.433/2, 3.214/2);
		\draw(0,0)--(2.433,-3.214)node[right]{\footnotesize $\Sigma_3^1$};
   \draw[-latex](0,0)--(2.433/2, -3.214/2);
		\draw(0,0)--(1.567,-3.714)node[left]{\footnotesize $\Sigma_2^1$};
   \draw[-latex](0,0)--(1.567/2,-3.714/2);
		\draw(0,0)--(1.567,3.714)node[left]{\footnotesize $\Sigma_2^2$};
  \draw[-latex](0,0)--(1.567/2,3.714/2);
		\draw(0,0)--(-2.433,-3.214)node[left]{\footnotesize $\Sigma_1^2$};
  \draw[-latex](0,0)--(-2.433/2,-3.214/2);
		\draw(0,0)--(2.433,3.214)node[right]{\footnotesize $\Sigma_3^2$};
 \draw[-latex](0,0)--( 2.433/2, 3.214/2);
		\draw(0,0)--(-1.567,-3.714)node[right]{\footnotesize $\Sigma_4^2$};
 \draw[-latex](0,0)--(-1.567/2,-3.714/2);
		\draw(0,0)--(4,0.5)node[above]{\footnotesize $\Sigma_1^0$};
		\draw(0,0)--(-4,0.5)node[above]{\footnotesize $\Sigma_2^0$};
		\draw(0,0)--(-4,-0.5)node[below]{\footnotesize $\Sigma_3^0$};
		\draw(0,0)--(4,-0.5)node[below]{\footnotesize $\Sigma_4^0$};
		\draw[-latex](0,0)--(-2,-0.25);
		\draw[-latex](0,0)--(-2,0.25);
		\draw[-latex](0,0)--(2,0.25);
		\draw[-latex](0,0)--(2,-0.25);
		\draw[ dashed,-latex](0,0)--(4.3,0)node[right]{\footnotesize $L_1$};
		\draw[blue,dashed ](0,0)--(2,3.464)node[above]{\footnotesize $L_2$};
		\draw[teal,dashed ](0,0)--(-2,3.464)node[above]{\footnotesize $L_3$};
		\draw[ dashed ](0,0)--(-4.3,0)node[left]{\footnotesize $L_4$};
		\draw[teal,dashed ](0,0)--(2,-3.464)node[below]{\footnotesize $L_6$};
		\draw[blue,dashed ](0,0)--(-2,-3.464)node[below]{\footnotesize $L_5$};
		\draw[red,dashed] (2,0) arc (0:360:2);
		\coordinate (A) at (2.2,2.2);
		\draw (2.3,2.2) arc (0:360:0.1);
		\coordinate (B) at (2.2,-2.2);
		\draw (2.3,-2.2) arc (0:360:0.1);
		\coordinate (C) at (-0.8,3);
		\draw (-0.7,3) arc (0:360:0.1);
		\coordinate (D) at (-0.8,-3);
		\draw (-0.7,-3) arc (0:360:0.1);
		\coordinate (E) at (0.9,0.9);
		\draw (1,0.9) arc (0:360:0.1);
		\coordinate (F) at (0.9,-0.9);
		\draw (1,-0.9) arc (0:360:0.1);
		\coordinate (G) at (-3,0.8);
		\draw (-2.9,0.8) arc (0:360:0.1);
		\coordinate (H) at (-3,-0.8);
		\draw (-2.9,-0.8) arc (0:360:0.1);
		\coordinate (J) at (1.7570508075688774,0.956);
		\coordinate (K) at (1.7570508075688774,-0.956);
		\coordinate (L) at (-1.7570508075688774,0.956);
		\coordinate (M) at (-1.7570508075688774,-0.956);
		\coordinate (a) at (0,2);
		\fill (a) circle (1pt) node[above] {$\omega \bar{w}_m$};
		\coordinate (s) at (0,-2);
		\fill (s) circle (1pt) node[below] {$\omega ^2w_m$};
		\coordinate (d) at (-0.33,1.23);
		\fill (d) circle (1pt) node[right] {$\frac{\omega}{z_n} $};
		\draw (-0.23,1.23) arc (0:360:0.1);
		\coordinate (f) at (-0.33,-1.23);
		\fill (f) circle (1pt) node[right] {$\frac{\omega^2}{\bar{z}_n}$};
		\draw (-0.23,-1.23) arc (0:360:0.1);
		\coordinate (g) at (-1.23,0.33);
		\fill (g) circle (1pt) node[right] {$\frac{\omega}{\bar{z}_n} $};
		\draw (-1.13,0.33) arc (0:360:0.1);
		\coordinate (h) at (-1.23,-0.33);
		\fill (h) circle (1pt) node[right] {$\frac{\omega^2}{z_n}$};
		\draw (-1.13,-0.33) arc (0:360:0.1);
		\fill (A) circle (1pt) node[right] {$z_n$};
		\fill (B) circle (1pt) node[right] {$\bar{z}_n$};
		\fill (C) circle (1pt) node[above] {$\omega \bar{z}_n$};
		\fill (D) circle (1pt) node[right] {$\omega^2 z_n$};
		\fill (E) circle (1pt) node[right] {$\frac{1}{\bar{z}_n}$};
		\fill (F) circle (1pt) node[right] {$\frac{1}{z_n}$};
		\fill (G) circle (1pt) node[below] {$\omega z_n$};
		\fill (H) circle (1pt) node[below] {$\omega^2\bar{z}_n$};
		\fill (J) circle (1pt) node[right] {$w_m$};
		\fill (K) circle (1pt) node[right] {$\bar{w}_m$};
		\fill (L) circle (1pt) node[left] {$\omega w_m$};
		\fill (M) circle (1pt) node[below] {$\omega^2\bar{w}_m$};
		\coordinate (C) at (-0.2,2.2);
		\coordinate (D) at (3,0.2);
		\fill (D) circle (0pt) node[right] {\tiny $\Omega_1^0$};
		\coordinate (J) at (-3,-0.2);
		\fill (J) circle (0pt) node[left] {\tiny $\Omega_3^0$};
		\coordinate (k) at (-3,0.2);
		\fill (k) circle (0pt) node[left] {\tiny $\Omega_2^0$};
		\coordinate (k0) at (3,-0.2);
		\fill (k0) circle (0pt) node[right] {\tiny $\Omega_4^0$};
		\coordinate (D1) at (1.6,2.9);
		\fill (D1) circle (0pt) node[right] {\tiny $\Omega_4^2$};
		\coordinate (J1) at (-1.6,-2.8);
		\fill (J1) circle (0pt) node[left] {\tiny $\Omega_2^2$};
		\coordinate (l1) at (-1.2,-3);
		\fill (l1) circle (0pt) node[left] {\tiny $\Omega_3^2$};
		\coordinate (k1) at (1.2,3.1 );
		\fill (k1) circle (0pt) node[right] {\tiny $\Omega_1^2$};
		\coordinate (D2) at (-1.9,3.1);
		\fill (D2) circle (0pt) node[right] {\tiny $\Omega_2^1$};
		\coordinate (J2) at (-1.68,2.9);
		\fill (J2) circle (0pt) node[left] {\tiny  $\Omega_1^1$};
		\coordinate (l2) at (1.9,-3);
		\fill (l2) circle (0pt) node[left] {\tiny $\Omega_1^1$};
		\coordinate (k2) at (1.5,-2.7);
		\fill (k2) circle (0pt) node[right] {\tiny $\Omega_2^1$};
		\coordinate (I) at (0,0);
		\fill (I) circle (0pt) node[below] {$0$};
	\end{tikzpicture}
	\caption{\footnotesize   For the case   $\xi>1$,   the signature table $ \text{Im }\theta_{kj}(z)< 0$  in  yellow domains
 and  $ \text{Im }\theta_{kj}(z)> 0$ in blue domains;  For the case   $\xi<-1/8$,   the signature table $ \text{Im }\theta_{kj}(z)>0$  in  yellow domains
 and  $ \text{Im }\theta_{kj}(z)< 0$ in blue domains.    The  circles around poles   off critical lines  $ \text{Im }\theta_{kj}(z)=0  $ for $j,k=1,2,3$  (here take $z_n$ as an example). }
	\label{figR2}
\end{figure}

Next we show  that when    angle $\varphi$ is sufficiently small,
  then  $\text{Im }\theta_{jk}(z)>0$  or   $\text{Im }\theta_{jk}(z)< 0$  in  opening domains  (\ref{dom1})-(\ref{dom6}) to
  guarantee  decay/growth of  $e^{2it\theta_{jk}(z)}$  on new jump contours.

\begin{lemma}  Let $f(x)=x+ {1}/{x}$ is a  real-valued function for $x\in\mathbb{R}$,
and each $\Omega_i$ doesn't intersect critical lines  $\left\lbrace z\in\mathbb{C};\text{Im }\theta_{12}(z)=0\right\rbrace$   and    disks   $\mathbb{D}_n$ or $\overline{\mathbb{D}}_n$.

{\rm (i)}\   For $\xi>1$,  fixed the angle $\varphi$ sufficiently small and   satisfy  $ 1+ {1}/{\xi}< 2\cos2\varphi<2$, then
  there exists a positive constant $c(\xi)$ such  that
	\begin{align}
		& {\rm Im }\theta_{12}(z) \leq -c(\xi)|\sin \varphi|f(|z|),\hspace{0.5cm} \text{as }z\in\Omega_1, \Omega_2,\\
		& {\rm Im }\theta_{12}(z)\geq c(\xi) |\sin \varphi|f(|z|),\hspace{0.5cm} \text{as }z\in\Omega_3, \Omega_4.
	\end{align}

{\rm (ii)}\   For $\xi<- {1}/{8}$,  let  $\mathcal{C}_0(\xi)$ be  the solution of equation $2\sqrt{x^2-5x+4}+x-1=1/|\xi|$,
  let  $\varphi$  satisfy  $\mathcal{C}_0(\xi)< 2\cos2\varphi+3<5$. Then  there exists a positive constant $c (\xi)$  such that
	\begin{align}
		&{\rm Im }\theta_{12}(z)\geq |\sin \varphi|f(|z|)c (\xi) ,\hspace{0.5cm} \text{as }z\in\Omega_1, \Omega_2,\\
		&{\rm Im }\theta_{12}(z)\leq -|\sin \varphi|f(|z|)c (\xi) ,\hspace{0.5cm} \text{as }z\in\Omega_3, \Omega_4.
	\end{align}
\end{lemma}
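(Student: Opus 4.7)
The plan is to work in polar coordinates $z = re^{i\phi}$ and factor out the essential angular dependence. Setting $w = z - z^{-1}$, one writes $\theta_{12}(z) = \sqrt{3}\,w\,\bigl[\xi - 1/(w^2+1)\bigr]$. Since $\text{Im}\,w = (r+r^{-1})\sin\phi = f(|z|)\sin\phi$, a direct computation of $\text{Im}[w/(w^2+1)]$ produces the factorization
\begin{equation*}
\text{Im}\,\theta_{12}(z) = \sqrt{3}\,f(|z|)\sin\phi \cdot G(r,\phi,\xi),\qquad G(r,\phi,\xi) := \xi + \frac{|w|^2 - 1}{|w^2+1|^2},
\end{equation*}
where, introducing $X := r^2 + r^{-2} \geq 2$ and $C := 2\cos(2\phi)$, one has $|w|^2 = X - C$ and (after simplification) $|w^2+1|^2 = X^2 - XC + C^2 - 3$. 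This isolates the $\sin\phi$-factor that vanishes on the real axis and reduces the sign analysis of $\text{Im}\,\theta_{12}$ to that of the algebraic function $G$, whose dependence on $z$ enters only through $(X,C)$.

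Next I would prove that $G$ has a definite sign with a uniform lower bound on the opening sectors, and this is where the arithmetic constraints on $\varphi$ are used. For case (i) ($\xi > 1$) with $\phi \in [0,\varphi]$, the admissible range of $C$ is $[2\cos(2\varphi),2]$. Solving $G = 0$ yields an algebraic relation between $X$ and $C$, and the extremal value of the rational function $(|w|^2-1)/|w^2+1|^2$ over $X \geq 2$ is attained at the boundary $X = 2$ (i.e., $r = 1$), at which point the critical $C$ works out to be exactly $1 + 1/\xi$. The hypothesis $1 + 1/\xi < 2\cos(2\varphi)$ therefore keeps $G$ bounded away from zero on $\Omega_i$, with a sign that is constant by continuity (and by the assumption that $\Omega_i$ avoids the critical curves $\{\text{Im}\,\theta_{12} = 0\}$). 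Case (ii) ($\xi < -1/8$) is analogous: the implicit constant $\mathcal{C}_0(\xi)$ defined by $2\sqrt{x^2 - 5x + 4} + x - 1 = 1/|\xi|$ arises as the critical value of $2\cos(2\varphi)+3$ at which $G$ would vanish, so that $\mathcal{C}_0(\xi) < 2\cos(2\varphi)+3$ keeps the opening sector on one fixed side of the zero locus, with the opposite sign from case (i) because $\xi$ is negative.

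The resulting uniform bound $|G(r,\phi,\xi)| \geq c(\xi) > 0$, combined with the prefactor $f(|z|)\sin\phi$ on each sector, then yields the stated inequality. The main technical obstacle is uniformity in $r$: both $|w|^2$ and $|w^2+1|^2$ diverge like $r^4$ as $r \to \infty$, so one must verify that the fraction in $G$ decays and $G \to \xi$ in this limit; near $r = 1$ the monotonicity of the ratio in $X$ reduces the analysis to the boundary value $X = 2$, where the sharp threshold on $\varphi$ takes effect. The hypothesis that each $\Omega_i$ is bounded away from the disks $\mathbb{D}_n$ (where $w^2+1$ could come close to vanishing) and from the zero set of $\text{Im}\,\theta_{12}$ ensures that $c(\xi)$ can be chosen strictly positive. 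Finally, the estimates on the rotated sectors $\Omega_i^k$ for $k = 1, 2$ follow from the symmetries $\theta_{13}(z) = -\theta_{12}(\omega^2 z)$ and $\theta_{23}(z) = \theta_{12}(\omega z)$, which reduce them to the $k = 0$ case.
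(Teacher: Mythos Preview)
Your approach is essentially the same as the paper's. Both of you write $z=le^{i\phi}$, factor $\text{Im}\,\theta_{12}(z)=\sqrt{3}\,f(l)\sin\phi\cdot G$, and reduce the sign analysis of $G$ to a one-variable rational function; the paper uses the variables $x=f(l)^2=X+2$ and $a=2\cos 2\phi+3=C+3$ and studies $h(x;a)=(x-a)/(x^2-(a+1)x+(a-2)^2)$, which is your $(|w|^2-1)/|w^2+1|^2$ after the shift. The paper shows $h$ ranges over $\bigl[1/(4-a),\,1/(a-1+2\sqrt{a^2-5a+4})\bigr]$ on $x\ge 4$: the left endpoint (attained at $x=4$, i.e.\ $l=1$) gives exactly your threshold $C=1+1/\xi$ in case~(i), and the right endpoint (attained at the interior critical point $x=a+\sqrt{a^2-5a+4}$, not at $X=2$) gives the defining equation for $\mathcal{C}_0(\xi)$ in case~(ii).

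Two small corrections to your write-up. First, in case~(ii) the relevant extremum of the ratio is the \emph{maximum}, attained at the interior critical point rather than at $X=2$; your remark about ``monotonicity of the ratio in $X$'' is not quite right here since $h$ has an interior maximum when $a>4$. Second, the disks $\mathbb{D}_n$ in the hypotheses are neighborhoods of the discrete spectrum $\zeta_n$, not of the zeros of $w^2+1$; the poles of $\theta_{12}$ lie at $z=e^{\pm i\pi/6},e^{\pm 5i\pi/6}$ and are automatically avoided once $\varphi<\pi/6$.
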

\begin{proof}  We prove the case (i) and take $z\in\Omega_1$ as an example, and the other regions are similarly.
	From (\ref{Reitheta}),  for $z=le^{i\phi}$, rewrite $\text{Im }\theta_{12}(z)$ as
	\begin{align}
	{\rm Im }\theta_{12}(z)=\sqrt{3}f(l)\sin\phi\left(\xi+\frac{f(l)^3-(2\cos2\phi+3)}{f(l)^4-(2\cos2\phi+4)f(l)^2+ (2\cos2\phi+1)^2} \right).\nonumber
	\end{align}
	Denote
	\begin{align}
	h(x;a)=\frac{x-a}{x^2-x(a+1)+(a-2)^2},
	\end{align}
	with $x\geq 4$ and $1<a\leq 5$. Then $$\frac{f(l)^3-(2\cos2\phi+3)}{f(l)^4-(2\cos2\phi+4)f(l)^2+ (2\cos2\phi+1)^2}=h(f(l)^2;2\cos2\phi+3).$$  By simple calculation,
	we  have that $$h(x;a)\in\left(\frac{1}{4-a},\frac{1}{2\sqrt{a^2-5a+4}+a-1} \right). $$
By using $ 1+ {1}/{\xi}< 2\cos2\varphi<2$,  we then have
$$\xi+h(f(l)^2;2\cos2\phi+3)\geq0.$$
 Similarly from assumption above, the estimate  in the case   $\xi <- {1}/{8} $ also holds.	
	Thus the result is obtained.
\end{proof}
\begin{corollary}\label{Imtheta}
		Let   $z=le^{i\phi}=u+vi$, then
there exists a constant $c(\xi)>0$ such that

For $\xi \in(-\infty,-\frac{1}{8})$, we have
\begin{align}
	&{\rm Im }\theta(z)\geq c(\xi)v ,\hspace{0.5cm} \text{as }z\in\Omega_3, \Omega_4,\\
	&{\rm Im }\theta(z)\leq  -c(\xi)v ,\hspace{0.5cm} \text{as }z\in\Omega_1, \Omega_2.
 \end{align}

For $\xi \in(1,+\infty)$, it holds that
	 \begin{align}
	 &{\rm Im }\theta(z)\geq c(\xi)v ,\hspace{0.5cm} \text{as }z\in\Omega_1, \Omega_2 \label{2},\\
	 &{\rm Im }\theta(z)\leq  -c(\xi)v ,\hspace{0.5cm} \text{as }z\in\Omega_3, \Omega_4.
	 \end{align}
\end{corollary}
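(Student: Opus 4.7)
The plan is to derive Corollary \ref{Imtheta} as an immediate consequence of the preceding lemma by exchanging the factor $|\sin\varphi|\,f(|z|)$ for $|v|$ up to a constant. Writing $z = l e^{i\phi} = u+iv$, in each of the sectors $\Omega_1,\ldots,\Omega_4$ the argument lies in a sector of width $\varphi$ adjacent to the real axis, so by construction $|\sin\phi|\leq \sin\varphi$, hence $|v| = l|\sin\phi| \leq l\sin\varphi$. Since $f(l) = l + 1/l \geq \max(l, 1/l) \geq l$ for every $l>0$, this yields the elementary pointwise inequality
\begin{equation*}
|v| \;\leq\; l\sin\varphi \;\leq\; |\sin\varphi|\,f(l),
\end{equation*}
which holds uniformly on $\Omega_1\cup\Omega_2\cup\Omega_3\cup\Omega_4$. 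The key observation is that the exact same inequality is valid on the ``small $l$'' part of each sector, because when $l\leq 1$ one uses $f(l)\geq 1/l \geq l$ instead of $f(l)\geq l$; so no separate case analysis near $z=0$ is needed.

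Substituting this bound into the estimates of the lemma then produces $|\operatorname{Im}\theta(z)| \geq c(\xi)|v|$ in each of the four regions, with the sign of $\operatorname{Im}\theta$ fixed by the lemma (equivalently, by the signature table in Figure \ref{figR2}) and the sign of $v$ fixed by the half-plane containing $\Omega_i$ (positive in $\Omega_1,\Omega_2$, negative in $\Omega_3,\Omega_4$). Combining these two sign inputs produces the directed inequalities claimed in the corollary. There is essentially no obstacle: the entire argument reduces to the one-line sector estimate above together with the preceding lemma, and the only care required is the bookkeeping of the sign of $v = l\sin\phi$ according to whether $z$ lies in the upper or lower half-plane.
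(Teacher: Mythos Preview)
Your proof is correct and follows exactly the natural route; the paper states the corollary without proof, treating it as an immediate consequence of the preceding lemma. The single substantive step is the inequality $f(l)=l+1/l\ge l$, which converts the lemma's factor $|\sin\phi|\,f(|z|)$ into $|v|$ via $|v|=l|\sin\phi|\le f(l)|\sin\phi|$. One minor remark: the lemma's bound is really in terms of $\sin\phi$ with $\phi=\arg z$ (as the lemma's proof makes clear), so your intermediate passage through the fixed opening angle $\sin\varphi$ is unnecessary---the direct chain $|v|=l|\sin\phi|\le f(l)|\sin\phi|$ already matches the lemma and yields the corollary.
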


The next key  step is  to make continuous extensions to the  jump matrix  $V^{(1)}(z)$  off the
jump contours  $\mathbb{R}\cup\omega\mathbb{R}\cup\omega^2\mathbb{R}$
onto new contours along which the jump matrices are decaying. For this purpose, we
 define the   new matrix unknown functions  $R^{(2)}(z,\xi)$ as

For  $  \xi >1 $,
\begin{equation}
R^{(2)}(z,\xi)=\left\{\begin{array}{lll}
\left(\begin{array}{ccc}
1 & R_j(z,\xi)e^{it\theta_{12}} & 0\\
0 & 1 & 0\\
0 & 0 & 1
\end{array}\right), & z\in \Omega_j^0,j=1,2;\\
\\
\left(\begin{array}{ccc}
1 & 0 & 0\\
R_j(z,\xi)e^{-it\theta_{12}} & 1 & 0\\
0 & 0 & 1
\end{array}\right),  &z\in \Omega_j^0,j=3,4;\\
\\
\left(\begin{array}{ccc}
	1 & 0 &0\\
	0 & 1 & 0\\
	0 & R_j(\omega z,\xi)e^{-it\theta_{23}} & 1
\end{array}\right), & z\in \Omega^1_j,j=1,2;\\
\\
\left(\begin{array}{ccc}
	1 & 0 & 0\\
	0 & 1 & R_j(\omega z,\xi)e^{it\theta_{23}}\\
	0 & 0 & 1
\end{array}\right),  &z\in \Omega^1_j,j=3,4;\\
\\
\left(\begin{array}{ccc}
	1 & 0 & R_j(\omega^2z,\xi)e^{it\theta_{13}}\\
	0 & 1 & 0\\
	0 & 0 & 1
\end{array}\right), & z\in \Omega^2_j,j=1,2;\\
\\
\left(\begin{array}{ccc}
	1 & 0 & 0\\
	0 & 1 & 0\\
	R_j(\omega^2z,\xi)e^{-it\theta_{13}} & 0 & 1
\end{array}\right),  &z\in \Omega^2_j,j=3,4;\\
\\
I,  &elsewhere;\\
\end{array}\right.\label{R(2)-}
\end{equation}

for  $\xi <-1/8 $,
\begin{equation}
	R^{(2)}(z,\xi)=\left\{\begin{array}{lll}
		\left(\begin{array}{ccc}
			1 & R_j(z,\xi)e^{it\theta_{12}} & 0\\
			0 & 1 & 0\\
			0 & 0 & 1
		\end{array}\right), & z\in \Omega_j^0,j=3,4;\\
		\\
		\left(\begin{array}{ccc}
			1 & 0 & 0\\
			R_j(z,\xi)e^{-it\theta_{12}} & 1 & 0\\
			0 & 0 & 1
		\end{array}\right),  &z\in \Omega_j^0,j=1,2;\\
		\\
	\left(\begin{array}{ccc}
			1 & 0 &0\\
			0 & 1 & 0\\
			0 & R_j(\omega z,\xi)e^{-it\theta_{23}} & 1
		\end{array}\right), & z\in \Omega^1_j,j=3,4;\\
		\\
		\left(\begin{array}{ccc}
			1 & 0 & 0\\
			0 & 1 & R_j(\omega z,\xi)e^{it\theta_{23}}\\
			0 & 0 & 1
		\end{array}\right),  &z\in \Omega^1_j,j=1,2;\\
		\\
		\left(\begin{array}{ccc}
			1 & 0 & R_j(\omega^2z,\xi)e^{it\theta_{13}}\\
			0 & 1 & 0\\
			0 & 0 & 1
		\end{array}\right), & z\in \Omega^2_j,j=3,4;\\
		\\
		\left(\begin{array}{ccc}
			1 & 0 & 0\\
			0 & 1 & 0\\
			R_j(\omega^2z,\xi)e^{-it\theta_{13}} & 0 & 1
		\end{array}\right),  &z\in \Omega^2_j,j=1,2;\\
		\\
		I,  &elsewhere;\\
	\end{array}\right.\label{R(2)+}
\end{equation}

where  the functions $R_j$, $j=1,2,3,4$, is defined in following Proposition.
\begin{Proposition}\label{proR}
	 $R_j$: $\bar{\Omega}_j^k\to C$, $j=1,2,3,4; k=0, 1, 2$ have boundary values as follows:
(i)\ For $\xi>1$,
	\begin{align}
&R_1(z,\xi)=\Bigg\{\begin{array}{ll}
	p_1(z,\xi)T_{21}(z) & z\in \mathbb{R}^+,\\
	0  &z\in \Sigma_1,\\
\end{array} ,\hspace{0.2cm}
R_2(z,\xi)=\Bigg\{\begin{array}{ll}
	0  &z\in \Sigma_2,\\
	p_2(z,\xi)T_{21}(z) &z\in  \mathbb{R}^-,\\
\end{array}\nonumber \\
&R_3(z,\xi)=\Bigg\{\begin{array}{ll}
	p_3(z,\xi)T_{12}(z) &z\in \mathbb{R}^-, \\
	0 &z\in \Sigma_3,\\
\end{array} ,\hspace{0.2cm}
R_4(z,\xi)=\Bigg\{\begin{array}{ll}
	0  &z\in \Sigma_4,\\
	p_4(z,\xi)T_{12}(z) &z\in  \mathbb{R}^+,\\
\end{array}. \nonumber
	\end{align}	
The functions $R_j, j=1,2,3,4 $  have following property
	\begin{align}
	&|\bar{\partial}R_j(z)|\lesssim|p_j'(sign(\text{Re}z)|z|)|+|z|^{-1/2}+|\mathcal{X}'(\text{Re}z)|, \text{ for all $z\in \Omega_j^k$,}\label{dbarRj}\\
		&R_j(z)=\bar{\partial}R_j(z)=0, \text{ for } z\in \Omega_j \text{ with } |  |z|-1|\leq\varepsilon,\label{R1}\\	
	&\bar{\partial}R_j(z)=0,\hspace{0.5cm}\text{for } z\in elsewhere. \nonumber
	\end{align}
(ii)\ For $\xi  < -1/8 $,
	\begin{align}
		&R_1(z,\xi)=\Bigg\{\begin{array}{ll}
			p_1(z,\xi)[T_{12}]_+(z) & z\in \mathbb{R}^+,\\
			0  &z\in \Sigma_1,\\
		\end{array} ,\hspace{0.2cm}
		R_2(z,\xi)=\Bigg\{\begin{array}{ll}
			0  &z\in \Sigma_2,\nonumber\\
			p_2(z,\xi)[T_{12}]_+(z) &z\in  \mathbb{R}^-,\\
		\end{array} \\
		&R_3(z,\xi)=\Bigg\{\begin{array}{ll}
			p_3(z,\xi)[T_{21}]_-(z) &z\in \mathbb{R}^-, \\
			0 &z\in \Sigma_3,\\
		\end{array} ,\hspace{0.2cm}
		R_4(z,\xi)=\Bigg\{\begin{array}{ll}
			0  &z\in \Sigma_4,\\
			p_4(z,\xi)[T_{21}(z)]_- &z\in  \mathbb{R}^+,\\
		\end{array}.\nonumber
	\end{align}	
The functions $R_j, j=1,2,3,4 $  have following property
	\begin{align}
	 	&|\bar{\partial}R_j(z)|\lesssim|p_j'(sign(\text{Re}z)|z|)|+|z|^{-1}+|\mathcal{X}'(\text{Re}z)|, \text{ for all $z\in \Omega_j^k$.}\label{dbarRj2}\\
		&R_j(z)=\bar{\partial}R_j(z)=0, \text{ for } z\in \Omega_j \text{ with } |  |z|-1|\leq\varepsilon,\label{R12}\\	
	&\bar{\partial}R_j(z)=0,\hspace{0.5cm}\text{for } z\in elsewhere.\nonumber
	\end{align}
\end{Proposition}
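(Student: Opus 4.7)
\medskip

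\noindent\textbf{Proof plan.}  The plan is to construct each $R_j$ by an explicit smooth interpolation in the wedge $\Omega_j^0$ between its two prescribed boundary values, cut off in a neighborhood of the singular points $z=\pm1$ by the function $\mathcal{X}$, and then to propagate the definition to $\Omega_j^1,\Omega_j^2$ by the symmetry $R_j|_{\Omega_j^k}(z)=R_j|_{\Omega_j^0}(\omega^{-k}z)$.  This symmetry explains why the rotated phases $\theta_{23}$ and $\theta_{13}$ appear in $R^{(2)}$ on $\Omega_j^1,\Omega_j^2$ with arguments $\omega z$ and $\omega^2 z$ in formulas \eqref{R(2)-} and \eqref{R(2)+}, and it reduces the analysis to the single wedge $\Omega_j^0$.

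\medskip

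\noindent First I would fix $j=1$ in the regime $\xi>1$ and use polar coordinates $z=\rho e^{i\phi}$ with $\rho>0$ and $\phi\in[0,\varphi]$ in $\Omega_1^0$.  Choose a fixed smooth cutoff $\eta\in C^\infty([0,\varphi])$ with $\eta(0)=1$ and $\eta\equiv 0$ in a neighborhood of $\varphi$, and define
\begin{equation*}
R_1(z,\xi)=\eta(\phi)\,p_1(\rho,\xi)\,T_{21}(\rho)\bigl(1-\mathcal{X}(\rho)\bigr).
\end{equation*}
At $\phi=0$ this gives $p_1 T_{21}$ outside the interval where $\mathcal{X}$ is active, and at $\phi=\varphi$ it vanishes, verifying the required boundary values on $\mathbb{R}^+$ and on $\Sigma_1$; the cutoff produces \eqref{R1} immediately.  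The analogous formulas for $R_2,R_3,R_4$ are completely parallel, exchanging $T_{21}\leftrightarrow T_{12}$ and changing the boundary ray and sign of $\operatorname{Re}z$.  The functions $p_j(\rho,\xi)$ themselves should be chosen as smooth extensions of the reflection-coefficient data from $\mathbb{R}$ to the wedge, a construction standard since McLaughlin--Miller; they inherit the Schwarz regularity of $r(z)$ from Proposition \ref{pror} and vanish at $z=\pm1$.

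\medskip

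\noindent Second I would compute $\bar\partial R_j$ using $\bar\partial=\tfrac{1}{2}e^{i\phi}(\partial_\rho+i\rho^{-1}\partial_\phi)$.  The radial derivative produces three terms: one proportional to $p_j'$, one proportional to $T'$ (uniformly bounded in $\Omega_1^0$ because for $\xi>1$ one has $I(\xi)=\emptyset$ and $T$ is holomorphic), and the cutoff term proportional to $\mathcal{X}'$.  The angular derivative gives $\rho^{-1}\eta'(\phi)\,p_j(\rho,\xi)\,T(\rho)(1-\mathcal{X}(\rho))$; using the H\"older estimate of Proposition \ref{proT}(g) together with the vanishing of $p_j$ at the endpoints of its support, one gains a factor $\rho^{1/2}$ from $p_j T-p_j(0)T^{(\ast)}$, yielding the bound $|z|^{-1/2}$ in \eqref{dbarRj}.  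For $\xi<-1/8$, the set $I(\xi)=\mathbb{R}$ means $T_{12}, T_{21}$ carry Plemelj jumps across the real axis, so the extension must use the $\pm$ boundary value appropriate to the wedge and the angular derivative meets a factor that is only $L^\infty$ (not H\"older with exponent $1/2$); this costs one full power of $\rho$ and produces the weaker exponent $|z|^{-1}$ in \eqref{dbarRj2}.

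\medskip

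\noindent The main obstacle I expect is the quantitative matching at $\phi=0$: one must align the smooth extension $p_j$ off $\mathbb{R}$ with the boundary factor $T_{12}^\pm$ or $T_{21}^\pm$ so that the two independent pieces of the Plemelj jump do not spoil the $\bar\partial$-bound near $\rho=0$ and near $\rho=1$.  In the regime $\xi<-1/8$ this requires using parts (c) and (f) of Proposition \ref{proT} (the jump relation across $\mathbb{R}$ and the normalization $T(0)=1$), while the vanishing of $r$ at $\pm 1$ secured in Proposition \ref{pror} is what actually permits the cutoff $\mathcal{X}$ to be inserted without destroying continuity of $R_j$.  Once these matchings are checked, the remaining assertions $\bar\partial R_j=0$ outside $\Omega_j^k$ and the explicit pointwise estimates \eqref{dbarRj}--\eqref{dbarRj2} follow from the formula above by the chain rule, and the rotated copies on $\Omega_j^1,\Omega_j^2$ inherit them verbatim.
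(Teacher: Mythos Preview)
Your overall architecture---angular interpolation in $\phi$, radial extension of $p_j$, and the cutoff $1-\mathcal{X}$ near $|z|=1$---matches the paper's construction (the paper uses the explicit interpolant $\cos(k_0\arg z)$ with $k_0=\pi/(2\varphi)$, and keeps $T_{12}(z)$ at the full complex argument rather than your $T_{21}(\rho)$; since $T_{12}$ is holomorphic in the wedge this kills the $T'$ term you had to account for).

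There is, however, a genuine gap in your justification of the $|z|^{-1/2}$ bound.  You invoke Proposition~\ref{proT}(g), but that item concerns the behaviour of $T_{12}$ near the \emph{stationary phase points} $\xi_j$, and for $\xi>1$ there are none ($p(\xi)=0$), so (g) is vacuous here.  The $\rho^{1/2}$ gain has nothing to do with $T$: it comes from the angular term $\rho^{-1}\eta'(\phi)p_1(\rho)T(\cdot)$ via the elementary estimate
\[
|p_1(\rho)|=|p_1(\rho)-p_1(0)|=\Big|\int_0^\rho p_1'(s)\,ds\Big|\le \|p_1'\|_{L^2}\,\rho^{1/2},
\]
which uses only $p_1(0)=r(0)=0$ and $p_1'\in L^2$.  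Combined with the $\rho^{-1}$ prefactor this yields $|z|^{-1/2}$ in \eqref{dbarRj}.

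Your explanation of the weaker exponent $|z|^{-1}$ in \eqref{dbarRj2} is likewise off.  It is not that $T$ loses H\"older regularity when $I(\xi)=\mathbb{R}$; in fact \emph{both} bounds \eqref{dbarRj} and \eqref{dbarRj2} hold in \emph{both} regimes and arise from the same angular term, estimated two ways: Cauchy--Schwarz on $p_1$ gives $\rho^{-1/2}$, while the cruder $p_1\in L^\infty$ gives $\rho^{-1}$.  The paper's proof derives both and simply records the one appropriate to each regime's downstream use.  Your worry about Plemelj matching at $\phi=0$ is a non-issue once you keep $T_{12}(z)$ holomorphic in the wedge: the boundary value on $\mathbb{R}^+$ is automatically $[T_{12}]_+$, and no separate alignment is needed.
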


\begin{proof}
	For brief, we
	 take $R_1(z)$ as an example. The extension of $R_1(z)$ can be constructed by:
	 \begin{align}
	 	R_{1}(z)=(1-\mathcal{X}(\text{Re}z))R_{11}(z),
	 \end{align}
 	where
	\begin{align}
		R_{11}(z)=&p_1(|z|)T_{12}(z)\cos(k_0 \arg z),\hspace{0.5cm}k_0=\frac{ \pi}{2\varphi}.
	\end{align}
	Obviously, this construction make $R_1$ admits (\ref{R1}).
	The other cases of $R_j$ are easily inferred.  Denote $z=le^{i\phi}$, then we have $\bar{\partial}=\frac{e^{i\phi}}{2}\left(\partial_r+\frac{i}{r} \partial_\phi\right) $. therefore,
	\begin{align}
	\bar{\partial}\left( (1-\mathcal{X}(l\cos\phi))R_{11}(z)\right) =&\frac{e^{i\phi}}{2}T_{12}(z)p_1'(l)(1-\mathcal{X}(l\cos\phi))\cos(k_0\phi)\nonumber\\
	&-\frac{e^{i\phi}ik_0}{2l}T_{12}(z)p_1(l)\sin(k_0\phi)\nonumber\\
	&+\frac{e^{i\phi}}{2}T_{12}(z)p_1(l)\mathcal{X}'(l\cos\phi)\left(\sin\phi-\cos(k_0\phi) \right)  .
	\end{align}
	There are two ways to bound last term. First we use Cauchy-Schwarz inequality and obtain
	\begin{equation}
	|p_1(l)|=  |p_1(l)-p_1(0)|=|\int_{0}^lp_1'(s)ds|\leq \parallel p_1'(s)\parallel_{L^2} l^{1/2}\lesssim l^{1/2}.
	\end{equation}
	
	And note that $T(z)$ is a bounded function in $\bar{\Omega}_1$. Then the boundedness of (\ref{dbarRj})  follows immediately. On the other side, $p_1(l)\in L^\infty$ ,  which implies (\ref{dbarRj2}).
\end{proof}

\subsection{Opening $\bar\partial$-lenses  in space-time   region     $- {1}/{8}<\xi<1$ }

  From (\ref{r36}),  there are four and eight  phase points
 in the space-time   regions $0< \xi <1$ and  $   - {1}/{8} <\xi<0 $ respectively,
 for which  we should   open  three   jump contours    $  \omega^n \mathbb{R}, n=0,1,2$  at phase points $ \xi_j \in \mathbb{R},  j=2,\cdots, p(\xi)-1$.
In the follow, to simplify notations, we set
$A^{  \omega^0} = A, \ \  A^{  \omega^1} = A^{  \omega },  $ for a  symbol $A$.

\noindent $\blacktriangleright$ At phase points  $ \xi_j \in  \omega^n  \mathbb{R},  j=2,\cdots, p(\xi)-1$, we define rays

For  $k=1,3$,
 $\rho\in(0,\frac{|\xi_{j+(-1)^{j+(k+1)/2}}-\xi_j|}{4\cos\varphi})$,
\begin{align}
&\Sigma_{jk}^{\omega^n} (\xi)=\left\{\begin{array}{lll}
 \xi_j+e^{i[(k/2+1/2+j)\pi+(-1)^{j+1}\varphi]}\rho,&\ 0>\xi>-1/8\\
	 \xi_j+e^{i[(k/2-1/2+j)\pi+(-1)^{j}\varphi]}\rho,&\ 0\leq\xi< 1.\end{array}\right.
\end{align}

For  $k=2,4,$  $\rho\in(0,\frac{|\xi_{j+(-1)^{j+(k+1)/2}}-\xi_j|}{4\cos\varphi})$,
\begin{align}
&\Sigma_{jk}^{\omega^n} (\xi)=\left\{\begin{array}{lll}
 \xi_j+e^{i[(k/2-1+j)\pi+(-1)^{j+1}\varphi]}\rho,&\ 0>\xi>-1/8,\\
 \xi_j+e^{i[(k/2+j)\pi+(-1)^j\varphi]}\rho,&\ 0\leq\xi< 1, \end{array}\right.
\end{align}

\noindent $\blacktriangleright$  At the phase  points $ \xi_j \in \mathbb{R},  j=1,  p(\xi) $, we define rays
\begin{align}
&\Sigma_{j1}^{\omega^n}(\xi)=\left\{\begin{array}{lll}
\omega^n\xi_j+e^{(1+j)\pi i+(-1)^{j+1}i\varphi}\mathbb{R}^+,&\ 0>\xi>-1/8\\
\omega^n\xi_j+e^{j\pi i+(-1)^ji\varphi}\rho,&\ 0\leq\xi< 1\end{array}\right.,\\
&\Sigma_{j2}^{\omega^n}(\xi)=\left\{\begin{array}{lll}
 \xi_j+e^{(1+j)\pi i+(-1)^ji\varphi}\mathbb{R}^+,&\ 0>\xi>-1/8\\
 \xi_j+e^{j\pi i+(-1)^{j+1}i\varphi }\rho,&\ 0\leq\xi< 1\end{array}\right.,\\
&\Sigma_{j3}^{\omega^n}(\xi)=\left\{\begin{array}{lll}
 \xi_j+e^{j\pi i+(-1)^{j+1}i\varphi }\rho,&\ 0>\xi>-1/8\\
 \xi_j+e^{(1+j)\pi i+(-1)^ji\varphi}\mathbb{R}^+,&\ 0\leq\xi< 1\end{array}\right.,\\
&\Sigma_{j4}^{\omega^n}(\xi)=\left\{\begin{array}{lll}
 \xi_j+e^{j\pi i+(-1)^ji\varphi}\rho,&\ 0>\xi>-1/8\\
 \xi_j+e^{(1+j)\pi i+(-1)^{j+1}i\varphi}\mathbb{R}^+,&\ 0\leq\xi< 1.\end{array}\right.
\end{align}

\noindent $\blacktriangleright$ Denote $ \xi_{(j+1)/2}=(   \xi_j+ \xi_{j+1} )/2$, we define rays

For k=1,3,   $\rho \in(0,\frac{|\xi_{j+(-1)^{j+(k+1)/2}}-\xi_j|}{4\cos\varphi})$,
\begin{align}
&\Sigma_{jk}^{'\omega^n}(\xi)=\left\{\begin{array}{lll}
 \xi_{(j+1)/2}+e^{i[(k/2-1/2+j)\pi+(-1)^{j}\varphi]}\rho,&\ 0>\xi>-1/8,\\
 \xi_{(j+1)/2}+e^{i[(k/2+1/2+j)\pi+(-1)^{j+1}\varphi]}\rho,&\ 0\leq\xi< 1,\end{array}\right.
\end{align}

For $k=2,4,$ $\rho\in(0,\frac{|\xi_{j+(-1)^{j+(k+1)/2}}-\xi_j|}{4\cos\varphi})$.
\begin{align}
	&\Sigma_{jk}^{'\omega^n}(\xi)=\left\{\begin{array}{lll}
	 \xi_{(j+1)/2}+e^{i[(k/2-1+j)\pi+(-1)^{j+1}\varphi]}\rho,&\ 0>\xi>-1/8,\\
	 \xi_{(j+1)/2}+e^{i[(k/2+j)\pi+(-1)^{j}\varphi]}\rho,&\ 0\leq\xi< 1.\end{array}\right. \hspace{0.2cm}
\end{align}
These  contours separate complex plane $\mathbb{C}$ into  sectors  denoted by $\Omega_{jk }^{\omega^n}, j= 1,\cdots, p(\xi);$ $ k=1,\cdots, 4$, which
   are shown in Figure \ref{FigOmig}.
We fix  a sufficiently small angle    $ 0< \varphi < \frac{\pi}{6} $ such that
each $\Omega_{jk }^{\omega^n}$  doesn't   intersect with critical lines  $ {\rm Im }\theta_{jk}(z)=0 $ as well as
disks  $\mathbb{D}_n$ or $\overline{\mathbb{D}}_n$.

By using  these  rays defined above, we then define  new contours   obtained when  opening   jump contours
  $\omega^n \mathbb{R}, n=0,1,2$:
\begin{align}
	&\tilde{\Sigma}^{\omega^n}(\xi)=( \underset{j=1,..,p(\xi)}{\underset{k=1,...,4,}{\cup}}\Sigma_{jk}^{\omega^n}) \cup(\underset{j=1,..,p(\xi)}{\underset{k=1,...,4,}{\cup}}\Sigma_{jk}^{'\omega^n} ),\ \ n=0,1,2\hspace{0.5cm}\\	
	&\ddot{\Sigma}(\xi)=\tilde{\Sigma}(\xi)\cup\tilde{\Sigma}^{\omega}(\xi)\cup\tilde{\Sigma}^{\omega^2}(\xi),\\
	&\Sigma^{(2)}(\xi)=\ddot{\Sigma}(\xi)\underset{n\in\mathcal{N}\setminus\lozenge}{\cup}\left( \partial\overline{\mathbb{D}}_n\cup\partial\mathbb{D}_n\right).
\end{align}
Further we also  define opened domains along  jump contours  $\omega^n \mathbb{R}, n=0,1,2$:
\begin{align}
	&\Omega(\xi)=\underset{j=1,..,p(\xi)}{\underset{k=1,...,4,}{\cup}}\left( \Omega_{jk}\cup\Omega_{jk}^\omega\cup\Omega_{jk}^{\omega^2}\right),\nonumber\\
	& \Omega^{(m)}(\xi)=S_m\setminus\Omega,\ m=1,...,6.\nonumber
\end{align}

\begin{figure}[htp]
\centering
	\subfigure[]{
		\begin{tikzpicture}
		\draw[yellow!20, fill=yellow!20] (-5,0.45)--(-5,-0.45)--(-4,0)--(-2.5,-0.6)--(-1,0)--(0,-0.5)--(1,0)--(2.5,-0.6)--(4,0)--(5,-0.5)
--(5,0.5)--(4,0)--(2.5,0.6)--(1,0)--(0,0.5)--(-1,0)--(-2.5,0.6)--(-4,0)--(-5,0.5);
\draw[blue!20, fill=blue!20] (-1,0)--(0,-0.5)--( 1,0)--(0, 0.5)--(-1,0);
\draw[blue!20, fill=blue!20] ( 4,0)--( 2.5,-0.6)--( 1,0)--( 2.5, 0.6)--( 4,0);
		\draw(-4,0)--(-5,0.45)node[above]{\tiny$\Sigma_{34'}$};
		\draw[-latex](-4,0)--(-4.5,0.225);
		\draw(-4,0)--(-2.5,0.6);
		\draw[-latex ] (-2.5,-0.6)--(-3.25,-0.3) node[below]{\tiny$\Sigma_{23}$};
		\draw(-4,0)--(-5,-0.45)node[below]{\tiny$\Sigma_{33}'$};
		\draw[-latex ] (-2.5, 0.6)--(-3.25, 0.3)node[above]{\tiny$\Sigma_{24}$};
		\draw(-4,0)--(-2.5,-0.6);
		\draw[-latex](-4,0)--(-4.5,-0.225);
		\draw(-1,0)--(0,0.5);
		\draw[-latex] (0,0.5)--(-0.5,0.25) node[above]{\tiny$\Sigma_{21}$};
		\draw(-1,0)--(-2.5,0.6);
		\draw[-latex](-1,0)--(-1.75,-0.3)node[below]{\tiny$\Sigma_{23}$};
		\draw(-1,0)--(0,-0.5);
		\draw[-latex](-1,0)--(-1.75,0.3)node[above]{\tiny$\Sigma_{24}$};
		\draw(-1,0)--(-2.5,-0.6);
		\draw[-latex](0,-0.5)--(-0.5,-0.25)node[below]{\tiny$\Sigma_{22}$};
		\draw[-latex, dashed](-5.5,0)--(5.5,0)node[right]{\footnotesize Re$z$};
		\draw(1,0)--(0,0.5);
		\draw[-latex](1,0)--(0.5,0.25)node[above]{\tiny$\Sigma_{21}'$};
		\draw(1,0)--(2.5,0.6);
		\draw[-latex](2.5,-0.6)--(1.75,-0.3)node[below]{\tiny$\Sigma_{12}'$};
		\draw(1,0)--(0,-0.5);
		\draw[-latex](2.5,0.6)--(1.75,0.3)node[above]{\tiny$\Sigma_{11}'$};
		\draw(1,0)--(2.5,-0.6);
		\draw[-latex](1,0)--(0.5,-0.25)node[below]{\tiny$\Sigma_{22}'$};
		\draw(4,0)--(5,0.5)node[above]{\tiny$\Sigma_{14}$};
		\draw[-latex](5,0.5)--(4.5,0.25);
		\draw(4,0)--(2.5,0.6);
		\draw[-latex](4,0)--(3.25,-0.3)node[below]{\tiny$\Sigma_{12}$};
		\draw(4,0)--(5,-0.5)node[below]{\tiny$\Sigma_{13}$};
		\draw[-latex](4,0)--(3.25,0.3)node[above]{\tiny$\Sigma_{11}$};
		\draw(4,0)--(2.5,-0.6);
		\draw[-latex](5,-0.5)--(4.5,-0.25);
		\coordinate (A) at (-4,0);
		\fill (A) circle (1pt) node[below] {\footnotesize $0$};
		\coordinate (b) at (-1,0);
		\fill (b) circle (1pt) node[below] {\footnotesize $\xi_2$};
		\coordinate (e) at (4,0);
		\fill (e) circle (1pt) node[below] {\footnotesize $\xi_1$};
		\coordinate (f) at (1,0);
		\fill (f) circle (1pt) node[below] {\footnotesize $\xi_{\frac{3}{2}}$};
		\coordinate (ke) at (4.7,0.1);
		\fill (ke) circle (0pt) node[below] {\tiny$\Omega_{13}$};
		\coordinate (k1e) at (4.7,-0.1);
		\fill (k1e) circle (0pt) node[above] {\tiny$\Omega_{14}$};
		\coordinate (le) at (2.5,0.1);
		\fill (le) circle (0pt) node[below] {\tiny$\Omega_{12}$};
		\coordinate (l1e) at (2.5,-0.1);
		\fill (l1e) circle (0pt) node[above] {\tiny$\Omega_{11}$};
		\coordinate (k) at (-4.7,0.1);
		\fill (k) circle (0pt) node[below] {\tiny$\Omega_{33}$};
		\coordinate (k1) at (-4.7,-0.1);
		\fill (k1) circle (0pt) node[above] {\tiny$\Omega_{34}$};
			\coordinate (n) at (0,0.1);
		\fill (n) circle (0pt) node[below] {\tiny$\Omega_{22}$};
		\coordinate (n1) at (0,-0.1);
		\fill (n1) circle (0pt) node[above] {\tiny$\Omega_{21}$};
		\coordinate (m) at (-2.5,0.1);
		\fill (m) circle (0pt) node[below] {\tiny$\Omega_{23}$};
		\coordinate (m1) at (-2.5,-0.1);
		\fill (m1) circle (0pt) node[above] {\tiny$\Omega_{24}$};
		\end{tikzpicture}
		\label{case1}}
	\subfigure[]{
		\begin{tikzpicture}
		\draw[yellow, fill=yellow!20](-6.5,0.9)--(-6.5,-0.9)--(-5.4,0)--(-4.7,-0.6)--(-4,0)--(-3.1,-0.7)--(-2.2,0)--(-1.5,-0.6)--
(-0.8,0)--(-0,-0.7)--(0.8,0)--(1.5,-0.6)--(2.2,0)--(3.1,-0.7)--(4,0)--(4.7,-0.6)--(5.4,0)--(6.5,-0.9)--(6.5,0.9)--(5.4,0)--(4.7,0.6)--
(4,0)--(3.1,0.7)--(2.2,0)--(1.5,0.6)--(0.8,0)--(-0,0.7)--(-0.8,0)--(-1.5,0.6)--(-2.2,0)--(-3.1,0.7)--(-4,0)--(-4.7,0.6)--(-5.4,0)--(-6.5,0.9);
\draw[blue!20, fill=blue!20](-6.5,0.9)--(-6.5,-0.9)--(-5.4,0)--(-6.5,0.9);
 \draw[blue!20, fill=blue!20] (-5.4,0)--(-4.7,-0.6)--(-4,0)--(-4.7, 0.6)--(-5.4,0);
\draw[blue!20, fill=blue!20](-0.8,0)--(-0,-0.7)--(0.8,0)--(-0, 0.7)--(-0.8,0);
\draw[blue!20, fill=blue!20]( 6.5,0.9)--( 6.5,-0.9)--( 5.4,0)--( 6.5,0.9);
 \draw[blue!20, fill=blue!20](0.8,0)--(1.5,-0.6)--(2.2,0)--(1.5, 0.6)--(0.8,0);
		\draw[-latex, dashed](-7,0)--(7,0)node[right]{\footnotesize Re$z$};
		\draw(-0.8,0)--(-0,0.7);
		\draw[-latex](-0,0.7)--(-0.4,0.35)node[above]{\tiny$\Sigma_{31}$};
		\draw(-0.8,0)--(-1.5,0.6);
		\draw[-latex](-0.8,0)--(-1.15,-0.3)node[below]{\tiny$\Sigma_{33}$};
		\draw(-0.8,0)--(-0,-0.7);
		\draw[-latex](-0.8,0)--(-1.15,0.3)node[above]{\tiny$\Sigma_{34}$};
		\draw(-0.8,0)--(-1.5,-0.6);
		\draw[-latex](0,-0.7)--(-0.4,-0.35)node[below]{\tiny$\Sigma_{32}$};
		\draw(-2.2,0)--(-1.5,0.6);
		\draw[-latex](-2.2,0)--(-2.65,0.35)node[above]{\tiny$\Sigma_{44}'$};
		\draw(-2.2,0)--(-1.5,-0.6);
		\draw[-latex](-1.5,-0.6)--(-1.85,-0.3)node[below]{\tiny$\Sigma_{33}'$};
		\draw(-2.2,0)--(-3.1,0.7);
		\draw[-latex](-1.5, 0.6)--(-1.85, 0.3)node[above]{\tiny$\Sigma_{34}'$};
		\draw(-2.2,0)--(-3.1,-0.7);
		\draw[-latex](-2.2,0)--(-2.65,-0.35)node[below]{\tiny$\Sigma_{43}'$};
		\draw(-5.4,0)--(-6.5,0.9)node[above]{\tiny$\Sigma_{51}'$};
		\draw[-latex](-5.4,0)--(-5.95,0.45);
		\draw(-5.4,0)--(-4.7,0.6);
		\draw[-latex](-4.7,-0.6)--(-5.05,-0.3)node[below]{\tiny$\Sigma_{42}'$};
		\draw(-5.4,0)--(-6.5,-0.9)node[below]{\tiny$\Sigma_{52}'$};
		\draw[-latex](-4.7, 0.6)--(-5.05,0.3)node[above]{\tiny$\Sigma_{41}'$};
		\draw(-5.4,0)--(-4.7,-0.6);
		\draw[-latex](-5.4,0)--(-5.95,-0.45);
		\draw(-4,0)--(-3.1,0.7);
		\draw[-latex ](-3.1, 0.7)--(-3.55,0.35)node[above]{\tiny$\Sigma_{44}$};
		\draw(-4,0)--(-4.7,0.6);
		\draw[-latex](-4,0)--(-4.35,-0.3)node[below]{\tiny$\Sigma_{42}$};
		\draw(-4,0)--(-3.1,-0.7);
		\draw[-latex](-4,0)--(-4.35,0.3)node[above]{\tiny$\Sigma_{41}$};
		\draw(-4,0)--(-4.7,-0.6);
		\draw[-latex ](-3.1, -0.7)--(-3.55,-0.35)node[below]{\tiny$\Sigma_{43}$};
		\draw(0.8,0)--(0,0.7);
		\draw[-latex](0.8,0)--(0.4,0.35)node[above]{\tiny$\Sigma_{31}'$};
		\draw(0.8,0)--(1.5,0.6);
		\draw[-latex](1.5,-0.6)--(1.15,-0.3)node[below]{\tiny$\Sigma_{22}'$};
		\draw(0.8,0)--(0,-0.7);
		\draw[-latex](1.5, 0.6)--(1.15,0.3)node[above]{\tiny$\Sigma_{21}'$};
		\draw(0.8,0)--(1.5,-0.6);
		\draw[-latex](0.8,0)--(0.4,-0.35)node[below]{\tiny$\Sigma_{32}'$};
		\draw(2.2,0)--(1.5,0.6);
		\draw[-latex](3.1, 0.7)--(2.65,0.35)node[above]{\tiny$\Sigma_{24}$};
		\draw(2.2,0)--(1.5,-0.6);
		\draw[-latex](2.2,0)--(1.85,-0.3)node[below]{\tiny$\Sigma_{22}$};
		\draw(2.2,0)--(3.1,0.7);
		\draw[-latex](2.2,0)--(1.85,0.3)node[above]{\tiny$\Sigma_{21}$};
		\draw(2.2,0)--(3.1,-0.7);
		\draw[-latex](3.1, -0.7)--(2.65,-0.35)node[below]{\tiny$\Sigma_{23}$};
		\draw(5.4,0)--(6.5,0.9)node[above]{\tiny$\Sigma_{11}$};
		\draw[-latex](6.5, 0.9)--(5.95,0.45);
		\draw(5.4,0)--(4.7,0.6);
		\draw[-latex](5.4,0)--(5.05,-0.3)node[below]{\tiny$\Sigma_{13}$};
		\draw(5.4,0)--(6.5,-0.9)node[below]{\tiny$\Sigma_{12}$};
		\draw[-latex](5.4,0)--(5.05,0.3)node[above]{\tiny$\Sigma_{14}$};
		\draw(5.4,0)--(4.7,-0.6);
		\draw[-latex](6.5, -0.9)--(5.95,-0.45);
		\draw(4,0)--(3.1,0.7);
		\draw[-latex](4,0)--(3.55,0.35)node[above]{\tiny$\Sigma_{24}'$};
		\draw(4,0)--(4.7,0.6);
		\draw[-latex](4.7,-0.6)--(4.35,-0.3)node[below]{\tiny$\Sigma_{13}'$};
		\draw(4,0)--(3.1,-0.7);
		\draw[-latex](4.7, 0.6)--(4.35,0.3)node[above]{\tiny$\Sigma_{14}'$};
		\draw(4,0)--(4.7,-0.6);
		\draw[-latex](4,0)--(3.55,-0.35)node[below]{\tiny$\Sigma_{23}'$};
		\coordinate (A) at (-5.4,0);
		\fill (A) circle (1pt) node[below] {\footnotesize $0$};
		\coordinate (b) at (-4,0);
		\fill (b) circle (1pt) node[below] {\footnotesize $\xi_4$};
		\coordinate (C) at (-0.8,0);
		\fill (C) circle (1pt) node[below] {\footnotesize $\xi_3$};
		\coordinate (d) at (-2.2,0);
		\fill (d) circle (1pt) node[below] {\footnotesize $\xi_{\frac{7}{2}}$};
		\coordinate (E) at (5.4,0);
		\fill (E) circle (1pt) node[below] {\footnotesize $\xi_1$};
		\coordinate (R) at (4,0);
		\fill (R) circle (1pt) node[below] {\footnotesize $\xi_{\frac{3}{2}}$};
		\coordinate (T) at (0.8,0);
		\fill (T) circle (1pt) node[below] {\footnotesize $\xi_{\frac{5}{2}}$};
		\coordinate (Y) at (2.2,0);
		\fill (Y) circle (1pt) node[below] {\footnotesize $\xi_2$};
		\coordinate (q) at (6.2,-0.1);
		\fill (q) circle (0pt) node[above] {\tiny$\Omega_{11}$};
		\coordinate (q1) at (6.2,0.05);
		\fill (q1) circle (0pt) node[below] {\tiny$\Omega_{12}$};
		\coordinate (w) at (4.75,-0.1);
		\fill (w) circle (0pt) node[above] {\tiny$\Omega_{14}$};
		\coordinate (w1) at (4.75,0.1);
		\fill (w1) circle (0pt) node[below] {\tiny$\Omega_{13}$};
		\coordinate (t) at (3,-0.1);
		\fill (t) circle (0pt) node[above] {\tiny$\Omega_{24}$};
		\coordinate (t1) at (3,0.05);
		\fill (t1) circle (0pt) node[below] {\tiny$\Omega_{23}$};
		\coordinate (y) at (1.55,-0.1);
		\fill (y) circle (0pt) node[above] {\tiny$\Omega_{21}$};
		\coordinate (y1) at (1.55,0.1);
		\fill (y1) circle (0pt) node[below] {\tiny$\Omega_{22}$};
			\coordinate (q8) at (-6.2,-0.1);
		\fill (q8) circle (0pt) node[above] {\tiny$\Omega_{51}$};
		\coordinate (q18) at (-6.2,0.05);
		\fill (q18) circle (0pt) node[below] {\tiny$\Omega_{52}$};
		\coordinate (e7) at (-4.65,-0.1);
		\fill (e7) circle (0pt) node[above] {\tiny$\Omega_{41}$};
		\coordinate (e17) at (-4.65,0.1);
		\fill (e17) circle (0pt) node[below] {\tiny$\Omega_{42}$};
		\coordinate (7r) at (-3.14,-0.1);
		\fill (7r) circle (0pt) node[above] {\tiny$\Omega_{44}$};
		\coordinate (r17) at (-3.14,0.05);
		\fill (r17) circle (0pt) node[below] {\tiny$\Omega_{43}$};
		\coordinate (l5) at (-1.43,-0.1);
		\fill (l5) circle (0pt) node[above] {\tiny$\Omega_{34}$};
		\coordinate (l15) at (-1.43,0.1);
		\fill (l15) circle (0pt) node[below] {\tiny$\Omega_{33}$};
		\coordinate (k5) at (-0.02,-0.1);
		\fill (k5) circle (0pt) node[above] {\tiny$\Omega_{31}$};
		\coordinate (k15) at (-0.02,0.1);
		\fill (k15) circle (0pt) node[below] {\tiny$\Omega_{32}$};
		\end{tikzpicture}
		\label{case2}}
	\caption{\footnotesize Figures (a) and (b) are corresponding to the  $0\leq\xi<1$ and  $- {1}/{8}<\xi<0$, respectively.
As illustrative example, here  we only give the graph as Re$z>0$. $\Sigma_{ij}$ separate
         complex plane $\mathbb{C}$ into some  sectors denoted by $\Omega_{ij}$. The yellow domains are produced  by the  first  factorizations of jump matrices;
  The yellow domains are produced  by the  second  factorizations of jump matrices}
	\label{FigOmig}
\end{figure}

\begin{lemma}\label{theta2}
 There exists a constant $c(\xi)>0$ relied on $\xi= \in(-1/8,1)$ and a positive real-value function $g(x;\xi)$ with $\lim_{x\to \infty}g(x;\xi)=0$ and $\lim_{x\to \infty}(\text{Re}^2z-\xi_i^2)g(x;\xi)=g_0\in(0,1)$. Then the imaginary part of phase function (\ref{Reitheta}) ${\rm Im }\theta_{12}(z)$ have following estimation for $i=1,...,p(\xi)$:
	\begin{align}
	&{\rm Im }\theta_{12}(z)\geq c(\xi){\rm Im }z({\rm  Re }^2z-\xi_i^2)g({\rm  Re }z;\xi),\hspace{0.5cm} \text{as }z\in\Omega_{i2}, \Omega_{i4};\\
	&{\rm Im }\theta_{12}(z)\leq -c(\xi){\rm Im }z({\rm  Re }^2z-\xi_i^2)g({\rm  Re }z;\xi),\hspace{0.5cm} \text{as }z\in\Omega_{i1}, \Omega_{i3}.
	\end{align}	
\end{lemma}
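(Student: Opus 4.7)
The plan is to exploit the explicit formula (\ref{Reitheta}) for $\mathrm{Im}\,\theta_{12}$ to factor out the zeros that occur on the real axis, and then use the smallness of the opening angle $\varphi$ to control the sign in each sector $\Omega_{ij}$.  First I would factor
\begin{align}
\mathrm{Im}\,\theta_{12}(z)=\mathrm{Im}\,z\cdot\bigl(1+|z|^{-2}\bigr)\cdot \mathcal{F}(z;\xi),\nonumber
\end{align}
where $\mathcal{F}(z;\xi)$ is a real-valued rational function in $\mathrm{Re}\,z$ and $\mathrm{Im}\,z$ obtained directly from (\ref{Reitheta}).  Since $\theta_{12}$ is holomorphic away from $z=0$, the Cauchy--Riemann equations give that its restriction to the real axis satisfies $\mathcal{F}(u,0;\xi) = (1+u^{-2})^{-1}\theta_{12}'(u)$, which vanishes precisely at the stationary points listed in Case II and Case III above.

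Next I would use the symmetries among these stationary points ($\xi_1=1/\xi_2=-1/\xi_3=-\xi_4$ in Case III and the analogous pairing in Case II) together with the identity $\theta_{12}'(z)=\sqrt{3}\bigl(\xi-\tfrac{1-\breve{k}^2}{(\breve{k}^2+1)^2}\bigr)(1+z^{-2})$ with $\breve{k}=z-1/z$.  Grouping the roots by the pair $\{\pm\xi_i\}$ yields the factorisation
\begin{align}
\mathcal{F}(u,0;\xi) = C(\xi)\,\prod_{i}(u^2-\xi_i^2)\cdot \tilde{g}(u;\xi),\nonumber
\end{align}
where the product ranges over one representative of each symmetric pair and $\tilde{g}(u;\xi)$ is bounded and of definite sign on the intervals defined in (\ref{In1})--(\ref{In2}).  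Isolating the factor $(\mathrm{Re}^2 z-\xi_i^2)$ and absorbing the remaining factors into a function $g(\mathrm{Re}\,z;\xi)$ gives the factorisation claimed in the statement, and a direct computation from the explicit form of $\mathcal{F}$ verifies the prescribed asymptotics $\lim_{x\to\infty}g(x;\xi)=0$ and $\lim_{x\to\infty}(\mathrm{Re}^2 z-\xi_i^2)\,g(\mathrm{Re}\,z;\xi)=g_0\in(0,1)$.

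It remains to pass from the real axis to the full sectors $\Omega_{ij}$.  Here the angle $\varphi$ has been selected sufficiently small so that the sectors meet neither the critical level set $\{\mathrm{Im}\,\theta_{12}=0\}$ off $\mathbb{R}$ nor the disks $\mathbb{D}_n$; in particular, the sign of $\mathrm{Im}\,\theta_{12}$ is constant on each $\Omega_{ij}$ and matches the sign assigned in Figure \ref{figtheta}.  Writing $z=\xi_i+re^{i\phi}$ and expanding $\theta_{12}$ to second order gives $\mathrm{Im}\,\theta_{12}(z)\approx \tfrac{1}{2}\theta_{12}''(\xi_i)r^2\sin(2\phi)$, which confirms the quadratic vanishing and the sign pattern on the four sectors adjacent to $\xi_i$.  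Combining this local behaviour with the global factorisation above, together with a compactness argument on bounded portions of $\Omega_{ij}$ and the $|z|\to\infty$ tail estimate coming from $\mathcal{F}$, yields the uniform lower bound with the constant $c(\xi)>0$.  The main obstacle is bookkeeping: one must verify carefully that the factor $(\mathrm{Re}^2 z-\xi_i^2)g(\mathrm{Re}\,z;\xi)$ has the correct sign throughout every sector in both Case II (eight stationary points) and Case III (four stationary points), which requires matching the explicit sign pattern in Figure \ref{figtheta} with the choice of the labels $\Omega_{i1},\ldots,\Omega_{i4}$ imposed in \eqref{In1}--\eqref{In2}.
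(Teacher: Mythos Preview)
Your outline is sound and would yield a correct proof, but it follows a noticeably different route from the paper's argument.

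The paper does not work directly in $z$.  Instead it passes to the Joukowski--type variable $k=-\tfrac14(z-1/z)$ and writes $u=\mathrm{Re}\,k$, $v=\mathrm{Im}\,k$; in these coordinates one obtains the clean closed form
\[
\mathrm{Im}\,\theta_{12}(z)=\sqrt{3}\,v\Bigl[\xi+f_0(u,v)\Bigr],\qquad
f_0(u,v)=\frac{1}{(u^2+v^2-1)+\dfrac{4u^2}{u^2+v^2-1}} .
\]
The passage from the full sector to the real axis is then a single line: since $f_0$ is bounded above and below by positive constants on the relevant domain, $f_0(u,v)\ge c'\,f_0(u,0)$, and the problem reduces to factoring the real function $\xi+f_0(u,0)$.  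That factorisation is carried out explicitly and produces the concrete formula
\[
g(x;\xi)=\Bigl(1-\tfrac{1}{x^2\xi_1^2}\Bigr)\frac{1+3k_1^2+(1-k_1^2)(x^2+x^{-2})}{(k_1^2+1)^2(x^2+x^{-2}-1)^2},
\]
from which the stated limits at infinity are read off directly.

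Your approach replaces this change of variable by the Cauchy--Riemann identification $\mathcal F(u,0;\xi)=(1+u^{-2})^{-1}\theta_{12}'(u)$ on the real axis, followed by a three--part extension (Taylor at $\xi_i$, compactness on bounded pieces, tail at $|z|\to\infty$).  This is conceptually clean and avoids the algebra, but it is less constructive: the function $g$ and the limit $g_0$ remain implicit, and the ``bookkeeping'' you flag at the end---matching the sign of $(\mathrm{Re}^2 z-\xi_i^2)\,g(\mathrm{Re}\,z;\xi)$ with the sign chart in Figure~\ref{figtheta} for each of the $4p(\xi)$ sectors in Cases II and III---is genuinely where the work lies in your scheme.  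In the paper's approach that sign check is absorbed into the single inequality $f_0(u,v)\ge c'f_0(u,0)$ and the explicit factorisation of $\xi+f_0(u,0)$.
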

\begin{proof}
	We only give the detail of Case III  and take $z\in\Omega_{14}$ as an example, and the other regions are similarly.
	Denote $z=x+yi$ with $x,y\in \mathbb{R}$ and
	\begin{align}
	k=-\frac{1}{4}\left( z-\frac{1}{z}\right) ,\hspace{0.5cm}	k_1=-\frac{1}{4}\left( \xi_1-\frac{1}{\xi_1}\right).
	\end{align}
	Take notice  that $\xi_1>1$, so $0<k_1<1$.
	 And denote
	\begin{align}
	&u\triangleq\text{Re}(k)=x\left( 1-\frac{1}{|z|^2}\right) ,\\
	&v\triangleq\text{Im}(k)=y\left( 1+\frac{1}{|z|^2}\right),
	\end{align}
	with $0<v\leq(x-\xi_1)\tan\varphi\left(1+\frac{1}{x^2+\tan^2\varphi(x-\xi_1)^2} \right)  $ and $u>1$.
	Then  the imaginary part of phase function (\ref{Reitheta}) ${\rm Im }\theta(z)$ can be rewritten as
	\begin{align}
	{\rm Im }\theta_{12}(z)=\sqrt{3}v\left[ \xi+\frac{u^2+v^2-1}{u^4+v^4+2u^2+1-2v^2+2u^2v^2}\right] .
	\end{align}
	Denote
	\begin{align}
		f_0(u,v)\triangleq\frac{u^2+v^2-1}{u^4+v^4+2u^2+1-2v^2+2u^2v^2}>0.
	\end{align}
	Our first goal is to show ${\rm Im }\theta_{12}(z)\geq c(\xi+f(u,0))$, with a positive constant $c$. Note that $u^2+v^2>1$,
	\begin{align}
		f_0(u,v)=\frac{1}{u^2+v^2-1+\frac{4u^2}{u^2+v^2-1}}.
	\end{align}
	therefore, $f_0(u,v)$ has  bounded nonzero maximum and  minimum value, then there exists a positive constant $c'$ making that $f_0(u,v)\geq c' f_0(u,0)$. Next, we rewrite
	\begin{align*}
		\xi=\frac{1-k_1^2}{(k_1^2+1)^2},
	\end{align*}
	 then together with $v=0\Leftrightarrow y=0$,
	 \begin{align*}
	 	\xi+f(u,0)&=\frac{u^2-1}{(u^2+1)^2}-\frac{k_1^2-1}{(k_1^2+1)^2}\\
	 	&=\dfrac{u^2-k_1^2}{(k_1^2+1)^2(u^2+1)^2}\left( 3+k_1^2+(1-k_1^2)u^2\right) \\
	 	&=(x^2-\xi_1^2)(1-\dfrac{1}{x^2\xi_1^2})\dfrac{1+3k_1^2+(1-k_1^2)(x^2+x^{-2})}{(k_1^2+1)^2(x^2+x^{-2}-1)^2}.
	 \end{align*}
	 Let $g(x;\xi)=(1-\dfrac{1}{x^2\xi_1^2})\dfrac{1+3k_1^2+(1-k_1^2)(x^2+x^{-2})}{(k_1^2+1)^2(x^2+x^{-2}-1)^2}>0$. Then the proof is completed.
\end{proof}
\begin{corollary}\label{theta2c}
	There exist three positive constants $c_1(\xi)$, $c_2(\xi)$ and a    large  $R(\xi)>>\xi_1$  relied on $\xi\in(-1/8,1)$, such that the imaginary part of phase function (\ref{Reitheta}) ${\rm Im }\theta_{12}(z)$ have following estimation for $i=1,...,p(\xi)$:
	\begin{align}
		&{\rm Im }\theta_{12}(z)\geq c_1(\xi){\rm Im }z({\rm Re } z-\xi_i) ,\hspace{0.5cm}  z\in\Omega_{i2}, \Omega_{i4},\ |{\rm Re } z|<R(\xi);\\
		&{\rm Im }\theta_{12}(z)\leq -c_1(\xi){\rm Im }z({\rm Re } z-\xi_i)  ,\hspace{0.5cm}  z\in\Omega_{i1}, \Omega_{i3},\ |{\rm Re } z|<R(\xi);
	\end{align}	
and
\begin{align}
	&{\rm Im }\theta_{12}(z)\geq c_2(\xi){\rm Im }z ,\hspace{0.5cm} \text{as }z\in\Omega_{i2}, \Omega_{i4},\ |{\rm Re }z|>R(\xi);\\
	&{\rm Im }\theta_{12}(z)\leq -c_2(\xi){\rm Im }z  ,\hspace{0.5cm} \text{as }z\in\Omega_{i1}, \Omega_{i3},\ |{\rm Re }z|>R(\xi).
\end{align}	
\end{corollary}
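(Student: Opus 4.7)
The plan is to deduce the corollary directly from Lemma~\ref{theta2} by analyzing the two factors $(\operatorname{Re}^2 z - \xi_i^2)$ and $g(\operatorname{Re} z;\xi)$ in the two regimes $|\operatorname{Re} z|<R(\xi)$ and $|\operatorname{Re} z|>R(\xi)$ separately, and by using the algebraic identity
\[
\operatorname{Re}^2 z - \xi_i^2 = (\operatorname{Re} z - \xi_i)(\operatorname{Re} z + \xi_i).
\]
The signs of $\operatorname{Im}\theta_{12}$ in $\Omega_{i1},\dots,\Omega_{i4}$ are already encoded in Lemma~\ref{theta2}; what remains is to convert the quadratic weight $(\operatorname{Re}^2 z-\xi_i^2)g(\operatorname{Re} z;\xi)$ into the two advertised lower bounds, one linear in $\operatorname{Re} z-\xi_i$ near the stationary point and one independent of $\operatorname{Re} z$ at infinity.

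First, I would fix $R(\xi)\gg \xi_1$ and work in the bounded regime $|\operatorname{Re} z|\le R(\xi)$. Inside each $\overline{\Omega}_{ik}$ with $|\operatorname{Re} z|\le R(\xi)$ the quantity $\operatorname{Re} z+\xi_i$ stays bounded; moreover, from the explicit formula for $g(x;\xi)$ obtained in the proof of Lemma~\ref{theta2}, $g$ is continuous and strictly positive on the compact set cut out by these domains (the denominator $(x^2+x^{-2}-1)^2$ is bounded away from zero once one stays away from the two unit singularities, and by construction $\Omega_{ik}$ avoids the disks $\mathbb{D}_n$ and the unit circle). Hence
\[
\inf\bigl\{|\operatorname{Re} z+\xi_i|\,g(\operatorname{Re} z;\xi): z\in\Omega_{ik},\ |\operatorname{Re} z|\le R(\xi)\bigr\}>0,
\]
and one may absorb this infimum together with $c(\xi)$ into a new constant $c_1(\xi)>0$. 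Because $\operatorname{Re} z-\xi_i$ has a definite sign on each $\Omega_{ik}$ (by the very construction of the opening contours $\Sigma_{jk}^{\omega^n}$) and $(\operatorname{Re} z+\xi_i)g(\operatorname{Re} z;\xi)$ is positive in those domains, the sign of the product $\operatorname{Im} z(\operatorname{Re} z-\xi_i)$ agrees with the sign of $\operatorname{Im}\theta_{12}$ dictated by Lemma~\ref{theta2}, which delivers the first pair of inequalities.

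Next, for $|\operatorname{Re} z|>R(\xi)$ only the unbounded sectors $\Omega_{ik}$ with $i=1$ or $i=p(\xi)$ are relevant. Here I would use the second asymptotic provided in Lemma~\ref{theta2}, namely
\[
\lim_{x\to\infty}(\operatorname{Re}^2 z-\xi_i^2)\,g(\operatorname{Re} z;\xi)=g_0\in(0,1).
\]
Choosing $R(\xi)$ large enough that $(\operatorname{Re}^2 z-\xi_i^2)g(\operatorname{Re} z;\xi)\ge g_0/2$ for all $z$ under consideration (possible by the limit above and an analogous bound for $\operatorname{Re} z\to-\infty$, which holds by symmetry since $g$ depends only on $|\operatorname{Re} z|$), Lemma~\ref{theta2} yields
\[
|\operatorname{Im}\theta_{12}(z)|\ge c(\xi)\tfrac{g_0}{2}\,|\operatorname{Im} z|,\qquad |\operatorname{Re} z|>R(\xi),
\]
so that $c_2(\xi):=c(\xi)g_0/2$ works, again with the correct sign inherited from the signature table.

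The main obstacle is really bookkeeping rather than analysis: one must verify that the signs $(\operatorname{Re} z-\xi_i),\ \operatorname{Im} z,\ (\operatorname{Re} z+\xi_i),\ g$ combine consistently on every one of the sectors $\Omega_{ik}$ across both cases $0\le\xi<1$ and $-1/8<\xi<0$, which is why the opening angles $\varphi$ were chosen small enough to keep each $\Omega_{ik}$ strictly on one side of the critical curve $\operatorname{Im}\theta_{12}=0$ (a property already ensured when $\tilde{\Sigma}(\xi)$ was defined). Once this sign-check is carried out in one representative sector, the rest follow by the reflection $z\mapsto\bar z$ and by the identities $\theta_{13}(z)=-\theta_{12}(\omega^2 z),\ \theta_{23}(z)=\theta_{12}(\omega z)$ from~\eqref{42}.
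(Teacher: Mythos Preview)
Your approach is correct and is precisely the argument the paper leaves implicit: the corollary is stated without proof, intended as an immediate consequence of Lemma~\ref{theta2} via the factorization $\operatorname{Re}^2 z-\xi_i^2=(\operatorname{Re} z-\xi_i)(\operatorname{Re} z+\xi_i)$ together with the asymptotic property $\lim_{x\to\infty}(\operatorname{Re}^2 z-\xi_i^2)g(\operatorname{Re} z;\xi)=g_0\in(0,1)$. Your write-up supplies exactly the details the paper omits.
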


Introduce following functions for brief:\\
for $\xi>1$
\begin{align}
	&p_1(z,\xi)=p_2(z,\xi)=r(z),\ p_3(z,\xi)=p_4(z,\xi)=\bar{r}(z).
\end{align}
for $\xi<-1/8$
\begin{align}
	&p_1(z,\xi)=p_2(z,\xi)=\dfrac{-\bar{r}(z) }{1-|r(z)|^2},\ p_3(z,\xi)=p_4(z,\xi)=\dfrac{-r(z) }{1-|r(z)|^2}.
\end{align}
As in Case II and Case III, for $j=1,...,p(\xi)$,
\begin{align}
&p_{j1}(z,\xi)=\dfrac{-\bar{r}(z) }{1-|r(z)|^2},\hspace{0.5cm}p_{j3}(z,\xi)=\bar{r}(z),\\
&p_{j2}(z,\xi)=\dfrac{-r(z) }{1-|r(z)|^2},\hspace{0.6cm}p_{j4}(z,\xi)=r(z).
\end{align}
Besides, from $r\in \mathbb{R}(\mathbb{R})$, it also has  $p_1'(z)$ and $p_2'(z)$ exist and are in $L^2(\mathbb{R})\cup L^\infty(\mathbb{R})$. And their $L^2$-norm and $L^\infty$-norm  can be controlled by the norm of $r(z)$.
By symmetry, denote
\begin{align}
	&p_1^\omega(z,\xi)=p_2^\omega(z,\xi)=r(\omega^2z),\ p_3^\omega(z,\xi)=p_4^\omega(z,\xi)=\bar{r}(\omega^2z),\\
	&p_1^{\omega^2}(z,\xi)=p_2^{\omega^2}(z,\xi)=r(\omega z),\ p_3^{\omega^2}(z,\xi)=p_4^{\omega^2}(z,\xi)=\bar{r}(\omega z),
\end{align}
for $\xi>1$. And
\begin{align}
	&p_{j1}^\omega(z,\xi)=\dfrac{-\bar{r}(\omega^2z) }{1-|r(\omega^2z)|^2},\hspace{0.5cm}p_{j3}^\omega(z,\xi)=\bar{r}(\omega^2z),\\
	&p_{j2}^\omega(z,\xi)=\dfrac{-r(\omega^2z) }{1-|r(\omega^2z)|^2},\hspace{0.6cm}p_{j4}^\omega(z,\xi)=r(\omega^2z),\\
	&p_{j1}^{\omega^2}(z,\xi)=\dfrac{-\bar{r}(\omega z) }{1-|r(\omega z)|^2},\hspace{0.5cm}p_{j3}^{\omega^2}(z,\xi)=\bar{r}(\omega z),\\
	&p_{j2}^{\omega^2}(z,\xi)=\dfrac{-r(\omega z) }{1-|r(\omega z)|^2},\hspace{0.6cm}p_{j4}^{\omega^2}(z,\xi)=r(\omega z),
\end{align}
for $1>\xi>-\frac{1}{8}$ and $j=1,...,p(\xi)$.

The next step is to construct a matrix function $R^{(2)}$.
Usually, we need to remove jump on $\mathbb{R}$,  $\omega\mathbb{R}$ and $\omega^2\mathbb{R}$, and  have some mild control on $\bar{\partial}R^{(2)}$ sufficiently to ensure that the $\bar{\partial}$-contribution to the long-time asymptotics of $q(x, t)$ is negligible like in \cite{fNLS}. But we have extra singularity on the  boundary.
Hence, to deal with the singularity at $\varkappa_k$, $k=1,...,6$, we introduce a fixed cutoff function $\mathcal{X}(z)$   in $C^\infty_0(\mathbb{R},[0,1])$ with  support near $1$ with
\begin{align}
	\mathcal{X}(z)=\left\{\begin{array}{llll}
		0, & |z-1|>2\varepsilon,\\[4pt]
		1,  & |z-1|<\varepsilon,
	\end{array}\right.
\end{align}
where  $\varepsilon $ is a small enough positive constant satisfying
 the   support of $\mathcal{X}(z)$ doesn't contain any of phase points with
$$\varepsilon <\frac{1}{8}\underset{1\leq j\neq k\leq p(\xi)}{\min}|\xi_k-\xi_j|,$$
and the  support of $\mathcal{X}(\text{Im}z+1)\mathcal{X}(\text{Re}z)$  is disjoint with $\ddot{\Sigma}(\xi)$.
Such constant $\varepsilon$  indeed  exists.

In fact, at the case $\xi\in (-\infty,-1/8)\cup(1,+\infty)$ without no phase point, so these  requirements are easy to achieve. At the case $\xi\in(-1/8,0)$, there are two
 phase points $\xi_2$, $\xi_3=1/\xi_2$ near $1$. So $1\neq\frac{\xi_2+\xi_3}{2}$. Take $\varepsilon $ small enough, then it achieve above requirements. And the case $\xi\in[0,1) $ is same as $\xi\in(-1/8,0)$.

We now define continuous extension functions for the  Case II  and Case III,
\begin{equation}
R^{(2)}(z,\xi)=\left\{\begin{array}{lll}
\left(\begin{array}{ccc}
	1 & R_{kj}(z,\xi)e^{it\theta_{12}} & 0\\
	0 & 1 & 0\\
	0 & 0 & 1
\end{array}\right), & z\in \Omega_{kj},j=2,4,\ k=1,...,p(\xi);\\
\\
\left(\begin{array}{ccc}
	1 & 0 & 0\\
	R_{kj}(z,\xi)e^{-it\theta_{12}} & 1 & 0\\
	0 & 0 & 1
\end{array}\right),  &z\in \Omega_{kj},j=1,3\ k=1,...,p(\xi);\\
\\
\left(\begin{array}{ccc}
	1 & 0 & R_{kj}(\omega^2z,\xi)e^{it\theta_{13}}\\
	0 & 1 & 0\\
	0 & 0 & 1
\end{array}\right), & z\in \Omega^\omega_j,j=2,4\ k=1,...,p(\xi);\\
\\
\left(\begin{array}{ccc}
	1 & 0 & 0\\
	0 & 1 & 0\\
	R_{kj}(\omega^2z,\xi)e^{-it\theta_{13}} & 0 & 1
\end{array}\right),  &z\in \Omega^\omega_j,j=1,3\ k=1,...,p(\xi);\\
\\
\left(\begin{array}{ccc}
	1 & 0 &0\\
	0 & 1 & 0\\
	0 & R_{kj}(\omega z,\xi)e^{-it\theta_{23}} & 1
\end{array}\right), & z\in \Omega^{\omega^2}_j,j=2,4\ k=1,...,p(\xi);\\
\\
\left(\begin{array}{ccc}
	1 & 0 & 0\\
	0 & 1 & R_{kj}(\omega z,\xi)e^{it\theta_{23}}\\
	0 & 0 & 1
\end{array}\right),  &z\in \Omega^{\omega^2}_j,j=1,3\ k=1,...,p(\xi);\\
\\
I,  &elsewhere;\\
\end{array}\right.\label{R(2)1}
\end{equation}
where  the functions $R_{kj}$, $k=1,...,p(\xi)$, $j=1,2,3,4$  are given by  following Proposition.
\begin{Proposition}\label{proR1}
	As in Case II and Case III,  the functions $R_{kj}$: $\bar{\Omega}_{kj}\to \mathbb{C}$,$k=1,...,p(\xi)$, $j=1,2,3,4$  have boundary values as follows:
	\begin{align}
	&R_{k1}(z,\xi)=\Bigg\{\begin{array}{ll}
		p_{k1}(\xi_k,\xi)T_{12}^{(k)}(\xi)\left( \eta(z-\xi_k)\right) ^{2i\eta\nu(\xi_k)}  &z\in \Sigma_{k1},\\
		p_{k1}(z,\xi)[T_{12}]_+(z) &z\in I_{k1},\\
	\end{array} \\
	&R_{k2}(z,\xi)=\Bigg\{\begin{array}{ll}
	p_{k2}(z,\xi)[T_{12}]_-(z)^{-1} &z\in I_{k2}, \\
	p_{k2}(\xi_k,\xi)T_{12}^{(k)}(\xi)^{-1}\left( \eta(z-\xi_k)\right)^{-2i\eta\nu(\xi_k)} &z\in \Sigma_{k2},\\
	\end{array} \\
&R_{k3}(z,\xi)=\Bigg\{\begin{array}{ll}
p_{k3}(\xi_k,\xi)T_{12}^{(k)}(\xi)\left( \eta(z-\xi_k)\right)^{2i\eta\nu(\xi_k)}  &z\in \Sigma_{k3},\\
p_{k3}(z,\xi)T_{12}(z) &z\in  I_{k3},\\
\end{array} \\
	&R_{k4}(z,\xi)=\Bigg\{\begin{array}{ll}
	p_{k4}(z,\xi)T_{12}(z)^{-1} & z\in I_{k4},\\
	p_{k4}(\xi_k,\xi)T_{12}^{(k)}(\xi)^{-1}\left( \eta(z-\xi_k)\right)^{-2i\eta\nu(\xi_k)}  &z\in \Sigma_{k4},
\end{array}
	\end{align}	
	where $I_{kj}$ is specified in (\ref{In1})-(\ref{In2}),  $\eta=\eta(\xi,k)$ defined in (\ref{eta}). And $R_{kj}$  has following property:
	\begin{align}
	&|R_{kj}(z,\xi)|\lesssim \sin^2(k_0\arg(z-\xi_k))+ \left(1+ \text{Re}(z)^2\right) ^{-1/2}, \text{for all $z\in \Omega_{kj}$},\label{R}\\
	&|\bar{\partial}R_{kj}(z,\xi)|\lesssim|p_{kj}'(\text{Re}z)|+|\mathcal{X}'(\text{Re}z)|+|z-\xi_k|^{-1/2}, \text{for all $z\in \Omega_{kj}$.}\label{dbarRj3}
	\\
	&|\bar{\partial}R_{kj}(z,\xi)|\lesssim|p_{kj}'(\text{Re}z)|+|\mathcal{X}'(\text{Re}z)|+|z-\xi_k|^{-1}, \text{for all $z\in \Omega_{kj}$.}\label{dbarRj4}
	\end{align}
	And
	\begin{align}
	&R_{kj}(z)=\bar{\partial}R_{kj}(z)=0, \text{ for all }z\in\Omega_{kj} \text{ with } |\text{Re}z\pm1|<\varepsilon,\label{R11}\\	
	&\bar{\partial}R_{kj}(z,\xi)=0,\hspace{0.5cm}\text{if } z\in elsewhere.
	\end{align}
\end{Proposition}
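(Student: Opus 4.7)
The plan is to mimic the construction used in Proposition \ref{proR}, adapted to the presence of stationary phase points, by defining $R_{kj}$ as an explicit angular interpolation between the prescribed boundary values on $I_{kj}\subset\mathbb{R}$ and on the opening ray $\Sigma_{kj}$, cut off near the singularities $\pm 1$ by the factor $1-\mathcal{X}(\mathrm{Re}\,z)$. Concretely, writing $z-\xi_k=\rho e^{i\phi}$ and taking $k_0=\pi/(2\varphi)$ so that $\phi=0$ (or $\pi$) parametrises $I_{kj}$ and $\phi=\pm\varphi$ (or $\pi\pm\varphi$) parametrises $\Sigma_{kj}$, I would set, schematically for $R_{k1}$,
\begin{equation}
R_{k1}(z,\xi)=\bigl(1-\mathcal{X}(\mathrm{Re}\,z)\bigr)\Bigl[\cos^{2}(k_{0}\phi)\,p_{k1}(z,\xi)[T_{12}]_{+}(z)+\sin^{2}(k_{0}\phi)\,p_{k1}(\xi_{k},\xi)T_{12}^{(k)}(\xi)\bigl(\eta(z-\xi_{k})\bigr)^{2i\eta\nu(\xi_{k})}\Bigr],\nonumber
\end{equation}
with $\eta=\eta(\xi,k)$, and analogous formulas for $R_{k2},R_{k3},R_{k4}$ dictated by the boundary conditions listed in the statement. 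The cases $k=1$ and $k=p(\xi)$ use only a one-sided cosine interpolant since only one adjacent real interval $I_{kj}$ is available. By construction the boundary matching on $I_{kj}\cup\Sigma_{kj}$ is immediate, and the vanishing property (\ref{R11}) near $\pm 1$ follows from the support of $1-\mathcal{X}$ together with the choice of $\varepsilon$ ensuring $\mathrm{supp}\,\mathcal{X}(\mathrm{Re}\,z)$ is disjoint from $\ddot{\Sigma}(\xi)$.

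I would then verify the pointwise bound (\ref{R}) as follows. Using (g) of Proposition \ref{proT} in the form
\begin{equation}
[T_{12}]_{+}(z)=T_{12}^{(k)}(\xi)\bigl(\eta(z-\xi_{k})\bigr)^{2i\eta\nu(\xi_{k})}+O(|z-\xi_{k}|^{1/2}),\nonumber
\end{equation}
and the elementary estimate $|(z-\xi_{k})^{2i\eta\nu(\xi_{k})}|\le e^{-2\pi\nu(\xi_{k})}$, the convex combination above is bounded by a universal constant near $\xi_{k}$; away from $\xi_{k}$ the factor $p_{kj}(z,\xi)\in\mathcal{S}(\mathbb{R})$ controls the Schwartz decay giving the $(1+\mathrm{Re}(z)^{2})^{-1/2}$ piece, while the $\sin^{2}(k_{0}\phi)$ term controls behaviour across the sector.

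The main technical step is the $\bar\partial$-estimate. Using $\bar\partial=\tfrac{1}{2}e^{i\phi}(\partial_{\rho}+i\rho^{-1}\partial_{\phi})$ in the polar coordinates centred at $\xi_{k}$, three terms arise: a radial term producing $|p'_{kj}(\mathrm{Re}\,z)|$, a cutoff term producing $|\mathcal{X}'(\mathrm{Re}\,z)|$, and the crucial angular term
\begin{equation}
\frac{i e^{i\phi}k_{0}\sin(2k_{0}\phi)}{2\rho}\Bigl(p_{kj}(z,\xi)T_{12}^{\pm1}(z)-p_{kj}(\xi_{k},\xi)T_{12}^{(k)}(\xi)^{\pm1}(\eta(z-\xi_{k}))^{\pm 2i\eta\nu(\xi_{k})}\Bigr).\nonumber
\end{equation}
Here the subtraction of the frozen value at $\xi_{k}$ is precisely the purpose of the piecewise definition; combined with Proposition \ref{proT}(g) it gives an $O(\rho^{1/2})$ gain, so the angular term is $O(\rho^{-1/2})$, producing (\ref{dbarRj3}). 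The weaker bound (\ref{dbarRj4}) follows trivially since $T_{12}$ and its frozen counterpart are separately bounded, giving $O(\rho^{-1})$.

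The hard part will be handling the interaction between the stationary phase subtraction and the cutoff $\mathcal{X}$ near $\pm 1$ when some $\xi_{k}$ happens to lie close to $\pm 1$: one must verify that the chosen $\varepsilon$, determined by the minimal separation of phase points and by the disjointness of $\mathrm{supp}\,\mathcal{X}$ from the contour $\ddot{\Sigma}(\xi)$, is compatible with the estimates uniformly in $\xi\in(-1/8,1)$. Once this compatibility is settled, the remaining $R_{kj}$ are obtained from $R_{k1}$ by swapping the role of cosine/sine and of $T_{12}$ with $T_{12}^{-1}$, and their analogues on $\omega^{n}\mathbb{R}$ via the symmetries in Proposition \ref{proT}(b); the bounds (\ref{R})--(\ref{dbarRj4}) are inherited automatically.
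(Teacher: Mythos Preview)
Your proposal is correct and follows essentially the same approach as the paper: the paper also constructs $R_{kj}$ as an angular interpolation between the real-axis value $p_{kj}(\mathrm{Re}\,z,\xi)T_{12}(z)^{\pm1}$ and the frozen value $p_{kj}(\xi_k,\xi)T_{12}^{(k)}(\xi)^{\pm1}(\eta(z-\xi_k))^{\pm 2i\eta\nu(\xi_k)}$, multiplied by the cutoff $(1-\mathcal{X}(\mathrm{Re}\,z))$, and obtains (\ref{dbarRj3}) exactly as you describe---by combining the $O(\rho^{1/2})$ bound from Proposition \ref{proT}(g) with the Cauchy--Schwarz estimate $|p_{kj}(u)-p_{kj}(\xi_k)|\le\|p_{kj}'\|_{2}|u-\xi_k|^{1/2}$. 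The only cosmetic difference is that the paper uses the weights $\cos(k_0\psi)$ and $1-\cos(k_0\psi)$ rather than your $\cos^2(k_0\phi)$ and $\sin^2(k_0\phi)$; both satisfy the required boundary matching and yield identical estimates. Your concern about the interaction between the cutoff and the phase-point subtraction is resolved in the paper simply by observing that, by the choice of $\varepsilon$, for any region $\Omega_{kj}$ not containing $\pm1$ one has $1-\mathcal{X}(\mathrm{Re}\,z)\equiv 1$ there, so no interaction occurs.
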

\begin{proof}
	For the region $\omega_{kj}$ containing $\pm1$, we construct the function $R_{kj}$ by $R_{kj}(z)=(1-\mathcal{X}(\text{Re}z))\tilde{R}_{kj}(z)$ as same as the case in Proposition \ref{proR}. Through multiply by the cutoff function $(1-\mathcal{X}(\text{Re}z))$, we achieve (\ref{R11}). And for the regions away from 1,
	we give the  details for $R_{14}$ when $-\frac{1}{8}<\xi<0$ only. The other cases are easily inferred. Note that $1\notin\Omega_{14}$, so $1-\mathcal{X}(\text{Re}z)\equiv1$ in $\Omega_{14}$. Using the constants $T_{12}^{(k)}(\xi)$ defined in proposition \ref{proT}, we give the extension of $R_{14}(z,\xi)$ on $\Omega_{14}$:
	\begin{align}
	R_{14}(z,\xi)=&p_{14}(\xi_1,\xi)T_{12}^{(1)}(\xi)^{-2}(z-\xi_1)^{-2i\nu(\xi_1)}\left[1-\cos\big(k_0\arg(z-\xi_1) \big)\right] \nonumber\\
	&+\cos\big(k_0\arg(z-\xi_1)\big)p_{14}(\text{Re}z,\xi)T(z)^{-2}.
\end{align}
	Let $z-\xi_1=le^{i\psi}=u+vi$, $l,\psi,u,v\in\mathbb{R}$. And from $r\in H^{2,1}(\mathbb{R})$, which means $p_{14}\in H^{2,1}(R)$ we have $|p_{14}(u)|\lesssim (1+u^2)^{-1/2}$. Together with (\ref{key}) we have (\ref{R}). Since
	\begin{equation*}
	\bar{\partial}=\frac{1}{2}\left( \partial_u+i\partial_v\right) =\frac{e^{i\psi}}{2}\left( \partial_l+il^{-1}\partial_\psi\right),
	\end{equation*}
	we have
	\begin{align}
	\bar{\partial}R_{14}^{(1)}&=\left(p_{14}(u,\xi)T(z)^{-2}-p_{14}(\xi_1,\xi)T_1(\xi)^{-2}(z-\xi_1)^{-2i\nu(\xi_1)} \right)\bar{\partial}\cos (k_0\psi)\nonumber\\
	& +\frac{1}{2}T(z)^{-2}p_{14}'(u,\xi)\cos (k_0\psi).
	\end{align}
	Substitute (\ref{T-TJ}) into above equation, and using
	\begin{align}
	|p_{14}(u,\xi)-p_{14}(\xi_1,\xi)|=|\int_{\xi_1}^{u}	p_{14}'(s,\xi)ds|\leq\parallel p_{14}'\parallel_2|u-\xi|^{1/2},
	\end{align}
(\ref{dbarRj3}) comes immediately. If we simply use the boundedness of $p_{14}$, we obtain (\ref{dbarRj4}).
\end{proof}
In addition, $R^{(2)}$ achieves the symmetry:
\begin{equation}
	R^{(2)}(z)=\Gamma_1\overline{R^{(2)}(\bar{z})}\Gamma_1=\Gamma_2\overline{R^{(2)}(\omega^2\bar{z})}\Gamma_2=\Gamma_3\overline{R^{(2)}(\omega\bar{z})}\Gamma_3=\overline{R^{(2)}(\bar{z}^{-1})}.
\end{equation}

\subsection{A hybrid $\bar{\partial}$-RH problem and its decomposition  }\label{sec5}

We now  use $R^{(2)}$ to define the new transformation \begin{equation}
	M^{(2)}(z;y,t)\triangleq M^{(2)}(z)=M^{(1)}(z)R^{(2)}(z)\label{transm2},
\end{equation}
which satisfies the following mixed $\bar{\partial}$-RH problem.

\begin{RHP}\label{RHP4}
Find a matrix valued function  $ M^{(2)}(z)$ with following properties:

$\blacktriangleright$ Analyticity:  $M^{(2)}(z)$ is continuous and has  sectionally continuous first partial derivatives in
$\mathbb{C}\setminus \left( \Sigma^{(2)}(\xi)\cup \left\lbrace\zeta_n,\bar{\zeta}_n \right\rbrace_{n\in\lozenge} \right) $,  and is meromorphic out $\bar{\Omega}$;

$\blacktriangleright$ Symmetry: $M^{(2)}(z)=\Gamma_1\overline{M^{(2)}(\bar{z})}\Gamma_1=\Gamma_2\overline{M^{(2)}(\omega^2\bar{z})}\Gamma_2=\Gamma_3\overline{M^{(2)}(\omega\bar{z})}\Gamma_3$ and $M^{(2)}(z)=\overline{M^{(2)}(\bar{z}^{-1})}$;

$\blacktriangleright$ Jump condition: $M^{(2)}$ has continuous boundary values $M^{(2)}_\pm$ on $\Sigma^{(2)}(\xi)$ and
\begin{equation}
	M^{(2)}_+(z)=M^{(2)}_-(z)V^{(2)}(z),\hspace{0.5cm}z \in \Sigma^{(2)}(\xi),
\end{equation}
where for  $\xi \in(1,+\infty)$
\footnotesize\begin{equation}
	V^{(2)}(z)=\left\{ \begin{array}{ll}
		T^{-1}(z)G(z)T(z),   & 	z\in\partial\mathbb{D}_n\cap (\underset{k=1}{\cup^3}S_{2k});\\[12pt]
		T^{-1}(z)G^{-1}(z)T(z),    & z\in\partial\mathbb{D}_n\cap (\underset{k=1}{\cup^3}S_{2k-1});\\[12pt]
			\left(\begin{array}{ccc}
			1 & \frac{rT_{21}\mathcal{X}(|z|) e^{it\theta_{12}}}{1-|r|^2}&0\\
			0 & 1&0\\
			0 & 0 &1
		\end{array}\right)\left(\begin{array}{ccc}
			1 & 0 & 0\\
			\frac{\bar{r}T_{12}\mathcal{X}(|z|) e^{-it\theta_{12}}}{1-|r|^2} & 1 & 0\\
			0 & 0 &1
		\end{array}\right),   & z\in \mathbb{R};\\[12pt]
		\left(\begin{array}{ccc}
			1 & 0&0\\
			0 & 1&\frac{-r(\omega z)T_{32} e^{it\theta_{23}}\mathcal{X}(|z|)}{1-|r(\omega z)|^2}\\
			0 & 0 &1
		\end{array}\right)\left(\begin{array}{ccc}
			1 & 0 & 0\\
			0 & 1 & 0\\
			0 & 	\frac{\bar{r}(\omega z) e^{-it\theta_{23}}T_{23}\mathcal{X}(|z|)}{1-|r(\omega z)|^2} &1
		\end{array}\right),   & z\in \omega^2\mathbb{R};\\[12pt]
		\left(\begin{array}{ccc}
			1 & 0&0\\
			0 & 1&0\\
			\frac{-r(\omega^2 z)T_{13}\mathcal{X}(|z|) e^{-it\theta_{13}}}{1-|r(\omega^2 z)|^2} & 0 &1
		\end{array}\right)\left(\begin{array}{ccc}
			1 & 0 & 	\frac{\bar{r}(\omega^2 z)T_{31}\mathcal{X}(|z|) e^{it\theta_{13}}}{1-|r(\omega^2 z)|^2}\\
			0 & 1 & 0\\
			0 & 0 &1
		\end{array}\right),   & z\in \omega\mathbb{R};\\
	\end{array}\right.,\label{jumpv2}
\end{equation}	\normalsize
and for $\xi \in(-1/8,1)$
\footnotesize\begin{equation}
V^{(2)}(z)=\left\{
\begin{array}{ll}
\underset{z'\in\Omega\to z\in\ddot{\Sigma}(\xi)}{\lim}R^{(2)}(z')&	z\in\ddot{\Sigma}(\xi)\cap (\underset{m=1}{\cup^3}S_{2m-1});\\[12pt]
\underset{z'\in\Omega\to z\in\ddot{\Sigma}(\xi)}{\lim}R^{(2)}(z')^{-1}& 	z\in\ddot{\Sigma}(\xi)\cap (\underset{m=1}{\cup^3}S_{2m});\\[12pt]
T^{-1}(z)G(z)T(z),   & 	z\in\partial\mathbb{D}_n\cap (\underset{m=1}{\cup^3}S_{2m});\\[12pt]
T^{-1}(z)G^{-1}(z)T(z),    & z\in\partial\mathbb{D}_n\cap (\underset{m=1}{\cup^3}S_{2m-1});\\[12pt]
\left(\begin{array}{ccc}
	1 & \frac{rT_{21}\mathcal{X}(|z|) e^{it\theta_{12}}}{1-|r|^2}&0\\
	0 & 1&0\\
	0 & 0 &1
\end{array}\right)\left(\begin{array}{ccc}
	1 & 0 & 0\\
	\frac{\bar{r}T_{12}\mathcal{X}(|z|) e^{-it\theta_{12}}}{1-|r|^2} & 1 & 0\\
	0 & 0 &1
\end{array}\right),   & z\in \mathbb{R};\\[12pt]
\left(\begin{array}{ccc}
	1 & 0&0\\
	0 & 1&\frac{-r(\omega z)T_{32} e^{it\theta_{23}}\mathcal{X}(|z|)}{1-|r(\omega z)|^2}\\
	0 & 0 &1
\end{array}\right)\left(\begin{array}{ccc}
	1 & 0 & 0\\
	0 & 1 & 0\\
	0 & 	\frac{\bar{r}(\omega z) e^{-it\theta_{23}}T_{23}\mathcal{X}(|z|)}{1-|r(\omega z)|^2} &1
\end{array}\right),   & z\in \omega^2\mathbb{R};\\[12pt]
\left(\begin{array}{ccc}
	1 & 0&0\\
	0 & 1&0\\
	\frac{-r(\omega^2 z)T_{13}\mathcal{X}(|z|) e^{-it\theta_{13}}}{1-|r(\omega^2 z)|^2} & 0 &1
\end{array}\right)\left(\begin{array}{ccc}
	1 & 0 & 	\frac{\bar{r}(\omega^2 z)T_{31}\mathcal{X}(|z|) e^{it\theta_{13}}}{1-|r(\omega^2 z)|^2}\\
	0 & 1 & 0\\
	0 & 0 &1
\end{array}\right),   & z\in \omega\mathbb{R};\\
\end{array}\right.;\label{jumpv21}
\end{equation}	\normalsize

$\blacktriangleright$ Asymptotic behaviors:
	\begin{align}
	M^{(2)}(z) =& I+\mathcal{O}(z^{-1}),\hspace{0.5cm}z \rightarrow \infty;\label{asyM2}
\end{align}

$\blacktriangleright$ $\bar{\partial}$-Derivative: For $z\in\mathbb{C}$
we have
\begin{align}
	\bar{\partial}M^{(2)}=M^{(2)}\bar{\partial}R^{(2)};
\end{align}

		$\blacktriangleright$ Singularities: The limiting values of $M^{(2)}(z)$ as $z$ approaches one	of the points $\varkappa_l=e^{\frac{i\pi(l-1)}{3}}$, $l = 1,...,6$ have pole singularities with leading terms of a specific
matrix structure
\begin{align}
	&M^{(2)}(z)=\frac{1}{z\mp1}\left(\begin{array}{ccc}
		\alpha^{(2)}_\pm &	\alpha^{(2)}_\pm & \beta^{(2)}_\pm \\
		-\alpha^{(2)}_\pm & -\alpha^{(2)}_\pm & -\beta^{(2)}_\pm\\
		0	&	0 & 0
	\end{array}\right)+\mathcal{O}(1),\ z\to\pm 1,\\
	&M^{(2)}(z)=\frac{1}{z\mp\omega^2}\left(\begin{array}{ccc}
		0 &	0 &  0\\
		\beta^{(2)}_\pm	 & \alpha^{(2)}_\pm &\alpha^{(2)}_\pm \\
		-\beta^{(2)}_\pm	&	-\alpha^{(2)}_\pm & -\alpha^{(2)}_\pm
	\end{array}\right)+\mathcal{O}(1),\ z\to\pm \omega^2,\\
	&M^{(2)}(z)=\frac{1}{z\mp\omega}\left(\begin{array}{ccc}
		-\alpha^{(2)}_\pm &	-\beta^{(2)}_\pm & -\alpha^{(2)}_\pm\\
		0	 & 0 &0 \\
		\alpha^{(2)}_\pm &	\beta^{(2)}_\pm & \alpha^{(2)}_\pm
	\end{array}\right)+\mathcal{O}(1),\ z\to\pm \omega,
\end{align}
with $\alpha^{(2)}_\pm=\alpha^{(2)}_\pm(y,t)=-\bar{\alpha}^{(2)}_\pm$, $\beta^{(2)}_\pm=\beta^{(2)}_\pm(y,t)=-\bar{\beta}^{(2)}_\pm$ and $M^{(2)}(z)^{-1}$ has same specific
matrix structure with $\alpha^{(2)}_\pm$, $\beta^{(2)}_\pm$ replaced by $\tilde{\alpha}^{(2)}_\pm$, $\tilde{\beta}^{(2)}_\pm$;

$\blacktriangleright$ Residue conditions: $M^{(2)}$ has simple poles at each point $\zeta_n$ and $\bar{\zeta}_n$ for $n\in\lozenge$ with:
\begin{align}
	&\res_{z=\zeta_n}M^{(2)}(z)=\lim_{z\to \zeta_n}M^{(2)}(z)\left[ T^{-1}(z)B_nT(z)\right] .
\end{align}
	
\end{RHP}

Unlike the classical case in \cite{fNLS}, our construction of $R^{(2)}$ will not completely  remove the jump of $M^{(1)}$. As shown above,  $M^{(2)}$ still has jump near the singularity $\varkappa_k$ on $l_k$, $k=1,...,6$. As reward for it, near  singularity $\varkappa_k$, $R^{(2)}\equiv I$ with $\bar{\partial}R^{(2)}\equiv0$.

\quad To solve RHP \ref{RHP4},  we decompose it into a model   RH  problem  for $M^{R}(z;y,t)\triangleq M^{R}(z)$  with $\bar\partial R^{(2)}\equiv0$   and a pure $\bar{\partial}$-Problem with nonzero $\bar{\partial}$-derivatives.
First  we establish  a   RH problem  for the  $M^{R}(z)$   as follows.

\begin{RHP}\label{RHP5}
Find a matrix-valued function  $  M^{R}(z)$ with following properties:

$\blacktriangleright$ Analyticity: $M^{R}(z)$ is  meromorphic  in $\mathbb{C}\setminus \Sigma^{(2)}$;

$\blacktriangleright$ Jump condition: $M^{R}$ has continuous boundary values $M^{R}_\pm$ on $\Sigma^{(2)}$ and
\begin{equation}
	M^{R}_+(z)=M^{R}_-(z)V^{(2)}(z),\hspace{0.5cm}z \in \Sigma^{(2)};\label{jump5}
\end{equation}

$\blacktriangleright$ Symmetry: $M^{R}(z)=\Gamma_1\overline{M^{R}(\bar{z})}\Gamma_1=\Gamma_2\overline{M^{R}(\omega^2\bar{z})}\Gamma_2=\Gamma_3\overline{M^{R}(\omega\bar{z})}\Gamma_3$

\quad and $M^{R}(z)=\overline{M^{R}(\bar{z}^{-1})}$;

$\blacktriangleright$ $\bar{\partial}$-Derivative:  $\bar{\partial}R^{(2)}=0$, for $ z\in \mathbb{C}$;

$\blacktriangleright$ Asymptotic behaviors:
	\begin{align}
	M^{R}(z) =& I+\mathcal{O}(z^{-1}),\hspace{0.5cm}z \rightarrow \infty;\label{asyMr}
\end{align}

$\blacktriangleright$ Singularities:  As $z\rightarrow \varkappa_l=e^{\frac{i\pi(l-1)}{3}}, l = 1,...,6$,   the  limit  of $M^{R}(z)$ have

\quad   pole singularities
\begin{align}
	&M^{R}(z)=\frac{1}{z\mp1}\left(\begin{array}{ccc}
		\alpha^{(2)}_\pm &	\alpha^{(2)}_\pm & \beta^{(2)}_\pm \\
		-\alpha^{(2)}_\pm & -\alpha^{(2)}_\pm & -\beta^{(2)}_\pm\\
		0	&	0 & 0
	\end{array}\right)+\mathcal{O}(1),\ z\to\pm 1,\label{MR1}\\
	&M^{R}(z)=\frac{1}{z\mp\omega^2}\left(\begin{array}{ccc}
		0 &	0 &  0\\
		\beta^{(2)}_\pm	 & \alpha^{(2)}_\pm &\alpha^{(2)}_\pm \\
		-\beta^{(2)}_\pm	&	-\alpha^{(2)}_\pm & -\alpha^{(2)}_\pm
	\end{array}\right)+\mathcal{O}(1),\ z\to\pm \omega^2,\\
	&M^{R}(z)=\frac{1}{z\mp\omega}\left(\begin{array}{ccc}
		-\alpha^{(2)}_\pm &	-\beta^{(2)}_\pm & -\alpha^{(2)}_\pm\\
		0	 & 0 &0 \\
		\alpha^{(2)}_\pm &	\beta^{(2)}_\pm & \alpha^{(2)}_\pm
	\end{array}\right)+\mathcal{O}(1),\ z\to\pm \omega,\label{MR3}
\end{align}
with $\alpha^{(2)}_\pm=\alpha^{(2)}_\pm(y,t)=-\bar{\alpha}^{(2)}_\pm$, $\beta^{(2)}_\pm=\beta^{(2)}_\pm(y,t)=-\bar{\beta}^{(2)}_\pm$ and $M^{(2)}(z)^{-1}$ has same specific
matrix structure with $\alpha^{(2)}_\pm$, $\beta^{(2)}_\pm$ replaced by $\tilde{\alpha}^{(2)}_\pm$, $\tilde{\beta}^{(2)}_\pm$;

$\blacktriangleright$ Residue conditions: $M^{R}$ has simple poles at each point $\zeta_n$ and $\bar{\zeta}_n$ for $n\in\lozenge$ with:
\begin{align}
	&\res_{z=\zeta_n}M^{R}(z)=\lim_{z\to \zeta_n}M^{R}(z)\left[ T^{-1}(z)B_nT(z)\right] \label{resMr}.
\end{align}

\end{RHP}

Compared with the case $\xi \in(1,+\infty)$,     the  jump matrix $V^{(2)}$  in   the case $\xi \in(-1/8,1)$
  has additional portion on $\Sigma_{jk}$ and $\Sigma_{jk}'$. So this case is  more difficult  to deal with. Denote $\mathbb{B}_j$ as the neighborhood of $\varkappa_j$, $j=1,...,6$ with
 \begin{align}
 	\mathbb{B}_j=\{ z \in \mathbb{C}\ ; \text{ |Re}\left( {z}/{\varkappa_j}\right) -1|<2\varepsilon,\ |\text{Im}\left(  {z}/{\varkappa_j}\right) |<2\varepsilon\}.
 \end{align}
Denote $ U(\xi)$ as the union set of neighborhood of $\xi_j$, $\omega\xi_j$ and $\omega^2\xi_j$,
\begin{align}
&U(\xi)=\underset{j=1,...,p(\xi)}{\cup}\left( U_{\xi_j}\cup U_{\omega\xi_j}\cup U_{\omega^2\xi_j}\right) ,\ U_{\xi_j}= \left\lbrace z;|z-\xi_j|\leq \varrho^{0} \right\rbrace,\nonumber\\
&U_{\omega\xi_j}= \left\lbrace z;|z-\omega\xi_j|\leq \varrho^{0} \right\rbrace,\ U_{\omega^2\xi_j}= \left\lbrace z;|z-\omega^2\xi_j|\leq  \varrho^{0} \right\rbrace, \  j=1,...,p(\xi). \nonumber
\end{align}
Here, $\varrho^{0}$ is a small enough positive constant make that $U(\xi)$ and $\mathbb{B}_j$ are disjoint, and $\varrho^{0}\leq\min\left\lbrace \varrho,\  \frac{1}{8}\underset{j\neq i\in \mathcal{N}}{\min}|\xi_i-\xi_j|,\  \frac{1}{8}\underset{j\in \mathcal{N}}{\min}|\xi_j\pm1| \right\rbrace$. Then this additional part of  jump matrix $V^{(2)}$ has following  estimation.
\begin{Proposition}\label{prov2}
	For $1\leq q\leq+\infty$, there exists a positive constant $K_q$   such  that the jump matrix $V^{(2)}$ defined in (\ref{jumpv21})
has  the  estimate
	\begin{align}
	\parallel V^{(2)}-I\parallel_{L^q(\Sigma_{kj}\setminus U(\xi_k))}= \mathcal{O}( e^{-K_qt}), \ t\to\infty,
	\end{align}
where  $k=1,...,p(\xi)$ and $j=1,...,4$.
	Also for  $1\leq q<+\infty$, there also exists a positive constant $K_q'$   such  that the jump matrix $V^{(2)}$  has the estimate
	\begin{align}
	\parallel V^{(2)}-I\parallel_{L^q(\Sigma_{kj}')}= \mathcal{O}( e^{-K_q't}),
	\end{align}
where $k=1,...,p(\xi)$, $j=1,...,4$.
\end{Proposition}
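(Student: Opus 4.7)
\textbf{Proof plan for Proposition \ref{prov2}.}

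My approach is to exploit the fact that on the lens contours $\Sigma_{kj}$ (away from the stationary phase points $\xi_k$) and the ``midway'' contours $\Sigma_{kj}'$, the construction of $R^{(2)}$ has been arranged so that the only nontrivial content of $V^{(2)}-I$ is a single off-diagonal entry of the form $\pm R_{kj}(z)\, e^{\pm it\theta_\star(z)}$, with $\theta_\star\in\{\theta_{12},\theta_{23},\theta_{13}\}$ depending on which of the three sets of lenses we are on. By Proposition \ref{proR1} the factors $R_{kj}$ are uniformly bounded (together with the $T_{ij}$-trace data that they contain), so the entire estimate reduces to bounding $|e^{\pm it\theta_\star}|$ pointwise on each contour piece and then integrating.

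The first main step is to extract uniform exponential decay of the oscillatory factor. By construction of the lenses, the sign of $\text{Im}\,\theta_\star$ in each sector $\Omega_{kj}$ matches the triangular factorization used in $R^{(2)}$, so that $\pm it\theta_\star$ has negative real part throughout $\Omega_{kj}$. Using Lemma \ref{theta2} together with Corollary \ref{theta2c}, parametrize a ray $\Sigma_{kj}\setminus U(\xi_k)$ as $z=\xi_k+\rho e^{i\alpha}$ with $\alpha=\pm\varphi+\text{const}$ and $\rho\ge\varrho^{0}$; then
\begin{equation*}
\bigl|\text{Im}\,\theta_{12}(z)\bigr|\;\gtrsim\;c_1(\xi)\,\rho^2\sin\varphi\cos\varphi\quad(\rho\le R(\xi)),\qquad \bigl|\text{Im}\,\theta_{12}(z)\bigr|\;\gtrsim\;c_2(\xi)\,\rho\sin\varphi\quad(\rho> R(\xi)).
\end{equation*}
For $\Sigma_{kj}'$ (only present for interior phase points) the midpoint is at distance comparable to $|\xi_k-\xi_{k\pm1}|/2$ from both neighbouring stationary points, so an analogous lower bound holds uniformly along the whole ray without the need for $U(\xi_k)$ excision. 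The $\omega$- and $\omega^2$-rotated lenses reduce to the $\mathbb{R}$-case via the identities $\theta_{23}(z)=\theta_{12}(\omega z)$ and $\theta_{13}(z)=-\theta_{12}(\omega^2 z)$ together with the corresponding symmetries of $R^{(2)}$ and $T$.

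The second step combines these two pieces. On $\Sigma_{kj}\setminus U(\xi_k)$ the quadratic regime gives the uniform bound $|e^{it\theta_\star}|\le e^{-c(\xi)t(\varrho^{0})^{2}}$, which immediately yields the $L^{\infty}$ estimate; on the unbounded portion $\rho>R(\xi)$ (present only for $j=1$ at the extreme stationary points) the linear regime gives $|e^{it\theta_\star}|\le e^{-c(\xi)t\rho\sin\varphi}$, so
\begin{equation*}
\int_{\varrho^{0}}^{\infty}e^{-c(\xi)tq\rho\sin\varphi}\,d\rho\;=\;\mathcal{O}(t^{-1}e^{-c(\xi)qt\varrho^{0}\sin\varphi}),
\end{equation*}
which, together with the finite-length contribution from $\varrho^{0}\le\rho\le R(\xi)$, produces the exponential $L^{q}$ bound for every $1\le q\le\infty$. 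The constant $K_q$ can be chosen as any fixed multiple of $c(\xi)\min\{(\varrho^{0})^2,\varrho^{0}\sin\varphi\}$ (with a small $q$-dependent loss when $q=\infty$ is not needed to converge at infinity). For $\Sigma_{kj}'$ the contour is compact, so no splitting is required; one simply multiplies the pointwise exponential bound by the length of the segment, and finite-$q$ integrability is automatic while the $L^\infty$ bound is the pointwise bound itself.

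The main obstacle is bookkeeping rather than analysis: the combinatorics of the twelve or twenty-four lens sectors and of the three rotated copies of $\mathbb{R}$, combined with the four different factorizations encoded in (\ref{R(2)1}), makes it necessary to check case by case that the sign of $\text{Im}\,\theta_\star$ on $\Omega_{kj}$ indeed matches the sign in the exponent of the relevant entry of $V^{(2)}-I$. Once this sign tabulation is verified (directly from the definitions of $\eta(\xi,j)$, of $p_{kj}$ and of $R_{kj}$), the exponential bound is uniform in $z$ on the prescribed contour and the passage from the pointwise to the $L^{q}$ statement is straightforward. The $L^\infty$-case at $q=\infty$ on $\Sigma_{kj}'$ is slightly weaker because the decay rate $K_\infty'$ degenerates at the endpoints where the midway contour meets a neighboring lens; this is why the proposition only asserts the $L^q$-estimate for finite $q$ on $\Sigma_{kj}'$.
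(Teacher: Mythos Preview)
Your approach is the same as the paper's: reduce $V^{(2)}-I$ on each lens contour to a single off-diagonal entry $R_{kj}\,e^{\pm it\theta_\star}$, bound $R_{kj}$ uniformly via Proposition~\ref{proR1}, and control the oscillatory factor through Lemma~\ref{theta2}/Corollary~\ref{theta2c}. For $\Sigma_{kj}\setminus U(\xi_k)$ this is correct and parallels the paper's argument almost verbatim (the paper linearizes the lower bound on $|\mathrm{Im}\,\theta|$ to $c'(\xi)s$ directly, whereas you split into quadratic and linear regimes, but the outcome is the same).

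On $\Sigma'_{kj}$, however, your assertion that ``an analogous lower bound holds uniformly along the whole ray'' does not go through. The ray $\Sigma'_{kj}$ emanates from the real-axis midpoint $\xi_{(k\pm1)/2}$; as one approaches that endpoint, $\mathrm{Im}\,z\to 0$ and hence $\mathrm{Im}\,\theta_\star\to 0$. Corollary~\ref{theta2c} gives $|\mathrm{Im}\,\theta_{12}|\gtrsim c_1\,\mathrm{Im}\,z\cdot|\mathrm{Re}\,z-\xi_k|$: the second factor is indeed bounded below on $\Sigma'_{kj}$ (your distance observation is correct), but the first factor is not. Parametrizing $z=\xi_{(k\pm1)/2}+\rho'e^{i\alpha}$ one obtains only $|\mathrm{Im}\,\theta_\star|\gtrsim c\rho'$, which yields $\|V^{(2)}-I\|_{L^q(\Sigma'_{kj})}\lesssim t^{-1/q}$ for $1\le q<\infty$ --- polynomial, not exponential. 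The paper's own final display $t^{-1/q}\exp(-c''(\xi)t)$ over-claims for the same reason, so you are in good company; fortunately the polynomial bound already suffices for the small-norm estimates in Section~\ref{sec7}, where the dominant contribution is the $t^{-1/2}$ coming from $\partial U(\xi)$. Your last paragraph correctly senses a degeneration at an endpoint of $\Sigma'_{kj}$, but mislocates it: it occurs at the real-axis midpoint (where $\mathrm{Im}\,z=0$), not at the apex where $\Sigma'_{kj}$ meets $\Sigma_{kj}$, and it affects all finite $q$ as well --- just less severely.
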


\begin{proof}
	We demonstrate the case $\xi \in[0,1)$, and the another case can be proved in similar way.
 For $z\in\Sigma_{11}\setminus U_{\xi_1}$, when $1\leq q<+\infty$, by using  definition of $V^{(2)}$ and (\ref{R}),  we have
	\begin{align}
	\parallel V^{(2)}-I\parallel_{L^q(\Sigma_{11}\setminus U_{\xi_1})}&=\parallel p_{11}(\xi_1,\xi)T^{(i)}_{12}(\xi)(z-\xi_1)^{-2i\nu(\xi_1)} e^{2it\theta}\parallel_{L^q(\Sigma_{11}\setminus U_{\xi_1})}\nonumber\\
	&\lesssim \parallel e^{2it\theta}\parallel_{L^q(\Sigma_{11}\setminus U_{\xi_1})}.\nonumber
	\end{align}
	For $z\in\Sigma_{11}\setminus U_{\xi_1}$, denote $z=\xi_1+s e^{i\varphi}$, $s\in(\varrho,+\infty)$, by lemma \ref{theta2}, we get
	\begin{align}
	\parallel V^{(2)}-I\parallel_{L^q(\Sigma_{11}\setminus U_{\xi_1})}^q &\lesssim \int_{\Sigma_{11}\setminus U_{\xi_1}}
\exp\left( -q c(\xi)t\text{Im}z(\text{Re}^2z-\xi_i^2)g(\text{Re}z;\xi)\right)dz \nonumber\\
	&\lesssim \int_{\varrho}^{+\infty}\exp\left( -q c'(\xi)t s\right)ds\lesssim t^{-1}\exp\left( -q c'(\xi)t\varrho\right). \nonumber
	\end{align}
	The second step is from  $g(\text{Re}z;\xi)(\text{Re}^2z-\xi_i^2)$ has nonzero  boundary on $\Sigma_{11}\setminus U_{\xi_1}$. And when $q =+\infty$ is obviously.
For $z\in\Sigma_{kj}'$, we only give the details of $\Sigma_{14}'$.
	\begin{align}
	\parallel V^{(2)}-I\parallel_{L^q(\Sigma_{14}')}&=\parallel R_{14}(z)e^{-2it\theta}\parallel_{L^q(\Sigma_{14}')}
\lesssim\parallel e^{-2it\theta}\parallel_{L^q(\Sigma_{1+}')}\nonumber\\
	&\lesssim t^{-1/q}\exp\left( -c''(\xi)t\right).\nonumber
	\end{align}
\end{proof}
This proposition means that the jump matrix $V^{(2)}(z)$    uniformly goes to  $I$  on     $\ddot{\Sigma}\setminus U(\xi)$.
therefore, outside of  $U(\xi)$,  there is only exponentially small error (in $t$) by completely ignoring the jump condition of  $M^{R}(z)$.
And this proposition enlightens  us to construct the solution $M^{R}(z)$ as follows:
\begin{equation}
M^{R}(z)=\left\{\begin{array}{ll}
E(z,\xi)M^{(r)}(z), & z\notin U(\xi)\cup \mathbb{B}_j,\\
E(z,\xi)M^{(r)}(z)M^{lo}(z),  &z\in U(\xi),\\
E(z,\xi)M^{(r)}(z)M^{B}_j(z),  &z\in \mathbb{B}_j.\\
\end{array}\right. \label{transm4}
\end{equation}
Note that, when  $\xi \in(1,+\infty)$, $M^{(r)}(z)$ has no jump except the circle around poles not in $\lozenge$, and it has  no  phase point. So $U(\xi)=\emptyset$ in these case, which means $M^{R}(z)=M^{(r)}(z)$. And it is more easy.  On the other hand, for the case $\xi\in(-1/8,1)$,  the definition above implies  that $M^{lo}$ is pole free and has no singularity. This construction  decomposes $M^{R}$ to two parts: $M^{(r)}$ solves the pure RHP obtained by ignoring the jump conditions of RHP \ref{RHP5}, which is shown in Section \ref{sec6}; $M^{lo}$ uses parabolic cylinder functions to build a matrix to match  jumps  of $M^{(2)}$ in a neighborhood of each critical point  which is shown in Section \ref{secpc}. And
$M^{B}_j(z)$ is a solution of a RHP which only has jump near $\varkappa_j$ in Section \ref{secB}. Finally, as the error function,
$E(z,\xi)$  will be different in different case of $\xi$ and is a solution of a small-norm Riemann-Hilbert problem shown in Section \ref{sec7}.

We now use $M^{R}(z)$ to construct  a new matrix function
\begin{equation}
M^{(3)}(z;y,t)\triangleq M^{(3)}(z)=M^{(2)}(z)M^{R}(z)^{-1}.\label{transm3}
\end{equation}
which   removes   analytical component  $M^{R}$    to get  a  pure $\bar{\partial}$-problem.

\noindent\textbf{$\bar{\partial}$-problem}. Find a matrix-valued function  $ M^{(3)}(z;y,t)\triangleq M^{(3)}(z)$ with following identities:

$\blacktriangleright$ Analyticity: $M^{(3)}(z)$ is continuous   and has sectionally continuous first partial derivatives in $\mathbb{C}$.

$\blacktriangleright$ Asymptotic behavior:
\begin{align}
&M^{(3)}(z) \sim I+\mathcal{O}(z^{-1}),\hspace{0.5cm}z \rightarrow \infty;\label{asymbehv7}
\end{align}

$\blacktriangleright$ $\bar{\partial}$-Derivative: We have
$$\bar{\partial}M^{(3)}=M^{(3)}W^{(3)},\ \ z\in \mathbb{C},$$
where
\begin{equation}
W^{(3)}=M^{R}(z)\bar{\partial}R^{(2)}(z)M^{R}(z)^{-1}.
\end{equation}

\begin{proof}
	By using  properties  of  the   solutions   $M^{(2)}$ and $M^{R}$  for  RHP \ref{RHP5}  and $\bar{\partial}$-problem,
 the analyticity is obtained   immediately.
Since $M^{(2)}$ and $M^{R}$ achieve same jump matrix, we have
	\begin{align*}
	M_-^{(3)}(z)^{-1}M_+^{(3)}(z)&=M_-^{(2)}(z)^{-1}M_-^{R}(z)M_+^{R}(z)^{-1}M_+^{(2)}(z)\\
	&=M_-^{(2)}(z)^{-1}V^{(2)}(z)^{-1}M_+^{(2)}(z)=I,
	\end{align*}
	which means $ M^{(3)}$ has no jumps and is continuously everywhere.  We can also show  that $ M^{(3)}$ has no pole.
	 For
 $\zeta_n$ with $ n\in\lozenge $,  let $\mathcal{W}_n=T^{-1}(\zeta_n)B_nT(\zeta_n)$  which appears in the left side of the
corresponding residue condition of RHP \ref{RHP4}  and  RHP \ref{RHP5}. We take $\zeta_n\in S_1$ as an example. Then
 we have the Laurent expansions in $z-\zeta_n$
 \begin{align}
 	M^{(2)}(z)=\frac{\res_{z=\zeta_n}M^{(2)}(z)}{z-\zeta_n}+a(\zeta_n)+\mathcal{O}(z-\zeta_n),
 \end{align}
with  $a(\lambda)$ is a constant  matrix
\begin{align}
	\res_{z=\zeta_n}M^{(2)}(z)&=\left(0,\res_{z=\zeta_n}\frac{-C_nT_{21}(\zeta_n)e^{it[\theta_{12}]_n}[M^{(2)}(\zeta_n)]_1}{z-\zeta_n},0 \right)=a(\zeta_n)\mathcal{W}_n,\nonumber
\end{align}
then
	\begin{align}
&M^{(2)}(z)=a(\zeta_n) \left[ \dfrac{\mathcal{W}_n}{z-\zeta_n}+I\right] +\mathcal{O}(z-\zeta_n).
\end{align}
Similarly,   for $M^R(z)^{-1}$,
	\begin{align}
	M^R(z)^{-1}&= \left[ \dfrac{\mathcal{W}_n}{z-\zeta_n}+I\right]^{-1}a(\zeta_n)^{-1} +\mathcal{O}(z-\zeta_n)\nonumber\\
	&= \left[ \dfrac{-\mathcal{W}_n}{z-\zeta_n}+I\right]a(\zeta_n)^{-1} +\mathcal{O}(z-\zeta_n).
\end{align}
Then
	\begin{align}
	M^{(3)}(z)&=\left\lbrace a(\zeta_n) \left[ \dfrac{\mathcal{W}}{z-\lambda}+I\right]\right\rbrace \left\lbrace\left[ \dfrac{-\mathcal{W}}{z-\lambda}+I\right]a(\zeta_n)^{-1}\right\rbrace + \mathcal{O}(z-\lambda)\nonumber\\
	&=\mathcal{O}(1),\nonumber
	\end{align}
	which  implies that  $M^{(3)}(z)$ has removable singularities at $\zeta_n$. Similarly, $M^{(3)}(z)$ has no singularities on $\varkappa_l=e^{\frac{i\pi(l-1)}{3}}$, $l = 1,...,6$.
 And the $\bar{\partial}$-derivative of  $ M^{(3)}(z)$ comes  from    $ M^{(3)}(z)$  due to   analyticity of $M^{R}(z)$.
\end{proof}
The unique existence  and asymptotic  of  $M^{(3)}(z)$  will be shown in   section \ref{sec7}.

\section{Contribution from  discrete spectrum   } \label{sec6}

\quad In this section, we build a reflectionless case of  RHP \ref{RHP1} to  show that  its solution can be approximated  with  $M^{(r)}(z)$. As $V^{(2)}(z)\equiv0$ except on $\partial\mathbb{D}_n$, RHP \ref{RHP5} reduces to a the sectionally meromorphic function $M^{(r)}(z)$ with jump discontinuities on the union of circles. Then, by relating $M^{(r)}(z)$ with original RHP \ref{RHP1}, we  show the existence and uniqueness of solution of the above RHP \ref{RHP5}.

\subsection{$M^{(r)}(z)$-  and  $ M^{(r)}_\lozenge(z)$-solitons   }

\begin{Proposition} \label{rer}
	If $M^{(r)}(z)$ is the solution of the RH problem \ref{RHP5} with scattering data $\mathcal{D}=\left\lbrace  r(z),\left\lbrace \zeta_n,C_n\right\rbrace_{n\in\{1,...,6N_0\}}\right\rbrace$,
  then   $M^{(r)}(z)$ exists and is  unique.
\end{Proposition}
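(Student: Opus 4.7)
The plan is to prove uniqueness by a standard Liouville/vanishing argument and existence by reducing the reflectionless problem to a finite-dimensional algebraic system whose coefficient matrix can be shown to be invertible via the symmetries.

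For uniqueness, I would first establish that $\det M^{(r)}(z)\equiv 1$. Since $V^{(2)}$ on each circle $\partial\mathbb{D}_n$ is triangular or of the form $I-B_n/(z-\zeta_n)$, we have $\det V^{(2)}\equiv 1$, so $\det M^{(r)}$ is meromorphic and jump-free. A check at the putative poles $\zeta_n$, at the symmetry orbit of $\zeta_n$, and at the singularities $\varkappa_l$ (where the prescribed leading matrix is nilpotent in the specified direction) shows $\det M^{(r)}$ is entire and tends to $1$ at $\infty$; hence $\det M^{(r)}\equiv 1$. Then, given two solutions $M_1$ and $M_2$, form $F(z)=M_1(z)M_2(z)^{-1}$. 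Using equality of the jumps on $\partial\mathbb{D}_n$ and the matching residue conditions (\ref{resMr}) at every $\zeta_n$, $n\in\lozenge$ (and its symmetry images), one verifies the singular parts cancel. The prescribed singular structure (\ref{MR1})--(\ref{MR3}) at each $\varkappa_l$ also forces cancellation: the leading matrix at $\pm 1$ (respectively $\pm\omega^{\pm1}$) lies in a specific rank-one subspace, which combined with the corresponding structure of $M_2^{-1}$ shows $F$ is regular at every $\varkappa_l$. Thus $F$ is entire, bounded, and $F(\infty)=I$, so by Liouville $F\equiv I$.

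For existence, the strategy is to make an ansatz of rational type that is consistent with the symmetries listed in RHP \ref{RHP5}, namely
\begin{equation*}
M^{(r)}(z)=I+\sum_{n\in\lozenge} \frac{A_n}{z-\zeta_n}+\sum_{n\in\lozenge}\frac{\Gamma_1\bar A_n\Gamma_1}{z-\bar\zeta_n}+\text{(symmetry images under $z\mapsto \omega z,\omega^2 z$)}+\text{(terms enforcing $\varkappa_l$-singularities)},
\end{equation*}
with jump contributions across $\partial\mathbb{D}_n$ for $n\in\mathcal{N}\setminus\lozenge$ represented through Cauchy integrals. Substituting into the residue/jump conditions (\ref{resMr}) and (\ref{jump5}) gives a linear algebraic system on the unknown coefficients of the $A_n$'s (a row at a time, since each residue condition only couples one column of $M^{(r)}$ into a single other column, according to (\ref{RES1})--(\ref{RES3})). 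Uniqueness, together with the Fredholm alternative applied to this finite-dimensional system, yields existence. Alternatively, one can invoke the correspondence between the reflectionless RHP and pure $N$-soliton solutions of the Novikov equation constructed via Darboux transformations \cite{Wu6} and Hirota bilinear method \cite{M36}; these known solutions give explicit realizations of $M^{(r)}(z)$.

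The main obstacle will be verifying that the combination of six discrete symmetries $z\mapsto\omega^k z$, $z\mapsto\bar z$, and $z\mapsto\bar z^{-1}$ acting on the $6N_0$ poles is compatible with the non-diagonal residue structures (\ref{RES1})--(\ref{RES3}), i.e.\ that $A_n$ determined from $\zeta_n\in S_1$ automatically respects all the residue conditions at its orbit images. This is essentially a bookkeeping exercise using the definitions $C_{n+kN_0}$ and the $\Gamma_j$-conjugations, but care is needed because the pole types switch between the upper and lower two-by-two blocks depending on whether $n\in\tilde{\mathcal{N}}$ or $n\in\tilde{\mathcal{N}}^A$. Once this is checked, both uniqueness and solvability of the algebraic system follow, and the Proposition is established.
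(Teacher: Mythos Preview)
Your approach is genuinely different from the paper's, and it has a gap at the singularities $\varkappa_l$.

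The paper does not attack RHP~\ref{RHP5} directly. Instead it \emph{reverses} the transformations (\ref{transm1}) and (\ref{transm2}): it defines
\[
N(z;\tilde{\mathcal D})=M^{(r)}(z)\,T^{-1}(z)\,G^{-1}(z)\,T(z)\,\tilde{\Pi}(z),
\]
with $\tilde{\Pi}$ a product of the Blaschke-type factors $\Pi,\Pi^A$, and verifies that $N$ satisfies the \emph{original} RHP~\ref{RHP1} in the reflectionless case, with modified norming constants $C_n\delta_{\zeta_n}$. Existence and uniqueness for that reflectionless RHP~\ref{RHP1} is then imported from the known soliton theory (the paper cites Appendix~A of \cite{SandRNLS} for the construction and invokes Liouville for uniqueness). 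The existence and uniqueness of $M^{(r)}$ is then read off from the explicit algebraic relation $M^{(r)}=N\,\tilde\Pi^{-1}T^{-1}GT$. The payoff is that the delicate $\varkappa_l$-singularity structure never has to be analyzed by hand: it is inherited automatically from the spectral problem, since RHP~\ref{RHP1} is precisely the RHP built from the Jost eigenfunctions.

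Your direct Liouville argument for uniqueness correctly kills the $(z-\varkappa_l)^{-2}$ term of $F=M_1M_2^{-1}$, because the leading singular matrices have the form $v\,w^T$ with $v=(1,-1,0)^T$ and $w^Tv=0$. But you do not show the $(z-\varkappa_l)^{-1}$ residue vanishes. Writing $M_i=\frac{A_i}{z-\varkappa_l}+B_i+\cdots$ and $M_2^{-1}=\frac{\tilde A_2}{z-\varkappa_l}+\tilde B_2+\cdots$, that residue is $A_1\tilde B_2+B_1\tilde A_2$; the relation $A_2\tilde B_2+B_2\tilde A_2=0$ coming from $M_2M_2^{-1}=I$ does not force $A_1\tilde B_2+B_1\tilde A_2=0$ when $(\alpha_1,\beta_1)\neq(\alpha_2,\beta_2)$. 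So as written, your $F$ could be rational with simple poles at the six $\varkappa_l$, and Liouville alone does not finish. The paper sidesteps this entirely by pushing the problem back to RHP~\ref{RHP1}. Your existence ansatz has the same lacuna: the phrase ``terms enforcing $\varkappa_l$-singularities'' hides exactly the structure you would need to produce, and invoking Darboux/Hirota solitons does not give $M^{(r)}$ directly, since $M^{(r)}$ differs from the eigenfunction matrix by the $T$- and $G$-dressings. If you want to salvage the direct route, you would need an independent argument that the $\varkappa_l$-residues of $F$ vanish (e.g.\ exploiting that the specified structures of $M$ and $M^{-1}$ together with $\det M\equiv1$ pin down more than just the leading term), which the paper does not supply and which would require real work.
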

\begin{proof}
	To transform $M^{(r)}(z)$ to the soliton-solution  of RHP \ref{RHP1}, the jumps and poles need to be restored. We reverse the triangularity effected in (\ref{transm1}) and (\ref{transm2}):
	\begin{equation}
		N(z;\tilde{\mathcal{D}})= M^{(r)}(z)T^{-1}(z)G^{-1}(z)T(z)\tilde{\Pi}(z),\label{N}
	\end{equation}
with $G(z)$ defined in (\ref{funcG}), $\tilde{\mathcal{D}}=\left\lbrace  r(z),\left\lbrace \zeta_n,C_n\delta_{\zeta_n}\right\rbrace_{n\in\{1,...,6N_0\}}\right\rbrace$ and
\begin{align}
	\tilde{\Pi}(z)=\text{diag}\left(\frac{\Pi(z)\Pi^A(z)}{\Pi(\omega^2z)\Pi^A(\omega^2z)},\frac{\Pi(\omega z)\Pi^A(\omega z)}{\Pi(z)\Pi^A(z)}, \frac{\Pi(\omega^2z)\Pi^A(\omega^2z)}{\Pi(\omega z)\Pi^A(\omega z)} \right) .
\end{align}
Here, for $n\in\mathcal{N}=\{1,...,2N_1+N_2\}$, $$\delta_{\zeta_n}=\frac{\delta_3(\zeta_n)\delta_5(\zeta_n)}{\delta_1^{2}(\zeta_n)}=\frac{\delta_1(\omega^2\zeta_n)\delta_1(\omega\zeta_n)}{\delta_1^{2}(\zeta_n)}.$$
 Similarly, for $n\in\mathcal{N}=\{2N_1+N_2,...,N_0\}$, $$\delta_{\zeta_n}=\frac{\delta_3(\zeta_n)\delta_1(\zeta_n)}{\delta_5^{2}(\zeta_n)}=\frac{\delta_1(\omega^2\zeta_n)\delta_1(\zeta_n)}{\delta_1^{2}(\omega\zeta_n)}.$$
 Then from the symmetry  $\delta_1(z)^{-1}=\overline{\delta_1(\bar{z})}$, $$\delta_{\zeta_{n+kN_0}}=\frac{\delta_1(\omega^2\zeta_{n+kN_0})\delta_1(\omega\zeta_{n+kN_0})}{\delta_1^{2}(\zeta_{n+kN_0})},$$  for $n\in\mathcal{N}=\{1,...,2N_1+N_2\}$, $k=1,...,5$. And
\begin{align*}
	\delta_{\zeta_{n+kN_0}}=\frac{\delta_1(\omega^2\zeta_{n+kN_0})\delta_1(\zeta_{n+kN_0})}{\delta_1^{2}(\omega\zeta_{n+kN_0})},
\end{align*}
for $n\in\mathcal{N}=\{2N_1+N_2,...,N_0\}$, $k=1,...,5$.
 First we verify $N(z;\tilde{\mathcal{D}})$ satisfying RHP \ref{RHP1}. This transformation to $N(z;\tilde{\mathcal{D}})$ preserves the normalization conditions at the origin and infinity obviously. And comparing with (\ref{transm1}), this transformation  restores the jump on   $\overline{\mathbb{D}}_n$ and $\mathbb{D}_n$ to residue for $n\notin\lozenge$.  As for $n\in\lozenge$, take $\zeta_n\in S_1$ and $n\in\lozenge_1$ as an example. Substitute (\ref{resMr}) into the transformation:
\begin{align}
	\res_{z=\zeta_n}N(z;\tilde{\mathcal{D}})=&\res_{z=\zeta_n}M^{(r)}(z)T^{-1}(z)G^{-1}(z)T(z)\tilde{\Pi}(z)\nonumber\\
		=&\lim_{z\to \zeta_n}N(z;\tilde{\mathcal{D}})\left(\begin{array}{cc}
			0 & 0\\
			C_n\delta_{\zeta_n}e^{-2it\theta_n} & 0
		\end{array}\right).
\end{align}
Its analyticity and symmetry follow from the Proposition of $M^{(r)}(z)$, $T(z)$ and $G(z)$ immediately.  Then $N(z;\tilde{\mathcal{D}})$ is solution of RHP \ref{RHP1} with absence of reflection, whose  exact solution  exists and can be obtained as described similarly in Appendix A \cite{SandRNLS}. Its uniqueness
  can be affirmed  with  the Liouville's theorem. Then the uniqueness and existences of $M^{(r)}(z)$  come from (\ref{N}).
\end{proof}

The contribution to  the  $M^{(r)}(z)$  comes  from  all  discrete spectrum,  some of which
  are  restored    from  the jump on   $\partial\overline{\mathbb{D}}_n$ and $\partial\mathbb{D}_n$.  We will show
that these poles make   very small   contribution to the    $M^{(r)}(z)$  as $t\to\infty$.

We put all  circles together and define
$$ \partial \mathbb{D}  =   \cup_{n\in   \mathcal{Z} \setminus \lozenge}  ( \partial\mathbb{D}_n \cup \partial\overline{\mathbb{D}}_n),$$
then  we  have the following  estimate.

\begin{lemma}\label{lemmav2} For $1\leq q\leq +\infty$,
	the jump matrix $ V^{(2)}(z)\big|_{\partial\mathbb{D} }$   given in    (\ref{jumpv2})-(\ref{jumpv21}) admits the estimate
	\begin{align}
	&\parallel V^{(2)}(z)-I\parallel_{L^q(\partial\mathbb{D} )}=\mathcal{\mathcal{O}}(e^{- \min\{\rho_0,\delta_0\}t} ), \ \ n\in  \mathcal{Z} \setminus \lozenge.   \label{7.1}
\end{align}
where   $\rho_0$ is given by  (\ref{rho0})
\end{lemma}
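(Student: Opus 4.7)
The strategy is to exploit the exponential factors baked into the triangular factor $G(z)$ from (\ref{funcG}), combined with uniform boundedness of $T(z)^{\pm 1}$ on the small circles $\partial\mathbb{D}_n$. Recall that by the choice of $\varrho$, each disk $\mathbb{D}_n$ (for $n\in\mathcal{N}\setminus\lozenge$) is disjoint from the contour $\Sigma$, from every other disk, and from the zero set of $T$, so both $T(z)$ and $T(z)^{-1}$ are bounded by an absolute constant on $\partial\mathbb{D}$; moreover, $|z-\zeta_n|=\varrho$ there. The $L^q$ bound will follow from the $L^\infty$ bound because $\partial\mathbb{D}$ is a finite union of circles of fixed radius.

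\textbf{The $\nabla$-case.} For indices $n$ with $n-kN_0\in\nabla$, the jump on $\partial\mathbb{D}_n$ equals $T^{-1}(z)(I\mp B_n/(z-\zeta_n))T(z)$, so
\begin{equation*}
V^{(2)}(z)-I=\mp\frac{T^{-1}(z)\,B_n\,T(z)}{z-\zeta_n}.
\end{equation*}
Since $B_n$ is a rank-one nilpotent matrix whose unique nonzero entry is proportional to $C_n e^{\pm it\theta_{jl}(\zeta_n)}$ with $\mathrm{Im}\,\theta_{jl}(\zeta_n)>\delta_0$, the oscillatory factor obeys $|e^{\pm it\theta_{jl}(\zeta_n)}|=e^{-t\,\mathrm{Im}\,\theta_{jl}(\zeta_n)}\le e^{-\delta_0 t}$. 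Combined with $|z-\zeta_n|=\varrho$ and the uniform bound on $T$, this yields $\|V^{(2)}-I\|_{L^\infty(\partial\mathbb{D}_n)}\lesssim e^{-\delta_0 t}$.

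\textbf{The $\Delta$-case.} For indices $n$ with $n-kN_0\in\Delta$, $G(z)$ is a unipotent triangular factor whose off-diagonal entry has the form $\pm(z-\zeta_n)/(C_n e^{\pm it\theta_{jl}(\zeta_n)})$. Here $\mathrm{Im}\,\theta_{jl}(\zeta_n)<0$ and by the definition (\ref{rho0}) of $\rho_0$ we have $|\mathrm{Im}\,\theta_{jl}(\zeta_n)|\ge\rho_0$, so the reciprocal of the exponential satisfies $|e^{\mp it\theta_{jl}(\zeta_n)}|^{-1}=e^{t\,\mathrm{Im}\,\theta_{jl}(\zeta_n)}\le e^{-\rho_0 t}$. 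On $\partial\mathbb{D}_n$ the factor $|z-\zeta_n|=\varrho$ is bounded, and conjugation by $T(z)^{\pm 1}$ only changes the bound by a universal constant, giving $\|V^{(2)}-I\|_{L^\infty(\partial\mathbb{D}_n)}\lesssim e^{-\rho_0 t}$.

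\textbf{Assembly.} Taking the worse of the two rates yields $\|V^{(2)}-I\|_{L^\infty(\partial\mathbb{D})}\lesssim e^{-\min\{\rho_0,\delta_0\}t}$ uniformly over the finitely many circles. For $1\le q<\infty$, since $\partial\mathbb{D}$ is a finite union of circles of radius $\varrho$ (hence has finite total arclength $L$), we have $\|V^{(2)}-I\|_{L^q(\partial\mathbb{D})}\le L^{1/q}\|V^{(2)}-I\|_{L^\infty(\partial\mathbb{D})}$, so the same exponential estimate holds, completing the proof. There is no real obstacle here; the result is essentially bookkeeping, and the only point requiring care is verifying in both the $\nabla$ and $\Delta$ sub-cases of (\ref{funcG}) that the residual exponential factor decays with rate at least $\min\{\rho_0,\delta_0\}$, which is immediate from the partition (\ref{devide}) and the definition of $\rho_0$.
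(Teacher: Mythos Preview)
Your proof is correct and follows essentially the same approach as the paper: reduce the $L^q$ estimate to the $L^\infty$ estimate via the finite arclength of $\partial\mathbb{D}$, then bound the single nonzero off-diagonal entry of $V^{(2)}-I$ on each circle using $|z-\zeta_n|=\varrho$, the boundedness of $T^{\pm1}$ on $\partial\mathbb{D}_n$, and the exponential decay of the phase factor. The paper's proof only spells out the $\nabla_1$ case as an illustrative example and leaves the remaining cases implicit; you have gone a step further by treating the $\Delta$-case separately and explaining why the two rates $\delta_0$ and $\rho_0$ both appear, which makes the origin of $\min\{\rho_0,\delta_0\}$ in the final bound transparent.
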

\begin{proof}  Since $\partial\mathbb{D}_n$ is bounded, we just need to prove that the estimate (\ref{7.1}) for $ L^\infty(\partial\mathbb{D}_n)$ is true.
	Here take $z\in\partial\mathbb{D}_n$, $n\in\nabla_1$ as an illustrative  example.
	\begin{align}
		&\parallel V^{(2)}(z)-I\parallel_{L^\infty(\partial\mathbb{D}_n)}=|C_n(z-\zeta_n)^{-1}T_{21}(z)e^{it[\theta_{12}]_n}|\nonumber\\
	&\lesssim \varrho^{-1}e^{\text{Re}(it[\theta_{12}]_n)}\lesssim e^{-t\text{Im}([\theta_{12}]_n)} \leq e^{-\delta_0t}.\nonumber
	\end{align}
\end{proof}
This Lemma  inspires us to   completely ignore the   jump conditions on $\partial\mathbb{D} $, because they    exponentially decay to $I$   as  $t \to \infty$.
The main  contribution to $M^{(r)} (z)$  comes from  discrete spectrum   $\left\lbrace\zeta_n \right\rbrace_{n\in\lozenge}$.
Let  $M^{(r)}_\lozenge(z) = M^{(r)} (z)\big|_{V^{(2)}(z)\equiv0}$, then  the  RHP \ref{RHP5}
   reduces to the following RH problem.

\begin{RHP}\label{RHP6}
Find a matrix-valued function  $ M^{(r)}_\lozenge(z)$ with following properties:

$\blacktriangleright$ Analyticity: $M^{(r)}_\lozenge(z)$ is analytical  in $\mathbb{C}\setminus \left\lbrace\zeta_n \right\rbrace_{n\in\lozenge} $;

$\blacktriangleright$ Symmetry: $M^{(r)}_\lozenge(z)=\Gamma_1\overline{M^{(r)}_\lozenge(\bar{z})}\Gamma_1
=\Gamma_2\overline{M^{(r)}_\lozenge(\omega^2\bar{z})}\Gamma_2=\Gamma_3\overline{M^{(r)}_\lozenge(\omega\bar{z})}\Gamma_3$

\quad and $M^{(r)}_\lozenge(z)=\overline{M^{(r)}_\lozenge(\bar{z}^{-1})}$;

$\blacktriangleright$ Asymptotic behaviors:
	\begin{align}
	M^{(r)}_\lozenge(z) =& I+\mathcal{O}(z^{-1}),\hspace{0.5cm}z \rightarrow \infty;\label{asyMrL}
\end{align}

$\blacktriangleright$ Singularities: As $z\rightarrow \varkappa_l=e^{\frac{i\pi(l-1)}{3}}, l = 1,...,6$,   the  limit  of $M^{(r)}_\lozenge(z)$ have

\quad   pole singularities
\begin{align}
	&M^{(r)}_\lozenge(z)=\frac{1}{z\mp1}\left(\begin{array}{ccc}
		\alpha^{(2)}_\pm &	\alpha^{(2)}_\pm & \beta^{(2)}_\pm \\
		-\alpha^{(2)}_\pm & -\alpha^{(2)}_\pm & -\beta^{(2)}_\pm\\
		0	&	0 & 0
	\end{array}\right)+\mathcal{O}(1),\ z\to\pm 1,\\
	&M^{(r)}_\lozenge(z)=\frac{1}{z\mp\omega^2}\left(\begin{array}{ccc}
		0 &	0 &  0\\
		\beta^{(2)}_\pm	 & \alpha^{(2)}_\pm &\alpha^{(2)}_\pm \\
		-\beta^{(2)}_\pm	&	-\alpha^{(2)}_\pm & -\alpha^{(2)}_\pm
	\end{array}\right)+\mathcal{O}(1),\ z\to\pm \omega^2,\\
	&M^{(r)}_\lozenge(z)=\frac{1}{z\mp\omega}\left(\begin{array}{ccc}
		-\alpha^{(2)}_\pm &	-\beta^{(2)}_\pm & -\alpha^{(2)}_\pm\\
		0	 & 0 &0 \\
		\alpha^{(2)}_\pm &	\beta^{(2)}_\pm & \alpha^{(2)}_\pm
	\end{array}\right)+\mathcal{O}(1),\ z\to\pm \omega,
\end{align}
with $\alpha^{(2)}_\pm=\alpha^{(2)}_\pm(y,t)=-\bar{\alpha}^{(2)}_\pm$, $\beta^{(2)}_\pm=\beta^{(2)}_\pm(y,t)=-\bar{\beta}^{(2)}_\pm$ and $M^{(r)}_\lozenge(z)^{-1}$ has same specific
matrix structure with $\alpha^{(2)}_\pm$, $\beta^{(2)}_\pm$ replaced by $\tilde{\alpha}^{(2)}_\pm$, $\tilde{\beta}^{(2)}_\pm$;

$\blacktriangleright$ Residue conditions: $M^{(r)}_\lozenge$ has simple poles at each point $\zeta_n$ and $\bar{\zeta}_n$ for $n\in\lozenge$ with:
\begin{align}
	&\res_{z=\zeta_n}M^{(r)}_\lozenge(z)=\lim_{z\to \zeta_n}M^{(r)}_\lozenge(z)\left[ T^{-1}(z)B_nT(z)\right] \label{resMrsol}.
\end{align}
\end{RHP}

By using the Proposition  \ref{rer}, we then have
\begin{Proposition}\label{unim}	 The RHP \ref{RHP6}  exists an  unique solution. The  $M^{(r)}_\lozenge(z)$ is equivalent  to a  solution of the original reflectionless RHP \ref{RHP1} with modified scattering data $\tilde{\mathcal{D}}_\lozenge=\left\lbrace  0,\left\lbrace \zeta_n,C_n^\lozenge\right\rbrace_{n\in\lozenge}\right\rbrace$ with
$C_n^\lozenge = C_nT^2(\zeta_n)$.
\end{Proposition}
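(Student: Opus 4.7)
The plan is to reduce Proposition \ref{unim} to the reflectionless case of the original RHP \ref{RHP1}, whose solvability is already established by the standard determinantal construction cited from Appendix A of \cite{SandRNLS}. The underlying idea is that the passage from $M^{(r)}$ to $M^{(r)}_\lozenge$ amounts to discarding the circle jumps on $\partial\mathbb{D}_n$ for $n\notin\lozenge$, which by Lemma \ref{lemmav2} contribute only $\mathcal{O}(e^{-\min\{\rho_0,\delta_0\}t})$, so the essential content is the discrete spectrum indexed by $\lozenge$, dressed by the scalar function $T(z)$.

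First I would introduce an undressing map of the form
\begin{equation*}
N_\lozenge(z) \triangleq M^{(r)}_\lozenge(z)\, T(z),
\end{equation*}
noting that on $\{\zeta_n\}_{n\in\lozenge}$ the function $T(z)$ is analytic and nonzero, since by Proposition \ref{proT}(a) the only poles and zeros of $T$ sit at points of $\Delta\cup\nabla$, which is disjoint from $\lozenge$. A direct residue calculation then converts the $T$-conjugated residue condition (\ref{resMrsol}) for $M^{(r)}_\lozenge$ into a canonical residue condition
\begin{equation*}
\res_{z=\zeta_n} N_\lozenge(z)=\lim_{z\to\zeta_n} N_\lozenge(z)\, B_n^\lozenge,\qquad n\in\lozenge,
\end{equation*}
where $B_n^\lozenge$ has exactly the same nilpotent single-entry structure as $B_n$ in (\ref{RES1})--(\ref{RES3}), but with the original norming constant $C_n$ replaced by $C_n^\lozenge=C_n T^2(\zeta_n)$ (the power $T^2$ being interpreted through the ratio of diagonal entries of $T$ selected by the nonzero slot of $B_n$, as in the computation in the proof of Proposition \ref{rer}).

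Next I would verify that $N_\lozenge$ satisfies the remaining conditions of the reflectionless RHP \ref{RHP1}: analyticity off $\{\zeta_n,\bar\zeta_n\}_{n\in\lozenge}$ follows from the analogous property of $M^{(r)}_\lozenge$ together with the analyticity of $T$ off $\mathbb{R}\cup\omega\mathbb{R}\cup\omega^2\mathbb{R}$; the symmetries transport correctly by Proposition \ref{proT}(b) combined with the symmetries built into RHP \ref{RHP6}; the normalization $N_\lozenge\to I$ as $z\to\infty$ uses $T(\infty)=I$ from Proposition \ref{proT}(d); and the rank-one singularity structure at each $\varkappa_l=e^{i\pi(l-1)/3}$ is preserved because $T$ is analytic and finite there by Proposition \ref{proT}(e),(f), so right-multiplication by $T(\varkappa_l)$ only rescales columns without altering the nilpotent leading matrices in (\ref{MR1})--(\ref{MR3}). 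Once these checks are in place, the classical construction of the multi-soliton solution supplies existence and uniqueness of $N_\lozenge$ with scattering data $\tilde{\mathcal{D}}_\lozenge$, and inverting $N_\lozenge=M^{(r)}_\lozenge T$ delivers existence and uniqueness for $M^{(r)}_\lozenge$. Uniqueness may alternatively be obtained by a standard Liouville argument, since the ratio of any two solutions is entire with identity asymptotics and hence equals $I$.

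The main obstacle is the bookkeeping needed to match the twelve residue sectors (\ref{RES1})--(\ref{RES3}) against the diagonal conjugation by $T$, and to confirm that the additional involutive symmetry $M(z)=\overline{M(\bar z^{-1})}$ is preserved by $N_\lozenge$; this is exactly where Proposition \ref{proT}(f), $T_1(0)=T_1(\infty)=1$, is essential, since it guarantees that the behaviors at $0$ and $\infty$ remain compatible after undressing. All other steps are algebraic consequences of the properties of $T(z)$ already collected in Proposition \ref{proT}, together with the soliton-solvability statement imported from Proposition \ref{rer}.
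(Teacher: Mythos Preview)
Your overall strategy---reduce RHP~\ref{RHP6} to the reflectionless instance of RHP~\ref{RHP1} and invoke the soliton solvability from \cite{SandRNLS}---matches the paper, which simply writes ``By using the Proposition~\ref{rer}, we then have'' and leaves the details implicit. However, the concrete undressing map you propose,
\[
N_\lozenge(z)=M^{(r)}_\lozenge(z)\,T(z),
\]
does not work. The diagonal matrix $T(z)=\mathrm{diag}(T_1,T_2,T_3)$ contains the factor $\delta_1(z)^{-1}$ through $H(z)=\Pi^A(z)\Pi(z)\delta_1(z)^{-1}$, and by Proposition~\ref{proT}(c) this factor carries a genuine multiplicative jump $(1-|r|^2)$ across $I(\xi)$ (and its rotations). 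Since $M^{(r)}_\lozenge$ has no jumps at all, $N_\lozenge$ inherits exactly the diagonal jump $T_-^{-1}T_+$ on those arcs, so $N_\lozenge$ is \emph{not} a solution of RHP~\ref{RHP1} with $r\equiv 0$. Your own sentence ``analyticity of $T$ off $\mathbb{R}\cup\omega\mathbb{R}\cup\omega^2\mathbb{R}$'' already flags this obstruction; you cannot then conclude $N_\lozenge$ is analytic off the poles.

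The repair is to notice that no transformation is needed: the residue condition (\ref{resMrsol}) already reads
\[
\res_{z=\zeta_n}M^{(r)}_\lozenge=\lim_{z\to\zeta_n}M^{(r)}_\lozenge\bigl[T^{-1}(\zeta_n)B_nT(\zeta_n)\bigr],
\]
and since $T$ is analytic and invertible at every $\zeta_n$ with $n\in\lozenge$ (the poles/zeros of $T$ lie at $\Delta$-points, disjoint from $\lozenge$), the matrix $T^{-1}(\zeta_n)B_nT(\zeta_n)$ has the same single-entry nilpotent shape as $B_n$ with the norming constant multiplied by the appropriate ratio of diagonal entries of $T$. Thus $M^{(r)}_\lozenge$ \emph{itself} satisfies RHP~\ref{RHP1} with data $\tilde{\mathcal D}_\lozenge$, and the rest of your checks (symmetries, normalization, singularity structure at $\varkappa_l$, Liouville uniqueness) go through verbatim. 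Equivalently, if you wish to parallel the proof of Proposition~\ref{rer} more literally, set $G\equiv I$ there and keep only the rational Blaschke factor $\tilde\Pi$, which has no jumps; but for $M^{(r)}_\lozenge$ even this is superfluous.
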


When $r(s)\equiv0$, denote $u^r(x,t;\tilde{\mathcal{D}})$ as the $\mathcal{N}(\lozenge)$-soliton with   scattering data $\tilde{\mathcal{D}}_\lozenge$.
By the reconstruction formula (\ref{recons u}) and (\ref{recons x}), we then have
\begin{corollary}\label{sol}
The soliton solution   of the Novikov equation  (\ref{Novikov}) is given by
\begin{align}
	u^r(x,t;\tilde{\mathcal{D}}_\lozenge)
	=&\frac{1}{2}[\tilde{m}^{(r)}_\lozenge]_1(y,t)\left(\frac{[M^{(r)}_\lozenge]_{33}(e^{\frac{i\pi}{6}};y,t)}{[M^{(r)}_\lozenge]_{11}(e^{\frac{i\pi}{6}};y,t)} \right)^{1/2}\nonumber\\
	&+ \frac{1}{2}[\tilde{m}^{(r)}_\lozenge]_3(y,t)\left(\frac{[M^{(r)}_\lozenge]_{33}(e^{\frac{i\pi}{6}};y,t)}{[M^{(r)}_\lozenge]_{11}(e^{\frac{i\pi}{6}};y,t)} \right)^{-1/2}-1,\label{recons ur}
\end{align}
in which
\begin{align}
&x(y,t;\tilde{\mathcal{D}}_\lozenge)=y+c^r_+(x,t;\tilde{\mathcal{D}}_\lozenge)=y+\frac{1}{2} \ln\frac{M_{33}(e^{\frac{i\pi}{6}};y,t)}{M_{11}(e^{\frac{i\pi}{6}};y,t)},\label{recons xr}\\
&[\tilde{m}^{(r)}_\lozenge]_j\triangleq\sum_{n=1}^3[M^{(r)}_\lozenge]_{nj}(e^{\frac{i\pi}{6}};y,t),\ j=1,2,3.\nonumber
\end{align}
\end{corollary}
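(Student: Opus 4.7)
The plan is to deduce the formulas (\ref{recons ur})--(\ref{recons xr}) by directly invoking the reconstruction formulas (\ref{recons u})--(\ref{recons x}) once $M^{(r)}_\lozenge$ is identified with a solution of the original RH problem. By Proposition \ref{unim}, $M^{(r)}_\lozenge(z)$ is the unique solution of RHP \ref{RHP1} with scattering data $\tilde{\mathcal{D}}_\lozenge=\{0,\{\zeta_n,C_n^\lozenge\}_{n\in\lozenge}\}$, that is, with vanishing reflection coefficient and norming constants renormalized by $C_n^\lozenge=C_nT^2(\zeta_n)$. Since the derivation of (\ref{recons u})--(\ref{recons x}) used only the analytic structure of the RHP (the normalization $M\to I$ at infinity and the pole expansions (\ref{asyM1})--(\ref{asymo}) at $\varkappa_l$), the same reconstruction applies verbatim with $M$ replaced by $M^{(r)}_\lozenge$.

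First I would verify that $e^{i\pi/6}$ lies in the domain of analyticity of $M^{(r)}_\lozenge$. This point is a spectral singularity of the kind $\kappa_j$, not one of the $\varkappa_n$'s at which $M^{(r)}_\lozenge$ has its fixed matrix-structured poles, and by the choice of $\varrho$ in Section \ref{sec3} it is separated from every disc $\mathbb{D}_n$ and from the discrete spectrum $\mathcal{Z}$. Hence the entries $[M^{(r)}_\lozenge]_{jk}(e^{i\pi/6};y,t)$ are finite, and the symmetry $M^{(r)}_\lozenge(z)=\overline{M^{(r)}_\lozenge(\bar z^{-1})}$ together with the reality/sign properties inherited from the direct scattering map ensure that $[M^{(r)}_\lozenge]_{11}(e^{i\pi/6};y,t)\neq 0$, so that the square roots in (\ref{recons ur}) and the logarithm in (\ref{recons xr}) make sense.

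Next, the function $u^r(x,t;\tilde{\mathcal{D}}_\lozenge)$ produced by this reconstruction solves the Novikov equation (\ref{Novikov}) by the same dressing argument that yields (\ref{recons u}) in the general case: one uses $M^{(r)}_\lozenge$ as the dressing factor in the Lax pair (\ref{laxphi}) and extracts $u$ from the behavior at the distinguished point $e^{i\pi/6}$. Because the reflection coefficient is zero and only the finitely many poles $\{\zeta_n\}_{n\in\lozenge}$ contribute to $M^{(r)}_\lozenge$, the output is by construction an $\mathcal{N}(\lozenge)$-soliton.

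The main (mild) obstacle is bookkeeping rather than analysis: one must be careful to distinguish $M^{(r)}$ (which carries residual exponentially small jumps on $\partial\mathbb{D}$ for poles outside $\lozenge$) from $M^{(r)}_\lozenge$ (obtained by dropping those jumps), and to confirm via Proposition \ref{unim} that the renormalization $C_n\mapsto C_n^\lozenge=C_nT^2(\zeta_n)$ exactly absorbs the effect of the conjugations by $T(z)$ and $G(z)$ in (\ref{transm1}) and (\ref{transm2}). Once this identification is in place, the corollary is an immediate specialization of the general reconstruction to the reflectionless datum $\tilde{\mathcal{D}}_\lozenge$.
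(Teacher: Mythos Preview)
Your proposal is correct and follows essentially the same approach as the paper: the paper simply states the corollary as an immediate consequence of Proposition \ref{unim} together with the reconstruction formulas (\ref{recons u})--(\ref{recons x}), without a separate proof. Your write-up is more detailed (checking that $e^{i\pi/6}$ avoids the poles and that the relevant entries are nonzero), but the underlying logic is identical.
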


\subsection{ Residual error    between $M^{(r)}$- and $M^{(r)}_\lozenge$-solitons}\label{sec61}

To obtain   the residual error  between $M^{(r)}$   and $M^{(r)}_\lozenge$, we make the  factorization
\begin{equation}
	M^{(r)}(z)=\tilde{E}(z)M^{(r)}_\lozenge(z),\label{transMr}
\end{equation}
where $\tilde{E}(z)$ is a error function, which  can be obtained by solving
the following  small-norm RH problem

\begin{RHP}\label{RHP7}
	Find a matrix-valued function $\tilde{E}(z)$  with following  properties
	
	$\blacktriangleright$ Analyticity: $\tilde{E}(z)$ is analytical  in $\mathbb{C}\setminus  \partial\mathbb{D}  $;

	$\blacktriangleright$ Asymptotic behaviors:
	\begin{align}
	&\tilde{E}(z) \sim I+\mathcal{O}(z^{-1}),\hspace{0.5cm}|z| \rightarrow \infty;
	\end{align}

	$\blacktriangleright$ Jump condition: $\tilde{E}(z)$ has continuous boundary values $\tilde{E}_\pm(z)$ on $\partial\mathbb{D} $ satisfying
	$$\tilde{E}_+(z)=\tilde{E}_-(z)V^{\tilde{E}}(z),$$
	where the jump matrix $V^{\tilde{E}}(z)$ is given by
	\begin{equation}
	V^{\tilde{E}}(z)=M^{(r)}_\lozenge(z)V^{(2)}(z)M^{(r)}_\lozenge(z)^{-1}. \label{tVE}
	\end{equation}
\end{RHP}

{Proposition \ref{unim}} shows  that $M^{(r)}_\lozenge(z)$ is bound on $\Sigma^{(2)}$. By using Lemma \ref{lemmav2}, we have the following uniformly decaying
evaluate
\begin{equation}
\parallel V^{\tilde{E}}(z)-I \parallel_{L^q(\partial\mathbb{D} )}\lesssim \parallel V^{(2)}-I \parallel_{L^q(\partial\mathbb{D} )}=\mathcal{O}(e^{- \min\{\rho_0,\delta_0\}t} ).  \label{tVE-I}
\end{equation}
Therefore,
the   existence and uniqueness  of  the RHP \ref{RHP7}  is  shown  by using  a  small-norm RH problem \cite{RN9,RN10}, moreover its  solution can be given by
\begin{equation}
\tilde{E}(z)=I+ \frac{1}{2\pi i}\int_{\partial\mathbb{D} }\dfrac{ \eta(s)  (V^{\tilde{E}}-I)}{s-z}ds,\label{tEz}
\end{equation}
where  $\eta\in L^2(\partial\mathbb{D})$ is a unique solution of Fredholm  equation
\begin{equation}
(1-C_{\tilde{E}})\eta=I.
\end{equation}
The integral operator  $C_{\tilde{E}}$: $L^2 \to L^2 $ is  given by
\begin{equation}
C_{\tilde{E}}(\eta) =\mathcal{P}^-\left( \eta (V^{\tilde{E}}-I)\right),\nonumber
\end{equation}
where $\mathcal{P}^-$ is a Plemelj projection operator
\begin{equation}
\mathcal{P}^-( \eta (V^{\tilde{E}}-I) )= \lim_{\varepsilon\to 0}\frac{1}{2\pi i}\int_{\partial\mathbb{D} }\dfrac{\eta (V^{\tilde{E}}-I) }{s-(z-i\varepsilon )}ds.
\end{equation}
Then by (\ref{tVE}),  we have
\begin{equation}
\parallel C_{\tilde{E}}\parallel_{L^2 } \leq \parallel \mathcal{P}^- \parallel_{L^2 } \parallel V^{\tilde{E}}-I\parallel_{L^\infty} \lesssim \mathcal{O}(e^{- \min\{\rho_0,\delta_0\}t} ),
\end{equation}
which means $\parallel C_{\tilde{E}}\parallel_{L^2 }<1$ for sufficiently large $t$, so   $\eta$ uniquely  exists and
\begin{equation}
\parallel \eta\parallel_{L^2 } \lesssim\mathcal{O}(e^{- \min\{\rho_0,\delta_0\}t} ).\label{normeta}
\end{equation}
In order to reconstruct the solution $u(x,t)$ of the Novikov equation (\ref{Novikov}),  we need the asymptotic
   of $\tilde{E}(z)$ as $z\to \infty$ and the long time asymptotic   of $\tilde{E}(i)$.
\begin{Proposition}\label{tasyE}   The residual error   $\tilde{E}(z)$  defined  by  (\ref{tEz}) admits estimate
	\begin{equation}
	|\tilde{E}(z)-I|\lesssim\mathcal{O}(e^{- \min\{\rho_0,\delta_0\}t}), \ t \to \infty. \nonumber
	\end{equation}
	Moreover $\tilde{E}(z)$ has expansion  at $z=e^{-\frac{i\pi}{6}}$,
	\begin{align}
	\tilde{E}(z)=\tilde{E}(e^{-\frac{i\pi}{6}})+\tilde{E}_1(z-e^{-\frac{i\pi}{6}})+\mathcal{O}( (z-e^{-\frac{i\pi}{6}})^2),\label{texpE}
	\end{align}
where
	\begin{align}
	&\tilde{E}(e^{-\frac{i\pi}{6}})=I+\frac{1}{2\pi i}\int_{\Sigma^{(2)}}\dfrac{ \eta(s)  (V^{\tilde{E}}-I)}{s-e^{-\frac{i\pi}{6}}}ds,\label{tEi}\\	
    &\tilde{E}_1=\frac{1}{2\pi i}\int_{\Sigma^{(2)}}\frac{ \eta(s)  (V^{\tilde{E}}-I)}{(s-e^{-\frac{i\pi}{6}})^2}ds, \nonumber
	\end{align}
which  satisfy the  following  estimates
	\begin{equation}
	|\tilde{E}(e^{-\frac{i\pi}{6}})-I|\lesssim\mathcal{O}(e^{- \min\{\rho_0,\delta_0\}t}),\hspace{0.5cm}\tilde{E}_1\lesssim\mathcal{O}(e^{- \min\{\rho_0,\delta_0\}t}).\label{tE1t}
	\end{equation}
\end{Proposition}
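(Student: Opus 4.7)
The plan is to exploit the integral representation (\ref{tEz}) together with the uniform exponential smallness already established in (\ref{tVE-I}) and (\ref{normeta}). First I would write
\begin{align*}
\tilde{E}(z)-I = \frac{1}{2\pi i}\int_{\partial\mathbb{D}}\frac{(V^{\tilde{E}}-I)}{s-z}ds + \frac{1}{2\pi i}\int_{\partial\mathbb{D}}\frac{(\eta(s)-I)(V^{\tilde{E}}-I)}{s-z}ds,
\end{align*}
splitting the density $\eta=I+(\eta-I)$ so as to access the two independent decay estimates. The key geometric input is that $\partial\mathbb{D}$ is a finite disjoint union of small circles of radius $\varrho$ whose construction in Section \ref{sec3} ensures they are separated by a positive distance from $e^{\pm i\pi/6}$ (the condition $e^{i\pi/6}\notin\mathbb{D}_n$ was imposed explicitly, and the complex conjugate statement for $e^{-i\pi/6}$ follows from the symmetry of the $\mathcal{Z}$). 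Consequently the Cauchy kernel $(s-z)^{-1}$ is bounded uniformly on $\partial\mathbb{D}$ for $z$ in a fixed neighborhood of $e^{-i\pi/6}$.

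With this in hand I would estimate the first integral by $\|V^{\tilde{E}}-I\|_{L^1(\partial\mathbb{D})}$ and the second by Cauchy--Schwarz, giving a bound of order $\|\eta-I\|_{L^2}\|V^{\tilde{E}}-I\|_{L^2}$. Both contributions are $\mathcal{O}(e^{-\min\{\rho_0,\delta_0\}t})$ thanks to (\ref{tVE-I}) and (\ref{normeta}), yielding the first assertion uniformly on compact sets disjoint from $\partial\mathbb{D}$.

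For the expansion at $z=e^{-i\pi/6}$, since this point lies at positive distance $d>0$ from $\partial\mathbb{D}$, I would expand the Cauchy kernel as the geometric series
\begin{align*}
\frac{1}{s-z} = \sum_{k=0}^{\infty}\frac{(z-e^{-i\pi/6})^{k}}{(s-e^{-i\pi/6})^{k+1}},
\end{align*}
which converges uniformly for $|z-e^{-i\pi/6}|<d/2$. Substituting into (\ref{tEz}) and exchanging summation with integration (legal by uniform convergence on the compact contour) produces the Taylor coefficients stated in the proposition, in particular the formulas for $\tilde{E}(e^{-i\pi/6})$ and $\tilde{E}_1$ in (\ref{tEi}). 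Applying the same split-and-estimate argument to each coefficient—now with the kernels $(s-e^{-i\pi/6})^{-1}$ and $(s-e^{-i\pi/6})^{-2}$ both bounded on $\partial\mathbb{D}$—delivers the decay estimates (\ref{tE1t}).

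Overall this is a routine small-norm closing argument; the work has already been done in showing $\|V^{\tilde{E}}-I\|_{L^q(\partial\mathbb{D})}$ and $\|\eta\|_{L^2(\partial\mathbb{D})}$ decay exponentially at rate $\min\{\rho_0,\delta_0\}$. The only point requiring brief verification, and the closest thing to an obstacle, is confirming that the expansion point $e^{-i\pi/6}$ remains uniformly bounded away from $\partial\mathbb{D}$ as $\xi$ varies, which is guaranteed by the choice of $\varrho$ in Section \ref{sec3} (in particular $\varrho\leq \tfrac14\min_{j\in\mathcal{N}}|\zeta_j-e^{i\pi/6}|$) and the symmetry of the discrete spectrum.
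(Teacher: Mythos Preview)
Your proof is correct and follows essentially the same route as the paper: use the integral representation (\ref{tEz}), invoke the exponential decay (\ref{tVE-I}) together with the density bound (\ref{normeta}), and for the Taylor expansion at $e^{-i\pi/6}$ expand the Cauchy kernel as a geometric series, noting that the expansion point lies at a fixed positive distance from $\partial\mathbb{D}$. Your splitting $\eta=I+(\eta-I)$ and separate $L^1$/Cauchy--Schwarz bounds are in fact more explicit than the paper's somewhat terse argument, and your remark about $e^{-i\pi/6}$ being uniformly separated from $\partial\mathbb{D}$ via the choice of $\varrho$ and the conjugation symmetry of $\mathcal{Z}$ fills in a point the paper leaves implicit.
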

\begin{proof}
	By combining (\ref{normeta}) and (\ref{tVE-I}), we obtain
	\begin{equation}
	|\tilde{E}(z)-I|\leq|(1-C_{\tilde{E}})(\eta)|+|C_{\tilde{E}}(\eta)|\lesssim\mathcal{O}(e^{- \min\{\rho_0,\delta_0\}t}).
	\end{equation}
The estimate  $\tilde{E}(e^{-\frac{i\pi}{6}})$ in (\ref{tE1t}) is obtained by taking $z=e^{-\frac{i\pi}{6}}$ in above estimate.
As $z\to e^{-\frac{i\pi}{6}}$, geometrically expanding $(s-z)^{-1}$
	for $z$ large in (\ref{tEz}) leads to (\ref{texpE}). Finally for $\tilde{E}_1$, noting that $|s-e^{-\frac{i\pi}{6}}|^{-2}$ is bounded on $\Sigma^{(2)}$, then
	\begin{align}
	|\tilde{E}_1|\lesssim  \parallel \eta \parallel_2\parallel V^{\tilde{E}}-I \parallel_2\lesssim\mathcal{O}(e^{- \min\{\rho_0,\delta_0\}t}). \nonumber
	\end{align}
\end{proof}

\begin{corollary}  Assume that $M^{(r)}(z)(z)$ and  $ M^{(r)}_\lozenge(z)$ are two solutions of  the  RHP \ref{RHP6} and
  RHP \ref{RHP5} respectively, then they have the relation
\begin{equation}\label{asymsol}
M^{(r)}(z)(z)=M^{(r)}_\lozenge(z)\left[I+\mathcal{O}(e^{- \min\{\rho_0,\delta_0\}t})\right], \ t \to \infty.
\end{equation}
 The  corresponding  soliton  solutions  have the relation
\begin{equation}
u^r(x,t;\tilde{\mathcal{D}})=u^r(x,t;\tilde{\mathcal{D}}_\lozenge)+\mathcal{O}(e^{- \min\{\rho_0,\delta_0\}t}).
\end{equation}
\end{corollary}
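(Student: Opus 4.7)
The plan is to deduce both displays as immediate consequences of the factorization $M^{(r)}(z) = \tilde{E}(z)M^{(r)}_\lozenge(z)$ in (\ref{transMr}) together with the pointwise estimate $\tilde{E}(z) = I + \mathcal{O}(e^{-\min\{\rho_0,\delta_0\}t})$ proved in Proposition \ref{tasyE}. For the first display I would simply write
$$M^{(r)}(z) = M^{(r)}_\lozenge(z) + (\tilde{E}(z) - I)M^{(r)}_\lozenge(z),$$
and then commute the $\mathcal{O}$-error past $M^{(r)}_\lozenge(z)$. The latter step is legitimate because $M^{(r)}_\lozenge$ is an $\mathcal{N}(\lozenge)$-soliton matrix with $\det M^{(r)}_\lozenge \equiv 1$, so on any bounded set disjoint from $\{\zeta_n\}_{n\in\lozenge}$ both $M^{(r)}_\lozenge$ and its inverse are uniformly bounded in $y,t$. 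Writing $(\tilde{E}(z)-I)M^{(r)}_\lozenge(z) = M^{(r)}_\lozenge(z)\bigl[M^{(r)}_\lozenge(z)^{-1}(\tilde{E}(z)-I)M^{(r)}_\lozenge(z)\bigr]$ and absorbing the bracketed conjugate into the error symbol yields (\ref{asymsol}).

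For the soliton-solution relation I would evaluate at the distinguished point $z = e^{i\pi/6}$, which by the choice of $\varrho$ and $\varrho^0$ is disjoint from $\partial\mathbb{D}$, from $U(\xi)$ and from every disk $\mathbb{D}_n$. The reconstruction formulas (\ref{recons ur})--(\ref{recons xr}) express $u^r$ and the change of scale $y\mapsto x$ as smooth rational (with one square root) functions of the nine entries $[M^{(r)}]_{jl}(e^{i\pi/6})$ and the row sums $\tilde{m}_l$. By the first display each such entry differs from the corresponding entry of $M^{(r)}_\lozenge$ by $\mathcal{O}(e^{-\min\{\rho_0,\delta_0\}t})$. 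Since $M^{(r)}_\lozenge$ is a bona fide soliton matrix, the denominator $[M^{(r)}_\lozenge]_{11}(e^{i\pi/6};y,t)$ stays bounded away from zero uniformly in $t$ on the relevant $y$-range, so a first-order Taylor expansion of the reconstruction map in the matrix entries converts the entrywise estimate into
$$u^r(y,t;\tilde{\mathcal{D}}) = u^r(y,t;\tilde{\mathcal{D}}_\lozenge) + \mathcal{O}(e^{-\min\{\rho_0,\delta_0\}t}).$$

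The only non-routine point is that the two formulas (\ref{recons u}) and (\ref{recons ur}) are written in the new scale $y$ while the corollary is stated in the original variable $x$; for the two scale maps $x(y,t;\tilde{\mathcal{D}})$ and $x(y,t;\tilde{\mathcal{D}}_\lozenge)$, defined by (\ref{recons xr}), to be exponentially close one needs the logarithm on the right-hand side of (\ref{recons xr}) to be a smooth function of $M_{11}/M_{33}$. This is where I expect the main (still modest) obstacle to lie: one applies the implicit function theorem to the strictly increasing map $y\mapsto y+\frac{1}{2}\ln\bigl([M^{(r)}_\lozenge]_{33}/[M^{(r)}_\lozenge]_{11}\bigr)(e^{i\pi/6};y,t)$, shows that its derivative stays uniformly bounded below by a positive constant on the $y$-region of interest (using the explicit determinantal formulas for $M^{(r)}_\lozenge$), and concludes that for a fixed $x$ the two values $y(x,t;\tilde{\mathcal{D}})$ and $y(x,t;\tilde{\mathcal{D}}_\lozenge)$ differ by $\mathcal{O}(e^{-\min\{\rho_0,\delta_0\}t})$. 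Smoothness of $u^r(\cdot,t;\tilde{\mathcal{D}}_\lozenge)$ in $y$ then absorbs this shift into the overall exponentially small remainder, yielding the stated corollary.
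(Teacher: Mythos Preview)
Your proposal is correct and follows exactly the approach the paper (implicitly) intends: the paper states this corollary with no proof, treating both displays as immediate consequences of the factorization $M^{(r)}(z)=\tilde{E}(z)M^{(r)}_\lozenge(z)$ in (\ref{transMr}) and the estimate $|\tilde{E}(z)-I|\lesssim e^{-\min\{\rho_0,\delta_0\}t}$ from Proposition~\ref{tasyE}. Your conjugation argument to pass the left-multiplicative error to the right, and your Taylor expansion of the reconstruction map (\ref{recons ur})--(\ref{recons xr}) at $z=e^{i\pi/6}$, are precisely the missing details; the care you take with the $y\leftrightarrow x$ change of scale is in fact more than the paper supplies, but the argument you outline (uniform lower bound on the derivative of the scale map via explicit soliton formulas) is the standard way to close that gap.
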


\section{Contribution  from  jump contours    }\label{secpc}

\subsection{Local solvable  RH problem  near phase points }

\quad When $\xi\in(-1/8,1)$, proposition  \ref{prov2} gives that out of $U(\xi)$, the jumps are exponentially close to the identity.
Hence we need to continue our investigation near  the stationary phase points in this section.

We define a  new local contour
$$\Sigma^{lo}= {\Sigma}^{lo}_{\omega^0} \cup  {\Sigma}^{lo}_{\omega^1}\cup {\Sigma}^{lo}_{\omega^2}, $$
where $  {\Sigma}^{lo}_{\omega^n}, \ n=0,1,2$ are the local contours on  jump  contours $\omega^n \mathbb{R}$, respectively
\begin{align}
	&  {\Sigma}^{lo}_{\omega^n}= (\underset{k=1,..,p(\xi)}{\underset{j=1,...,4,}{\cup}}\Sigma_{kj}^{\omega^n} )\cap U(\xi),\  n=0, 1,2. \nonumber
\end{align}
 Since there are  $3p(\xi)$ phase points $\xi_{n,k}=\omega^n\xi_k, n=0,1,2; k=1,\cdots,p(\xi)$,  so
the   local jump contour $\Sigma^{lo}$  is consist    of  $3p(\xi)$   crosses  with $p(\xi)=4$ for the case $0\leq\xi<1$
and $p(\xi)=8$  for the case $- {1}/{8}<\xi<0$.  See Figure \ref{sigma0}.
Further  denote  the  local  jump   cross  for each  phase point $\xi_{n,k}$
\begin{align}
	& \Sigma^{lo,n}_k= (\underset{j=1,...,4}{\cup} \Sigma_{kj}^{\omega^n})\cap U(\xi), \ n=0, 1,2; \ k=1, \cdots, p(\xi), \nonumber\\
&\Sigma^{lo,n}_k \cap \Sigma^{lo,n}_j=\varnothing, k\not= j.\nonumber
\end{align}
 We consider the following  local  RH problem
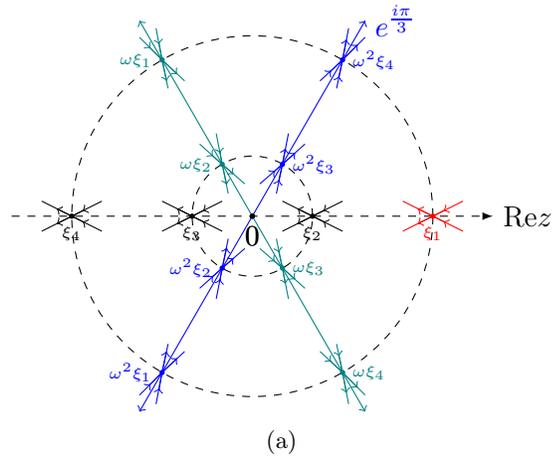
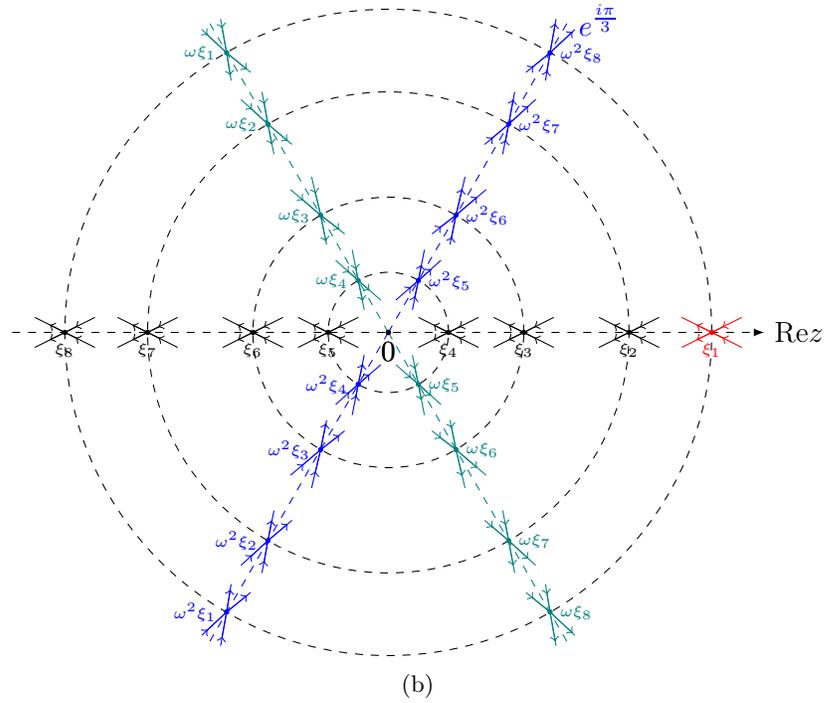
\begin{figure}[htp]
	\centering
	\subfigure[]{
		\begin{tikzpicture}
		\draw(-2.4,0)--(-2.8,0.2);
		\draw[->](-2.4,0)--(-2.6,0.1);
		\draw(-2.4,0)--(-2,0.2);
		\draw[-<](-2.4,0)--(-2.2,-0.1);
		\draw(-2.4,0)--(-2.8,-0.2);
		\draw[->](-2.4,0)--(-2.6,-0.1);
		\draw(-2.4,0)--(-2,-0.2);
		\draw[-<](-2.4,0)--(-2.2,0.1);
		\draw(-0.8,0)--(-0.4,0.2);
		\draw[-<](-0.8,0)--(-0.6,0.1);
		\draw(-0.8,0)--(-1.2,0.2);
		\draw[->](-0.8,0)--(-1,-0.1);
		\draw(-0.8,0)--(-0.4,-0.2);
		\draw[->](-0.8,0)--(-1,0.1);
		\draw(-0.8,0)--(-1.2,-0.2);
		\draw[-<](-0.8,0)--(-0.6,-0.1);
		\draw[dashed,-latex](-3.2,0)--(3.2,0)node[right]{ Re$z$};
		\draw(0.8,0)--(0.4,0.2);
		\draw[->](0.8,0)--(0.6,0.1);
		\draw(0.8,0)--(0.4,-0.2);
		\draw[-<](0.8,0)--(1,-0.1);
		\draw(0.8,0)--(1.2,0.2);
		\draw[-<](0.8,0)--(1,0.1);
		\draw(0.8,0)--(1.2,-0.2);
		\draw[->](0.8,0)--(0.6,-0.1);
		\draw[red] (2.4,0)--(2.8,0.2);
		\draw[red,-<](2.4,0)--(2.6,0.1);
		\draw[red] (2.4,0)--(2,0.2);
		\draw[red,->](2.4,0)--(2.2,-0.1);
		\draw[red] (2.4,0)--(2.8,-0.2);
		\draw[red,-<](2.4,0)--(2.6,-0.1);
		\draw[red] (2.4,0)--(2,-0.2);
		\draw[red,->](2.4,0)--(2.2,0.1);
		\coordinate (I) at (0,0);
		\fill (I) circle (1pt) node[below] {$0$};
		\draw[blue,->](0,0)--(1.5,2.6)node[right]{ $e^{\frac{i\pi}{3}}$};
		\draw[teal,->](0,0)--(-1.5,2.6);
		\draw[teal,->](0,0)--(1.5,-2.6);
		\draw[blue,->](0,0)--(-1.5,-2.6);
		\draw[dashed] (0.8,0) arc (0:360:0.8);
		\draw[dashed] (2.4,0) arc (0:360:2.4);
		\coordinate (I) at (0,0);
		\fill (I) circle (1pt) node[below] {$0$};
		\coordinate (A) at (-2.4,0);
		\fill (A) circle (1pt) node[below] {\tiny$\xi_4$};
		\coordinate (b) at (-0.8,0);
		\fill (b) circle (1pt) node[below] {\tiny$\xi_3$};
		\coordinate (e) at (2.4,0);
		\fill[red] (e) circle (1pt) node[below] {\tiny$\xi_1$};
		\coordinate (f) at (0.8,0);
		\fill (f) circle (1pt) node[below] {\tiny$\xi_2$};
		\coordinate (A1) at (0.4,0.69);
		\fill[blue] (A1) circle (1pt) node[right] {\tiny$\omega^2\xi_3$};
		\draw[blue](0.4,0.69)--(0.32,0.3);
		\draw[blue,-<](0.4,0.69)--(0.34,0.42);
		\draw[blue](0.4,0.69)--(0.1,0.43);
		\draw[blue,-<](0.4,0.69)--(0.25,0.56);
		\draw[blue](0.4,0.69)--(0.48,1.1);
		\draw[blue,->](0.4,0.69)--(0.44,0.9);
		\draw[blue](0.4,0.69)--(0.7,0.96);
		\draw[blue,->](0.4,0.69)--(0.57,0.84);
		\coordinate (A2) at (0.4,-0.69);
		\fill[teal] (A2) circle (1pt) node[right] {\tiny$\omega\xi_3$};
		\draw[teal](0.4,-0.69)--(0.32,-0.3);
		\draw[teal,-<](0.4,-0.69)--(0.34,-0.42);
		\draw[teal](0.4,-0.69)--(0.1,-0.43);
		\draw[teal,-<](0.4,-0.69)--(0.25,-0.56);
		\draw[teal](0.4,-0.69)--(0.48,-1.1);
		\draw[teal,->](0.4,-0.69)--(0.44,-0.9);
		\draw[teal](0.4,-0.69)--(0.7,-0.96);
		\draw[teal,->](0.4,-0.69)--(0.57,-0.84);
		\coordinate (A3) at (-0.4,0.69);
		\fill[teal] (A3) circle (1pt) node[left] {\tiny$\omega\xi_2$};
		\draw[teal](-0.4,0.69)--(-0.32,0.3);
		\draw[teal,->](-0.4,0.69)--(-0.34,0.42);
		\draw[teal](-0.4,0.69)--(-0.1,0.43);
		\draw[teal,->](-0.4,0.69)--(-0.25,0.56);
		\draw[teal](-0.4,0.69)--(-0.48,1.1);
		\draw[teal,-<](-0.4,0.69)--(-0.44,0.9);
		\draw[teal](-0.4,0.69)--(-0.7,0.96);
		\draw[teal,-<](-0.4,0.69)--(-0.57,0.84);
		\coordinate (A4) at (-0.4,-0.69);
		\fill[blue] (A4) circle (1pt) node[left] {\tiny$\omega^2\xi_2$};
		\draw[blue](-0.4,-0.69)--(-0.32,-0.3);
		\draw[blue,->](-0.4,-0.69)--(-0.34,-0.42);
		\draw[blue](-0.4,-0.69)--(-0.1,-0.43);
		\draw[blue,->](-0.4,-0.69)--(-0.25,-0.56);
		\draw[blue](-0.4,-0.69)--(-0.48,-1.1);
		\draw[blue,-<](-0.4,-0.69)--(-0.44,-0.9);
		\draw[blue](-0.4,-0.69)--(-0.7,-0.96);
		\draw[blue,-<](-0.4,-0.69)--(-0.57,-0.84);
		\coordinate (n1) at (1.2,2.08);
		\fill[blue] (n1) circle (1pt) node[right] {\tiny$\omega^2\xi_4$};
		\draw[blue](1.2,2.08)--(1.12,1.7);
		\draw[blue,-<](1.2,2.08)--(1.14,1.8);
		\draw[blue](1.2,2.08)--(0.9,1.8);
		\draw[blue,-<](1.2,2.08)--(1.0,1.89);
		\draw[blue](1.2,2.08)--(1.28,2.5);
		\draw[blue,->](1.2,2.08)--(1.25,2.35);
		\draw[blue](1.2,2.08)--(1.5,2.38);
		\draw[blue,->](1.2,2.08)--(1.4,2.28);
		\coordinate (m1) at (1.2,-2.08);
		\fill[teal] (m1) circle (1pt) node[right] {\tiny$\omega\xi_4$};
		\draw[teal](1.2,-2.08)--(1.12,-1.7);
		\draw[teal,-<](1.2,-2.08)--(1.14,-1.8);
		\draw[teal](1.2,-2.08)--(0.9,-1.8);
		\draw[teal,-<](1.2,-2.08)--(1.0,-1.89);
		\draw[teal](1.2,-2.08)--(1.28,-2.5);
		\draw[teal,->](1.2,-2.08)--(1.25,-2.35);
		\draw[teal](1.2,-2.08)--(1.5,-2.38);
		\draw[teal,->](1.2,-2.08)--(1.4,-2.28);
		\coordinate (j1) at (-1.2,2.08);
		\fill[teal] (j1) circle (1pt) node[left] {\tiny$\omega\xi_1$};
		\draw[teal](-1.2,2.08)--(-1.12,1.7);
		\draw[teal,->](-1.2,2.08)--(-1.14,1.8);
		\draw[teal](-1.2,2.08)--(-0.9,1.8);
		\draw[teal,->](-1.2,2.08)--(-1.0,1.89);
		\draw[teal](-1.2,2.08)--(-1.28,2.5);
		\draw[teal,-<](-1.2,2.08)--(-1.25,2.35);
		\draw[teal](-1.2,2.08)--(-1.5,2.38);
		\draw[teal,-<](-1.2,2.08)--(-1.4,2.28);
		\coordinate[blue] (k1) at (-1.2,-2.08);
		\fill[blue] (k1) circle (1pt) node[left] {\tiny$\omega^2\xi_1$};
		\draw[blue](-1.2,-2.08)--(-1.12,-1.7);
		\draw[blue,->](-1.2,-2.08)--(-1.14,-1.8);
		\draw[blue](-1.2,-2.08)--(-0.9,-1.8);
		\draw[blue,->](-1.2,-2.08)--(-1.0,-1.89);
		\draw[blue](-1.2,-2.08)--(-1.28,-2.5);
		\draw[blue,-<](-1.2,-2.08)--(-1.25,-2.35);
		\draw[blue](-1.2,-2.08)--(-1.5,-2.38);
		\draw[blue,-<](-1.2,-2.08)--(-1.4,-2.28);
		\end{tikzpicture}
		\label{si1}}\\
	\subfigure[]{
		\begin{tikzpicture}
		\draw[dashed,-latex](-5,0)--(5,0)node[right]{ Re$z$};
		\coordinate (I) at (0,0);
		\fill (I) circle (1pt) node[below] {$0$};
		\coordinate (c) at (-3,0);
		\draw(-0.8,0)--(-0.4,0.2);
		\draw[-<](-0.8,0)--(-0.6,0.1);
		\draw(-0.8,0)--(-1.2,0.2);
		\draw[->](-0.8,0)--(-1,-0.1);
		\draw(-0.8,0)--(-0.4,-0.2);
		\draw[->](-0.8,0)--(-1,0.1);
		\draw(-0.8,0)--(-1.2,-0.2);
		\draw[-<](-0.8,0)--(-0.6,-0.1);
		\draw(-1.8,0)--(-1.4,0.2);
		\draw[->](-1.8,0)--(-2,0.1);
		\draw(-1.8,0)--(-2.2,-0.2);
		\draw[-<](-1.8,0)--(-1.6,-0.1);
		\draw(-1.8,0)--(-2.2,0.2);
		\draw[-<](-1.8,0)--(-1.6,0.1);
		\draw(-1.8,0)--(-1.4,-0.2);
		\draw[->](-1.8,0)--(-2,-0.1);
		\draw(-4.3,0)--(-4.7,0.2);
		\draw[-<](-4.3,0)--(-4.1,0.1);
		\draw(-4.3,0)--(-3.9,0.2);
		\draw[->](-4.3,0)--(-4.5,-0.1);
		\draw(-4.3,0)--(-4.7,-0.2);
		\draw[->](-4.3,0)--(-4.5,0.1);
		\draw(-4.3,0)--(-3.9,-0.2);
		\draw[-<](-4.3,0)--(-4.1,-0.1);
		\draw(-3.2,0)--(-2.8,0.2);
		\draw[->](-3.2,0)--(-3.4,0.1);
		\draw(-3.2,0)--(-3.6,0.2);
		\draw[-<](-3.2,0)--(-3,-0.1);
		\draw(-3.2,0)--(-3.6,-0.2);
		\draw[-<](-3.2,0)--(-3,0.1);
		\draw(-3.2,0)--(-2.8,-0.2);
		\draw[->](-3.2,0)--(-3.4,-0.1);
		\draw(0.8,0)--(0.4,0.2);
		\draw[->](0.8,0)--(0.6,0.1);
		\draw(0.8,0)--(0.4,-0.2);
		\draw[-<](0.8,0)--(1,-0.1);
		\draw(0.8,0)--(1.2,0.2);
		\draw[-<](0.8,0)--(1,0.1);
		\draw(0.8,0)--(1.2,-0.2);
		\draw[->](0.8,0)--(0.6,-0.1);
		\draw(1.8,0)--(1.4,0.2);
		\draw[-<](1.8,0)--(2,0.1);
		\draw(1.8,0)--(2.2,-0.2);
		\draw[->](1.8,0)--(1.6,-0.1);
		\draw(1.8,0)--(2.2,0.2);
		\draw[->](1.8,0)--(1.6,0.1);
		\draw(1.8,0)--(1.4,-0.2);
		\draw[-<](1.8,0)--(2,-0.1);
		\draw[red](4.3,0)--(4.7,0.2);
		\draw[red,->](4.3,0)--(4.1,0.1);
		\draw[red](4.3,0)--(3.9,0.2);
		\draw[red,-<](4.3,0)--(4.5,-0.1);
		\draw[red](4.3,0)--(4.7,-0.2);
		\draw[red,-<](4.3,0)--(4.5,0.1);
		\draw[red](4.3,0)--(3.9,-0.2);
		\draw[red,->](4.3,0)--(4.1,-0.1);
		\draw(3.2,0)--(2.8,0.2);
		\draw[-<](3.2,0)--(3.4,0.1);
		\draw(3.2,0)--(3.6,0.2);
		\draw[->](3.2,0)--(3,-0.1);
		\draw(3.2,0)--(3.6,-0.2);
		\draw[->](3.2,0)--(3,0.1);
		\draw(3.2,0)--(2.8,-0.2);
		\draw[-<](3.2,0)--(3.4,-0.1);
		\draw[blue,dashed](0,0)--(2.4,4.157)node[right]{ $e^{\frac{i\pi}{3}}$};	
		\draw[teal,dashed](0,0)--(2.4,-4.157);
		\draw[teal,dashed](0,0)--(-2.4,4.157);
		\draw[blue,dashed](0,0)--(-2.4,-4.157);
		\draw[dashed] (0.8,0) arc (0:360:0.8);
		\draw[dashed] (3.2,0) arc (0:360:3.2);
		\draw[dashed] (1.8,0) arc (0:360:1.8);
		\draw[dashed] (4.3,0) arc (0:360:4.3);
		\coordinate (I) at (0,0);
		\fill (I) circle (1pt) node[below] {$0$};
		\coordinate (A) at (-4.3,0);
		\fill (A) circle (1pt) node[below] {\tiny$\xi_8$};
		\coordinate (b) at (-3.2,0);
		\fill (b) circle (1pt) node[below] {\tiny$\xi_7$};
		\coordinate (C) at (-0.8,0);
		\fill (C) circle (1pt) node[below] {\tiny$\xi_5$};
		\coordinate (d) at (-1.8,0);
		\fill (d) circle (1pt) node[below] {\tiny$\xi_6$};
		\coordinate (E) at (4.3,0);
		\fill[red] (E) circle (1pt) node[below] {\tiny$\xi_1$};
		\coordinate (R) at (3.2,0);
		\fill (R) circle (1pt) node[below] {\tiny$\xi_2$};
		\coordinate (T) at (0.8,0);
		\fill (T) circle (1pt) node[below] {\tiny$\xi_4$};
		\coordinate (Y) at (1.8,0);
		\fill (Y) circle (1pt) node[below] {\tiny$\xi_3$};
		\coordinate (A1) at (0.4,0.69);
		\fill[blue] (A1) circle (1pt) node[right] {\tiny$\omega^2\xi_5$};
		\draw[blue](0.4,0.69)--(0.32,0.3);
		\draw[blue,-<](0.4,0.69)--(0.34,0.42);
		\draw[blue](0.4,0.69)--(0.1,0.43);
		\draw[blue,-<](0.4,0.69)--(0.25,0.56);
		\draw[blue](0.4,0.69)--(0.48,1.1);
		\draw[blue,->](0.4,0.69)--(0.44,0.9);
		\draw[blue](0.4,0.69)--(0.7,0.96);
		\draw[blue,->](0.4,0.69)--(0.57,0.84);
		\coordinate (A2) at (0.4,-0.69);
		\fill[teal] (A2) circle (1pt) node[right] {\tiny$\omega\xi_5$};
		\draw[teal](0.4,-0.69)--(0.32,-0.3);
		\draw[teal,-<](0.4,-0.69)--(0.34,-0.42);
		\draw[teal](0.4,-0.69)--(0.1,-0.43);
		\draw[teal,-<](0.4,-0.69)--(0.25,-0.56);
		\draw[teal](0.4,-0.69)--(0.48,-1.1);
		\draw[teal,->](0.4,-0.69)--(0.44,-0.9);
		\draw[teal](0.4,-0.69)--(0.7,-0.96);
		\draw[teal,->](0.4,-0.69)--(0.57,-0.84);
		\coordinate (A3) at (-0.4,0.69);
		\fill[teal] (A3) circle (1pt) node[left] {\tiny$\omega\xi_4$};
		\draw[teal](-0.4,0.69)--(-0.32,0.3);
		\draw[teal,->](-0.4,0.69)--(-0.34,0.42);
		\draw[teal](-0.4,0.69)--(-0.1,0.43);
		\draw[teal,->](-0.4,0.69)--(-0.25,0.56);
		\draw[teal](-0.4,0.69)--(-0.48,1.1);
		\draw[teal,-<](-0.4,0.69)--(-0.44,0.9);
		\draw[teal](-0.4,0.69)--(-0.7,0.96);
		\draw[teal,-<](-0.4,0.69)--(-0.57,0.84);
		\coordinate (A4) at (-0.4,-0.69);
		\fill[blue] (A4) circle (1pt) node[left] {\tiny$\omega^2\xi_4$};
		\draw[blue](-0.4,-0.69)--(-0.32,-0.3);
		\draw[blue,->](-0.4,-0.69)--(-0.34,-0.42);
		\draw[blue](-0.4,-0.69)--(-0.1,-0.43);
		\draw[blue,->](-0.4,-0.69)--(-0.25,-0.56);
		\draw[blue](-0.4,-0.69)--(-0.48,-1.1);
		\draw[blue,-<](-0.4,-0.69)--(-0.44,-0.9);
		\draw[blue](-0.4,-0.69)--(-0.7,-0.96);
		\draw[blue,-<](-0.4,-0.69)--(-0.57,-0.84);
		\coordinate (n1) at (1.6,2.77);
		\fill[blue] (n1) circle (1pt) node[right] {\tiny$\omega^2\xi_7$};
		\draw[blue](1.6,2.77)--(1.52,2.37);
		\draw[blue,-<](1.6,2.77)--(1.55,2.5);
		\draw[blue](1.6,2.77)--(1.28,2.5);
		\draw[blue,-<](1.6,2.77)--(1.34,2.55);
		\draw[blue](1.6,2.77)--(1.68,3.2);
		\draw[blue,->](1.6,2.77)--(1.66,3.1);
		\draw[blue](1.6,2.77)--(1.96,3.1);
		\draw[blue,->](1.6,2.77)--(1.86,3.01);
		\coordinate (m1) at (1.6,-2.77);
		\fill[teal] (m1) circle (1pt) node[right] {\tiny$\omega\xi_7$};
		\coordinate (j1) at (-1.6,2.77);
		\fill[teal] (j1) circle (1pt) node[left] {\tiny$\omega\xi_2$};
		\draw[teal](1.6,-2.77)--(1.52,-2.37);
		\draw[teal,-<](1.6,-2.77)--(1.55,-2.5);
		\draw[teal](1.6,-2.77)--(1.28,-2.5);
		\draw[teal,-<](1.6,-2.77)--(1.34,-2.55);
		\draw[teal](1.6,-2.77)--(1.68,-3.2);
		\draw[teal,->](1.6,-2.77)--(1.66,-3.1);
		\draw[teal](1.6,-2.77)--(1.96,-3.1);
		\draw[teal,->](1.6,-2.77)--(1.86,-3.01);
		\draw[teal](-1.6,2.77)--(-1.52,2.37);
		\draw[teal,->](-1.6,2.77)--(-1.55,2.5);
		\draw[teal](-1.6,2.77)--(-1.28,2.5);
		\draw[teal,->](-1.6,2.77)--(-1.34,2.55);
		\draw[teal](-1.6,2.77)--(-1.68,3.2);
		\draw[teal,-<](-1.6,2.77)--(-1.66,3.1);
		\draw[teal](-1.6,2.77)--(-1.96,3.1);
		\draw[teal,-<](-1.6,2.77)--(-1.86,3.01);
		\coordinate[blue] (k1) at (-1.6,-2.77);
		\fill[blue] (k1) circle (1pt) node[left] {\tiny$\omega^2\xi_2$};
		\draw[blue](-1.6,-2.77)--(-1.52,-2.37);
		\draw[blue,->](-1.6,-2.77)--(-1.55,-2.5);
		\draw[blue](-1.6,-2.77)--(-1.28,-2.5);
		\draw[blue,->](-1.6,-2.77)--(-1.34,-2.55);
		\draw[blue](-1.6,-2.77)--(-1.68,-3.2);
		\draw[blue,-<](-1.6,-2.77)--(-1.66,-3.1);
		\draw[blue](-1.6,-2.77)--(-1.96,-3.1);
		\draw[blue,-<](-1.6,-2.77)--(-1.86,-3.01);
		\coordinate (n2) at (0.9,1.56);
		\fill[blue] (n2) circle (1pt) node[right] {\tiny$\omega^2\xi_6$};
		\draw[blue](0.9,1.56)--(0.82,1.17);
		\draw[blue,-<](0.9,1.56)--(0.83,1.21);
		\draw[blue](0.9,1.56)--(0.58,1.31);
		\draw[blue,-<](0.9,1.56)--(0.65,1.37);
		\draw[blue](0.9,1.56)--(1.02,2.07);
		\draw[blue,->](0.9,1.56)--(0.98,1.91);
		\draw[blue](0.9,1.56)--(1.3,1.9);
		\draw[blue,->](0.9,1.56)--(1.18,1.8);
		\draw[blue](-0.9,-1.56)--(-0.82,-1.17);
		\draw[blue,->](-0.9,-1.56)--(-0.83,-1.21);
		\draw[blue](-0.9,-1.56)--(-0.58,-1.31);
		\draw[blue,->](-0.9,-1.56)--(-0.65,-1.37);
		\draw[blue](-0.9,-1.56)--(-1.02,-2.07);
		\draw[blue,-<](-0.9,-1.56)--(-0.98,-1.91);
		\draw[blue](-0.9,-1.56)--(-1.3,-1.9);
		\draw[blue,-<](-0.9,-1.56)--(-1.18,-1.8);
		\coordinate (m2) at (0.9,-1.56);
		\fill[teal] (m2) circle (1pt) node[right] {\tiny$\omega\xi_6$};
		\draw[teal](0.9,-1.56)--(0.82,-1.17);
		\draw[teal,-<](0.9,-1.56)--(0.83,-1.21);
		\draw[teal](0.9,-1.56)--(0.58,-1.31);
		\draw[teal,-<](0.9,-1.56)--(0.65,-1.37);
		\draw[teal](0.9,-1.56)--(1.02,-2.07);
		\draw[teal,->](0.9,-1.56)--(0.98,-1.91);
		\draw[teal](0.9,-1.56)--(1.3,-1.9);
		\draw[teal,->](0.9,-1.56)--(1.18,-1.8);
		\coordinate (j2) at (-0.9,1.56);
		\fill[teal] (j2) circle (1pt) node[left] {\tiny$\omega\xi_3$};
		\coordinate[blue] (k2) at (-0.9,-1.56);
		\fill[blue] (k2) circle (1pt) node[left] {\tiny$\omega^2\xi_3$};
		\draw[teal](-0.9,1.56)--(-0.82,1.17);
		\draw[teal,->](-0.9,1.56)--(-0.83,1.21);
		\draw[teal](-0.9,1.56)--(-0.58,1.31);
		\draw[teal,->](-0.9,1.56)--(-0.65,1.37);
		\draw[teal](-0.9,1.56)--(-1.02,2.07);
		\draw[teal,-<](-0.9,1.56)--(-0.98,1.91);
		\draw[teal](-0.9,1.56)--(-1.3,1.9);
		\draw[teal,-<](-0.9,1.56)--(-1.18,1.8);
		\coordinate (n23) at (2.15,3.72);
		\fill[blue] (n23) circle (1pt) node[right] {\tiny$\omega^2\xi_8$};
		\draw[blue](2.15,3.72)--(2.23,4.17);
		\draw[blue,->](2.15,3.72)--(2.22,4.11);
		\draw[blue](2.15,3.72)--(2.48,4.03);
		\draw[blue,->](2.15,3.72)--(2.45,4);
		\draw[blue](2.15,3.72)--(2.09,3.35);
		\draw[blue,-<](2.15,3.72)--(2.1,3.42);
		\draw[blue](2.15,3.72)--(1.83,3.46);
		\draw[blue,-<](2.15,3.72)--(1.88,3.5);
		\coordinate (m23) at (2.15,-3.72);
		\fill[teal] (m23) circle (1pt) node[right] {\tiny$\omega\xi_8$};
		\coordinate (j23) at (-2.15,3.72);
		\fill[teal] (j23) circle (1pt) node[left] {\tiny$\omega\xi_1$};
		\draw[teal](2.15,-3.72)--(2.23,-4.17);
		\draw[teal,->](2.15,-3.72)--(2.22,-4.11);
		\draw[teal](2.15,-3.72)--(2.48,-4.03);
		\draw[teal,->](2.15,-3.72)--(2.45,-4);
		\draw[teal](2.15,-3.72)--(2.09,-3.35);
		\draw[teal,-<](2.15,-3.72)--(2.1,-3.42);
		\draw[teal](2.15,-3.72)--(1.83,-3.46);
		\draw[teal,-<](2.15,-3.72)--(1.88,-3.5);
		\draw[teal](-2.15,3.72)--(-2.23,4.17);
		\draw[teal,-<](-2.15,3.72)--(-2.22,4.11);
		\draw[teal](-2.15,3.72)--(-2.48,4.03);
		\draw[teal,-<](-2.15,3.72)--(-2.45,4);
		\draw[teal](-2.15,3.72)--(-2.09,3.35);
		\draw[teal,->](-2.15,3.72)--(-2.1,3.42);
		\draw[teal](-2.15,3.72)--(-1.83,3.46);
		\draw[teal,->](-2.15,3.72)--(-1.88,3.5);
		\coordinate[blue] (k3) at (-2.15,-3.72);
		\fill[blue] (k3) circle (1pt) node[left] {\tiny$\omega^2\xi_1$};
		\draw[blue](-2.15,-3.72)--(-2.23,-4.17);
		\draw[blue,-<](-2.15,-3.72)--(-2.22,-4.11);
		\draw[blue](-2.15,-3.72)--(-2.48,-4.03);
		\draw[blue,-<](-2.15,-3.72)--(-2.45,-4);
		\draw[blue](-2.15,-3.72)--(-2.09,-3.35);
		\draw[blue,->](-2.15,-3.72)--(-2.1,-3.42);
		\draw[blue](-2.15,-3.72)--(-1.83,-3.46);
		\draw[blue,->](-2.15,-3.72)--(-1.88,-3.5);
		\end{tikzpicture}
		\label{si2}}
	\caption{\footnotesize Figure (a) is the local jump contour $\Sigma^{lo}$ consisting of   12   crosses  for the case $0\leq\xi<1$;
  Figure (b)  is the jump contour $\Sigma^{lo}$ consisting of  24  crosses for the case $- {1}/{8}<\xi<0$.}
	\label{sigma0}
\end{figure}

\begin{RHP} \label{rh7}
	Find a matrix-valued function  $ M^{lo}(z)$ with following properties:
	
	$\blacktriangleright$ Analyticity: $M^{lo}(z)$ is analytical  in $\mathbb{C}\setminus \Sigma^{ lo } $;
	
	$\blacktriangleright$ Symmetry: $M^{lo}(z)=\Gamma_1\overline{M^{lo}(\bar{z})}\Gamma_1=\Gamma_2\overline{M^{lo}(\omega^2\bar{z})}\Gamma_2=\Gamma_3\overline{M^{lo}(\omega\bar{z})}\Gamma_3$

\quad and $M^{lo}(z)=\overline{M^{lo}(\bar{z}^{-1})}$;
	
	$\blacktriangleright$ Jump condition: $M^{lo}$ has continuous boundary values $M^{lo}_\pm$ on $\Sigma $ and
	\begin{equation}
	M^{lo}_+(z)=M^{lo}_-(z)V^{lo}(z),\hspace{0.5cm}z \in \Sigma^{lo},\label{jump6}
	\end{equation}
	where $ V^{lo}(z)= V^{(2)}(z)\big|_{\Sigma^{lo}}$.

	$\blacktriangleright$ Asymptotic behaviors:
	\begin{align}
	M^{lo}(z) =& I+\mathcal{O}(z^{-1}),\hspace{0.5cm}z \rightarrow \infty.
	\end{align}
\end{RHP}	

This  local  RH problem, which consists of $3p(\xi)$ local models on $\Sigma^{lo,n}_k$ about
phase point $\xi_{n,k}$,   has the jump condition and no poles.
First, we show as $t \to \infty$, the interaction between $\Sigma^{lo,n}_k$ and $\Sigma^{lo,n}_j$  reduces to  0 to higher order and the contribution to
 the solution of $M^{ lo }(z)$ is simply the sum of the separate contributions from $\Sigma^{lo,n}_k$.

     We consider the trivial decomposition of the jump matrix
           \begin{align*}
           &V^{lo}(z)=b_-^{-1}(z) b_+(z), \ \  \; b_-(z)=I,  \ \ \; b_+(z)=V^{lo}(z),\\
          	&w_-(z)=0,\ \   w_+(z)= V^{lo}(z)-I,\ \  w(z) =V^{lo}(z)-I.
          \end{align*}
Define
\begin{align}
&w_k^{n}  (z) =w(z)\big|_{\Sigma^{lo,n}_k},\ \  w_{k,\pm}^{n}  (z) =w_\pm (z)\big|_{\Sigma^{lo,n}_k},\nonumber\\
&  V^{lo,n}_k (z)=V^{lo}(z)|_{\Sigma^{lo,n}_k}, \ \ w_k^{n}  (z) =0, \ z\in \Sigma^{lo,n}_j, j\not= k.\nonumber
\end{align}
which  determine a  local model with the  jump contour  $\Sigma^{lo,n}_k$  at
phase point $\xi_{n,k}$
\begin{RHP} \label{rh71}
	Find a matrix-valued function  $ M^{lo,n}_k (z)$ with following properties:
	
	$\blacktriangleright$ Analyticity: $M^{lo,n}_k (z)$ is analytical  in $\mathbb{C}\setminus \Sigma^{lo,n}_k $;

	$\blacktriangleright$ Jump condition:  $M^{lo,n}_k (z)$ has continuous boundary values $M^{lo,n}_{k,\pm} $  and
	\begin{equation}
	M^{lo,n}_{k,+} (z) =M^{lo,n}_{k,-}(z) V^{lo,n}_k (z),\hspace{0.5cm}z \in \Sigma^{lo,n}_k.\label{jump6}
	\end{equation}

	$\blacktriangleright$ Asymptotic behaviors:
	\begin{align}
	M^{lo,n}_k(z) =& I+\mathcal{O}(z^{-1}),\hspace{0.5cm}z \rightarrow \infty.
	\end{align}
\end{RHP}	

By using above factorization,  $w(z)$  can be written as
\begin{align}
&w=  \underset{n=0,1,2}{\underset{k=1,..,p(\xi)}{\Sigma}}  w_k^{n}  (z), \  \ \ k=1,...,p(\xi), n=0,1,2. \nonumber
\end{align}
 and   the   Cauchy  operators are given by
\begin{align}
C_w(f)= \mathcal{P}^-(fw_+),\hspace{0.5cm} C_{w_k^n}(f)= \mathcal{P}^- (fw_{k,+}^n), \ \ C_w=  \underset{n=0,1,2}{\underset{k=1,..,p(\xi)}{\Sigma}} C_{w_k^n},\label{eew}
\end{align}
where  $\mathcal{P}^-$ is the Cauchy projection operator defined by
\begin{equation}
\mathcal{P}^- (f)(z)=\lim_{\varepsilon\to 0}\frac{1}{2\pi i}\int_{\Sigma^{lo}/\Sigma^{lo,n}_k}\dfrac{f(s)}{s-(z-i\varepsilon) }ds,  \ \  z\in \Sigma^{lo}/\Sigma^{lo,n}_k,
\end{equation}

A simple calculation gives
\begin{lemma} The matrix functions  $w   (z)$ and  $w_k^{n}  (z)$ defined above admits following  estimate
	\begin{align}
	\parallel w (z) \parallel_{L^q(\Sigma^{lo})}, \ \ \parallel w_k^{n}(z) \parallel_{L^p(\Sigma^{lo,n}_k )}=\mathcal{O}(t^{-1/2}),\ 1\leq q<+\infty.
	\end{align}
\end{lemma}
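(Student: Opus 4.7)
My plan is to reduce the estimate to a single Gaussian tail computation carried out on one model ray and then assemble. Because $\Sigma^{lo}$ is the disjoint union of $3p(\xi)$ crosses $\Sigma^{lo,n}_k$ contained in the mutually disjoint neighborhoods $U_{\xi_{n,k}}$, the jump matrix $V^{lo}-I$ is supported cross-by-cross with no interaction. Hence $\|w\|_{L^q(\Sigma^{lo})}^q=\sum_{n,k}\|w_k^n\|_{L^q(\Sigma^{lo,n}_k)}^q$ and it suffices to estimate a single $w_k^n$. The $\omega$-rotational symmetries $\theta_{23}(z)=\theta_{12}(\omega z)$ and $\theta_{13}(z)=-\theta_{12}(\omega^2 z)$ reduce the $n=1,2$ cases to $n=0$, so I work with $\xi_k\in\mathbb{R}$.

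First I fix a single ray $\Sigma_{kj}$ and parametrize it by $z=\xi_k+\rho e^{i\phi}$, $\rho\in(0,\varrho^0)$, with $\phi\in\{\pm\varphi,\pi\pm\varphi\}$. On this ray the entry of $V^{lo}-I$ is $R_{kj}(z,\xi)e^{\pm it\theta_{12}(z)}$. From the boundary values in Proposition~\ref{proR1}, $R_{kj}(z,\xi)$ is a product of $p_{kj}(\xi_k,\xi)$, the unit constant $T_{12}^{(k)}(\xi)^{\pm 1}$, and the factor $(\eta(z-\xi_k))^{\pm 2i\eta\nu(\xi_k)}$, whose modulus is bounded by $e^{\pi|\nu(\xi_k)|}$ using the standard identity in (\ref{key}). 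Thus $|R_{kj}(z,\xi)|\lesssim 1$ uniformly on $\Sigma_{kj}$.

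Second I extract the decay of the oscillatory factor. Corollary~\ref{theta2c} yields
\begin{equation*}
|\mathrm{Im}\,\theta_{12}(z)|\ \gtrsim\ c(\xi)\,|\mathrm{Im}\,z|\,|\mathrm{Re}\,z-\xi_k|\ =\ c(\xi)\,\rho^{2}|\sin\varphi\cos\varphi|
\end{equation*}
on $\Sigma_{kj}$, where the sign is chosen (by the construction of $\Omega_{kj}$ and Lemma~\ref{theta2}) so that $\pm t\,\mathrm{Im}\,\theta_{12}\geq c'(\xi)t\rho^2$. Hence $|e^{\pm it\theta_{12}(z)}|\leq e^{-c'(\xi)t\rho^{2}}$ for a constant $c'(\xi)>0$ independent of $t$. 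Consequently
\begin{equation*}
\|w_k^n\|_{L^q(\Sigma_{kj})}^{q}\ \lesssim\ \int_{0}^{\varrho^{0}} e^{-c'(\xi)qt\rho^{2}}\,d\rho\ \lesssim\ (qt)^{-1/2},
\end{equation*}
which gives the desired $t^{-1/2}$-type control after taking the $q$-th root (for $q=1$ directly, and for general $1\leq q<\infty$ either by the same Gaussian bound or by interpolating against the trivial $L^\infty$ bound $\|w_k^n\|_{L^\infty}\lesssim 1$ and the $L^1$ bound just obtained).

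The only genuinely delicate step is verifying that the lower bound $|\mathrm{Im}\,\theta_{12}|\gtrsim\rho^{2}$ persists \emph{uniformly} on the whole cross inside $U_{\xi_k}$ and is compatible with the chosen orientation of each $\Sigma_{kj}$ (so that the sign works out on every one of the four rays). This is where Corollary~\ref{theta2c} is essential: it supplies both the quadratic vanishing of $\mathrm{Im}\,\theta_{12}$ at the stationary point and the sign consistency encoded by the partition $\Omega_{kj}$, $j=1,2,3,4$. Once that is in place, the remainder of the argument is a bookkeeping exercise: sum the contributions over $j=1,\ldots,4$, over $k=1,\ldots,p(\xi)$, and over $n=0,1,2$ to obtain the claimed $\mathcal{O}(t^{-1/2})$ bound for both $\|w\|_{L^q(\Sigma^{lo})}$ and each $\|w_k^n\|_{L^p(\Sigma^{lo,n}_k)}$.
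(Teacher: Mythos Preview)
Your approach is correct and is precisely the ``simple calculation'' the paper alludes to (the paper gives no proof beyond that phrase): reduction to a single ray by symmetry and disjointness of the crosses, the uniform bound on $|R_{kj}|$ from Proposition~\ref{proR1}, the quadratic phase lower bound from Corollary~\ref{theta2c}, and the Gaussian integral.

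There is one slip at the end. From $\|w_k^n\|_{L^q(\Sigma_{kj})}^{q}\lesssim (qt)^{-1/2}$, taking the $q$-th root yields $\|w_k^n\|_{L^q}\lesssim t^{-1/(2q)}$, not $t^{-1/2}$; your suggested interpolation between the $L^1$ bound $\mathcal{O}(t^{-1/2})$ and the trivial $L^\infty$ bound $\mathcal{O}(1)$ produces the same rate $t^{-1/(2q)}$. Indeed the $L^\infty$ norm does \emph{not} decay (at $\rho=0$ the Gaussian factor equals $1$ and $|w|\sim|r(\xi_k)|$), so a uniform $\mathcal{O}(t^{-1/2})$ bound for all $1\le q<\infty$ cannot hold as literally stated. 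This is an imprecision in the lemma's formulation rather than a defect in your method: what is actually used downstream (Proposition~\ref{dividepc} and the solvability of the local model) only requires $\|w\|_{L^1}=\mathcal{O}(t^{-1/2})$ and $\|w\|_{L^2}=\mathcal{O}(t^{-1/4})$, both of which your Gaussian computation delivers. State the sharp exponent $t^{-1/(2q)}$ and the argument is complete.
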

 This lemma  implies that $I-C_w$ and $I-C_{w_k^n}$ are reversible,  so  both   RHP \ref{rh7} and   RHP \ref{rh71}  exists   a  unique solution and   can be written as
\begin{align}
M^{lo}=I+\frac{1}{2\pi i}\int_{\Sigma^{lo} }\frac{(I-C_w)^{-1}I\ w}{s-z}ds.
\end{align}
Then following the step of \cite{RN6},  we can directly derive that
\begin{Proposition}\label{dividepc}
	As $t\to+\infty$,  we have
	\begin{align}
	\parallel C_{w_k^n}C_{w_j^n}\parallel_{ L^2\to L^2 }\lesssim t^{-1},\hspace{0.5cm}\parallel C_{w_k^n}C_{w_j^n}\parallel_{L^\infty \to L^2 }\lesssim t^{-1}.
	\end{align}
	\begin{align}
	\int_{\Sigma^{lo} }\frac{(I-C_w)^{-1}I\ w}{s-z}ds
=\underset{n=0,1,2}{\underset{k=1,..,p(\xi)}{\Sigma}}\int_{\Sigma^{lo,n}_k}\frac{(I-C_{w_k^n})^{-1}I\ w_k^n}{s-z}ds+\mathcal{O}(t^{-3/2}).
	\end{align}
\end{Proposition}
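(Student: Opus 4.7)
The plan is to first establish the operator norm estimates by exploiting that distinct local contours are separated by a fixed distance independent of $t$, and then feed these bounds into a resolvent identity to decouple the full singular integral equation into a sum of local ones, following the strategy of Deift--Zhou \cite{RN6}.

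For the norm bounds, the key observation is that for $j\ne k$ the supports $\Sigma^{lo,n}_k$ and $\Sigma^{lo,n}_j$ are disjoint and separated by at least $\min_{j\neq k}|\xi_j-\xi_k|-2\varrho^{0}>0$, uniformly in $t$. Writing $C_{w_k^n}C_{w_j^n}(f)=\mathcal{P}^-\!\bigl(\mathcal{P}^-(f w_{j,+}^n)\,w_{k,+}^n\bigr)$ and noting that for $s'\in\Sigma^{lo,n}_k$ the Cauchy kernel $(s-s')^{-1}$ is uniformly bounded as $s$ ranges over $\Sigma^{lo,n}_j$, Cauchy--Schwarz gives
\[|\mathcal{P}^-(f w_{j,+}^n)(s')|\lesssim \|w_{j,+}^n\|_{L^2}\,\|f\|_{L^2}\lesssim t^{-1/2}\|f\|_{L^2},\quad s'\in\Sigma^{lo,n}_k.\]
Combining with the $L^2$-boundedness of $\mathcal{P}^-$ and $\|w_{k,+}^n\|_{L^2}=\mathcal{O}(t^{-1/2})$ from the preceding lemma yields the $L^2\to L^2$ bound. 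The $L^\infty\to L^2$ version is identical after replacing the inner Cauchy--Schwarz by H\"older with $\|w_{j,+}^n\|_{L^1}\lesssim t^{-1/2}$; the same argument also covers the case of different values of $n$, since the local crosses at distinct phase points in the plane are always separated.

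For the decomposition, write $\mu=(I-C_w)^{-1}I$ and $\mu^n_k=(I-C_{w_k^n})^{-1}I$. Since $w=\sum_{k,n} w_k^n$ with pairwise disjoint supports,
\[\int_{\Sigma^{lo}}\frac{\mu(s) w(s)}{s-z}\,ds=\sum_{k,n}\int_{\Sigma^{lo,n}_k}\frac{\mu_k^n(s)\,w_k^n(s)}{s-z}\,ds+\sum_{k,n}\int_{\Sigma^{lo,n}_k}\frac{(\mu-\mu_k^n)(s)\,w_k^n(s)}{s-z}\,ds,\]
so it suffices to show each error term is $\mathcal{O}(t^{-3/2})$. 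Using $C_w=\sum_{j,m} C_{w_j^m}$ together with the fixed-point identities $\mu=I+C_w\mu$ and $\mu_k^n=I+C_{w_k^n}\mu_k^n$, the resolvent identity
\[\mu-\mu_k^n=(I-C_w)^{-1}\!\!\sum_{(j,m)\ne (k,n)}C_{w_j^m}\mu_k^n\]
holds. The operator $(I-C_w)^{-1}$ is uniformly bounded on $L^2$ for $t$ large by the small-norm theory, and after splitting $\mu_k^n=I+(\mu_k^n-I)$ the Step 1 bounds give $\|C_{w_j^m}\mu_k^n\|_{L^2}=\mathcal{O}(t^{-1})$ via the $L^\infty\to L^2$ estimate on the first piece and the $L^2\to L^2$ estimate combined with $\|\mu_k^n-I\|_{L^2}=\mathcal{O}(t^{-1/2})$ on the second. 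Hence $\|\mu-\mu_k^n\|_{L^2(\Sigma^{lo,n}_k)}=\mathcal{O}(t^{-1})$, and a final Cauchy--Schwarz against $\|w_k^n\|_{L^2}=\mathcal{O}(t^{-1/2})$ produces the desired $\mathcal{O}(t^{-3/2})$; finiteness of the index set $\{(k,n)\}$ preserves the order after summation.

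The main obstacle will be the careful handling of the $L^\infty$ versus $L^2$ distinction when estimating $C_{w_j^m}\mu_k^n$, since $\mu_k^n$ is only a priori in $I+L^2$; this forces the split $\mu_k^n=I+(\mu_k^n-I)$ and separate use of the two operator norms from Step 1. A secondary technical point is justifying the uniform $L^2$-boundedness of $\mathcal{P}^-$ on $\Sigma^{lo}$ as a finite union of crossing line segments and the corresponding small-norm invertibility of $I-C_w$, both of which reduce to showing $\|w\|_{L^\infty(\Sigma^{lo})}\to 0$ as $t\to\infty$ so that $\|C_w\|_{L^2\to L^2}<1$.
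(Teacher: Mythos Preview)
Your overall strategy---separation via the fixed distance between crosses, then a resolvent identity to decouple---is exactly the Deift--Zhou approach that the paper invokes (the paper gives no argument of its own beyond citing \cite{RN6}). The norm bounds in your Step~1 are correct, including the extension to different values of~$n$.

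The gap is in your use of the Step~1 bounds inside the resolvent argument. You write that splitting $\mu_k^n=I+(\mu_k^n-I)$ and applying ``the $L^\infty\to L^2$ estimate on the first piece'' gives $\|C_{w_j^m}\mu_k^n\|_{L^2}=\mathcal{O}(t^{-1})$. But the $L^\infty\to L^2$ bound you established is for the \emph{composition} $C_{w_k^n}C_{w_j^n}$, not for a single $C_{w_j^m}$. For the identity piece one only has $\|C_{w_j^m}I\|_{L^2}=\|\mathcal{P}^-(w_{j,+}^m)\|_{L^2}\lesssim\|w_{j,+}^m\|_{L^2}=\mathcal{O}(t^{-1/2})$, so your resolvent identity as written yields $\|\mu-\mu_k^n\|_{L^2}=\mathcal{O}(t^{-1/2})$ and hence only an $\mathcal{O}(t^{-1})$ error after the final Cauchy--Schwarz against $\|w_k^n\|_{L^2}$.

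The remedy, following \cite{RN6} precisely, is to use the operator identity
\[
(I-C_w)\Bigl(I+\sum_{k,n}C_{w_k^n}(I-C_{w_k^n})^{-1}\Bigr)=I-\sum_{(j,m)\ne(k,n)}C_{w_j^m}C_{w_k^n}(I-C_{w_k^n})^{-1},
\]
which places every error term directly under a cross composition $C_{w_j^m}C_{w_k^n}$; then your Step~1 bounds apply cleanly. Note that your second piece is actually handled correctly once rewritten as $C_{w_j^m}(\mu_k^n-I)=C_{w_j^m}C_{w_k^n}\mu_k^n$, which \emph{is} $\mathcal{O}(t^{-1})$ in $L^2$ by splitting $\mu_k^n$ once more---it is only the bare $C_{w_j^m}I$ contribution that spoils your estimate and that the Deift--Zhou identity is designed to avoid.
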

This Proposition implies  that
 contributions of every cross  $\Sigma^{lo,n}_k$ can be separated out, and  the solution   $M^{lo}$ of the RHP \ref{rh7}
can be given by  the sum of  the  local  RHP \ref{rh71} of  all phase points $\xi_{n,k}$.
We further show that the solution of each local model can be given explicitly in terms of parabolic cylinder
functions on every contour $\Sigma^{lo,n}_k$  when  $ t \to \infty$.
As illustrative example,  we   only consider a local model  at phase point $\xi_{0,1} :=  \xi_1$ in the  RHP \ref{rh71}, its jump cross
denotes  (red cross in Figure \ref{sigma0})
$$\Sigma^{lo }_1= (\underset{j=1,...,4}{\cup} \Sigma_{1j} )\cap U(\xi)=\{z=\xi_1+le^{\pm\varphi i},\ l\in\mathbb{R}\},$$
which corresponds to the following   local RH problem
\begin{RHP}\label{RHPlo1}
	Find a matrix-valued function  $ M^{lo,1}(z)$ with following properties:
	
$\blacktriangleright$ Analyticity: $M^{lo,1}(z)$ is analytical  in $\mathbb{C}\setminus \Sigma^{lo }_1 $;
	
	$\blacktriangleright$ Jump condition: $M^{lo,1}$ has continuous boundary values $M^{lo,1}_\pm$ and
	\begin{equation}
	M^{lo,1}_+(z)=M^{lo,1}_-(z)V^{lo,1}(z),\hspace{0.5cm}z \in \Sigma^{lo }_1,
	\end{equation}
	where  when $|z-\xi_1|\leq \varrho^{(0)}$,  for the case  $\xi\in(-1/8,0)$,
	\begin{align}
	V^{lo,1}(z)=\left\{\begin{array}{ll}
	\left(\begin{array}{ccc}
	1 & r(\xi_1)T_{12}^{(1)}(\xi)^{-1}\left(\eta (z-\xi_1)\right)^{-2i\eta\nu(\xi_i)} e^{it\theta_{12}}& 0\\
	0 & 1 & 0\\
	0 & 0&1
	\end{array}\right),  & z\in \Sigma_{11}\cap U(\xi),\\[10pt]
	\left(\begin{array}{ccc}
	1 & 0 &0\\
	-\bar{r}(\xi_1)T_{12}^{(1)}(\xi)\left(\eta (z-\xi_1)\right)^{2i\eta\nu(\xi_1)}e^{-it\theta_{12}} & 1 & 0\\
	0 & 0&1
	\end{array}\right),   & z\in \Sigma_{12}\cap U(\xi),\\[10pt]
	\left(\begin{array}{ccc}
	1& \frac{r(\xi_1)}{1-|r(\xi_1)|^2} T_{12}^{(1)}(\xi)^{-1}\left(\eta (z-\xi_1)\right)^{-2i\eta\nu(\xi_i)} e^{it\theta_{12}} &0\\
	0&1 & 0\\
	0 & 0&1
	\end{array}\right),   & z\in \Sigma_{13}\cap U(\xi),\\[10pt]
	\left(\begin{array}{ccc}
	1 & 0&0\\
	\frac{-\bar{r}(\xi_1)}{1-|r(\xi_1)|^2}T_{12}^{(1)}(\xi)\left(\eta (z-\xi_1)\right) ^{2\eta i\nu(\xi_1)}e^{-it\theta_{12}} & 1 & 0\\
	0 & 0&1
	\end{array}\right),   & z\in \Sigma_{14}\cap U(\xi).
	\end{array}\right. \label{efe}
	\end{align}
	For the case  $\xi\in(0,1)$,  the corresponding jump matrix   $V^{lo,1}(z)$ from above   (\ref{efe})
by exchanging    $\Sigma_{11} \longleftrightarrow \Sigma_{14}, \ \ \Sigma_{12} \longleftrightarrow \Sigma_{13}$.

	$\blacktriangleright$ Asymptotic behaviors:
	\begin{align}
	M^{lo,1}(z) =& I+\mathcal{O}(z^{-1}),\hspace{0.5cm}z \rightarrow \infty.
	\end{align}
\end{RHP}

 In  near $\xi_1$,   we rewrite phase function as
\begin{align}
\theta_{12}(z)=\theta_{12}(\xi_1)+(z-\xi_1)^2\theta_{12}''(\xi_1)+\mathcal{O}((z-\xi_1)^3).
\end{align}
For  $\xi\in[0,1)$, $\theta_{12}''(\xi_1)>0$ and when $\xi\in(-1/8,0)$, $\theta_{12}''(\xi_1)<0$.
Recall the $\eta(\xi,j)$ defined in (\ref{eta}).

The RHP \ref{RHPlo1} is a local model  and does not possess the symmetry.
In order to motivate the model, let $\zeta = \zeta(z)$  denote the  rescaled   local variable
\begin{align}
\zeta(z)=t^{1/2}\sqrt{-4\eta(\xi,1)\theta''(\xi_1)}(z-\xi_1),
\end{align}
where  $\eta(\xi, 1)=1$  as $\xi\in(-1/8,0)$   and $\eta(\xi, 1)=-1$ as  $\xi\in[0,1)$ in (\ref{eta}).
This change of variable maps $U_{\xi_1}$ to an   neighborhood of $\zeta= 0$. Additionally, let
\begin{align}
r_{\xi_1}=r(\xi_1)T_1(\xi)^{2}e^{-2it\theta(\xi_1)}\eta^{-2i\eta\nu(\xi_1)}\exp\left\lbrace -i\eta\nu(\xi_1)\log \left( 4t\theta''(\xi_1)\tilde{\eta}(\xi_1)\right) \right\rbrace ,
\end{align}
with $|r_{\xi_1}|=|r(\xi_1)|$.
In the above expression, the complex powers are defined by choosing the branch of
the logarithm with  $-\pi< \arg \zeta < \pi$ in the cases $\xi\in(-1/8,0)$, and the branch of the logarithm with $0 < \arg \zeta < 2\pi$ in the case $\xi\in[0,1)$.

Through this change of variable,  the jump matrix  $V^{lo,1}(z)$ approximates to  the jump matrix of a parabolic cylinder model problem as follows.
 \begin{RHP}\label{RHPpc}
 	Find a matrix-valued function  $ M^{pc}(\zeta;\xi)$ with following properties:
 	
 	$\blacktriangleright$ Analyticity: $M^{pc}(\zeta;\xi)$ is analytical  in $\mathbb{C}\setminus \Sigma^{pc} $ with $\Sigma^{pc}=\left\lbrace\mathbb{R}e^{\varphi i} \right\rbrace \cup \left\lbrace\mathbb{R}e^{(\pi-\varphi) i} \right\rbrace$ shown in Figure \ref{sigpc};

 	$\blacktriangleright$ Jump condition: $M^{pc}(\zeta;\xi)$ has continuous boundary values $M^{pc}_\pm$ on $\Sigma^{pc}$ and
 	\begin{equation}
 	M^{pc}_+(\zeta;\xi)=M^{pc}_-(\zeta;\xi)V^{pc}(\zeta),\hspace{0.5cm}\zeta \in \Sigma^{\zeta},
 	\end{equation}
 	where   in the case $\xi\in(-1/8,0)$
 	\begin{align}
 		V^{pc}(\zeta;\xi)=\left\{\begin{array}{ll}
 	\left(\begin{array}{ccc}
 	1 & 	r_{\xi_1}\zeta^{-2i\nu(\xi_1)}e^{-\frac{i}{2}\zeta^2}& 0 \\
  0& 1 &0 \\
  0&  0  &1
 	\end{array}\right),  & \zeta\in\mathbb{R}^+e^{\varphi i},\\[10pt]
 	\left(\begin{array}{ccc}
 	1 &0 &0 \\
 	-\bar{r}_{\xi_1}\zeta^{2i\nu(\xi_1)}e^{\frac{i}{2}\zeta^2}&1 &0 \\
 	0&  0  &1
 	\end{array}\right),   & \zeta\in \mathbb{R}^+e^{-\varphi i},\\[10pt]
 	\left(\begin{array}{ccc}
 	1 & \frac{r_{\xi_1}}{1-|r_{\xi_1}|^2}\zeta^{-2i\nu(\xi_1)}e^{-\frac{i}{2}\zeta^2} &0\\
 	0 & 1 &0\\
 	0 & 0&1
 	\end{array}\right),   & \zeta\in \mathbb{R}^+e^{(-\pi+\varphi) i},\\[10pt]
  \left(\begin{array}{ccc}
 	1 & 0&0\\
 	\frac{-\bar{r}_{\xi_1}}{1-|r_{\xi_1}|^2}\zeta^{2i\nu(\xi_1)}e^{\frac{i}{2}\zeta^2} & 1 & 0\\
 	0 & 0&1
 \end{array}\right),   & \zeta\in \mathbb{R}^+e^{(\pi-\varphi) i}.
 	\end{array}\right.
 	\end{align}
 	and in the case $\xi\in[0,1)$
 		\begin{align}
	V^{pc}(\zeta;\xi)=\left\{\begin{array}{ll}
 \left(\begin{array}{ccc}
 	1 & 0&0\\
 	\frac{-\bar{r}_{\xi_1}}{1-|r_{\xi_1}|^2}\zeta^{-2i\nu(\xi_1)}e^{-\frac{i}{2}\zeta^2} & 1 & 0\\
 	0 & 0&1
 \end{array}\right),  & \zeta\in\mathbb{R}^+e^{\varphi i},\\[10pt]
 \left(\begin{array}{ccc}
 	1 & \frac{r_{\xi_1}}{1-|r_{\xi_1}|^2}\zeta^{2i\nu(\xi_1)}e^{\frac{i}{2}\zeta^2} & 0\\
 	0 & 1 & 0\\
 	0 & 0& 1
 \end{array}\right),   & \zeta\in \mathbb{R}^+e^{(2\pi-\varphi) i},\\[10pt]
 	\left(\begin{array}{ccc}
 	1 &0 &0 \\
 	-\bar{r}_{\xi_1}\zeta^{-2i\nu(\xi_1)}e^{\frac{i}{2}\zeta^2} &1 &0 \\
 	0&  0  &1
 \end{array}\right),   & \zeta\in \mathbb{R}^+e^{(\pi+\varphi) i},\\[10pt]
\left(\begin{array}{ccc}
	1 & 	r_{\xi_1}\zeta^{2i\nu(\xi_1)}e^{-\frac{i}{2}\zeta^2}& 0 \\
	0& 1 &0 \\
	0&  0  &1
\end{array}\right),   & \zeta\in \mathbb{R}^+e^{ (\pi-\varphi)i}.
 \end{array}\right.
 	\end{align}

 	$\blacktriangleright$ Asymptotic behaviors:
 	\begin{align}
 	M^{pc}(\zeta;\xi) =& I+M^{pc}_1\zeta^{-1}+\mathcal{O}(\zeta^{-2}),\hspace{0.5cm}\zeta \rightarrow \infty.\label{asyMpc1}
 	\end{align}
 \end{RHP}	
\begin{figure}[t]
	\centering
	\subfigure[]{
		\begin{tikzpicture}[node distance=2cm]
		\draw[](0,2.7)node[above]{$\xi\in(-1/8,0)$};
		\draw(0,0)--(2,1.2)node[above]{$\mathbb{R}^+e^{\varphi i}$};
		\draw(0,0)--(-2,1.2)node[above]{$\mathbb{R}^+e^{(\pi-\varphi)i}$};
		\draw(0,0)--(-2,-1.2)node[below]{$\mathbb{R}^+e^{(-\pi+\varphi)i}$};
		\draw(0,0)--(2,-1.2)node[below]{$\mathbb{R}^+e^{-\varphi i}$};
		\draw[-<](0,0)--(1,-0.6);
		\draw[-<](0,0)--(1,0.6);
		\draw[dashed](-2,0)--(2,0)node[right]{\scriptsize Re$z$};
		\draw[-<](-2,-1.2)--(-1,-0.6);
		\draw[-<](-2,1.2)--(-1,0.6);
		\coordinate (A) at (-1.2,0.5);
		\coordinate (B) at (-1.2,-0.5);
		\coordinate (G) at (1.4,0.5);
		\coordinate (H) at (1.4,-0.5);
		\coordinate (I) at (0,0);
		\fill (I) circle (1pt) node[below] {$0$};
		\end{tikzpicture}
	}
	\subfigure[]{
		\begin{tikzpicture}[node distance=2cm]
		\draw[](0,2.7) node[above]{$\xi\in[0,1)$};
		\draw(0,0)--(2,1.5)node[above]{$\mathbb{R}^+e^{\varphi i}$};
		\draw(0,0)--(-2,1.5)node[above]{$\mathbb{R}^+e^{(\pi-\varphi)i}$};
		\draw[-<](0,0)--(1,0.75);
		\draw(0,0)--(-2,-1.5)node[below]{$\mathbb{R}^+e^{(\pi+\varphi)i}$};
		\draw(0,0)--(2,-1.5)node[below]{$\mathbb{R}^+e^{(2\pi-\varphi)i}$};
		\draw[-<](0,0)--(1,-0.75);
		\draw[dashed](-2,0)--(2,0)node[right]{\scriptsize Re$z$};
		\draw[-<](-2,-1.5)--(-1,-0.75);
		\draw[-<](-2,1.5)--(-1,0.75);
		\coordinate (A) at (1,0.5);
		\coordinate (B) at (1,-0.5);
		\coordinate (G) at (-1,0.5);
		\coordinate (H) at (-1,-0.5);
		\coordinate (I) at (0,0);
		\fill (I) circle (1pt) node[below] {$0$};
		\end{tikzpicture}
	}
	\caption{\footnotesize The contour $\Sigma^{pc}$ in case $\xi\in[0,1)$ and $\xi\in(-1/8,0)$, respectively.}
	\label{sigpc}
\end{figure}
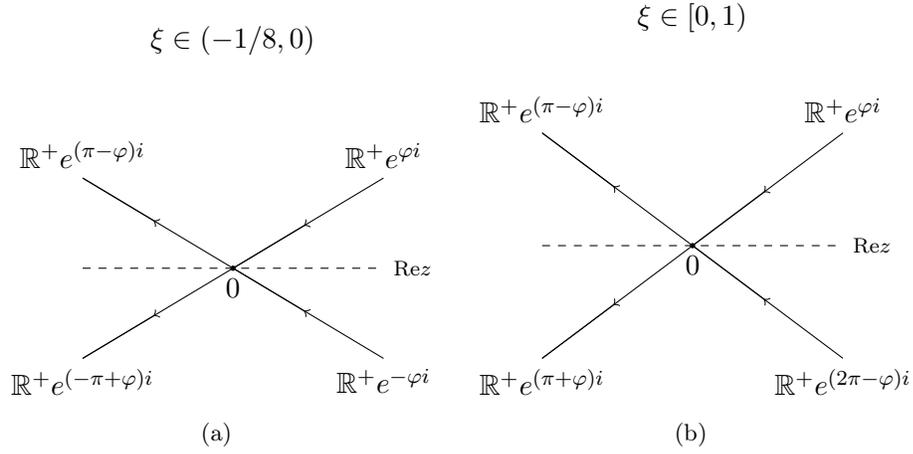
Moreover, change  variable of $M^{lo,1}(z)=M^{lo,1}(\zeta(z))$ and take $\zeta\to \infty$,
\begin{align}
	M^{lo,1}(\zeta)=I+\frac{M^{lo,1}_{(1)}}{\zeta}+\mathcal{O}(\zeta^{-2}),\label{asyMlo}
\end{align}
where $M^{lo,1}_{(1)}$ is unknown.

\begin{Proposition}
	As $t\to\infty$, the error between  $M^{lo,1}_{(1)}$ defined in (\ref{asyMlo}) and $M^{pc}_1$ defined in (\ref{asyMpc1}) is
	\begin{align}
		M^{lo,1}_{(1)}=M^{pc}_1+t^{-1/2}.
	\end{align}
\end{Proposition}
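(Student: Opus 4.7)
The plan is to introduce the ratio
\[
E^{(1)}(\zeta):=M^{lo,1}\bigl(z(\zeta)\bigr)\bigl[M^{pc}(\zeta;\xi)\bigr]^{-1},
\]
where $z(\zeta)=\xi_{1}+\zeta/\bigl(t^{1/2}\sqrt{-4\eta(\xi,1)\theta''(\xi_{1})}\bigr)$, extended by $E^{(1)}(\zeta)=M^{lo,1}(z(\zeta))$ when $z(\zeta)\notin U_{\xi_{1}}$ (so that $M^{pc}$ is used only inside the image of $U_{\xi_{1}}$). Set $D_t=\{\zeta:|z(\zeta)-\xi_{1}|<\varrho^{(0)}\}$, a disk of radius $\sim t^{1/2}\varrho^{(0)}$. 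Then $E^{(1)}$ is analytic off $\Sigma^{pc}\cap D_t$ together with the circle $\partial D_t$, and my aim is to show that $E^{(1)}$ solves a small-norm Riemann-Hilbert problem with jump matrix $I+O(t^{-1/2})$ in $L^{2}\cap L^{\infty}$, from which the claim follows by extracting the $\zeta^{-1}$ coefficient in the large-$\zeta$ expansion.

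First I would control the jump across $\Sigma^{pc}\cap D_t$. Here
\[
V^{E^{(1)}}-I=M^{pc}_{-}\bigl(V^{lo,1}(V^{pc})^{-1}-I\bigr)M^{pc}_{-}^{-1},
\]
so the task reduces to bounding $V^{lo,1}(V^{pc})^{-1}-I$. Using Proposition \ref{proT}(g) on $T_{12}$ and a Lipschitz expansion of $r$ at $\xi_{1}$ (which is legitimate since $r\in\mathcal{S}(\mathbb{R})$ by Proposition \ref{pror}), the off-diagonal entry of $V^{lo,1}(z)$ differs from the corresponding entry of $V^{pc}(\zeta)$ by a factor of $1+\mathcal{O}(|z-\xi_{1}|^{1/2})=1+\mathcal{O}(t^{-1/4}|\zeta|^{1/2})$, multiplied by the Gaussian $e^{\pm\frac{i}{2}\zeta^{2}}$ (once the Taylor remainder of $\theta_{12}$ near $\xi_{1}$ is absorbed into the same exponential through $|z-\xi_{1}|^{3}\lesssim t^{-3/2}|\zeta|^{3}$). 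Integrating against the Gaussian weight along the rays $\mathbb{R}^{+}e^{\pm i\varphi}$ and $\mathbb{R}^{+}e^{i(\pi\mp\varphi)}$ gives $L^{2}$- and $L^{\infty}$-norms of order $t^{-1/2}$, uniformly in $\zeta\in D_t$.

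Second, on the circle $\partial D_t$ the jump is $M^{pc}(\zeta;\xi)$, which by the asymptotic \eqref{asyMpc1} satisfies $M^{pc}-I=M^{pc}_{1}/\zeta+\mathcal{O}(\zeta^{-2})=\mathcal{O}(t^{-1/2})$ uniformly on $\partial D_t$ (since $|\zeta|\sim t^{1/2}$ there). The $L^{q}$-norm on the circle of length $\mathcal{O}(t^{1/2})$ of this mismatch is still $\mathcal{O}(t^{-1/2})$ for $q=\infty$ and $\mathcal{O}(t^{-1/2+1/(2q)})$ for finite $q$; since we only need $q=2,\infty$, the relevant bound is $\mathcal{O}(t^{-1/4})$ in $L^{2}$ and $\mathcal{O}(t^{-1/2})$ in $L^{\infty}$, so the composite $L^{2}\cap L^{\infty}$ norm of $V^{E^{(1)}}-I$ over $\Sigma^{E^{(1)}}=(\Sigma^{pc}\cap D_t)\cup\partial D_t$ is $\mathcal{O}(t^{-1/2})$ for $L^{\infty}$ and at worst $\mathcal{O}(t^{-1/4})$ for $L^{2}$. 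Standard small-norm theory (as used in Section \ref{sec61}) then produces a unique $E^{(1)}$ with
\[
E^{(1)}(\zeta)=I+\frac{1}{2\pi i}\int_{\Sigma^{E^{(1)}}}\frac{(I-C_{E^{(1)}})^{-1}I\cdot\bigl(V^{E^{(1)}}-I\bigr)}{s-\zeta}\,ds.
\]
Expanding as $\zeta\to\infty$ (which is legitimate because $D_t$ is a bounded set in the $\zeta$-plane for each fixed $t$) yields $E^{(1)}_{(1)}=\mathcal{O}(t^{-1/2})$, and equating coefficients of $\zeta^{-1}$ in $M^{lo,1}=E^{(1)}M^{pc}$ gives $M^{lo,1}_{(1)}-M^{pc}_{1}=E^{(1)}_{(1)}=\mathcal{O}(t^{-1/2})$, as claimed.

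The main obstacle I expect is the tension between the two contributions: the Taylor-remainder bound along the rays is genuinely $t^{-1/2}$ only after one has exploited Gaussian decay of $e^{\pm i\zeta^{2}/2}$ and the local H\"older estimate of Proposition \ref{proT}(g), while the circle $\partial D_t$ contribution has length $t^{1/2}$ and thus requires the sharpest possible decay rate of $M^{pc}-I$ at infinity. Balancing these two so that both contribute at most $\mathcal{O}(t^{-1/2})$ in the operator norm (and verifying that no $\log t$ factor creeps in from the power $\zeta^{\pm 2i\nu(\xi_{1})}$) is the delicate technical point; once that is done the small-norm argument is routine.
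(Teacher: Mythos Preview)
Your strategy---compare $M^{lo,1}$ to $M^{pc}$ via a small-norm RH problem for their ratio---is the standard one (and is what the reference \cite{HG2009} does, to which the paper simply defers). But your piecewise definition of $E^{(1)}$ creates an internal inconsistency that breaks the last step. Since $E^{(1)}(\zeta)=M^{lo,1}(z(\zeta))$ for all $\zeta\notin D_t$ and $D_t$ is bounded, the large-$\zeta$ expansion of $E^{(1)}$ is that of $M^{lo,1}$, hence $E^{(1)}_{(1)}=M^{lo,1}_{(1)}$. The identity $M^{lo,1}=E^{(1)}M^{pc}$ holds only inside $D_t$ and cannot be used to read off $\zeta^{-1}$ coefficients at infinity; your conclusion $E^{(1)}_{(1)}=M^{lo,1}_{(1)}-M^{pc}_1$ is therefore unjustified. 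Correspondingly, the estimate $E^{(1)}_{(1)}=\mathcal{O}(t^{-1/2})$ fails: the $\zeta^{-1}$ coefficient of the Cauchy integral is governed by the $L^{1}$ norm of $V^{E^{(1)}}-I$, and on $\partial D_t$ this is $\mathcal{O}(t^{-1/2})\times(\text{length }\mathcal{O}(t^{1/2}))=\mathcal{O}(1)$. In fact a residue computation shows the leading $\partial D_t$ contribution is exactly $M^{pc}_1$, so one does recover $M^{lo,1}_{(1)}=M^{pc}_1+\mathcal{O}(t^{-1/2})$---but not by the mechanism you describe.

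The clean fix is to define $E^{(1)}=M^{lo,1}(M^{pc})^{-1}$ globally in the $\zeta$-plane. Then there is no $\partial D_t$ jump; on $\Sigma^{pc}\setminus D_t$ the jump is $(V^{pc})^{-1}$, whose off-diagonal entries carry Gaussian factors and are $\mathcal{O}(e^{-ct})$ for $|\zeta|\gtrsim t^{1/2}$; and on $\Sigma^{pc}\cap D_t$ the jump involves $V^{lo,1}(V^{pc})^{-1}-I$. Here note that $V^{lo,1}$ \emph{already} uses the frozen data $r(\xi_1)$ and $T_{12}^{(1)}(\xi)\bigl(\eta(z-\xi_1)\bigr)^{2i\eta\nu(\xi_1)}$ (see its explicit form), so Proposition~\ref{proT}(g) and a Lipschitz bound on $r$ are not needed: the entire discrepancy comes from replacing $e^{it\theta_{12}(z)}$ by its quadratic Taylor approximation, producing $\bigl(e^{i\mathcal{O}(t^{-1/2}|\zeta|^{3})}-1\bigr)$ times a Gaussian, hence $L^{1}$, $L^{2}$ and $L^{\infty}$ norms all $\mathcal{O}(t^{-1/2})$. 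With the global definition the expansion at infinity gives $E^{(1)}_{(1)}=M^{lo,1}_{(1)}-M^{pc}_1$ directly, and the small-norm argument then yields the claim.
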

The proof of above proposition is similar as Theorem A.1. in \cite{HG2009}.

Then
\begin{align}
M^{lo,1}(z)=I+\frac{t^{-1/2}}{z-\xi_1}\frac{i\tilde{\eta}}{2} \left(\begin{array}{ccc}
0 & [M^{pc}_1]_{12} & 0\\
-[M^{pc}_1]_{21} & 0 & 0 \\
0&0&0
\end{array}\right)+\mathcal{O}(t^{-1}).\label{asyMlo1}
\end{align}
The RHP \ref{RHPpc} has an explicit solution $	M^{pc}(\zeta)$, which is expressed in terms of solutions of the parabolic cylinder equation $\left(\frac{\partial^2}{\partial z^2}+\left(\frac{1}{2}-\frac{z^2}{2}+a \right)  \right)D_a(z)=0 $. In fact, let
\begin{align}
M^{pc}(\zeta;\xi)=\Psi(\zeta;\xi)P(\xi)e^{-\frac{i}{4}\eta\zeta^2\sigma_3}\zeta^{-i\eta\nu(\xi_1)\sigma_3},
\end{align}
where  in the case $\xi\in[0,1)$
\begin{align}
P(\xi)=\left\{\begin{array}{ll}
\left(\begin{array}{ccc}
1 & 0 &0\\
-\frac{\bar{r}_{\xi_1}}{1-|r_{\xi_1}|^2} & 1&0\\
0&0&1
\end{array}\right),  & \arg\zeta\in(0,\varphi) ,\\[10pt]
\left(\begin{array}{ccc}
1&  -\frac{r_{\xi_1}}{1-|r_{\xi_1}|^2}&0\\
0 &1&0\\
0&0&1
\end{array}\right),   & \arg\zeta\in (2\pi-\varphi,2\pi) ,\\[10pt]
\left(\begin{array}{ccc}
1 &0&0\\
\bar{r}_{\xi_1} & 1&0\\
0&0&1
\end{array}\right),   & \arg\zeta\in (\pi,\pi+\varphi),\\[10pt]
\left(\begin{array}{ccc}
1 & r_{\xi_1}&0 \\
0 & 1 &0\\
0&0&1
\end{array}\right),   & \arg\zeta\in (\pi-\varphi,\pi),\\[10pt]
I,   & else.
\end{array}\right.
\end{align}
and in the case $\xi\in(-1/8,0)$
\begin{align}
P(\xi)=\left\{\begin{array}{ll}
\left(\begin{array}{ccc}
	1 & r_{\xi_1}&0 \\
	0 & 1 &0\\
	0&0&1
\end{array}\right),  & \arg\zeta\in(0,\varphi) ,\\[10pt]
\left(\begin{array}{ccc}
	1 &0&0\\
	\bar{r}_{\xi_1} & 1&0\\
	0&0&1
\end{array}\right),   & \arg\zeta\in(-\varphi,0) ,\\[10pt]
\left(\begin{array}{ccc}
	1&  -\frac{r_{\xi_1}}{1-|r_{\xi_1}|^2}&0\\
	0 &1&0\\
	0&0&1
\end{array}\right),   & \arg\zeta\in (-\pi+\varphi,-\pi) ,\\[10pt]
\left(\begin{array}{ccc}
	1 & 0 &0\\
	-\frac{\bar{r}_{\xi_1}}{1-|r_{\xi_1}|^2} & 1&0\\
	0&0&1
\end{array}\right),   & \arg\zeta\in(\pi-\varphi,\pi),\\[10pt]
I,   & else.
\end{array}\right.
\end{align}
And \begin{align}
	\sigma_3=\left(\begin{array}{ccc}
		1 & 0 &0\\
		0 & -1&0\\
		0&0&0
	\end{array}\right),\ with\ a^{\sigma_3}=\left(\begin{array}{ccc}
	a & 0 &0\\
	0 & a^{-1}&0\\
	0&0&1
\end{array}\right),\text{ for nozero constant }a.
\end{align}
By construction, the matrix $\Psi$ is continuous along the rays of $\Sigma^{pc}$. And Due to the branch cut of the logarithmic function along $\tilde{\eta}\mathbb{R}^+$, the matrix $\Psi$  has the same (constant) jump matrix along the negative and positive real axis. The function $\Psi(\zeta;\xi)$  satisfies the following model RHP.
 \begin{RHP}\label{RHPpsi}
	Find a matrix-valued function  $ \Psi(\zeta;\xi)$ with following properties:
	
	$\blacktriangleright$ Analyticity: $\Psi(\zeta;\xi)$ is analytical  in $\mathbb{C}\setminus \mathbb{R}$;

	$\blacktriangleright$ Jump condition: $\Psi(\zeta;\xi)$ has continuous boundary values $\Psi_\pm(\zeta;\xi)$ on $\mathbb{R}$ and
	\begin{equation}
	\Psi_+(\zeta;\xi)=\Psi_-(\zeta;\xi)V^{\Psi}(\zeta),\hspace{0.5cm}\zeta \in \mathbb{R},
	\end{equation}
	where
	\begin{align}
	V^{\Psi}(\xi)=	\left(\begin{array}{ccc}
	1 & -r_{\xi_1} &0\\
	\bar{r}_{\xi_1} & 1-|r_{\xi_1}|^2 &0\\
	0&0&1
	\end{array}\right).
	\end{align}

	$\blacktriangleright$ Asymptotic behaviors:
	\begin{align}
	\Psi(\zeta;\xi) \sim& \left( I+M^{pc}_1\zeta^{-1}\right) \zeta^{i\eta\nu(\xi_1)\sigma_3}e^{\frac{i}{4}\eta\zeta^2\sigma_3},\hspace{0.5cm}\zeta \rightarrow \infty.
	\end{align}
\end{RHP}	
For  brevity, denote $\tilde{\beta}^1_{12}=i\eta[M^{pc}_1]_{12}$, $\tilde{\beta}^1=-i\eta[M^{pc}_1]_{21}$ and $\nu_1=\nu(\xi_1)$. The unique solution to Problem \ref{RHPpsi} is:

1. $\xi\in[0,1)$,
when $\zeta\in\mathbb{C}^+$,
\begin{align}
\Psi(\zeta;\xi)=	\left(\begin{array}{ccc}
e^{\frac{3}{4}\pi\nu_1}D_{-i\nu_1}(e^{-\frac{3}{4}\pi i}\zeta) & \frac{i\nu_1}{\tilde{\beta}^1_{21}}e^{-\frac{\pi}{4}(\nu_1+i)}D_{i\nu_1-1}(e^{-\frac{\pi i}{4} }\zeta) &0\\
-\frac{i\nu_1}{\tilde{\beta}^1_{12}}e^{\frac{3\pi}{4}(\nu_1-i)}D_{-i\nu_1-1}(e^{-\frac{3\pi i}{4} }\zeta) & e^{-\frac{\pi}{4}\nu_1}D_{i\nu_1}(e^{-\frac{\pi}{4} i}\zeta)&0\\
0&0&1
\end{array}\right),\nonumber
\end{align}
when $\zeta\in\mathbb{C}^-$,
\begin{align}
\Psi(\zeta;\xi)=	\left(\begin{array}{ccc}
e^{\frac{7\pi}{4}\nu_1}D_{-i\nu_1}(e^{-\frac{7\pi}{4} i}\zeta) & \frac{i\nu_1}{\tilde{\beta}^1_{21}}e^{-\frac{5\pi}{4}(\nu_1+i)}D_{i\nu_1-1}(e^{-\frac{5\pi i}{4} }\zeta)&0\\
-\frac{i\nu_1}{\tilde{\beta}^1_{12}}e^{\frac{7\pi}{4}(\nu_1-i)}D_{-i\nu_1-1}(e^{-\frac{7\pi i}{4} }\zeta) & e^{-\frac{5}{4}\pi\nu_1}D_{i\nu_1}(e^{-\frac{5}{4}\pi i}\zeta)&0\\
0&0&1
\end{array}\right).\nonumber
\end{align}

2. $\xi\in(-1/8,0)$,
when $\zeta\in\mathbb{C}^+$,
\begin{align}
\Psi(\zeta;\xi)=	\left(\begin{array}{ccc}
e^{-\frac{\pi}{4}\nu_1}D_{i\nu_1}(e^{-\frac{\pi}{4} i}\zeta) & -\frac{i\nu_1}{\tilde{\beta}^1_{21}}e^{\frac{3\pi}{4}(\nu_1-i)}D_{-i\nu_1-1}(e^{-\frac{3\pi i}{4} }\zeta)&0\\
\frac{i\nu_1}{\tilde{\beta}^1_{12}}e^{-\frac{\pi}{4}(\nu_1+i)}D_{i\nu_1-1}(e^{-\frac{\pi i}{4} }\zeta) & e^{\frac{3}{4}\pi\nu_1}D_{-i\nu_1}(e^{-\frac{3}{4}\pi i}\zeta)&0\\
0&0&1
\end{array}\right),\nonumber
\end{align}
when $\zeta\in\mathbb{C}^-$,
\begin{align}
\Psi(\zeta;\xi)=	\left(\begin{array}{ccc}
 e^{\frac{3}{4}\pi\nu_1}D_{i\nu_1}(e^{\frac{3}{4}\pi i}\zeta) & -\frac{i\nu_1}{\tilde{\beta}^1_{21}}e^{-\frac{\pi}{4}(\nu_1-i)}D_{-i\nu_1-1}(e^{\frac{\pi i}{4} }\zeta)&0\\
\frac{i\nu_1}{\tilde{\beta}^1_{12}}e^{\frac{3\pi}{4}(\nu_1+i)}D_{i\nu_1-1}(e^{\frac{3\pi i}{4} }\zeta) &e^{-\frac{\pi}{4}\nu_1}D_{-i\nu_1}(e^{\frac{\pi}{4} i}\zeta)&0\\
0&0&1
\end{array}\right).\nonumber
\end{align}
And when $\xi\in[0,1)$,
\begin{align}
&\tilde{\beta}^1_{21}=\frac{\sqrt{2\pi}e^{\frac{5}{2}\pi\nu_1}e^{-\frac{7\pi}{4} i}}{r_{\xi_1}\Gamma(-i\nu_1)},\hspace{0.5cm}\tilde{\beta}^1_{21}\tilde{\beta}^1_{12}=-\nu_1,\\
&|\tilde{\beta}^1_{21}|=-\frac{\nu_1}{(1-|r(\xi_1)|^2)^3},\\
&\arg(\tilde{\beta}^1_{21})=\frac{5}{2}\pi\nu_1-\frac{7\pi}{4} i-\arg r_{\xi_1}-\arg \Gamma(-i\nu_1);
\end{align}
when $\xi\in(-1/8,0)$,
\begin{align}
&\tilde{\beta}^1_{21}=\frac{\sqrt{2\pi}e^{\frac{\pi}{2}\nu_1}e^{-\frac{\pi}{4} i}}{r_{\xi_1}\Gamma(i\nu_1)},\hspace{0.5cm}\tilde{\beta}^1_{21}\tilde{\beta}^1_{12}=-\nu_1,\\
&|\tilde{\beta}^1_{21}|=-\frac{\nu_1}{1-|r(\xi_1)|^2},\\
&\arg(\tilde{\beta}^1_{21})=\frac{\pi}{2}\nu_1-\frac{\pi}{4} i-\arg r_{\xi_1}-\arg \Gamma(i\nu_1).\nonumber
\end{align}

A derivation of this result is given in \cite{RN6}, and a direct verification of the solution is given in  \cite{Liu3}. Substitute above results into (\ref{asyMlo1}) and obtain:
\begin{align}
M^{lo,1}(z)=I+\frac{t^{-1/2}}{z-\xi_1} \left(\begin{array}{ccc}
0 & \tilde{\beta}^1_{12}&0\\
\tilde{\beta}^1_{21} & 0&0\\
0&0&0
\end{array}\right)+\mathcal{O}(t^{-1}).\label{asyMpc}
\end{align}
For the model around other stationary phase points, it  also admits
\begin{align}
M^{lo,k}(z)=I+\frac{t^{-1/2}}{z-\xi_k} \left(\begin{array}{ccc}
0 & \tilde{\beta}^k_{12}&0\\
\tilde{\beta}^k_{21} & 0&0\\
0&0&0
\end{array}\right)+\mathcal{O}(t^{-1}),\label{asyMpck}
\end{align}
for $k=2,...,p(\xi)$.
When $\xi\in(-1/8,0)$, $k$ is  even number or  $\xi\in[0,1)$, $k$ is odd  number,
\begin{align}
r_{\xi_k}=r(\xi_k)T_k(\xi)^{2}e^{-2it\theta(\xi_k)}\exp\left\lbrace -i\nu(\xi_k)\log \left( 4t\theta''(\xi_k)\right) \right\rbrace,
\end{align}
and
\begin{align}
&\tilde{\beta}^k_{21}=\frac{\sqrt{2\pi}e^{\frac{5}{2}\pi\nu(\xi_k)}e^{-\frac{7\pi}{4} i}}{r_{\xi_k}\Gamma(-i\nu(\xi_k))},\hspace{0.5cm}\tilde{\beta}^k_{21}\tilde{\beta}^k_{12}=-\nu(\xi_k),\\
&|\tilde{\beta}^k_{21}|=-\frac{\nu(\xi_k)}{(1-|r(\xi_k)|^2)^3},\\
&\arg(\tilde{\beta}^k_{21})=\frac{5}{2}\pi\nu(\xi_k)-\frac{7\pi}{4} i-\arg r_{\xi_k}-\arg \Gamma(-i\nu(\xi_k));
\end{align}
and when $\xi\in(-0.25,0)$, $k$ is  even number or  $\xi\in[0,2)$, $k$ is  odd number,
\begin{align}
r_{\xi_k}=r(\xi_k)T_k(\xi)^{2}e^{-2it\theta(\xi_k)}\exp\left\lbrace -i\nu(\xi_k)\log \left(- 4t\theta''(\xi_k)\right) \right\rbrace,
\end{align}
and
\begin{align}
&\tilde{\beta}^k_{21}=\frac{\sqrt{2\pi}e^{\frac{\pi}{2}\nu(\xi_k)}e^{-\frac{\pi}{4} i}}{r_{\xi_k}\Gamma(i\nu(\xi_k))},\hspace{0.5cm}\tilde{\beta}^k_{21}\tilde{\beta}^k_{12}=-\nu(\xi_k),\\
&|\tilde{\beta}^k_{21}|=-\frac{\nu(\xi_k)}{1-|r(\xi_k)|^2},\\
&\arg(\tilde{\beta}^k_{21})=\frac{\pi}{2}\nu(\xi_k)-\frac{\pi}{4} i-\arg r_{\xi_k}-\arg \Gamma(i\nu(\xi_k)).
\end{align}
Then together with proposition \ref{dividepc}, wo final obtain
\begin{Proposition}\label{asymlo}
	As $t\to+\infty$,
	\begin{align}
	M^{lo}(z)=I+t^{-1/2}\sum_{ k=1 }^{p(\xi)}\left( \frac{A_k(\xi)}{z-\xi_k}+\frac{\omega\Gamma_3\overline{A_k(\xi)}\Gamma_3}{z-\omega\xi_k}+\frac{\omega^2\Gamma_2\overline{A_k(\xi)}\Gamma_2}{z-\omega^2\xi_k}\right)  +\mathcal{O}(t^{-1}),
	\end{align}
	where
	\begin{align}
	A_k(\xi)=\left(\begin{array}{ccc}
	0 & \tilde{\beta}^k_{12} &0\\
	\tilde{\beta}^k_{21} & 0 &0\\
	0&0&0
	\end{array}\right).
	\end{align}
\end{Proposition}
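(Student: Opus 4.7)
The plan is to assemble the asymptotic of $M^{lo}(z)$ from the local contributions at each of the $3p(\xi)$ phase points $\xi_{n,k}=\omega^n\xi_k$, using the decoupling result of Proposition \ref{dividepc} together with the explicit parabolic cylinder solutions already constructed for the crosses on $\mathbb{R}$, and then propagating these to the crosses on $\omega\mathbb{R}$ and $\omega^2\mathbb{R}$ via the discrete symmetries of the problem.

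First I would invoke Proposition \ref{dividepc} to write
\begin{align}
M^{lo}(z)=I+\frac{1}{2\pi i}\sum_{n=0}^{2}\sum_{k=1}^{p(\xi)}\int_{\Sigma^{lo,n}_k}\frac{(I-C_{w^n_k})^{-1}I\,w^n_k(s)}{s-z}ds+\mathcal{O}(t^{-3/2}).\nonumber
\end{align}
Each integral in this sum is precisely the deviation of $M^{lo,n}_k(z)$ from the identity, so after factoring out the appropriate $(z-\xi_{n,k})^{-1}$ residue contribution the question reduces to evaluating, for each phase point, a single matrix coefficient coming from the large-$\zeta$ expansion of the corresponding local model.

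Second, for the crosses on $\mathbb{R}$ (i.e.\ $n=0$), the expansions \eqref{asyMpc}--\eqref{asyMpck} were already computed by rescaling $\zeta=t^{1/2}\sqrt{-4\eta\theta''(\xi_k)}(z-\xi_k)$, matching to the parabolic cylinder RHP \ref{RHPpc}, and reading off $M^{pc}_1$ through the explicit Weber-function solution of RHP \ref{RHPpsi}. This yields
\begin{align}
M^{lo,k}(z)=I+\frac{t^{-1/2}}{z-\xi_k}A_k(\xi)+\mathcal{O}(t^{-1}),\qquad k=1,\dots,p(\xi),\nonumber
\end{align}
with $A_k(\xi)$ the $3\times 3$ matrix displayed in the statement. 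For $n=1,2$, rather than redoing the Weber-function analysis, I would exploit the symmetries $M^{lo}(z)=\Gamma_2\overline{M^{lo}(\omega^2\bar z)}\Gamma_2=\Gamma_3\overline{M^{lo}(\omega\bar z)}\Gamma_3$ inherited from RHP \ref{rh7}. These symmetries (together with the explicit forms of $\theta_{13}$, $\theta_{23}$ and the rotated jump matrices on $\omega\mathbb{R}$, $\omega^2\mathbb{R}$) force
\begin{align}
M^{lo,\omega,k}(z)=I+\frac{t^{-1/2}}{z-\omega\xi_k}\,\omega\,\Gamma_3\overline{A_k(\xi)}\Gamma_3+\mathcal{O}(t^{-1}),\nonumber
\end{align}
and the analogous formula with $\omega\mapsto\omega^2$, $\Gamma_3\mapsto\Gamma_2$ for the crosses on $\omega^2\mathbb{R}$. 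Substituting these three families of expansions back into the decoupled sum yields the claimed formula for $M^{lo}(z)$, and the residual $\mathcal{O}(t^{-1})$ from each local model together with the $\mathcal{O}(t^{-3/2})$ interaction error combine into the global $\mathcal{O}(t^{-1})$ remainder.

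The main obstacle will be the symmetry bookkeeping in the second step: one has to verify that the rescaled local variable, the effective reflection coefficient $r_{\xi_{n,k}}$, and the cross orientations on the rotated contours transform consistently under $z\mapsto\omega\bar z$ and $z\mapsto\omega^2\bar z$, so that the parabolic cylinder residues genuinely rotate by the factors $\omega\,\Gamma_3(\cdot)\Gamma_3$ and $\omega^2\,\Gamma_2(\cdot)\Gamma_2$ rather than by some extra phase. Once this is verified, summing the twelve (or twenty-four) contributions is mechanical, and the $\mathcal{O}(t^{-1})$ bound is uniform on compact sets away from $\{\omega^n\xi_k\}$ by the usual small-norm estimates underlying Proposition \ref{dividepc}.
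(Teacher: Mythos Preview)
Your proposal is correct and follows essentially the same route as the paper: invoke Proposition~\ref{dividepc} to decouple the $3p(\xi)$ crosses, read off the $t^{-1/2}$ residue at each real phase point from the parabolic-cylinder expansions \eqref{asyMpc}--\eqref{asyMpck}, and then obtain the contributions at $\omega\xi_k$ and $\omega^2\xi_k$ by the $\Gamma_3$-, $\Gamma_2$-symmetries of RHP~\ref{rh7}. The paper leaves this last symmetry step entirely implicit (it passes directly from \eqref{asyMpck} to the statement of the proposition), whereas you spell it out; your concern about the bookkeeping is legitimate but the check is short---for $\xi_k\in\mathbb{R}$ one has $\overline{\omega\bar z-\xi_k}=\omega^2(z-\omega\xi_k)$, so the residue at $\xi_k$ is carried to $\omega\xi_k$ with exactly the factor $\omega\,\Gamma_3\overline{(\cdot)}\Gamma_3$, and similarly for $\omega^2\xi_k$.
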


\subsection{RH problem  near singularities  $\varkappa_j$ }\label{secB}
\quad The matrix function $M^B_j(z)$ is the solution of following RH problem
\begin{RHP}
	Find a matrix-valued function  $  M^B_j(z)$ with following properties:
	
	$\blacktriangleright$ Analyticity: $M^B_j(z)$ is  meromorphic  in $\mathbb{C}\setminus \left(\mathbb{B}_j\cap l_j \right) $;
	
	$\blacktriangleright$ Jump condition: $M^B_j$ has continuous boundary values $[M^B_j]_\pm$ on $\mathbb{B}_j\cap l_j$ and
	\begin{equation}
	[M^B_j]_+(z)=[M^B_j]_-(z)V^{(2)}(z),\hspace{0.5cm}z \in \mathbb{B}_j\cap l_j;\label{jumpB}
	\end{equation}

	$\blacktriangleright$ Asymptotic behaviors:
	\begin{align}
		M^{B}_j(z) =& I+\mathcal{O}(z^{-1}),\hspace{0.5cm}z \rightarrow \infty;\label{asyMb}
	\end{align}
\end{RHP}
By the symmetry of $M^R$, we obtain that $M^B_3(z)=\Gamma_3\overline{M^B_1(\omega \bar{z})}\Gamma_3$ for $z\in\mathbb{B}_3$; $M^B_5(z)=\Gamma_2\overline{M^B_1(\omega^2 \bar{z})}\Gamma_2$ for $z\in\mathbb{B}_5$ and $M^B_6(z)=\Gamma_3\overline{M^B_4(\omega \bar{z})}\Gamma_3$ for $z\in\mathbb{B}_6$; $M^B_2(z)=\Gamma_2\overline{M^B_4(\omega^2 \bar{z})}\Gamma_2$ for $z\in\mathbb{B}_2$. Thus, it is reasonably to only give the detail of $M^{B}_1(z)$. Then $M^{B}_4(z)$ can be obtained analogously and the others  can be obtained by symmetry. $M^{B}_1(z)$ only has jump $V^{b}$ on $(1-2\varepsilon,1+2\varepsilon)$:
\begin{align*}
	V^B(z)=\left(\begin{array}{ccc}
		1 & \frac{rT_{21}\mathcal{X} e^{it\theta_{12}}}{1-|r|^2}&0\\
		0 & 1&0\\
		0 & 0 &1
	\end{array}\right)\left(\begin{array}{ccc}
		1 & 0 & 0\\
		\frac{\bar{r}T_{12}\mathcal{X} e^{-it\theta_{12}}}{1-|r|^2} & 1 & 0\\
		0 & 0 &1
	\end{array}\right).
\end{align*}
To get the long-time behavior of $M^{B}_1(e^{\frac{\pi i}{6}})$, we trans it to a pure $\bar{\partial}$-problem by multiplying the function $R^{(B)}$ defined as follow:
\begin{align*}
	R^{(B)}(z)=\left\{\begin{array}{lll}
		\left(\begin{array}{ccc}
			1 & R^{(B)}_{+}e^{it\theta_{12}} & 0\\
			0 & 1 & 0\\
			0 & 0 & 1
		\end{array}\right), & z\in \mathbb{C}^-;\\[12pt]
	\left(\begin{array}{ccc}
	1 & 0 & 0\\
	 R^{(B)}_{-}e^{-it\theta_{12}} & 1 & 0\\
	0 & 0 & 1
	\end{array}\right), & z\in \mathbb{C}^+;
	\end{array}\right.
\end{align*}
Here,
\begin{align}
	&R^{(B)}_{+}(z)=\mathcal{X}(\text{Re}z)\mathcal{X}(\text{Im}z+1)f(\text{Re}z)g(z),\\
	&R^{(B)}_{-}(z)=\overline{R^{(B)}_{+}(\bar{z})},
\end{align}
with
\begin{align}
	f(z)=\frac{r(z)}{1-|r(z)|^2},\hspace{0.5cm} g(z)=T_{12}(z).
\end{align}
 Then by simply  calculation, we have
\begin{align}
	|\bar{\partial}R^{(B)}_{+}|\lesssim |\mathcal{X}'(\text{Re}z)\mathcal{X}(\text{Im}z+1)|+|\mathcal{X}(\text{Re}z)\mathcal{X}'(\text{Im}z+1)|.
\end{align}
Obviously, the support of  $R^{(B)}_{+}$ and  $\bar{\partial}R^{(B)}_{+}$ are contained in $\mathbb{B}_1$.
Denote
\begin{align}
	\tilde{M}^{B}_1=M^{B}_1R^{(B)},
\end{align}
with
\begin{align}
	&\bar{\partial}\tilde{M}^{B}_1=M^{B}_1\bar{\partial}R^{(B)},\ \ \  \tilde{M}^{B}_1(z)\sim I,\ z\to\infty.
\end{align}
Specially, $\tilde{M}^{B}_1$ has no jump. Therefore, it is  solution of the integral equation
\begin{equation}
\tilde{M}^{B}_1(z)=I+\frac{1}{\pi}\iint_\mathbb{C}\dfrac{\tilde{M}^{B}_1(s)\bar{\partial}R^{(B)} (s)}{s-z}dm(s).\label{tMB}
\end{equation}
Denote $C^B$ is the integral  operator: $L^\infty(\mathbb{C})\to L^\infty(\mathbb{C})$ as
\begin{equation}
	C^Bf(z)=\frac{1}{\pi}\iint_\mathbb{C}\dfrac{f(s)\bar{\partial}R^{(B)} (s)}{s-z}dm(s).\label{CB}
\end{equation}
\begin{Proposition}
	$C^B$ is a bounded integral  operator from $L^\infty(\mathbb{C})\to L^\infty(\mathbb{C})$ with:
	\begin{align}
		\parallel C^B \parallel\lesssim t^{-1/p}, \ p>1.
	\end{align}
It implies that when $t\to\infty$, $\left( 1-C^B\right)^{-1} $ exists.
\end{Proposition}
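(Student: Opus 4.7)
The plan is to reduce the operator norm bound to an application of Hölder's inequality combined with the exponential decay of $e^{\pm it\theta_{12}}$ appearing inside $\bar\partial R^{(B)}$. Starting from the pointwise bound
$$|(C^Bf)(z)|\leq \frac{\|f\|_{L^\infty}}{\pi}\iint_{\mathbb{C}}\frac{|\bar\partial R^{(B)}(s)|}{|s-z|}\,dm(s),$$
I would fix any exponent $p>2$ and apply Hölder's inequality with conjugate exponent $q=p/(p-1)\in(1,2)$, obtaining
$$\iint_{\mathbb{C}}\frac{|\bar\partial R^{(B)}(s)|}{|s-z|}\,dm(s)\leq \|\bar\partial R^{(B)}\|_{L^p}\left(\iint_{\operatorname{supp}\bar\partial R^{(B)}}\frac{dm(s)}{|s-z|^{q}}\right)^{1/q}.$$

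First I would show that the second factor is bounded uniformly in $z\in\mathbb{C}$. Since $\operatorname{supp}\bar\partial R^{(B)}$ is contained in the bounded set $\mathbb{B}_1\cup\overline{\mathbb{B}_1}$ and $q<2$, splitting the domain of integration into $\{|s-z|\leq 1\}$ and its complement shows that the singularity $|s-z|^{-q}$ is locally integrable and the tail is bounded by the area of the support, so this factor is controlled by a universal constant depending only on $\varepsilon$ and $q$.

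The bulk of the work is to prove $\|\bar\partial R^{(B)}\|_{L^p}\lesssim t^{-1/p}$. On the lower half-plane piece one has $\bar\partial R^{(B)}=(\bar\partial R^{(B)}_{+})\,e^{it\theta_{12}}$, and the cutoff bound $|\bar\partial R^{(B)}_{+}|\lesssim |\mathcal{X}'(\operatorname{Re}s)\mathcal{X}(\operatorname{Im}s+1)|+|\mathcal{X}(\operatorname{Re}s)\mathcal{X}'(\operatorname{Im}s+1)|$ makes the prefactor an $L^\infty$ function supported in a fixed compact set; all the $t$-decay must therefore come from the exponential. Writing $s=1+u+iv$, a direct computation from the formula for $\theta_{12}$ gives $\theta_{12}(1)=0$ and $\theta_{12}'(1)=2\sqrt{3}(\xi-1)$, so
$$\operatorname{Im}\theta_{12}(1+u+iv)=2\sqrt{3}(\xi-1)\,v+O(u^2+v^2),$$
and the signature analysis of Figure~\ref{figtheta} forces $\operatorname{Im}\theta_{12}>0$ in $\mathbb{B}_1\cap\mathbb{C}^-$, yielding $|e^{it\theta_{12}(s)}|\lesssim e^{-c(\xi)t|v|}$ for some positive $c(\xi)$. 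After the change of variable $v\mapsto v/t$,
$$\iint_{\mathbb{B}_1\cap\mathbb{C}^-}e^{-c(\xi)tp|v|}\,du\,dv\;\lesssim\; t^{-1},$$
which gives $\|\bar\partial R^{(B)}\|_{L^p(\mathbb{B}_1\cap\mathbb{C}^-)}\lesssim t^{-1/p}$. The companion estimate on the upper half-plane piece, carried by $e^{-it\theta_{12}}$, is identical by the conjugation symmetry $R^{(B)}_{-}(z)=\overline{R^{(B)}_{+}(\bar z)}$. Combining the two halves gives $\|C^B\|_{L^\infty\to L^\infty}\lesssim t^{-1/p}$, and in particular the Neumann series $(1-C^B)^{-1}=\sum_{n\geq 0}(C^B)^n$ converges for $t$ sufficiently large.

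The main obstacle is the first summand in the bound on $|\bar\partial R^{(B)}_{+}|$, which is supported on $\{\varepsilon<|u|<2\varepsilon,\ |v|<2\varepsilon\}$, where $|v|$ is \emph{not} bounded away from zero; on this piece the pointwise bound $e^{-ct|v|}$ produces no $t$-gain as $v\to 0$, and only the $L^p$-integration in $v$ absorbs the decay into the factor $t^{-1/p}$. A secondary technicality is verifying the correct sign of $\operatorname{Im}\theta_{12}$ in $\mathbb{B}_1\cap\mathbb{C}^\pm$ for each parameter range of $\xi$, but this is already encoded in the signature tables constructed earlier.
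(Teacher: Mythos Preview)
Your two-dimensional H\"older approach is correct but proves strictly less than the proposition claims: requiring $q<2$ for local integrability of $|s-z|^{-q}$ over a planar region forces $p>2$, so you obtain $\|C^B\|\lesssim t^{-1/p}$ only in that range. The paper instead applies H\"older in the horizontal variable alone, using
\[
\Big\|\,|s-z|^{-1}\Big\|_{L^q_u(\mathbb{R})}\lesssim |v-y|^{1/q-1},
\]
which is valid for every $q>1$, and then integrates in $v$ against the exponential weight $e^{-c(\xi)t|v|}$; this delivers the bound for all $p>1$.

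For the immediate conclusion that $(1-C^B)^{-1}$ exists for large $t$, your restricted range is perfectly sufficient. The gap matters downstream: immediately after this proposition the paper takes $1/p=1-\rho$ with $\rho<1/4$ (so $p<4/3$) to obtain $M^{B}_l(\kappa_l)=I+\mathcal{O}(t^{-1+\rho})$, and that step genuinely needs $p$ close to $1$. If you want to keep your cleaner global-H\"older framework, you would need a separate argument for $1<p\le 2$; the simplest fix is to switch to the paper's slice-by-slice H\"older in $u$ followed by the $v$-integral $\int_0^{2\varepsilon}|v-y|^{-1/p}e^{-c t v}\,dv\lesssim t^{-1/p}$, which holds for all $p>1$.
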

\begin{proof}
	Obviously,
	\begin{align}
		\parallel C^B \parallel\lesssim \iint_{\mathbb{C}^+}\dfrac{|\bar{\partial}R^{(B)} (s)|}{|s-z|}dm(s)+\iint_{\mathbb{C}^-}\dfrac{|\bar{\partial}R^{(B)} (s)|}{|s-z|}dm(s).
	\end{align}
Take the first term as an example. Let $s=u+vi=le^{i\vartheta}$, $z=x+yi$. In the following computation,  we will use the inequality
\begin{align}
	\parallel |s-z|^{-1}\parallel_{L^q(0,+\infty)}&=\left\lbrace \int_{0}^{+\infty}\left[  \left( \frac{u-x}{v-y}\right) ^2+1\right]^{-q/2}
	d\left( \frac{u-x}{|v-y|}\right)\right\rbrace ^{1/q}|v-y|^{1/q-1}\nonumber\\
	&\lesssim |v-y|^{1/q-1},\label{s-z}
\end{align}
with $1< q<+\infty$ and $\frac{1}{p}+\frac{1}{q}=1$. Thus, by either Lemma \ref{Imtheta} or Corollary \ref{theta2c}, we both have
\begin{align}
	\iint_{\mathbb{C}^+}\dfrac{|\bar{\partial}R^{(B)} (s)|}{|s-z|}dm(s)&=\int_{0}^{2\varepsilon}\int_{1-2\varepsilon}^{1+2\varepsilon}|s-z|^{-1}|\bar{\partial}R^{(B)} (s)|due^{-c(\xi)v}dv\nonumber\\
	&\lesssim\int_{0}^{2\varepsilon}v^{-1/q}e^{-c(\xi)v}dv\lesssim t^{-1/p}.
\end{align}
\end{proof}
Hence, from $\tilde{M}^{B}_1(z)=\left( 1-C^B\right)^{-1}\dot I$, we get the existence and uniqueness of $\tilde{M}^{B}_1(z)$. Take $z=e^{\frac{\pi i}{6}}$ in (\ref{tMB}), then
 \begin{equation}
 	\tilde{M}^{B}_1(z)-I=\frac{1}{\pi}\iint_\mathbb{C}\dfrac{\tilde{M}^{B}_1(s)\bar{\partial}R^{(B)} (s)}{s-e^{\frac{\pi i}{6}}}dm(s).
 \end{equation}
\begin{Proposition}
	There exists  a constant $T_1$, such that for all $t>T_1$,   $\tilde{M}^{B}(z)$   admits the following estimate
	\begin{align}
		\parallel \tilde{M}^{B}(e^{\frac{i\pi}{6}})-I\parallel\lesssim t^{-1}.\label{mBi}
	\end{align}
\end{Proposition}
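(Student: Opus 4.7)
The plan is to exploit the integral representation (\ref{tMB}) at the evaluation point $z=e^{i\pi/6}$, together with two observations: that $e^{i\pi/6}$ sits at a positive fixed distance from the support of $\bar\partial R^{(B)}$, and that $\bar\partial R^{(B)}$ itself inherits the exponential decay $|e^{\pm it\theta_{12}}|$ on that support. The key difference from the preceding boundedness proof for $C^B$ is that evaluating at an interior point \emph{off} the support removes the Cauchy singularity, and so one expects to gain a full factor of $t^{-1}$ in place of the $t^{-1/p}$ governing the operator norm.

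First I would note that by the preceding proposition $\|C^B\|_{L^\infty}\lesssim t^{-1/p}\to 0$, so for $t$ sufficiently large the Neumann series gives $\|\tilde M^B_1\|_{L^\infty(\mathbb{C})}\leq 2$. Specializing (\ref{tMB}) to $z=e^{i\pi/6}$ then yields
\begin{equation*}
\bigl|\tilde M^B_1(e^{i\pi/6})-I\bigr|\lesssim \iint_{\mathbb{C}}\frac{|\bar\partial R^{(B)}(s)|}{|s-e^{i\pi/6}|}\,dm(s).
\end{equation*}
Since $\mathrm{supp}(\bar\partial R^{(B)})\subset\mathbb{B}_1$ is a small box around $\varkappa_1=1$ and $|e^{i\pi/6}-1|=2\sin(\pi/12)$ is a fixed positive constant, the requirement that $\varepsilon$ be small forces $|s-e^{i\pi/6}|\geq c_0>0$ uniformly on the support. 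Hence the Cauchy kernel is uniformly bounded there, and the problem reduces to controlling the scalar quantity $\iint|\bar\partial R^{(B)}|\,dm$.

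For the latter I would use the symmetry $R^{(B)}_-(z)=\overline{R^{(B)}_+(\bar z)}$ to reduce to the lower half-plane contribution, and insert the explicit bound
\begin{equation*}
|\bar\partial R^{(B)}_+(s)|\lesssim\bigl(|\mathcal{X}'(\mathrm{Re}\,s)|\,\mathcal{X}(\mathrm{Im}\,s+1)+\mathcal{X}(\mathrm{Re}\,s)\,|\mathcal{X}'(\mathrm{Im}\,s+1)|\bigr)\,|e^{it\theta_{12}(s)}|.
\end{equation*}
Writing $s=u+iv$ with $|u-1|<2\varepsilon$ and $0<v<2\varepsilon$, the linear lower bound $\mathrm{Im}\,\theta_{12}(s)\geq c(\xi)v$ supplied by Corollary~\ref{Imtheta} (or Corollary~\ref{theta2c} in the stationary-phase regimes) gives
\begin{equation*}
\iint_{\mathbb{C}^-}|\bar\partial R^{(B)}_+(s)|\,dm(s)\lesssim \int_{1-2\varepsilon}^{1+2\varepsilon}\!\int_{0}^{2\varepsilon}e^{-c(\xi)tv}\,dv\,du\lesssim t^{-1},
\end{equation*}
which combined with the Cauchy-kernel bound yields the desired estimate~(\ref{mBi}).

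The main obstacle is ensuring that the sign of $\mathrm{Im}\,\theta_{12}$ on $\mathbb{B}_1\cap\mathbb{C}^\pm$ is compatible with the exponentials $e^{\pm it\theta_{12}}$ appearing in $R^{(B)}_\pm$ uniformly across all three solitonic $\xi$-regimes considered in the paper, so that the linear lower bound is applied in the correct half-plane. Because the cutoff $\mathcal{X}$ was chosen small enough that $\mathrm{supp}(\mathcal{X}(\mathrm{Re}\,z)\mathcal{X}(\mathrm{Im}\,z+1))$ is disjoint from the critical curves $\{\mathrm{Im}\,\theta_{12}=0\}$ meeting $\varkappa_1$, the estimates of Corollaries~\ref{Imtheta} and~\ref{theta2c} do apply uniformly on the support, and the argument closes without further refinement.
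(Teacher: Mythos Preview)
Your proposal is correct and follows essentially the same route as the paper: both arguments use that $e^{i\pi/6}$ lies at a fixed positive distance from $\mathbb{B}_1$ to replace the Cauchy kernel by a constant, reduce to $\iint|\bar\partial R^{(B)}|\,dm$, and then integrate the exponential factor $e^{-c(\xi)t|v|}$ over the bounded strip to obtain $t^{-1}$. There is a minor bookkeeping slip in your write-up---you label the region as $\mathbb{C}^-$ while parametrizing with $0<v<2\varepsilon$---but this is cosmetic and does not affect the argument.
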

\begin{proof}
	Because $e^{\frac{\pi i}{6}}\notin\mathbb{B}_1$, $|s-e^{\frac{\pi i}{6}}|\lesssim1$ in $\mathbb{B}_1$. So
	\begin{align}
		\frac{1}{\pi}\iint_\mathbb{C}\dfrac{\tilde{M}^{B}_1(s)\bar{\partial}R^{(B)} (s)}{s-e^{\frac{\pi i}{6}}}dm(s)\lesssim\iint_{\mathbb{C}^+}|\bar{\partial}R^{(B)} (s)|dm(s)+\iint_{\mathbb{C}^-}|\bar{\partial}R^{(B)} (s)|dm(s).\nonumber
	\end{align}
Similarly, we take the first term as an example
	\begin{align}
	\iint_{\mathbb{C}^+}|\bar{\partial}R^{(B)} (s)|dm(s)&\lesssim\int_{0}^{2\varepsilon}\int_{1-2\varepsilon}^{1+2\varepsilon}|\bar{\partial}R^{(B)} (s)|due^{-c(\xi)v}dv\nonumber\\
	&\lesssim\int_{0}^{2\varepsilon}e^{-c(\xi)v}dv\lesssim t^{-1}.
\end{align}
\end{proof}
Finally, we obtain
\begin{align}
	M^{B}_1(e^{\frac{i\pi}{6}})=
	\tilde{M}^{B}_1(e^{\frac{i\pi}{6}})R^{(B)}(e^{\frac{i\pi}{6}})=I+\mathcal{O}(t^{-1}).\label{asyMB}
\end{align}
Then for$j=1,...,6$,
\begin{align}
	M^{B}_j(e^{\frac{i\pi}{6}})=I+\mathcal{O}(t^{-1}).
\end{align}
In addition, on the  boundary of $\mathbb{B}_1$, $R^{(B)}(z)=I$, so $M^{B}_1(z)=\tilde{M}^{B}_1(z)$ on it.  In fact, for $z\in\partial\mathbb{B}$, when $s\in\partial\mathbb{B}$, we still have $\dfrac{|\bar{\partial}R^{(B)} (s)|}{|s-z|}=0$. It implies that $\dfrac{|\bar{\partial}R^{(B)} (s)|}{|s-z|}$ is bounded for $z\in\partial\mathbb{B}$,  $s\in\mathbb{B}$. Then similar as in above Proposition, we have for $z\in\partial\mathbb{B}$,
\begin{align}
	M^{B}_j(z)=I+\mathcal{O}(t^{-1}).
\end{align}
$r(\pm1)=0$, gives that $\bar{\partial}R^{(B)}(\pm1)=I$, so $M^{B}_1(1)=
\tilde{M}^{B}_1(1)=I+\mathcal{O}(t^{-1/p})$. Here, $p>1$ is a  arbitrary  constant. So we can denote $p$ as  $1/p=1-\rho$ with a small enough positive constant $\rho<\frac{1}{4}$. Then $M^{B}_1(1)=
\tilde{M}^{B}_1(1)=I+\mathcal{O}(t^{-1+\rho})$. Further, we have for $l=1,...,6$,
\begin{align}
	M^{B}_l(\kappa_l)=
	\tilde{M}^{B}_l(\kappa_l)=I+\mathcal{O}(t^{-1+\rho}).\label{MBkap}
\end{align}

\subsection{Small norm RH problem for residual error  }\label{sec7}

\quad  In this section,  we consider the error matrix-function $E(z;\xi)$.
From the  definition (\ref{transm4}), we can obtain a RH problem  for the   $E(z;\xi)$:

\begin{RHP}
	 Find a matrix-valued function $E(z;\xi)$  with following properties:

$\blacktriangleright$ Analyticity: $E(z;\xi)$ is analytical  in $\mathbb{C}\setminus  \Sigma^{E}(\xi) $, where
\begin{align}
&\Sigma^{E}(\xi)=  \cup_{j=1}^6\partial \mathbb{B}_j,\ \ \xi\in(-\infty,-1/8)\cup(1,+\infty),\nonumber\\
&\Sigma^{E}(\xi)=\left( \cup_{j=1}^6\partial \mathbb{B}_j\right) \cup \partial U(\xi)\cup
(\Sigma^{(2)}\setminus U(\xi)), \ \ \xi\in(-1/8,1).\nonumber
\end{align}

$\blacktriangleright$ Asymptotic behaviors:
\begin{align}
&E(z;\xi) \sim I+\mathcal{O}(z^{-1}),\hspace{0.5cm}|z| \rightarrow \infty;
\end{align}

$\blacktriangleright$ Jump condition: $E(z;\xi)$  has continuous boundary values $E_\pm(z;\xi)$  satisfying
$$E_+(z;\xi)=E_-(z;\xi)V^{E}(z), \ \ z\in \Sigma^{E}(\xi),$$
where the jump matrix $V^{E}(z)$ is given by
\begin{equation}
V^{E}(z)=\left\{\begin{array}{llll}
M^{(r)}(z)V^{(2)}(z)M^{(r)}(z)^{-1}, & z\in \Sigma^{(2)}\setminus\left(  U(\xi)\cup\mathbb{B}_j\right) ,\\[4pt]
M^{(r)}(z)M^{lo}(z)M^{(r)}(z)^{-1},  & z\in \partial U(\xi),\\[4pt]
M^{(r)}(z)M^{B}_j(z)M^{(r)}(z)^{-1},  & z\in \partial \mathbb{B}_j,
\end{array}\right. \label{deVE}
\end{equation}
see  Figure \ref{figE} and Figure \ref{figE1}. Especially  we define
$$\Sigma^{(2)}=U(\xi) :=\emptyset, \ \ \xi\in(-\infty,-1/8)\cup(1,+\infty).$$

$\blacktriangleright$ Singularities: As $z \rightarrow \varkappa_l=e^{\frac{i\pi(l-1)}{3}}$, $l = 1,...,6$, the limit  of $E(z;\xi)$
 has pole singularities with leading terms of a specific
matrix structure
\begin{align}
	&\lim_{z\to \varkappa_l}E(z;\xi)=\lim_{z\to \varkappa_l}M^R(z)M^B_l(\varkappa_l)M^{(r)}(z)^{-1}=\mathcal{O} (  (z-\varkappa_l)^{-2}).
\end{align}
\end{RHP}

Here  $E(z;\xi)$ has both   jump condition and singularities, its asymptotic behavior   can  be  solved through
 a  small  norm RH problem.  A natural idea is  to consider a matrix function
$E^{(2)}(z;\xi)$ which only has same jump with $E(z;\xi)$, but without pole singularities.

\begin{RHP}\label{E2}
	Find a matrix-valued function $E^{(2)}(z;\xi)$  with following properties

	$\blacktriangleright$ Analyticity: $E^{(2)}(z;\xi)$ is analytical  in $\mathbb{C}\setminus  \Sigma^{E}(\xi) $, where
\begin{align}
&\Sigma^{E}(\xi)=  \cup_{j=1}^6\partial \mathbb{B}_j,\ \ \xi\in(-\infty,-1/8)\cup(1,+\infty),\nonumber\\
&\Sigma^{E}(\xi)=\left( \cup_{j=1}^6\partial \mathbb{B}_j\right) \cup \partial U(\xi)\cup
(\Sigma^{(2)}\setminus U(\xi)), \ \ \xi\in(-1/8,1).\nonumber
\end{align}

	$\blacktriangleright$ Asymptotic behaviors:
	\begin{align}
		&E^{(2)}(z;\xi) \sim I+\mathcal{O}(z^{-1}),\hspace{0.5cm}|z| \rightarrow \infty;
	\end{align}

	$\blacktriangleright$ Jump condition: $E(z;\xi)$ has continuous boundary values $E^{(2)}_\pm(z;\xi)$   satisfying
	$$E^{(2)}_+(z;\xi)=E^{(2)}_-(z;\xi)V^{E}(z), \ \ z\in\Sigma^{E}(\xi).$$
\end{RHP}

\begin{figure}[htp]
	\centering
\subfigure[]{
	\begin{tikzpicture}
	\draw(-2.57,0.09)--(-2.8,0.2);
	\draw[->](-2.57,0.09)--(-2.7,0.15);
	\draw(-2.21,0.1)--(-2,0.2);
	\draw[-<](-2.21,-0.1)--(-2.1,-0.15);
	\draw(-2.59,-0.1)--(-2.8,-0.2);
	\draw[->](-2.59,-0.1)--(-2.7,-0.15);
	\draw(-2.21,-0.1)--(-2,-0.2);
	\draw[-<](-2.21,0.1)--(-2.1,0.15);
	\draw(-1.6,0)--(-2,0.2);
	\draw[->](-1.6,0)--(-1.9,0.15);
	\draw(-1.6,0)--(-2,-0.2);
	\draw[->](-1.6,0)--(-1.9,-0.15);
	\draw(-1.6,0)--(-1.2,0.2);
	\draw[-<](-1.6,0)--(-1.3,0.15);
	\draw(-1.6,0)--(-1.2,-0.2);
	\draw[-<](-1.6,0)--(-1.3,-0.15);
	\draw(-0.61,0.1)--(-0.4,0.2);
	\draw[-<](-0.61,0.1)--(-0.5,0.15);
	\draw(-0.99,0.1)--(-1.2,0.2);
	\draw[->](-0.99,-0.1)--(-1.1,-0.15);
	\draw(-0.61,-0.1)--(-0.4,-0.2);
	\draw[->](-0.99,0.1)--(-1.1,0.15);
	\draw(-0.99,-0.1)--(-1.2,-0.2);
	\draw[-<](-0.61,-0.1)--(-0.5,-0.15);
	\draw(-0,0)--(-0.4,-0.2);
	\draw[->](-0,0)--(-0.3,-0.15);
	\draw(-0,0)--(-0.4,0.2);
	\draw[->](-0,0)--(-0.3,0.15);
	\draw(-0,0)--(0.4,-0.2);
	\draw[-<](-0,0)--(0.3,-0.15);
	\draw(-0,0)--(0.4,0.2);
	\draw[-<](-0,0)--(0.3,0.15);
	\draw[dashed](-3.2,0)--(3.2,0)node[right]{ Re$z$};
\draw(2.57,0.09)--(2.8,0.2);
\draw[-<](2.57,0.09)--(2.7,0.15);
\draw(2.21,0.1)--(2,0.2);
\draw[->](2.21,-0.1)--(2.1,-0.15);
\draw(2.59,-0.1)--(2.8,-0.2);
\draw[-<](2.59,-0.1)--(2.7,-0.15);
\draw(2.21,-0.1)--(2,-0.2);
\draw[->](2.21,0.1)--(2.1,0.15);
\draw(1.6,0)--(2,0.2);
\draw[-<](1.6,0)--(1.9,0.15);
\draw(1.6,0)--(2,-0.2);
\draw[-<](1.6,0)--(1.9,-0.15);
\draw(1.6,0)--(1.2,0.2);
\draw[->](1.6,0)--(1.3,0.15);
\draw(1.6,0)--(1.2,-0.2);
\draw[->](1.6,0)--(1.3,-0.15);
\draw(0.61,0.1)--(0.4,0.2);
\draw[->](0.61,0.1)--(0.5,0.15);
\draw(0.99,0.1)--(1.2,0.2);
\draw[-<](0.99,-0.1)--(1.1,-0.15);
\draw(0.61,-0.1)--(0.4,-0.2);
\draw[-<](0.99,0.1)--(1.1,0.15);
\draw(0.99,-0.1)--(1.2,-0.2);
\draw[->](0.61,-0.1)--(0.5,-0.15);
	\draw[blue ](0,0)--(1.5,2.6)node[above]{ $e^{\frac{ \pi i}{3}}$};
	\draw[teal ](0,0)--(-1.5,2.6)node[above]{ $e^{\frac{2 \pi i}{3}}$};
	\draw[teal ](0,0)--(1.5,-2.6);
	\draw[blue ](0,0)--(-1.5,-2.6);
	\draw[dashed] (0.8,0) arc (0:360:0.8);
	\draw[dashed] (2.4,0) arc (0:360:2.4);	
	\draw[rotate=60](0.61,-0.1)--(0.4,-0.2);
	\draw[rotate=60,-<](0.99,0.1)--(1.1,0.15);
	\draw[rotate=60](0.99,-0.1)--(1.2,-0.2);
	\draw[rotate=60,->](0.61,-0.1)--(0.5,-0.15);
	\draw[rotate=60](0.61,0.1)--(0.4,0.2);
	\draw[rotate=60,->](0.61,0.1)--(0.5,0.15);
	\draw[rotate=60](0.99,0.1)--(1.2,0.2);
	\draw[rotate=60,-<](0.99,-0.1)--(1.1,-0.15);
	\draw[rotate=60](2.57,0.09)--(2.8,0.2);
	\draw[rotate=60,-<](2.57,0.09)--(2.7,0.15);
	\draw[rotate=60](2.21,0.1)--(2,0.2);
	\draw[rotate=60,->](2.21,-0.1)--(2.1,-0.15);
	\draw[rotate=60](2.59,-0.1)--(2.8,-0.2);
	\draw[rotate=60,-<](2.59,-0.1)--(2.7,-0.15);
	\draw[rotate=60](2.21,-0.1)--(2,-0.2);
	\draw[rotate=60,->](2.21,0.1)--(2.1,0.15);
	\draw[rotate=60](1.6,0)--(2,0.2);
	\draw[rotate=60,-<](1.6,0)--(1.9,0.15);
	\draw[rotate=60](1.6,0)--(2,-0.2);
	\draw[rotate=60,-<](1.6,0)--(1.9,-0.15);
	\draw[rotate=60](1.6,0)--(1.2,0.2);
	\draw[rotate=60,->](1.6,0)--(1.3,0.15);
	\draw[rotate=60](1.6,0)--(1.2,-0.2);
	\draw[rotate=60,->](1.6,0)--(1.3,-0.15);
	\draw[rotate=60](-2.57,0.09)--(-2.8,0.2);
	\draw[rotate=60,->](-2.57,0.09)--(-2.7,0.15);
	\draw[rotate=60](-2.21,0.1)--(-2,0.2);
	\draw[rotate=60,-<](-2.21,-0.1)--(-2.1,-0.15);
	\draw[rotate=60](-2.59,-0.1)--(-2.8,-0.2);
	\draw[rotate=60,->](-2.59,-0.1)--(-2.7,-0.15);
	\draw[rotate=60](-2.21,-0.1)--(-2,-0.2);
	\draw[rotate=60,-<](-2.21,0.1)--(-2.1,0.15);
	\draw[rotate=60](-1.6,0)--(-2,0.2);
	\draw[rotate=60,->](-1.6,0)--(-1.9,0.15);
	\draw[rotate=60](-1.6,0)--(-2,-0.2);
	\draw[rotate=60,->](-1.6,0)--(-1.9,-0.15);
	\draw[rotate=60](-1.6,0)--(-1.2,0.2);
	\draw[rotate=60,-<](-1.6,0)--(-1.3,0.15);
	\draw[rotate=60](-1.6,0)--(-1.2,-0.2);
	\draw[rotate=60,-<](-1.6,0)--(-1.3,-0.15);
	\draw[rotate=60](-0.61,0.1)--(-0.4,0.2);
	\draw[rotate=60,-<](-0.61,0.1)--(-0.5,0.15);
	\draw[rotate=60](-0.99,0.1)--(-1.2,0.2);
	\draw[rotate=60,->](-0.99,-0.1)--(-1.1,-0.15);
	\draw[rotate=60](-0.61,-0.1)--(-0.4,-0.2);
	\draw[rotate=60,->](-0.99,0.1)--(-1.1,0.15);
	\draw[rotate=60](-0.99,-0.1)--(-1.2,-0.2);
	\draw[rotate=60,-<](-0.61,-0.1)--(-0.5,-0.15);
	\draw[rotate=60](-0,0)--(-0.4,-0.2);
	\draw[rotate=60,->](-0,0)--(-0.3,-0.15);
	\draw[rotate=60](-0,0)--(-0.4,0.2);
	\draw[rotate=60,->](-0,0)--(-0.3,0.15);
	\draw[rotate=60](-0,0)--(0.4,-0.2);
	\draw[rotate=60,-<](-0,0)--(0.3,-0.15);
	\draw[rotate=60](-0,0)--(0.4,0.2);
	\draw[rotate=60,-<](-0,0)--(0.3,0.15);
		\draw[rotate=120](0.61,-0.1)--(0.4,-0.2);
	\draw[rotate=120,-<](0.99,0.1)--(1.1,0.15);
	\draw[rotate=120](0.99,-0.1)--(1.2,-0.2);
	\draw[rotate=120,->](0.61,-0.1)--(0.5,-0.15);
	\draw[rotate=120](0.61,0.1)--(0.4,0.2);
	\draw[rotate=120,->](0.61,0.1)--(0.5,0.15);
	\draw[rotate=120](0.99,0.1)--(1.2,0.2);
	\draw[rotate=120,-<](0.99,-0.1)--(1.1,-0.15);
	\draw[rotate=120](2.57,0.09)--(2.8,0.2);
	\draw[rotate=120,-<](2.57,0.09)--(2.7,0.15);
	\draw[rotate=120](2.21,0.1)--(2,0.2);
	\draw[rotate=120,->](2.21,-0.1)--(2.1,-0.15);
	\draw[rotate=120](2.59,-0.1)--(2.8,-0.2);
	\draw[rotate=120,-<](2.59,-0.1)--(2.7,-0.15);
	\draw[rotate=120](2.21,-0.1)--(2,-0.2);
	\draw[rotate=120,->](2.21,0.1)--(2.1,0.15);
	\draw[rotate=120](1.6,0)--(2,0.2);
	\draw[rotate=120,-<](1.6,0)--(1.9,0.15);
	\draw[rotate=120](1.6,0)--(2,-0.2);
	\draw[rotate=120,-<](1.6,0)--(1.9,-0.15);
	\draw[rotate=120](1.6,0)--(1.2,0.2);
	\draw[rotate=120,->](1.6,0)--(1.3,0.15);
	\draw[rotate=120](1.6,0)--(1.2,-0.2);
	\draw[rotate=120,->](1.6,0)--(1.3,-0.15);
	\draw[rotate=120](-2.57,0.09)--(-2.8,0.2);
	\draw[rotate=120,->](-2.57,0.09)--(-2.7,0.15);
	\draw[rotate=120](-2.21,0.1)--(-2,0.2);
	\draw[rotate=120,-<](-2.21,-0.1)--(-2.1,-0.15);
	\draw[rotate=120](-2.59,-0.1)--(-2.8,-0.2);
	\draw[rotate=120,->](-2.59,-0.1)--(-2.7,-0.15);
	\draw[rotate=120](-2.21,-0.1)--(-2,-0.2);
	\draw[rotate=120,-<](-2.21,0.1)--(-2.1,0.15);
	\draw[rotate=120](-1.6,0)--(-2,0.2);
	\draw[rotate=120,->](-1.6,0)--(-1.9,0.15);
	\draw[rotate=120](-1.6,0)--(-2,-0.2);
	\draw[rotate=120,->](-1.6,0)--(-1.9,-0.15);
	\draw[rotate=120](-1.6,0)--(-1.2,0.2);
	\draw[rotate=120,-<](-1.6,0)--(-1.3,0.15);
	\draw[rotate=120](-1.6,0)--(-1.2,-0.2);
	\draw[rotate=120,-<](-1.6,0)--(-1.3,-0.15);
	\draw[rotate=120](-0.61,0.1)--(-0.4,0.2);
	\draw[rotate=120,-<](-0.61,0.1)--(-0.5,0.15);
	\draw[rotate=120](-0.99,0.1)--(-1.2,0.2);
	\draw[rotate=120,->](-0.99,-0.1)--(-1.1,-0.15);
	\draw[rotate=120](-0.61,-0.1)--(-0.4,-0.2);
	\draw[rotate=120,->](-0.99,0.1)--(-1.1,0.15);
	\draw[rotate=120](-0.99,-0.1)--(-1.2,-0.2);
	\draw[rotate=120,-<](-0.61,-0.1)--(-0.5,-0.15);
	\draw[rotate=120](-0,0)--(-0.4,-0.2);
	\draw[rotate=120,->](-0,0)--(-0.3,-0.15);
	\draw[rotate=120](-0,0)--(-0.4,0.2);
	\draw[rotate=120,->](-0,0)--(-0.3,0.15);
	\draw[rotate=120](-0,0)--(0.4,-0.2);
	\draw[rotate=120,-<](-0,0)--(0.3,-0.15);
	\draw[rotate=120](-0,0)--(0.4,0.2);
	\draw[rotate=120,-<](-0,0)--(0.3,0.15);
	\draw[thick,red](0.8,0) circle (0.2);
	\draw[thick,red](2.4,0) circle (0.2);
	\draw[thick,red](-0.8,0) circle (0.2);
	\draw[thick,red](-2.4,0) circle (0.2);
	\draw[thick,red](0.4,0.69) circle (0.2);
	\draw[thick,red](0.4,-0.69) circle (0.2);
	\draw[thick,red](-0.4,0.69) circle (0.2);
	\draw[thick,red](-0.4,-0.69) circle (0.2);
	\draw[thick,red](1.2,2.08) circle (0.2);
	\draw[thick,red](1.2,-2.08) circle (0.2);
	\draw[thick,red](-1.2,2.08) circle (0.2);
	\draw[thick,red](-1.2,-2.08) circle (0.2);
	\coordinate (A) at (-2.4,0);
	\fill (A) circle (1pt) node[below] {\scriptsize$\xi_4$};
	\coordinate (b) at (-0.8,0);
	\fill (b) circle (1pt) node[below] {\scriptsize$\xi_3$};
	\coordinate (e) at (2.4,0);
	\fill (e) circle (1pt) node[below] {\scriptsize$\xi_1$};
	\coordinate (f) at (0.8,0);
	\fill (f) circle (1pt) node[below] {\scriptsize$\xi_2$};
	\coordinate (A1) at (0.4,0.69);
	\fill[blue] (A1) circle (1pt) node[right] {\scriptsize$\omega^2\xi_3$};
	\coordinate (A2) at (0.4,-0.69);
	\fill[teal] (A2) circle (1pt) node[right] {\scriptsize$\omega\xi_3$};
	\coordinate (A3) at (-0.4,0.69);
	\fill[teal] (A3) circle (1pt) node[left] {\scriptsize$\omega\xi_2$};
	\coordinate (A4) at (-0.4,-0.69);
	\fill[blue] (A4) circle (1pt) node[left] {\scriptsize$\omega^2\xi_2$};
	\coordinate (n1) at (1.2,2.08);
	\fill[blue] (n1) circle (1pt) node[right] {\scriptsize$\omega^2\xi_4$};
	\coordinate (m1) at (1.2,-2.08);
	\fill[teal] (m1) circle (1pt) node[right] {\scriptsize$\omega\xi_4$};
	\coordinate (j1) at (-1.2,2.08);
	\fill[teal] (j1) circle (1pt) node[left] {\scriptsize$\omega\xi_1$};
	\coordinate[blue] (k1) at (-1.2,-2.08);
	\fill[blue] (k1) circle (1pt) node[left] {\scriptsize$\omega^2\xi_1$};
	\coordinate (kap1) at (1.2,0);
	\fill[orange] (kap1) circle (1pt) node[below] {\scriptsize$1$};
	\draw[orange](1.1,-0.1)--(1.1,0.1)--(1.3,0.1)--(1.3,-0.1)--(1.1,-0.1);
	\coordinate (kap4) at (-1.2,0);
	\fill[orange] (kap4) circle (1pt) node[below] {\scriptsize$-1$};
	\draw[orange](-1.1,-0.1)--(-1.1,0.1)--(-1.3,0.1)--(-1.3,-0.1)--(-1.1,-0.1);
	\draw[rotate=60,orange](1.1,-0.1)--(1.1,0.1)--(1.3,0.1)--(1.3,-0.1)--(1.1,-0.1);		\draw[rotate=60,orange](-1.1,-0.1)--(-1.1,0.1)--(-1.3,0.1)--(-1.3,-0.1)--(-1.1,-0.1);
	\draw[rotate=120,orange](1.1,-0.1)--(1.1,0.1)--(1.3,0.1)--(1.3,-0.1)--(1.1,-0.1);		\draw[rotate=120,orange](-1.1,-0.1)--(-1.1,0.1)--(-1.3,0.1)--(-1.3,-0.1)--(-1.1,-0.1);
	\coordinate[orange] (kap2) at (0.6,1.039);
	\fill[orange] (kap2) circle (1pt) node[right] {\scriptsize$\varkappa_2$};
	\coordinate[orange] (kap3) at (-0.6,1.039);
	\fill[orange] (kap3) circle (1pt) node[right] {\scriptsize$\varkappa_3$};
	\coordinate[orange] (kap6) at (0.6,-1.039);
	\fill[orange] (kap6) circle (1pt) node[right] {\scriptsize$\varkappa_6$};
	\coordinate[orange] (kap5) at (-0.6,-1.039);
	\fill[orange] (kap5) circle (1pt) node[right] {\scriptsize$\varkappa_5$};
\end{tikzpicture}
}
\subfigure[]{
\begin{tikzpicture}
\draw[dashed](-6,0)--(6,0)node[right]{ Re$z$};
	\draw(-2.57,0.09)--(-2.8,0.2);
	\draw[->](-2.57,0.09)--(-2.7,0.15);
\draw(-2.21,0.1)--(-2,0.2);
\draw[-<](-2.21,-0.1)--(-2.1,-0.15);
\draw(-2.59,-0.1)--(-2.8,-0.2);
\draw[->](-2.59,-0.1)--(-2.7,-0.15);
\draw(-2.21,-0.1)--(-2,-0.2);
\draw[-<](-2.21,0.1)--(-2.1,0.15);
\draw(-1.6,0)--(-2,0.2);
\draw[->](-1.6,0)--(-1.9,0.15);
\draw(-1.6,0)--(-2,-0.2);
\draw[->](-1.6,0)--(-1.9,-0.15);
\draw(-1.6,0)--(-1.2,0.2);
\draw[-<](-1.6,0)--(-1.3,0.15);
\draw(-1.6,0)--(-1.2,-0.2);
\draw[-<](-1.6,0)--(-1.3,-0.15);
\draw(-0.61,0.1)--(-0.4,0.2);
\draw[-<](-0.61,0.1)--(-0.5,0.15);
\draw(-0.99,0.1)--(-1.2,0.2);
\draw[->](-0.99,-0.1)--(-1.1,-0.15);
\draw(-0.61,-0.1)--(-0.4,-0.2);
\draw[->](-0.99,0.1)--(-1.1,0.15);
\draw(-0.99,-0.1)--(-1.2,-0.2);
\draw[-<](-0.61,-0.1)--(-0.5,-0.15);
\draw(-0,0)--(-0.4,-0.2);
\draw[->](-0,0)--(-0.3,-0.15);
\draw(-0,0)--(-0.4,0.2);
\draw[->](-0,0)--(-0.3,0.15);
\draw(-0,0)--(0.4,-0.2);
\draw[-<](-0,0)--(0.3,-0.15);
\draw(-0,0)--(0.4,0.2);
\draw[-<](-0,0)--(0.3,0.15);
\draw(2.57,0.09)--(2.8,0.2);
\draw[-<](2.57,0.09)--(2.7,0.15);
\draw(2.21,0.1)--(2,0.2);
\draw[->](2.21,-0.1)--(2.1,-0.15);
\draw(2.59,-0.1)--(2.8,-0.2);
\draw[-<](2.59,-0.1)--(2.7,-0.15);
\draw(2.21,-0.1)--(2,-0.2);
\draw[->](2.21,0.1)--(2.1,0.15);
\draw(1.6,0)--(2,0.2);
\draw[-<](1.6,0)--(1.9,0.15);
\draw(1.6,0)--(2,-0.2);
\draw[-<](1.6,0)--(1.9,-0.15);
\draw(1.6,0)--(1.2,0.2);
\draw[->](1.6,0)--(1.3,0.15);
\draw(1.6,0)--(1.2,-0.2);
\draw[->](1.6,0)--(1.3,-0.15);
\draw(0.61,0.1)--(0.4,0.2);
\draw[->](0.61,0.1)--(0.5,0.15);
\draw(0.99,0.1)--(1.2,0.2);
\draw[-<](0.99,-0.1)--(1.1,-0.15);
\draw(0.61,-0.1)--(0.4,-0.2);
\draw[-<](0.99,0.1)--(1.1,0.15);
\draw(0.99,-0.1)--(1.2,-0.2);
\draw[->](0.61,-0.1)--(0.5,-0.15);
\draw[rotate=60](0.61,-0.1)--(0.4,-0.2);
\draw[rotate=60,-<](0.99,0.1)--(1.1,0.15);
\draw[rotate=60](0.99,-0.1)--(1.2,-0.2);
\draw[rotate=60,->](0.61,-0.1)--(0.5,-0.15);
\draw[rotate=60](0.61,0.1)--(0.4,0.2);
\draw[rotate=60,->](0.61,0.1)--(0.5,0.15);
\draw[rotate=60](0.99,0.1)--(1.2,0.2);
\draw[rotate=60,-<](0.99,-0.1)--(1.1,-0.15);
\draw[rotate=60](2.57,0.09)--(2.8,0.2);
\draw[rotate=60,-<](2.57,0.09)--(2.7,0.15);
\draw[rotate=60](2.21,0.1)--(2,0.2);
\draw[rotate=60,->](2.21,-0.1)--(2.1,-0.15);
\draw[rotate=60](2.59,-0.1)--(2.8,-0.2);
\draw[rotate=60,-<](2.59,-0.1)--(2.7,-0.15);
\draw[rotate=60](2.21,-0.1)--(2,-0.2);
\draw[rotate=60,->](2.21,0.1)--(2.1,0.15);
\draw[rotate=60](1.6,0)--(2,0.2);
\draw[rotate=60,-<](1.6,0)--(1.9,0.15);
\draw[rotate=60](1.6,0)--(2,-0.2);
\draw[rotate=60,-<](1.6,0)--(1.9,-0.15);
\draw[rotate=60](1.6,0)--(1.2,0.2);
\draw[rotate=60,->](1.6,0)--(1.3,0.15);
\draw[rotate=60](1.6,0)--(1.2,-0.2);
\draw[rotate=60,->](1.6,0)--(1.3,-0.15);
\draw[rotate=60](-2.57,0.09)--(-2.8,0.2);
\draw[rotate=60,->](-2.57,0.09)--(-2.7,0.15);
\draw[rotate=60](-2.21,0.1)--(-2,0.2);
\draw[rotate=60,-<](-2.21,-0.1)--(-2.1,-0.15);
\draw[rotate=60](-2.59,-0.1)--(-2.8,-0.2);
\draw[rotate=60,->](-2.59,-0.1)--(-2.7,-0.15);
\draw[rotate=60](-2.21,-0.1)--(-2,-0.2);
\draw[rotate=60,-<](-2.21,0.1)--(-2.1,0.15);
\draw[rotate=60](-1.6,0)--(-2,0.2);
\draw[rotate=60,->](-1.6,0)--(-1.9,0.15);
\draw[rotate=60](-1.6,0)--(-2,-0.2);
\draw[rotate=60,->](-1.6,0)--(-1.9,-0.15);
\draw[rotate=60](-1.6,0)--(-1.2,0.2);
\draw[rotate=60,-<](-1.6,0)--(-1.3,0.15);
\draw[rotate=60](-1.6,0)--(-1.2,-0.2);
\draw[rotate=60,-<](-1.6,0)--(-1.3,-0.15);
\draw[rotate=60](-0.61,0.1)--(-0.4,0.2);
\draw[rotate=60,-<](-0.61,0.1)--(-0.5,0.15);
\draw[rotate=60](-0.99,0.1)--(-1.2,0.2);
\draw[rotate=60,->](-0.99,-0.1)--(-1.1,-0.15);
\draw[rotate=60](-0.61,-0.1)--(-0.4,-0.2);
\draw[rotate=60,->](-0.99,0.1)--(-1.1,0.15);
\draw[rotate=60](-0.99,-0.1)--(-1.2,-0.2);
\draw[rotate=60,-<](-0.61,-0.1)--(-0.5,-0.15);
\draw[rotate=60](-0,0)--(-0.4,-0.2);
\draw[rotate=60,->](-0,0)--(-0.3,-0.15);
\draw[rotate=60](-0,0)--(-0.4,0.2);
\draw[rotate=60,->](-0,0)--(-0.3,0.15);
\draw[rotate=60](-0,0)--(0.4,-0.2);
\draw[rotate=60,-<](-0,0)--(0.3,-0.15);
\draw[rotate=60](-0,0)--(0.4,0.2);
\draw[rotate=60,-<](-0,0)--(0.3,0.15);
\draw[rotate=120](0.61,-0.1)--(0.4,-0.2);
\draw[rotate=120,-<](0.99,0.1)--(1.1,0.15);
\draw[rotate=120](0.99,-0.1)--(1.2,-0.2);
\draw[rotate=120,->](0.61,-0.1)--(0.5,-0.15);
\draw[rotate=120](0.61,0.1)--(0.4,0.2);
\draw[rotate=120,->](0.61,0.1)--(0.5,0.15);
\draw[rotate=120](0.99,0.1)--(1.2,0.2);
\draw[rotate=120,-<](0.99,-0.1)--(1.1,-0.15);
\draw[rotate=120](2.57,0.09)--(2.8,0.2);
\draw[rotate=120,-<](2.57,0.09)--(2.7,0.15);
\draw[rotate=120](2.21,0.1)--(2,0.2);
\draw[rotate=120,->](2.21,-0.1)--(2.1,-0.15);
\draw[rotate=120](2.59,-0.1)--(2.8,-0.2);
\draw[rotate=120,-<](2.59,-0.1)--(2.7,-0.15);
\draw[rotate=120](2.21,-0.1)--(2,-0.2);
\draw[rotate=120,->](2.21,0.1)--(2.1,0.15);
\draw[rotate=120](1.6,0)--(2,0.2);
\draw[rotate=120,-<](1.6,0)--(1.9,0.15);
\draw[rotate=120](1.6,0)--(2,-0.2);
\draw[rotate=120,-<](1.6,0)--(1.9,-0.15);
\draw[rotate=120](1.6,0)--(1.2,0.2);
\draw[rotate=120,->](1.6,0)--(1.3,0.15);
\draw[rotate=120](1.6,0)--(1.2,-0.2);
\draw[rotate=120,->](1.6,0)--(1.3,-0.15);
\draw[rotate=120](-2.57,0.09)--(-2.8,0.2);
\draw[rotate=120,->](-2.57,0.09)--(-2.7,0.15);
\draw[rotate=120](-2.21,0.1)--(-2,0.2);
\draw[rotate=120,-<](-2.21,-0.1)--(-2.1,-0.15);
\draw[rotate=120](-2.59,-0.1)--(-2.8,-0.2);
\draw[rotate=120,->](-2.59,-0.1)--(-2.7,-0.15);
\draw[rotate=120](-2.21,-0.1)--(-2,-0.2);
\draw[rotate=120,-<](-2.21,0.1)--(-2.1,0.15);
\draw[rotate=120](-1.6,0)--(-2,0.2);
\draw[rotate=120,->](-1.6,0)--(-1.9,0.15);
\draw[rotate=120](-1.6,0)--(-2,-0.2);
\draw[rotate=120,->](-1.6,0)--(-1.9,-0.15);
\draw[rotate=120](-1.6,0)--(-1.2,0.2);
\draw[rotate=120,-<](-1.6,0)--(-1.3,0.15);
\draw[rotate=120](-1.6,0)--(-1.2,-0.2);
\draw[rotate=120,-<](-1.6,0)--(-1.3,-0.15);
\draw[rotate=120](-0.61,0.1)--(-0.4,0.2);
\draw[rotate=120,-<](-0.61,0.1)--(-0.5,0.15);
\draw[rotate=120](-0.99,0.1)--(-1.2,0.2);
\draw[rotate=120,->](-0.99,-0.1)--(-1.1,-0.15);
\draw[rotate=120](-0.61,-0.1)--(-0.4,-0.2);
\draw[rotate=120,->](-0.99,0.1)--(-1.1,0.15);
\draw[rotate=120](-0.99,-0.1)--(-1.2,-0.2);
\draw[rotate=120,-<](-0.61,-0.1)--(-0.5,-0.15);
\draw[rotate=120](-0,0)--(-0.4,-0.2);
\draw[rotate=120,->](-0,0)--(-0.3,-0.15);
\draw[rotate=120](-0,0)--(-0.4,0.2);
\draw[rotate=120,->](-0,0)--(-0.3,0.15);
\draw[rotate=120](-0,0)--(0.4,-0.2);
\draw[rotate=120,-<](-0,0)--(0.3,-0.15);
\draw[rotate=120](-0,0)--(0.4,0.2);
\draw[rotate=120,-<](-0,0)--(0.3,0.15);
\draw[rotate=120](3.4,0.2)--(2.8,-0.2);
\draw[rotate=120](3.4,-0.2)--(2.8,0.2);
\draw[rotate=120,->](3.1,0)--(2.9,-0.14);
\draw[rotate=120,->](3.1,0)--(2.9,0.14);
\draw[rotate=120,-<](3.1,0)--(3.3,-0.14);
\draw[rotate=120,-<](3.1,0)--(3.3,0.14);
\draw[rotate=120](3.6,-0.1)--(3.4,-0.2);
\draw[rotate=120,->](3.6,-0.1)--(3.5,-0.14);
\draw[rotate=120](3.6,0.1)--(3.4,0.2);
\draw[rotate=120,->](3.6,0.1)--(3.5,0.14);
\draw[rotate=120](4,0.1)--(4.2,0.2);
\draw[rotate=120,-<](4,0.1)--(4.1,0.14);
\draw[rotate=120](4,-0.1)--(4.2,-0.2);
\draw[rotate=120,-<](4,-0.1)--(4.1,-0.14);
\draw[rotate=120](4.8,0.2)--(4.2,-0.2);
\draw[rotate=120](4.8,-0.2)--(4.2,0.2);
\draw[rotate=120,-<](4.5,-0)--(4.7,-0.14);
\draw[rotate=120,-<](4.5,-0)--(4.7,0.14);
\draw[rotate=120,->](4.5,-0)--(4.3,-0.14);
\draw[rotate=120,->](4.5,-0)--(4.3,0.14);
\draw[rotate=120](5,-0.1)--(4.8,-0.2);
\draw[rotate=120,->](5,-0.1)--(4.9,-0.15);
\draw[rotate=120](5,0.1)--(4.8,0.2);
\draw[rotate=120,->](5,0.1)--(4.9,0.15);
\draw[rotate=120](5.6,-0.2)--(5.4,-0.1);
\draw[rotate=120,-<](5.4,-0.1)--(5.5,-0.15);
\draw[rotate=120](5.6,0.2)--(5.4,0.1);
\draw[rotate=120,-<](5.4,0.1)--(5.5,0.15);
\draw[rotate=120](-3.4,0.2)--(-2.8,-0.2);
\draw[rotate=120](-3.4,-0.2)--(-2.8,0.2);
\draw[rotate=120,-<](-3.1,0)--(-2.9,-0.14);
\draw[rotate=120,-<](-3.1,0)--(-2.9,0.14);
\draw[rotate=120,->](-3.1,0)--(-3.3,-0.14);
\draw[rotate=120,->](-3.1,0)--(-3.3,0.14);
\draw[rotate=120](-3.6,-0.1)--(-3.4,-0.2);
\draw[rotate=120,-<](-3.6,-0.1)--(-3.5,-0.14);
\draw[rotate=120](-3.6,0.1)--(-3.4,0.2);
\draw[rotate=120,-<](-3.6,0.1)--(-3.5,0.14);
\draw[rotate=120](-4,0.1)--(-4.2,0.2);
\draw[rotate=120,->](-4,0.1)--(-4.1,0.14);
\draw[rotate=120](-4,-0.1)--(-4.2,-0.2);
\draw[rotate=120,->](-4,-0.1)--(-4.1,-0.14);
\draw[rotate=120](-4.8,0.2)--(-4.2,-0.2);
\draw[rotate=120](-4.8,-0.2)--(-4.2,0.2);
\draw[rotate=120,->](-4.5,-0)--(-4.7,-0.14);
\draw[rotate=120,->](-4.5,-0)--(-4.7,0.14);
\draw[rotate=120,-<](-4.5,-0)--(-4.3,-0.14);
\draw[rotate=120,-<](-4.5,-0)--(-4.3,0.14);
\draw[rotate=120](-5,-0.1)--(-4.8,-0.2);
\draw[rotate=120,-<](-5,-0.1)--(-4.9,-0.15);
\draw[rotate=120](-5,0.1)--(-4.8,0.2);
\draw[rotate=120,-<](-5,0.1)--(-4.9,0.15);
\draw[rotate=120](-5.6,-0.2)--(-5.4,-0.1);
\draw[rotate=120,->](-5.4,-0.1)--(-5.5,-0.15);
\draw[rotate=120](-5.6,0.2)--(-5.4,0.1);
\draw[rotate=120,->](-5.4,0.1)--(-5.5,0.15);
\draw(3.4,0.2)--(2.8,-0.2);
\draw(3.4,-0.2)--(2.8,0.2);
\draw[->](3.1,0)--(2.9,-0.14);
\draw[->](3.1,0)--(2.9,0.14);
\draw[-<](3.1,0)--(3.3,-0.14);
\draw[-<](3.1,0)--(3.3,0.14);
\draw(3.6,-0.1)--(3.4,-0.2);
\draw[->](3.6,-0.1)--(3.5,-0.14);
\draw(3.6,0.1)--(3.4,0.2);
\draw[->](3.6,0.1)--(3.5,0.14);
\draw(4,0.1)--(4.2,0.2);
\draw[-<](4,0.1)--(4.1,0.14);
\draw(4,-0.1)--(4.2,-0.2);
\draw[-<](4,-0.1)--(4.1,-0.14);
\draw(4.8,0.2)--(4.2,-0.2);
\draw(4.8,-0.2)--(4.2,0.2);
\draw[-<](4.5,-0)--(4.7,-0.14);
\draw[-<](4.5,-0)--(4.7,0.14);
\draw[->](4.5,-0)--(4.3,-0.14);
\draw[->](4.5,-0)--(4.3,0.14);
\draw(5,-0.1)--(4.8,-0.2);
\draw[->](5,-0.1)--(4.9,-0.15);
\draw(5,0.1)--(4.8,0.2);
\draw[->](5,0.1)--(4.9,0.15);
\draw(5.6,-0.2)--(5.4,-0.1);
\draw[-<](5.4,-0.1)--(5.5,-0.15);
\draw(5.6,0.2)--(5.4,0.1);
\draw[-<](5.4,0.1)--(5.5,0.15);
\draw(-3.4,0.2)--(-2.8,-0.2);
\draw(-3.4,-0.2)--(-2.8,0.2);
\draw[-<](-3.1,0)--(-2.9,-0.14);
\draw[-<](-3.1,0)--(-2.9,0.14);
\draw[->](-3.1,0)--(-3.3,-0.14);
\draw[->](-3.1,0)--(-3.3,0.14);
\draw(-3.6,-0.1)--(-3.4,-0.2);
\draw[-<](-3.6,-0.1)--(-3.5,-0.14);
\draw(-3.6,0.1)--(-3.4,0.2);
\draw[-<](-3.6,0.1)--(-3.5,0.14);
\draw(-4,0.1)--(-4.2,0.2);
\draw[->](-4,0.1)--(-4.1,0.14);
\draw(-4,-0.1)--(-4.2,-0.2);
\draw[->](-4,-0.1)--(-4.1,-0.14);
\draw(-4.8,0.2)--(-4.2,-0.2);
\draw(-4.8,-0.2)--(-4.2,0.2);
\draw[->](-4.5,-0)--(-4.7,-0.14);
\draw[->](-4.5,-0)--(-4.7,0.14);
\draw[-<](-4.5,-0)--(-4.3,-0.14);
\draw[-<](-4.5,-0)--(-4.3,0.14);
\draw(-5,-0.1)--(-4.8,-0.2);
\draw[-<](-5,-0.1)--(-4.9,-0.15);
\draw(-5,0.1)--(-4.8,0.2);
\draw[-<](-5,0.1)--(-4.9,0.15);
\draw(-5.6,-0.2)--(-5.4,-0.1);
\draw[->](-5.4,-0.1)--(-5.5,-0.15);
\draw(-5.6,0.2)--(-5.4,0.1);
\draw[->](-5.4,0.1)--(-5.5,0.15);
\draw[rotate=60](3.4,0.2)--(2.8,-0.2);
\draw[rotate=60](3.4,-0.2)--(2.8,0.2);
\draw[rotate=60,->](3.1,0)--(2.9,-0.14);
\draw[rotate=60,->](3.1,0)--(2.9,0.14);
\draw[rotate=60,-<](3.1,0)--(3.3,-0.14);
\draw[rotate=60,-<](3.1,0)--(3.3,0.14);
\draw[rotate=60](3.6,-0.1)--(3.4,-0.2);
\draw[rotate=60,->](3.6,-0.1)--(3.5,-0.14);
\draw[rotate=60](3.6,0.1)--(3.4,0.2);
\draw[rotate=60,->](3.6,0.1)--(3.5,0.14);
\draw[rotate=60](4,0.1)--(4.2,0.2);
\draw[rotate=60,-<](4,0.1)--(4.1,0.14);
\draw[rotate=60](4,-0.1)--(4.2,-0.2);
\draw[rotate=60,-<](4,-0.1)--(4.1,-0.14);
\draw[rotate=60](4.8,0.2)--(4.2,-0.2);
\draw[rotate=60](4.8,-0.2)--(4.2,0.2);
\draw[rotate=60,-<](4.5,-0)--(4.7,-0.14);
\draw[rotate=60,-<](4.5,-0)--(4.7,0.14);
\draw[rotate=60,->](4.5,-0)--(4.3,-0.14);
\draw[rotate=60,->](4.5,-0)--(4.3,0.14);
\draw[rotate=60](5,-0.1)--(4.8,-0.2);
\draw[rotate=60,->](5,-0.1)--(4.9,-0.15);
\draw[rotate=60](5,0.1)--(4.8,0.2);
\draw[rotate=60,->](5,0.1)--(4.9,0.15);
\draw[rotate=60](5.6,-0.2)--(5.4,-0.1);
\draw[rotate=60,-<](5.4,-0.1)--(5.5,-0.15);
\draw[rotate=60](5.6,0.2)--(5.4,0.1);
\draw[rotate=60,-<](5.4,0.1)--(5.5,0.15);
\draw[rotate=60](-3.4,0.2)--(-2.8,-0.2);
\draw[rotate=60](-3.4,-0.2)--(-2.8,0.2);
\draw[rotate=60,-<](-3.1,0)--(-2.9,-0.14);
\draw[rotate=60,-<](-3.1,0)--(-2.9,0.14);
\draw[rotate=60,->](-3.1,0)--(-3.3,-0.14);
\draw[rotate=60,->](-3.1,0)--(-3.3,0.14);
\draw[rotate=60](-3.6,-0.1)--(-3.4,-0.2);
\draw[rotate=60,-<](-3.6,-0.1)--(-3.5,-0.14);
\draw[rotate=60](-3.6,0.1)--(-3.4,0.2);
\draw[rotate=60,-<](-3.6,0.1)--(-3.5,0.14);
\draw[rotate=60](-4,0.1)--(-4.2,0.2);
\draw[rotate=60,->](-4,0.1)--(-4.1,0.14);
\draw[rotate=60](-4,-0.1)--(-4.2,-0.2);
\draw[rotate=60,->](-4,-0.1)--(-4.1,-0.14);
\draw[rotate=60](-4.8,0.2)--(-4.2,-0.2);
\draw[rotate=60](-4.8,-0.2)--(-4.2,0.2);
\draw[rotate=60,->](-4.5,-0)--(-4.7,-0.14);
\draw[rotate=60,->](-4.5,-0)--(-4.7,0.14);
\draw[rotate=60,-<](-4.5,-0)--(-4.3,-0.14);
\draw[rotate=60,-<](-4.5,-0)--(-4.3,0.14);
\draw[rotate=60](-5,-0.1)--(-4.8,-0.2);
\draw[rotate=60,-<](-5,-0.1)--(-4.9,-0.15);
\draw[rotate=60](-5,0.1)--(-4.8,0.2);
\draw[rotate=60,-<](-5,0.1)--(-4.9,0.15);
\draw[rotate=60](-5.6,-0.2)--(-5.4,-0.1);
\draw[rotate=60,->](-5.4,-0.1)--(-5.5,-0.15);
\draw[rotate=60](-5.6,0.2)--(-5.4,0.1);
\draw[rotate=60,->](-5.4,0.1)--(-5.5,0.15);
\draw[thick,red](0.8,0) circle (0.205);
\draw[thick,red](2.4,0) circle (0.205);
\draw[thick,red](-0.8,0) circle (0.205);
\draw[thick,red](-2.4,0) circle (0.205);
\draw[thick,red](0.4,0.69) circle (0.205);
\draw[thick,red](0.4,-0.69) circle (0.205);
\draw[thick,red](-0.4,0.69) circle (0.205);
\draw[thick,red](-0.4,-0.69) circle (0.205);
\draw[thick,red](1.2,2.08) circle (0.205);
\draw[thick,red](1.2,-2.08) circle (0.205);
\draw[thick,red](-1.2,2.08) circle (0.205);
\draw[thick,red](-1.2,-2.08) circle (0.205);
\draw[thick,red](1.9,3.29) circle (0.205);
\draw[thick,red](1.9,-3.29) circle (0.205);
\draw[thick,red](-1.9,3.29) circle (0.205);
\draw[thick,red](-1.9,-3.29) circle (0.205);
\draw[thick,red](2.6,4.5) circle (0.205);
\draw[thick,red](2.6,-4.5) circle (0.205);
\draw[thick,red](-2.6,4.5) circle (0.205);
\draw[thick,red](-2.6,-4.5) circle (0.205);
\draw[thick,red](3.8,0) circle (0.205);
\draw[thick,red](-3.8,0) circle (0.205);
\draw[thick,red](-5.2,0) circle (0.205);
\draw[thick,red](5.2,0) circle (0.205);
\draw[blue,dashed](0,0)--(3,5.196)node[above]{ $e^{\frac{ \pi i}{3}}$};	
\draw[teal,dashed](0,0)--(3,-5.196);
\draw[teal,dashed](0,0)--(-3,5.196)node[above]{ $e^{\frac{2 \pi i}{3}}$};
\draw[blue,dashed](0,0)--(-3,-5.196);
\draw[dashed] (0.8,0) arc (0:360:0.8);
\draw[dashed] (3.8,0) arc (0:360:3.8);
\draw[dashed] (2.4,0) arc (0:360:2.4);
\draw[dashed] (5.2,0) arc (0:360:5.2);
\coordinate (A) at (-5.2,0);
\fill (A) circle (1pt) node[below] {\scriptsize$\xi_8$};
\coordinate (b) at (-3.8,0);
\fill (b) circle (1pt) node[below] {\scriptsize$\xi_7$};
\coordinate (C) at (-0.8,0);
\fill (C) circle (1pt) node[below] {\scriptsize$\xi_5$};
\coordinate (d) at (-2.4,0);
\fill (d) circle (1pt) node[below] {\scriptsize$\xi_6$};
\coordinate (E) at (5.2,0);
\fill (E) circle (1pt) node[below] {\scriptsize$\xi_1$};
\coordinate (R) at (3.8,0);
\fill (R) circle (1pt) node[below] {\scriptsize$\xi_2$};
\coordinate (T) at (0.8,0);
\fill (T) circle (1pt) node[below] {\scriptsize$\xi_4$};
\coordinate (Y) at (2.4,0);
\fill (Y) circle (1pt) node[below] {\scriptsize$\xi_3$};
\coordinate (A1) at (0.4,0.69);
\fill[blue] (A1) circle (1pt) node[right] {\scriptsize$\omega^2\xi_5$};
\coordinate (A2) at (0.4,-0.69);
\fill[teal] (A2) circle (1pt) node[right] {\scriptsize$\omega\xi_5$};
\coordinate (A3) at (-0.4,0.69);
\fill[teal] (A3) circle (1pt) node[left] {\scriptsize$\omega\xi_4$};
\coordinate (A4) at (-0.4,-0.69);
\fill[blue] (A4) circle (1pt) node[left] {\scriptsize$\omega^2\xi_4$};
\coordinate (n1) at (1.9,3.29);
\fill[blue] (n1) circle (1pt) node[right] {\scriptsize$\omega^2\xi_7$};
\coordinate (m1) at (1.9,-3.29);
\fill[teal] (m1) circle (1pt) node[right] {\scriptsize$\omega\xi_7$};
\coordinate (j1) at (-1.9,3.29);
\fill[teal] (j1) circle (1pt) node[left] {\scriptsize$\omega\xi_2$};
\coordinate[blue] (k1) at (-1.9,-3.29);
\fill[blue] (k1) circle (1pt) node[left] {\scriptsize$\omega^2\xi_2$};
\coordinate (n2) at (1.2,2.08);
\fill[blue] (n2) circle (1pt) node[right] {\scriptsize$\omega^2\xi_6$};
\coordinate (m2) at (1.2,-2.08);
\fill[teal] (m2) circle (1pt) node[right] {\scriptsize$\omega\xi_6$};
\coordinate (j2) at (-1.2,2.08);
\fill[teal] (j2) circle (1pt) node[left] {\scriptsize$\omega\xi_3$};
\coordinate[blue] (k2) at (-1.2,-2.08);
\fill[blue] (k2) circle (1pt) node[left] {\scriptsize$\omega^2\xi_3$};
\coordinate (n23) at (2.6,4.5);
\fill[blue] (n23) circle (1pt) node[right] {\scriptsize$\omega^2\xi_8$};
\coordinate (m23) at (2.6,-4.5);
\fill[teal] (m23) circle (1pt) node[right] {\scriptsize$\omega\xi_8$};
\coordinate (j23) at (-2.6,4.5);
\fill[teal] (j23) circle (1pt) node[left] {\scriptsize$\omega\xi_1$};
\coordinate[blue] (k3) at (-2.6,-4.5);
\fill[blue] (k3) circle (1pt) node[left] {\scriptsize$\omega^2\xi_1$};
\coordinate (kap1) at (3.4,0);
\fill[orange] (kap1) circle (1pt) node[below] {\scriptsize$1$};
\coordinate (kap4) at (-3.4,0);
\fill[orange] (kap4) circle (1pt) node[below] {\scriptsize$-1$};
\coordinate (kap2) at (1.7,2.944);
\fill[orange] (kap2) circle (1pt) node[right] {\scriptsize$\varkappa_2$};
\coordinate (kap3) at (-1.7,2.944);
\fill[orange] (kap3) circle (1pt) node[right] {\scriptsize$\varkappa_3$};
\coordinate (kap6) at (1.7,-2.944);
\fill[orange] (kap6) circle (1pt) node[right] {\scriptsize$\varkappa_6$};
\coordinate (kap5) at (-1.7,-2.944);
\fill[orange] (kap5) circle (1pt) node[right] {\scriptsize$\varkappa_5$};
\draw[orange](3.3,-0.1)--(3.3,0.1)--(3.5,0.1)--(3.5,-0.1)--(3.3,-0.1);
\draw[rotate=60,orange](3.3,-0.1)--(3.3,0.1)--(3.5,0.1)--(3.5,-0.1)--(3.3,-0.1);
\draw[rotate=120,orange](3.3,-0.1)--(3.3,0.1)--(3.5,0.1)--(3.5,-0.1)--(3.3,-0.1);
\draw[rotate=180,orange](3.3,-0.1)--(3.3,0.1)--(3.5,0.1)--(3.5,-0.1)--(3.3,-0.1);
\draw[rotate=240,orange](3.3,-0.1)--(3.3,0.1)--(3.5,0.1)--(3.5,-0.1)--(3.3,-0.1);
\draw[rotate=300,orange](3.3,-0.1)--(3.3,0.1)--(3.5,0.1)--(3.5,-0.1)--(3.3,-0.1);
\end{tikzpicture}
}
\caption{\footnotesize   The jump contour $\Sigma^{E}$ for the $V^{E}$ when $\xi\in(-1/8,1)$. The red circles are $U(\xi)$, and orange  square are $\mathbb{B}_j$. }
	\label{figE}
\end{figure}
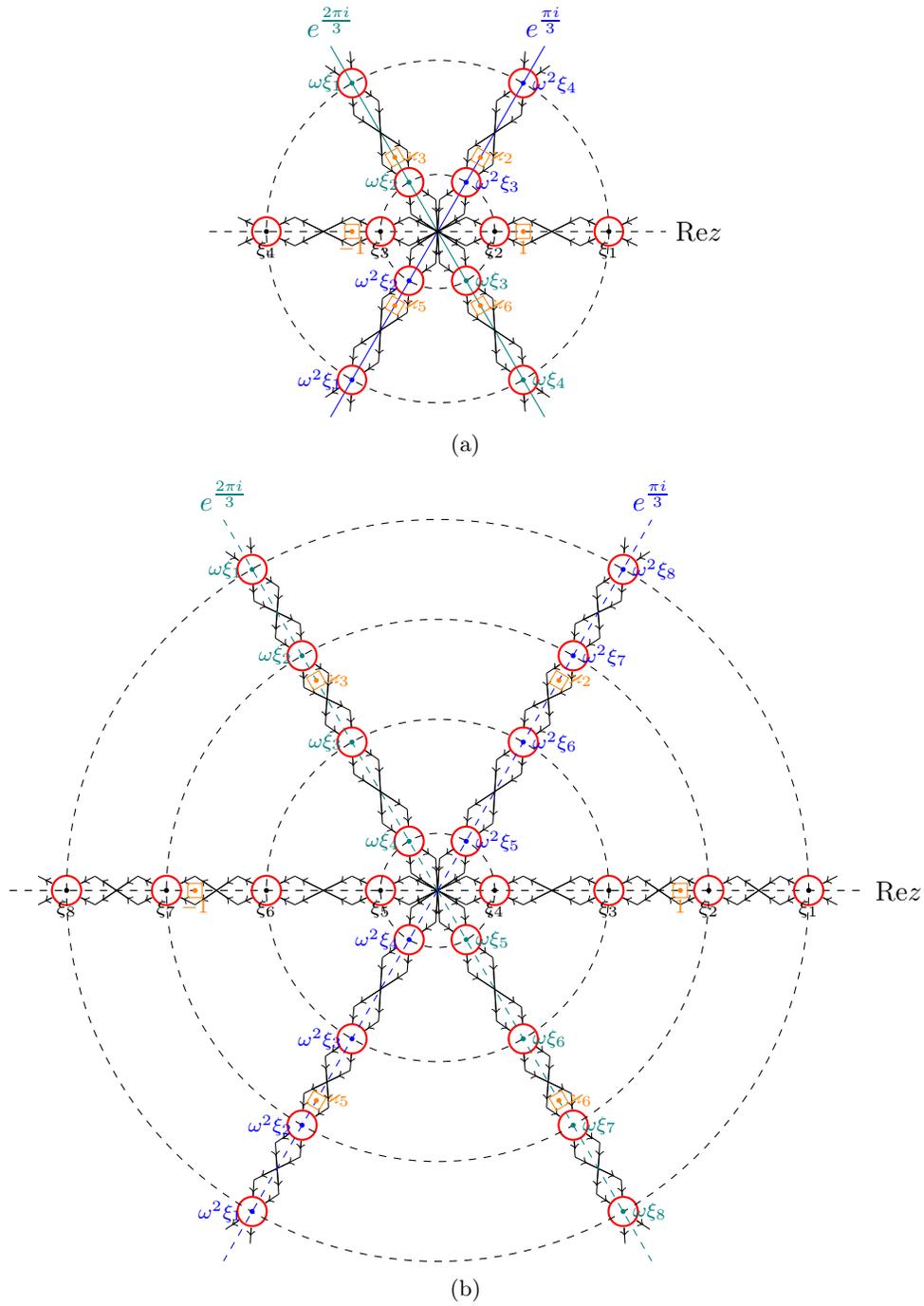
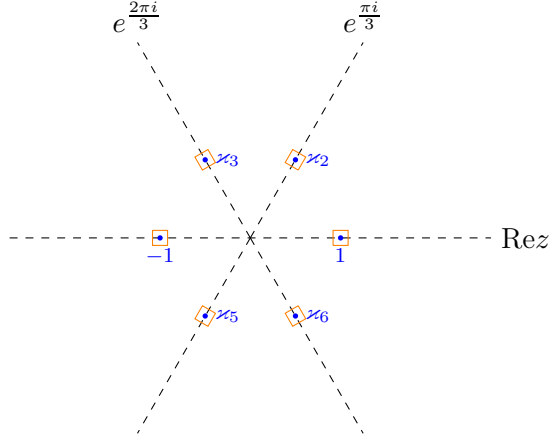
\begin{figure}[htp]
	\centering
		\begin{tikzpicture}
			\draw[dashed](-3.2,0)--(3.2,0)node[right]{ Re$z$};
			\draw[ dashed](0,0)--(1.5,2.6)node[above]{ $e^{\frac{ \pi i}{3}}$};
			\draw[ dashed](0,0)--(-1.5,2.6)node[above]{ $e^{\frac{2 \pi i}{3}}$};
			\draw[dashed](0,0)--(1.5,-2.6);
			\draw[ dashed](0,0)--(-1.5,-2.6);
			\coordinate (kap1) at (1.2,0);
			\fill[blue] (kap1) circle (1pt) node[below] {\scriptsize$1$};
			\draw[orange](1.1,-0.1)--(1.1,0.1)--(1.3,0.1)--(1.3,-0.1)--(1.1,-0.1);
			\coordinate (kap4) at (-1.2,0);
			\fill[blue] (kap4) circle (1pt) node[below] {\scriptsize$-1$};
			\draw[orange](-1.1,-0.1)--(-1.1,0.1)--(-1.3,0.1)--(-1.3,-0.1)--(-1.1,-0.1);
			\draw[rotate=60,orange](1.1,-0.1)--(1.1,0.1)--(1.3,0.1)--(1.3,-0.1)--(1.1,-0.1);		\draw[rotate=60,orange](-1.1,-0.1)--(-1.1,0.1)--(-1.3,0.1)--(-1.3,-0.1)--(-1.1,-0.1);
			\draw[rotate=120,orange](1.1,-0.1)--(1.1,0.1)--(1.3,0.1)--(1.3,-0.1)--(1.1,-0.1);		\draw[rotate=120,orange](-1.1,-0.1)--(-1.1,0.1)--(-1.3,0.1)--(-1.3,-0.1)--(-1.1,-0.1);
			\coordinate[orange] (kap2) at (0.6,1.039);
			\fill[blue] (kap2) circle (1pt) node[right]   {\scriptsize$\varkappa_2$};
			\coordinate[orange] (kap3) at (-0.6,1.039);
			\fill[blue] (kap3) circle (1pt) node[right] {\scriptsize$\varkappa_3$};
			\coordinate[orange] (kap6) at (0.6,-1.039);
			\fill[blue] (kap6) circle (1pt) node[right] {\scriptsize$\varkappa_6$};
			\coordinate[orange] (kap5)  at (-0.6,-1.039);
			\fill[blue] (kap5) circle (1pt) node[right] {\scriptsize$\varkappa_5$};
		\end{tikzpicture}	
	\caption{  The jump contour $\Sigma^{E}$ for the $V^{E}$ when $\xi\in(-\infty,-1/8)\cup(1,+\infty)$. The orange  square are $\mathbb{B}_j$. }
	\label{figE1}
\end{figure}
  We will show  that  the small norm  RH problem about  $E^{(2)}(z;\xi)$ is solvable for large $t$.
  By  {Proposition \ref{prov2}},  the jump matrix  $V^{E}$ admits   the following estimate
\begin{equation}
\parallel V^{E}(z)-I\parallel_p\lesssim\left\{\begin{array}{llll}
\exp\left\{-tK_p\right\},  & z\in \Sigma_{kj}\setminus U(\xi),\\[6pt]
\exp\left\{-tK_p'\right\},   & z\in \Sigma'_{kj}.
\end{array}\right. \label{VE-I}
\end{equation}
For $z\in \partial U(\xi)\cup\partial\mathbb{B}_j$,  $M^{(r)}(z)$ is bounded,  so   by  Proposition \ref{asymlo} and (\ref{asyMB}),
\begin{align}
&| V^{E}(z)-I|=   \big|M^{(r)}(z)^{-1}(M^{lo}(z)-I)M^{(r)}(z) \big| = \mathcal{O}(t^{-1/2}),\text{ }z\in\partial U(\xi)\nonumber\\
&| V^{E}(z)-I|=   \big|M^{(r)}(z)^{-1}(M^{B}_j(z)-I)M^{(r)}(z) \big| = \mathcal{O}(t^{-1}),\text{ }z\in\partial \mathbb{B}_j
.\label{VE}
\end{align}
Therefore,    the   existence and uniqueness  of  the RHP \ref{E2} can be shown  by using  a  small-norm RH problem \cite{RN9,RN10},  moreover  its solution   can be given by
\begin{equation}
E^{(2)}(z;\xi)=I+\frac{1}{2\pi i}\int_{\Sigma^{E}}\dfrac{\left( I+\varpi(s)\right) (V^{E}(s)-I)}{s-z}ds,\label{Ez}
\end{equation}
where the $\varpi\in L^\infty(\Sigma^{E})$ is the unique solution of following equation
\begin{equation}
(1-C_E)\varpi=C_E\left(I \right),
\end{equation}
where  $C_E$ is  a integral operator  defined by
\begin{equation}
C_E(f)(z)=\mathcal{P}^-\left( f(V^{E}(z) -I)\right) ,
\end{equation}
and  the $\mathcal{P}^-$ is the usual Cauchy projection operator on $\Sigma^{E}$.
By  (\ref{VE}),   we have
\begin{equation}
\parallel C_E\parallel\leq\parallel \mathcal{P}^-\parallel \parallel V^{E}(z)-I\parallel_2 \lesssim \mathcal{O}(t^{-1/2}),
\end{equation}
which implies that  $1-C_E$ is invertible  for   sufficiently large $t$.    So  $\varpi$  exists and is unique,
 moreover,
\begin{equation}
\parallel \varpi\parallel_{L^\infty(\Sigma^{E})}\lesssim\dfrac{\parallel C_E\parallel}{1-\parallel C_E\parallel}\lesssim t ^{-1/2},\label{normrho}
\end{equation}
which gives  the existence and uniqueness of $E^{(2)}$.
In order to reconstruct the solution $u(y,t)$ of (\ref{Novikov}), it is  necessary to consider the long time asymptotic behavior of $E^{(2)}(e^{\frac{i\pi}{6}})$. Note that when we estimate its  asymptotic behavior, from (\ref{Ez}) and (\ref{VE-I}) we only need to consider the calculation on $\partial U(\xi)$ because it  approaches zero exponentially on other boundary.
\begin{Proposition}
	As $z= e^{\frac{i\pi}{6}}$, we have
	\begin{align}
	E^{(2)}(e^{\frac{i\pi}{6}})=I+\frac{1}{2\pi i}\int_{\Sigma^{E}}\dfrac{\left( I+\varpi(s)\right) (V^{E}-I)}{s-e^{\frac{i\pi}{6}}}ds,
	\end{align}
	with long time asymptotic behavior
	\begin{equation}
	E^{(2)}(e^{\frac{i\pi}{6}})=I+t^{-1/2}H^{(0)}+\mathcal{O}(t^{-1+\rho}),\label{E0t}
	\end{equation}
where $H^{(0)}$  is explicitly  computed by
	\begin{align}
	H^{(0)}=&\sum_{ k=1 }^{p(\xi)}\frac{1}{2\pi i}\int_{\partial U_{\xi_k}}\dfrac{M^{(r)}(s)^{-1}A_k(\xi)M^{(r)}(s)}{(s-e^{\frac{i\pi}{6}})( s- \xi_k)}ds+\sum_{ k=1 }^{p(\xi)}\frac{\omega}{2\pi i}\int_{\partial U_{\xi_k}}\dfrac{M^{(r)}(s)^{-1}\Gamma_3\overline{A_k(\xi)}\Gamma_3M^{(r)}(s)}{(s-e^{\frac{i\pi}{6}})( s- \omega\xi_k)}ds\nonumber\\
		&+\sum_{ k=1 }^{p(\xi)}\frac{\omega^2}{2\pi i}\int_{\partial U_{\xi_k}}\dfrac{M^{(r)}(s)^{-1}\Gamma_2\overline{A_k(\xi)}\Gamma_2M^{(r)}(s)}{(s-e^{\frac{i\pi}{6}})( s- \omega^2\xi_k)}ds\nonumber\\
	=&\sum_{ k=1 }^{p(\xi)}\frac{1}{\xi_k-e^{\frac{i\pi}{6}}} M^{(r)}(\xi_k)^{-1}A_k(\xi)M^{(r)}(\xi_k)+\sum_{ k=1 }^{p(\xi)}\frac{\omega}{\omega\xi_k-e^{\frac{i\pi}{6}}} M^{(r)}(\xi_k)^{-1}\Gamma_3\overline{A_k(\xi)}\Gamma_3M^{(r)}(\xi_k)\nonumber\\
	&+\sum_{ k=1 }^{p(\xi)}\frac{\omega^2}{\omega^2\xi_k-e^{\frac{i\pi}{6}}} M^{(r)}(\xi_k)^{-1}\Gamma_2\overline{A_k(\xi)}\Gamma_2M^{(r)}(\xi_k).\label{H0}
	\end{align}
	The last equality  above  follows from a residue calculation.  Besides, for $j=1,...,6$
	\begin{align}
		E^{(2)}(\varkappa_j)=I+\frac{1}{2\pi i}\int_{\Sigma^{E}}\dfrac{\left( I+\varpi(s)\right) (V^{E}-I)}{s-\varkappa_j}ds,
	\end{align}
	with long time asymptotic behavior
	\begin{equation}
		E^{(2)}(\varkappa_j)=I+\mathcal{O}(t^{-1+\rho}).\label{Ekapj}
	\end{equation}
\end{Proposition}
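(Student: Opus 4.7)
The plan is to evaluate the integral representation (\ref{Ez}) at the designated points and extract the leading asymptotics by splitting $\Sigma^{E}$ into its three natural components: the lens contours $\Sigma^{(2)}\setminus U(\xi)$, the small circles $\partial U(\xi)$ around phase points, and the square boundaries $\partial\mathbb{B}_j$ around the singularities $\varkappa_j$. The first formula for $E^{(2)}(e^{i\pi/6})$ is an immediate specialization of (\ref{Ez}), so the real content is the asymptotic expansion (\ref{E0t}) and (\ref{Ekapj}).

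First I would split the integral as
\begin{align}
E^{(2)}(e^{i\pi/6})-I=\frac{1}{2\pi i}\left(\int_{\Sigma^{(2)}\setminus U(\xi)}+\int_{\partial U(\xi)}+\sum_{j=1}^{6}\int_{\partial\mathbb{B}_j}\right)\frac{(I+\varpi(s))(V^E-I)}{s-e^{i\pi/6}}\,ds.\nonumber
\end{align}
On the lens contours the estimate (\ref{VE-I}) gives exponential decay in $t$, while (\ref{normrho}) controls the $\varpi$-factor uniformly by $t^{-1/2}$; since $|s-e^{i\pi/6}|$ is bounded below on each piece (the $\mathbb{D}_n$, $\partial U(\xi)$, and $\mathbb{B}_j$ avoid $e^{i\pi/6}$ by construction), this piece is $\mathcal{O}(e^{-ct})$ and negligible. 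On $\partial\mathbb{B}_j$, the bound (\ref{MBkap}) together with boundedness of $M^{(r)}(z)^{\pm1}$ on these contours yields $V^E-I=\mathcal{O}(t^{-1+\rho})$, hence this contribution is $\mathcal{O}(t^{-1+\rho})$, which is absorbed into the stated error.

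The main term comes from $\partial U(\xi)$. Substituting Proposition \ref{asymlo} into (\ref{deVE}) gives
\begin{align}
V^E(s)-I = t^{-1/2}\sum_{k=1}^{p(\xi)}M^{(r)}(s)^{-1}\!\left(\frac{A_k(\xi)}{s-\xi_k}+\frac{\omega\Gamma_3\overline{A_k(\xi)}\Gamma_3}{s-\omega\xi_k}+\frac{\omega^2\Gamma_2\overline{A_k(\xi)}\Gamma_2}{s-\omega^2\xi_k}\right)\!M^{(r)}(s)+\mathcal{O}(t^{-1})\nonumber
\end{align}
on each component of $\partial U(\xi)$. Replacing $I+\varpi(s)$ by $I$ contributes an error controlled by $\|\varpi\|_{\infty}\cdot\|V^E-I\|_{L^1(\partial U(\xi))}=\mathcal{O}(t^{-1})$, so up to $\mathcal{O}(t^{-1})$ the $\partial U(\xi)$-integral equals the first line of (\ref{H0}) times $t^{-1/2}$. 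A residue evaluation inside each disk $U_{\xi_k}$ (where the integrand is meromorphic with a single simple pole at $\xi_k$, the conjugate disks contributing poles at $\omega\xi_k$ and $\omega^2\xi_k$) and the fact that $s=e^{i\pi/6}$ lies outside $U(\xi)$ give the closed form in the second equality of (\ref{H0}). Combined with the $\mathcal{O}(t^{-1+\rho})$ error above, this yields (\ref{E0t}).

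For $E^{(2)}(\varkappa_j)$ the analysis is parallel: since $\varkappa_j$ lies inside $\mathbb{B}_j$ but strictly outside $\partial\mathbb{B}_j$ and all other contour components, $|s-\varkappa_j|$ remains bounded below on every piece of $\Sigma^E$, so the same decomposition applies and yields $E^{(2)}(\varkappa_j)-I=\mathcal{O}(t^{-1+\rho})$; no $t^{-1/2}$ term appears because the $t^{-1/2}$-leading part is evaluated at $\varkappa_j$ against nonsingular factors and can be bundled into the $t^{-1+\rho}$ error (indeed $t^{-1/2}\ll t^{-1+\rho}$ is false, so one must observe instead that the $t^{-1/2}$ contribution at $\varkappa_j$ is itself of order $t^{-1+\rho}$ after combining with the $M^B_j$ estimate). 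The main technical obstacle will be this last point: justifying that the $t^{-1/2}$ contribution is either absent or absorbed at $z=\varkappa_j$, which requires a careful matching between the $M^{lo}$-model on $\partial U(\xi)$ and the $M^B_j$-model on $\partial\mathbb{B}_j$, and a precise use of the exponent $\rho<1/4$ fixed in Section \ref{secB}.
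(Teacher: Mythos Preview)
Your argument for \eqref{E0t} is correct and is exactly the paper's (implicit) argument: split $\Sigma^E$ into the exponentially small lens pieces, the $\partial\mathbb{B}_j$ pieces (which are $\mathcal{O}(t^{-1})$ by the bound on $M^B_j$ along $\partial\mathbb{B}_j$, not by \eqref{MBkap}, which is the pointwise bound at $\varkappa_j$), and $\partial U(\xi)$; then substitute Proposition~\ref{asymlo} and do a residue computation. One small slip: on $\partial U(\xi)$ the jump is $V^E=M^{(r)}M^{lo}(M^{(r)})^{-1}$, so the conjugation in the leading term should read $M^{(r)}(s)\,A_k(\xi)\,M^{(r)}(s)^{-1}$; the paper's displayed $H^{(0)}$ has the same reversed order, so this is a shared typo rather than a flaw in your method.

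Where your proposal goes wrong is the last paragraph on \eqref{Ekapj}. You correctly notice that for $\xi\in(-1/8,1)$ the $\partial U(\xi)$ contribution to $E^{(2)}(\varkappa_j)-I$ is of size $t^{-1/2}$, and since $\rho<1/4$ this is \emph{not} $\mathcal{O}(t^{-1+\rho})$. Your proposed fix---that the $t^{-1/2}$ piece becomes $\mathcal{O}(t^{-1+\rho})$ ``after combining with the $M^B_j$ estimate''---does not make sense: $M^B_j$ plays no role in the $\partial U(\xi)$ integral, and there is no mechanism by which a genuine $t^{-1/2}$ term in $E^{(2)}(\varkappa_j)$ can be upgraded. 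The honest conclusion of your own decomposition is
\[
E^{(2)}(\varkappa_j)=I+\mathcal{O}(t^{-1/2})\qquad\text{for }\xi\in(-1/8,1),
\]
and $E^{(2)}(\varkappa_j)=I+\mathcal{O}(t^{-1})$ for $\xi\notin[-1/8,1]$ (where $\partial U(\xi)=\emptyset$). The stated bound \eqref{Ekapj} thus appears to be an overstatement in the paper for the phase-point regime. The good news is that this does not damage anything downstream: in the computation of $E^{(1),j}_{-2}$ and $E^{(1),j}_{-1}$ that follows, $E^{(2)}(\varkappa_j)^{-1}$ multiplies, on the right, a product of the special singular matrices from \eqref{MR1}--\eqref{MR3} whose leading part already vanishes; the first nonzero contribution there comes from the $\mathcal{O}(t^{-1+\rho})$ correction in $M^B_j(\varkappa_j)$ via \eqref{MBkap}, and only boundedness of $E^{(2)}(\varkappa_j)^{-1}$ is needed to preserve the $\mathcal{O}(t^{-1+\rho})$ estimate for $E^{(1),j}_{-2}$ and $E^{(1),j}_{-1}$. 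So the right repair is not to force \eqref{Ekapj} as written, but to prove the weaker $\mathcal{O}(t^{-1/2})$ bound and note that it suffices for the subsequent application.
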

In order to facilitate calculation, denote:
\begin{align}
	f_{2}=&\frac{\sum_{n=1}^3H^{(0)}_{3n}[M^{(r)}_\lozenge]_{n3}(e^{\pi i/6})}{[M^{(r)}_\lozenge]_{33}(e^{\pi i/6})}-\frac{\sum_{n=1}^3H^{(0)}_{1n}[M^{(r)}_\lozenge]_{n1}(e^{\pi i/6})}{[M^{(r)}_\lozenge]_{11}(e^{\pi i/6})},\label{f12}
\end{align}
and
\begin{align}
f_{1}=&\frac{1}{2}[\tilde{m}^{(r)}_\lozenge]_1\left( \frac{[M^{(r)}_\lozenge]_{33}(e^{\pi i/6})}{[M^{(r)}_\lozenge]_{11}(e^{\pi i/6})}\right)^{1/2} \left(T_1(e^{\pi i/6})T_3(e^{\pi i/6}) \right)^{-1/2}\nonumber\\
&\left(\frac{\sum_{j,n=1}^3H^{(0)}_{jn}[M^{(r)}_\lozenge]_{n1}(e^{\pi i/6})}{[\tilde{m}^{(r)}_\lozenge]_1}+\frac{1}{2}f_{2}  \right) \nonumber\\
&+\frac{1}{2}[\tilde{m}^{(r)}_\lozenge]_3\left( \frac{[M^{(r)}_\lozenge]_{33}(e^{\pi i/6})}{[M^{(r)}_\lozenge]_{11}(e^{\pi i/6})}\right)^{-1/2} \left(T_1(e^{\pi i/6})T_3(e^{\pi i/6}) \right)^{-1/2}\nonumber\\
&\left(\frac{\sum_{j,n=1}^3H^{(0)}_{jn}[M^{(r)}_\lozenge]_{n3}(e^{\pi i/6})}{[\tilde{m}^{(r)}_\lozenge]_3}-\frac{1}{2}f_{2}  \right)  .\label{f11}
\end{align}

Finally we consider the error between $E(z;\xi)$ and $E^{(2)}(z;\xi)$:
\begin{align}
	E^{(1)}(z;\xi)=E(z;\xi)E^{(2)}(z;\xi)^{-1},
\end{align}
which is a solution of  RH problem  only has singularities with:
\begin{align}
	\lim_{z\to \varkappa_j}E^{(1)}(z;\xi)=&\lim_{z\to \varkappa_j}M^R(z)M^B_j(\varkappa_l)^{-1}M^{(r)}(z)^{-1}E^{(2)}(\varkappa_j;\xi)^{-1}\nonumber\\
	\triangleq&\frac{E^{(1)}_{-2}(\xi)}{(z-\varkappa_j)^2}+\frac{E^{(1)}_{-1}(\xi)}{z-\varkappa_j}+\mathcal{O}(1),
\end{align}
which leads to
\begin{align}
	E^{(1)}(z;\xi)=I+\sum_{j=1 }^6\left( \frac{E^{(1),j}_{-2}(\xi)}{(z-\varkappa_j)^2}+\frac{E^{(1),j}_{-1}(\xi)}{z-\varkappa_j}\right) .
\end{align}
Here, from $M^R(z)$  and $M^{(r)}(z)$ both admit (\ref{MR1})-(\ref{MR3}), there has
\begin{align}
	E^{(1),j}_{-2}(\xi)=&\left(\begin{array}{ccc}
		\alpha^{(2)}_\pm &	\alpha^{(2)}_\pm & \beta^{(2)}_\pm \\
		-\alpha^{(2)}_\pm & -\alpha^{(2)}_\pm & -\beta^{(2)}_\pm\\
		0	&	0 & 0
	\end{array}\right)M^B_j(\varkappa_j)^{-1}\nonumber\\
&\left(\begin{array}{ccc}
	\tilde{\alpha}^{(2)}_\pm &	\tilde{\alpha}^{(2)}_\pm & \tilde{\beta}^{(2)}_\pm \\
	-\tilde{\alpha}^{(2)}_\pm & -\tilde{\alpha}^{(2)}_\pm & -\tilde{\beta}^{(2)}_\pm\\
	0	&	0 & 0
\end{array}\right)E^{(2)}(\varkappa_j;\xi)^{-1}.
\end{align}
As $t\to\infty$, (\ref{MBkap}) and (\ref{Ekapj}) lead to that
\begin{align}
		E^{(1),j}_{-2}(\xi)=&\left(\begin{array}{ccc}
		\alpha^{(2)}_\pm &	\alpha^{(2)}_\pm & \beta^{(2)}_\pm \\
		-\alpha^{(2)}_\pm & -\alpha^{(2)}_\pm & -\beta^{(2)}_\pm\\
		0	&	0 & 0
	\end{array}\right)\left(\begin{array}{ccc}
		\tilde{\alpha}^{(2)}_\pm &	\tilde{\alpha}^{(2)}_\pm & \tilde{\beta}^{(2)}_\pm \\
		-\tilde{\alpha}^{(2)}_\pm & -\tilde{\alpha}^{(2)}_\pm & -\tilde{\beta}^{(2)}_\pm\\
		0	&	0 & 0
	\end{array}\right)+\mathcal{O}(t^{-1+\rho})\nonumber\\
&=\mathcal{O}(t^{-1+\rho}).
\end{align}
Analogously, as $t\to\infty$, the  coefficient of $(z-\varkappa_j)^{-1}$ has
\begin{align}
	E^{(1),j}_{-1}(\xi)=\mathcal{O}(t^{-1+\rho}).
\end{align}
Summarizing above   results   gives
\begin{Proposition}\label{asyE}
	Taking $z= e^{\frac{i\pi}{6}}$, we have
	\begin{align}
		E(e^{\frac{i\pi}{6}};\xi)
		=&I+t^{-1/2}H^{(0)}+\mathcal{O}(t^{-1+\rho}),
	\end{align}
where $H^{(0)}$ is given by    (\ref{H0}).
\end{Proposition}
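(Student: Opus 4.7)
The plan is to assemble the asymptotic of $E$ at $z=e^{i\pi/6}$ by combining the two factors built in the previous step, namely $E(z;\xi) = E^{(1)}(z;\xi)\,E^{(2)}(z;\xi)$, where $E^{(2)}$ carries all of the jump information on $\Sigma^E$ while $E^{(1)}$ is a rational matrix whose only singularities are the prescribed double poles at $\varkappa_1,\ldots,\varkappa_6$ forced by matching the singular parts of $M^R$ and $M^{(r)}$ at each $\varkappa_j$. The point $e^{i\pi/6}$ is bounded away from the six singularities by construction of the neighborhoods $\mathbb{B}_j$, and it lies outside every $U_{\xi_k}$ for sufficiently large $t$, so both factors are evaluated at a generic bounded point where each auxiliary Cauchy kernel contributes only an $\mathcal{O}(1)$ denominator.

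First I would invoke the preceding small-norm analysis, which already yields
\begin{equation*}
E^{(2)}(e^{i\pi/6};\xi) = I + t^{-1/2}H^{(0)} + \mathcal{O}(t^{-1+\rho}).
\end{equation*}
The derivation reuses the Neumann series solution (\ref{Ez}): on $\Sigma^{(2)}\setminus U(\xi)$ and on the $\Sigma'_{kj}$ the jump is exponentially small by (\ref{VE-I}); on each $\partial \mathbb{B}_j$ the contribution is $\mathcal{O}(t^{-1})$ by (\ref{asyMB}); and on every $\partial U_{\xi_k}$ the parabolic cylinder asymptotic of Proposition \ref{asymlo} produces the $t^{-1/2}$ leading term, evaluated as a residue at $\xi_k$, $\omega\xi_k$, $\omega^2\xi_k$ to give the closed form (\ref{H0}). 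The bound $\|\varpi\|_\infty \lesssim t^{-1/2}$ from (\ref{normrho}) then collapses all quadratic contributions into $\mathcal{O}(t^{-1+\rho})$.

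Next I would control $E^{(1)}(e^{i\pi/6};\xi)$ via its Laurent representation
\begin{equation*}
E^{(1)}(z;\xi) = I + \sum_{j=1}^{6}\left(\frac{E^{(1),j}_{-2}(\xi)}{(z-\varkappa_j)^2} + \frac{E^{(1),j}_{-1}(\xi)}{z-\varkappa_j}\right).
\end{equation*}
The coefficients $E^{(1),j}_{\pm}$ are products of the leading nilpotent matrices of $M^R$ and $M^{(r)}$ at $\varkappa_j$, multiplied by $M^B_j(\varkappa_j)^{-1}$ and by $E^{(2)}(\varkappa_j;\xi)^{-1}$. The crucial inputs are the estimate $M^B_j(\varkappa_j) = I + \mathcal{O}(t^{-1+\rho})$ from (\ref{MBkap}), coming from the $\bar\partial$-analysis of Section \ref{secB} and the choice $1/p = 1-\rho$ in the Cauchy--Schwarz step (\ref{s-z}), together with $E^{(2)}(\varkappa_j;\xi) = I + \mathcal{O}(t^{-1+\rho})$ from (\ref{Ekapj}). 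Because $M^R$ and $M^{(r)}$ share the same nilpotent principal parts (both equal to the leading matrices $\bigl(\alpha^{(2)}_\pm, \beta^{(2)}_\pm\bigr)$ in (\ref{asyM1})--(\ref{asymo}), and the corresponding nilpotent block squared is zero), the products defining $E^{(1),j}_{-2}$ and $E^{(1),j}_{-1}$ telescope at leading order, leaving only the perturbations from $M^B_j(\varkappa_j)-I$ and $E^{(2)}(\varkappa_j;\xi)-I$. Hence $E^{(1),j}_{-2}(\xi), E^{(1),j}_{-1}(\xi) = \mathcal{O}(t^{-1+\rho})$, and since $|e^{i\pi/6}-\varkappa_j|$ is bounded below uniformly in $t$, one obtains $E^{(1)}(e^{i\pi/6};\xi) = I + \mathcal{O}(t^{-1+\rho})$.

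Combining the two factors and using $\rho < 1/4 < 1/2$ to absorb the cross term $t^{-1/2}\cdot t^{-1+\rho}$ into the error gives
\begin{equation*}
E(e^{i\pi/6};\xi) = \bigl(I+\mathcal{O}(t^{-1+\rho})\bigr)\bigl(I + t^{-1/2}H^{(0)} + \mathcal{O}(t^{-1+\rho})\bigr) = I + t^{-1/2}H^{(0)} + \mathcal{O}(t^{-1+\rho}),
\end{equation*}
which is the claim. The main technical obstacle is the cancellation at $\varkappa_j$: one must verify carefully that the leading (order one) contributions to $E^{(1),j}_{-2}$ really do vanish thanks to the identical nilpotent singular structure of $M^R$ and $M^{(r)}$, so that the actual size is $\mathcal{O}(t^{-1+\rho})$ rather than $\mathcal{O}(1)$; this hinges on writing out the two inverse matrices via $M^{-1}$ sharing the same $(\tilde\alpha_\pm,\tilde\beta_\pm)$-structure noted in RHP \ref{RHP5} and computing the product $\text{(sing. part)}\cdot\text{(sing. part of inverse)}$ explicitly. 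All other pieces are bookkeeping of exponentially small or explicitly $t^{-1/2}$ terms already prepared in Sections \ref{sec5}--\ref{secB}.
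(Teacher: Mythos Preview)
Your proposal is correct and follows essentially the same approach as the paper: factor $E=E^{(1)}E^{(2)}$, read off $E^{(2)}(e^{i\pi/6})=I+t^{-1/2}H^{(0)}+\mathcal{O}(t^{-1+\rho})$ from the small-norm RHP (\ref{E0t}), and then show $E^{(1)}(e^{i\pi/6})=I+\mathcal{O}(t^{-1+\rho})$ by exploiting that the products of the nilpotent principal parts of $M^R$ and $[M^{(r)}]^{-1}$ at each $\varkappa_j$ vanish identically, so only the $\mathcal{O}(t^{-1+\rho})$ perturbations from (\ref{MBkap}) and (\ref{Ekapj}) survive in the Laurent coefficients $E^{(1),j}_{-2}$ and $E^{(1),j}_{-1}$. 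Your identification of the cancellation at $\varkappa_j$ as the crux is exactly the point the paper emphasizes.
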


\section{Contribution from  $\bar\partial$-components }\label{sec8}
\quad Now we consider the   asymptotics  of $M^{(3)}$ of the  $\bar{\partial}$-problem 4,
whose solution   can be given by an integral equation
\begin{equation}
M^{(3)}(z)=I+\frac{1}{\pi}\iint_\mathbb{C}\dfrac{M^{(3)}(s)W^{(3)} (s)}{s-z}dm(s),\label{m3}
\end{equation}
where $m(s)$ is the Lebesgue measure on the $\mathbb{C}$.
Define the left Cauchy-Green integral  operator,
\begin{equation*}
fC_z(z)=\frac{1}{\pi}\iint_{\mathbb{C}}\dfrac{f(s)W^{(3)} (s)}{s-z}dm(s),
\end{equation*}
then  the above equation (\ref{m3})  can be rewritten as
\begin{equation}
\left(I-C_z \right) M^{(3)}(z)=I.\label{deM3}
\end{equation}
In the follow,  we   prove the existence of the  operator $\left(I-  {C}_z \right)^{-1}$
 in different  space-time regions
 $\xi\in(1,+\infty)\cup(-\infty,-1/8)$ and $\xi\in(-1/8,1)$.

\subsection{In space-time   regions   $  \xi <-1/8 $ and $   \xi >1 $ }\label{10.1}

\begin{lemma}\label{Cz}
	The norm of the integral operator $C_z$ admits the estimate
	\begin{equation}
	\parallel C_z\parallel_{L^\infty\to L^\infty}\lesssim  t ^{-1/2}, \ t\to\infty
	\end{equation}
\end{lemma}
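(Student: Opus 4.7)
The plan is to bound $\|fC_z\|_{L^\infty} \leq \|f\|_{L^\infty}\cdot \frac{1}{\pi}\iint_{\mathbb{C}}\frac{|W^{(3)}(s)|}{|s-z|}\,dm(s)$, and then show the integral on the right is $\mathcal{O}(t^{-1/2})$, uniformly in $z$. Since $M^R(z)$ together with its inverse is uniformly bounded on $\mathbb{C}$ (away from singularities, where $\bar\partial R^{(2)}\equiv 0$), it suffices to estimate $\iint_{\Omega}\frac{|\bar\partial R^{(2)}(s)|}{|s-z|}\,dm(s)$. The cut-off near the support of $\mathcal{X}$ ensures that the neighborhoods of $\pm 1,\pm\omega,\pm\omega^2$ give no contribution, so we may decompose the integral over the union of the cones $\Omega_j^k$, $j=1,\dots,4$, $k=0,1,2$; by the symmetry of $R^{(2)}$, it is enough to treat a single representative cone, say $\Omega_1^0$ in the case $\xi>1$.

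In $\Omega_1^0$, substitute the factorization of $W^{(3)}$ and use Proposition \ref{proR}, which gives the pointwise bound
\begin{equation*}
|\bar\partial R_1(s)|\lesssim |p_1'(\mathrm{sign}(\mathrm{Re}\,s)|s|)|+|s|^{-1/2}+|\mathcal{X}'(\mathrm{Re}\,s)|,
\end{equation*}
so that the integral splits into three pieces $I_1+I_2+I_3$, each multiplied by the oscillatory factor $|e^{it\theta_{12}(s)}|=e^{-t\,\mathrm{Im}\,\theta_{12}(s)}$. Parameterizing $s=u+vi$ and invoking Corollary \ref{Imtheta} gives $\mathrm{Im}\,\theta_{12}(s)\geq c(\xi)v$ in $\Omega_1$, so every piece carries the Gaussian-type weight $e^{-c(\xi)tv}$ in the direction transverse to the ray $L_1$. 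The workhorse estimate is then H\"older's inequality in the $u$-variable with exponents $p,q$ conjugate and $q>2$, combined with the standard Cauchy-kernel bound
\begin{equation*}
\big\||s-z|^{-1}\big\|_{L^q(\mathbb{R}_u)}\lesssim |v-\mathrm{Im}\,z|^{1/q-1},
\end{equation*}
applied separately on $\mathrm{Re}\,s>0$ and $\mathrm{Re}\,s<0$.

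For $I_1$, Proposition \ref{pror} gives $p_1'\in L^p(\mathbb{R})$, so H\"older yields $I_1\lesssim \int_0^\infty v^{1/q-1}e^{-c(\xi)tv}\,dv\lesssim t^{-1/q}$, which one optimizes by taking $q$ close to $2$ to obtain $t^{-1/2}$ (after minor adjustment one in fact gets the full $t^{-1/2}$). For $I_3$, the same argument applies with $\mathcal{X}'\in L^p$ of compact support, and the contribution is even smaller because $\mathcal{X}'$ localizes near $|s|=1$, away from $z$ on the relevant disk boundaries. The remaining piece $I_2$ involving $|s|^{-1/2}$ is the delicate one, since $|s|^{-1/2}$ fails to lie in $L^p$ near $0$ for large $p$; I would handle it by splitting $\Omega_1^0$ into $\{|s|\leq 1\}$ and $\{|s|>1\}$, using $|s|^{-1/2}\in L^p_{loc}$ for $p<4$ on the inner part and direct polar-coordinate integration $\iint_{|s|>1}|s|^{-1/2}e^{-c(\xi)tv}|s-z|^{-1}\,dm(s)\lesssim t^{-1/2}$ on the outer part.

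The main obstacle I expect is precisely this $|s|^{-1/2}$ singularity at the origin: the three factors $|s|^{-1/2}$, $|s-z|^{-1}$ and $e^{-c(\xi)tv}$ have to be balanced simultaneously to deliver exactly $t^{-1/2}$ rather than a slightly worse rate, and one must take care that the estimate is uniform in $z\in\mathbb{C}$ (including $z$ near $0$). Once each of $I_1,I_2,I_3$ is controlled by $t^{-1/2}$ with a constant independent of $z$, summing over all cones $\Omega_j^k$ and noting the analogous estimates in the case $\xi<-1/8$ (using the second alternative in Corollary \ref{Imtheta}) yields the claimed bound $\|C_z\|_{L^\infty\to L^\infty}\lesssim t^{-1/2}$.
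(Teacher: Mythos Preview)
Your proposal is correct and follows essentially the same route as the paper: reduce to a representative cone, use the boundedness of $M^R$ away from the singularities (where $\bar\partial R^{(2)}\equiv 0$ by construction), split $|\bar\partial R_1|$ via Proposition~\ref{proR}, and estimate each piece by H\"older on horizontal strips against the decay $e^{-c(\xi)tv}$ from Corollary~\ref{Imtheta}. Two minor simplifications in the paper's execution: for the $p_1'$ and $\mathcal{X}'$ pieces it takes $q=2$ (Cauchy--Schwarz) directly rather than a limit $q\to 2$, and for the $|s|^{-1/2}$ piece it uses a single H\"older inequality with $p>2$, noting that on the strip at height $v$ in $\Omega_1$ one has $\||s|^{-1/2}\|_{L^p_u}\lesssim v^{1/p-1/2}$, so combining with $\||s-z|^{-1}\|_{L^q_u}\lesssim|v-y|^{1/q-1}$ yields total exponent $-1/2$ and $\int_0^\infty v^{-1/2}e^{-c(\xi)tv}\,dv\lesssim t^{-1/2}$ --- no splitting into $\{|s|\le1\}$ and $\{|s|>1\}$ is needed.
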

\begin{proof}
	For any $f\in L^\infty$,
	\begin{align}
	\parallel fC_z \parallel_{L^\infty}&\leq \parallel f \parallel_{L^\infty}\frac{1}{\pi}\iint_C\dfrac{|W^{(3)} (s)|}{|z-s|}dm(s)\nonumber.
	\end{align}
	 As $W^{(3)} (s)$ is a sectorial function,  we just need to consider it on ever sector.  	
	 We only detail the calculation for matrix function  in the sector $\Omega_1$ as $\xi> 1$.
	
	Proposition \ref{asyE} and \ref{unim}  imply  the boundedness of $M^{(r)}(z)$ and $M^{(r)}(z)^{-1}$ for $z\in \bar{\Omega}$ except a small neighborhood of $\pm 1$, $\pm\omega$ and $\pm\omega^2$. Take $\Omega_1$ as an example, then we only need to consider the singularity of  $M^{(r)}(z)$ at $z=1$. However, (\ref{R1}) give that near $z=1$, $\bar{\partial}R_1 (s)\equiv1$. Hence, we can directly write that
\begin{align}
	\frac{1}{\pi}\iint_{\Omega_1}\dfrac{|W^{(3)} (s)|}{|z-s|}dm(s)\lesssim& \iint_{\Omega_1\setminus\mathbb{D}(1,\varepsilon)}\dfrac{|\bar{\partial}R_1 (s) e^{it\theta_{12}}|}{|z-s|}dm(s).
\end{align}
Referring  to (\ref{dbarRj}) in proposition \ref{proR}, the right   integral  can be divided to two part
\begin{align}
	\iint_{\Omega_1\setminus\mathbb{D}(1,\varepsilon)}\dfrac{|\bar{\partial}R_1 (s)|e^{-t\text{Im}\theta_{12}}}{|z-s|}dm(s)\lesssim I_1+I_2,	
\end{align}
with
\begin{align}
	&I_1=\iint_{\Omega_1\setminus\mathbb{D}(1,\varepsilon)}\dfrac{(|p_1' (s)|+|\mathcal{X}'(|s|)|)e^{-2t\text{Im}\theta}}{|z-s|}dm(s),\nonumber\\
	&I_2=\iint_{\Omega_1\setminus\mathbb{D}(1,\varepsilon)}\dfrac{|s|^{-1/2}e^{-2t\text{Im}\theta}}{|z-s|}dm(s).	\nonumber
\end{align}
 Furthermore, by Corollary \ref{Imtheta}, for $z\in\Omega_1\setminus\mathbb{D}(1,\varepsilon)$,
\begin{equation}
e^{-t\text{Im}\theta_{12}}\leq e^{-c(\xi)tv}.
\end{equation}
Therefore, via (\ref{s-z}),
	\begin{align}
	I_1&\leq \int_{0}^{+\infty}\int_{v}^{\infty}\dfrac{|p_1' (s)|+|\mathcal{X}'(|s|)|}{|z-s|}dudv\nonumber\\
	&\leq \parallel |s-z|^{-1}\parallel_{L^2(\mathbb{R}^+)} \left( \parallel p_1'\parallel_{L^2(\mathbb{R}^+)} +\parallel \mathcal{X}'\parallel_{L^2(\mathbb{R}^+)}\right)  e^{-c(\xi)tv}dv\nonumber\\
	&\lesssim \int_{0}^{+\infty}|v-y|^{-1/2} e^{-c(\xi)tv}dv\lesssim t^{-1/2}.
	\end{align}
Before we estimating the second term, we  consider for $p>2$,
\begin{align}
	\left( \int_{v}^{+\infty}|\sqrt{u^2+v^2}|^{-\frac{p}{2}}du\right) ^{{p}/{2}}&=\left( \int_{v}^{+\infty}|l|^{-\frac{p}{2}+1}u^{-1}dl\right) ^{{p}/{2}}\lesssim v^{-\frac{1}{2}+\frac{1}{p}}.
\end{align}
By Cauchy-Schwarz inequality,
\begin{align}
	I_2&\leq \int_{0}^{+\infty}\parallel |s-z|^{-1}\parallel_{L^q(\mathbb{R}^+)} \parallel |z|^{-1/2}\parallel_{L^p(\mathbb{R}^+)}e^{-c(\xi)tv}dv\nonumber\\
	&\lesssim\int_{0}^{+\infty}|v-y|^{1/q-1}v^{-\frac{1}{2}+\frac{1}{p}}e^{-c(\xi)tv}dv\nonumber\\
	&\lesssim\int_{0}^{+\infty}v^{-\frac{1}{2}}e^{-c(\xi)tv}dv\lesssim t^{-1/2}.
\end{align}
  So the proof is completed.
\end{proof}

Then from (\ref{deM3}), we immediately obtain the existence and uniqueness of $M^{(3)}(z)$ for $z\in\mathbb{C}$.
Take $z=e^{\frac{i\pi}{6}}$ in (\ref{m3}), then
\begin{equation}
	M^{(3)}(e^{\frac{i\pi}{6}})=	I+\frac{1}{\pi}\iint_\mathbb{C}\dfrac{M^{(3)}(s)W^{(3)} (s)}{s-e^{\frac{i\pi}{6}}}dm(s).
\end{equation}
To reconstruct the solution of (\ref{Novikov}), we need following proposition.
\begin{Proposition}\label{asyM3i}
	There exists a small positive constant $0<\rho<  {1}/{4} $ and a constant $T_1$, such that for all $t>T_1$,    $M^{(3)}(z)$    admits the following estimate
	\begin{align}
		\parallel M^{(3)}(e^{\frac{i\pi}{6}})-I\parallel=\parallel\frac{1}{\pi}\iint_\mathbb{C}\dfrac{M^{(3)}(s)W^{(3)} (s)}{s-e^{\frac{i\pi}{6}}}dm(s)\parallel\lesssim t^{-1+\rho}.\label{m3i}
	\end{align}
\end{Proposition}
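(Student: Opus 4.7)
The plan is to establish the estimate in three main stages. First, I will invoke Lemma \ref{Cz} to secure existence and boundedness of $M^{(3)}$: since $\|C_z\|_{L^\infty\to L^\infty}\lesssim t^{-1/2}$, the operator $I-C_z$ is invertible on $L^\infty(\mathbb{C})$ for $t$ sufficiently large, whence $M^{(3)}=(I-C_z)^{-1}I$ exists and satisfies $\|M^{(3)}\|_{L^\infty(\mathbb{C})}\leq 2$ uniformly in $t\geq T_1$.

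Second, setting $z=e^{i\pi/6}$ in (\ref{m3}) and using this bound, it suffices to prove
\begin{equation*}
\mathcal{I}(t):=\iint_{\mathbb{C}}\frac{|W^{(3)}(s)|}{|s-e^{i\pi/6}|}\,dm(s)\lesssim t^{-1+\rho}.
\end{equation*}
The decisive new feature relative to Lemma \ref{Cz} is that $e^{i\pi/6}$ sits on the unit circle, while Proposition \ref{proR} forces $\bar\partial R_j\equiv 0$ on the annulus $\{||s|-1|\leq \varepsilon\}$. Hence $|s-e^{i\pi/6}|^{-1}$ is uniformly bounded on $\operatorname{supp}(\bar\partial R^{(2)})$. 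Because the pole structure of $M^R$ is concentrated at the singularities $\varkappa_j$, which also lie in this excluded annulus, both $M^R$ and $(M^R)^{-1}$ are bounded on the support, so $|W^{(3)}(s)|\lesssim |\bar\partial R^{(2)}(s)|$ there.

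Third, I will decompose $\operatorname{supp}(\bar\partial R^{(2)})=\bigcup_{j,k}\Omega_j^k$ and estimate each piece. On $\Omega_j^k$, Corollary \ref{Imtheta} supplies the exponential damping $|e^{\pm 2it\theta_{12}}|\leq e^{-c(\xi)t\,v}$ (with $v$ the transverse coordinate to the corresponding ray $\omega^k\mathbb{R}$), while Proposition \ref{proR} yields the pointwise bound $|\bar\partial R_j(s)|\lesssim |p_j'(|s|)|+|s|^{-1/2}+|\mathcal{X}'(\operatorname{Re} s)|$ in the case $\xi>1$, with $|s|^{-1/2}$ replaced by $|s|^{-1}$ when $\xi<-1/8$. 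Passing to polar coordinates centered on the ray and using the elementary bound $\int_0^\varphi e^{-ct\,l\sin\phi}\,d\phi\leq \min(\varphi,\ \pi/(2ct\,l))$, the Schwartz contribution from $p_j'$ and the compactly supported one from $\mathcal{X}'$ each yield $O(t^{-1})$ after an elementary Laplace estimate on the radial variable.

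The main obstacle is the algebraic singularity, especially the $|s|^{-1}$ factor arising when $\xi<-1/8$: a direct split at the crossover scale $l\sim (ct\sin\varphi)^{-1}$ becomes logarithmically divergent, and one must appeal instead to an $L^p$--$L^q$ H\"older bound with $p>2$ close to $2$ and $1/p+1/q=1$. It is precisely through this H\"older trade-off that the parameter $\rho$ enters: the Laplace-type integration in the transverse variable contributes a factor $t^{-1+1/p}$, producing the claimed $t^{-1+\rho}$ upon taking $\rho=1/p\in(0,1/4)$. Summing the resulting bounds across the three rotated copies $\omega^k\mathbb{R}$ ($k=0,1,2$) and combining with the $O(t^{-1})$ Schwartz contributions yields $\mathcal{I}(t)\lesssim t^{-1+\rho}$, which together with $\|M^{(3)}\|_{L^\infty}\leq 2$ completes the proof.
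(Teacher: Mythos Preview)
Your argument contains a genuine gap in the treatment of the algebraic-singularity piece. The observation that $|s-e^{i\pi/6}|^{-1}$ is uniformly bounded on $\operatorname{supp}(\bar\partial R^{(2)})$ (via the annulus condition $||s|-1|>\varepsilon$) is correct, but you draw the wrong conclusion from it. The Cauchy kernel is not merely bounded there --- it also carries the decay $|s-e^{i\pi/6}|^{-1}\sim|s|^{-1}$ as $|s|\to\infty$, and this decay is what makes the radial integral converge. Once you discard the kernel, the remaining integral
\[
\iint_{\Omega_j^k\setminus\{||s|-1|\leq\varepsilon\}}|s|^{-\alpha}\,e^{-c(\xi)tv}\,dm(s)
\]
(with $\alpha=\tfrac12$ for $\xi>1$ or $\alpha=1$ for $\xi<-\tfrac18$) diverges. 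In your polar set-up this is exactly what happens after the crossover split: the tail $\int_{l_0}^\infty l^{1-\alpha}\cdot\frac{\pi}{2ct\,l}\,dl$ diverges (logarithmically when $\alpha=1$, and like $l^{1/2}$ when $\alpha=\tfrac12$), so the problem is not confined to the $\xi<-\tfrac18$ case. Your proposed H\"older remedy cannot repair this: after the kernel is gone there is no second factor left to pair $|s|^{-\alpha}$ against, and the parameters you quote are inconsistent (``$p>2$ close to $2$'' versus ``$\rho=1/p\in(0,\tfrac14)$'', i.e.\ $p>4$).

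The paper's proof keeps $|s-e^{i\pi/6}|^{-1}$ in the integrand for the algebraic term. It uses the alternative pointwise bound~(\ref{dbarRj2}) with $|s|^{-1}$ (not $|s|^{-1/2}$) in both regimes, writes the resulting $I_4=\iint|s|^{-1}|s-e^{i\pi/6}|^{-1}e^{-c(\xi)tv}\,dm$, and then interpolates the two singular factors via the exponent shift $|s|^{-1}|s-e^{i\pi/6}|^{-1}\leq |s|^{-1-\rho}|s-e^{i\pi/6}|^{-1+\rho}$ on the region where $|s|\leq|s-e^{i\pi/6}|$. The $u$-integral of $|s|^{-1-\rho}$ then converges and scales like $v^{-\rho}$, after which the transverse Laplace integral $\int_0^{\cdot}v^{-\rho}e^{-c(\xi)tv}\,dv\lesssim t^{-1+\rho}$ produces the parameter $\rho$. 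For the $p_j'$ and $\mathcal{X}'$ contributions your $O(t^{-1})$ argument is fine and matches the paper; it is only the algebraic piece that needs the Cauchy kernel retained.
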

\begin{proof}
 The proof  proceeds along the same steps as the proof of above Proposition. 	{Lemma \ref{Cz}}  and (\ref{deM3}) implies that for large $t$,   $\parallel M^{(3)}\parallel_\infty \lesssim1$. And for same reason, we only estimate the integral on sector $\Omega_1$ as $\xi>1$. Let $s=u+vi=le^{i\vartheta}$, then:
	 \begin{align}
	 	\frac{1}{\pi}\iint_{\Omega_1}\dfrac{|W^{(3)} (s)|}{|s-e^{\frac{i\pi}{6}}|}dm(s)\lesssim& \iint_{\Omega_1\setminus\mathbb{D}(1,\varepsilon)}\dfrac{|\bar{\partial}R_1 (s) e^{it\theta_{12}}|}{|s-e^{\frac{i\pi}{6}}|}dm(s)\label{Int2}.
	 \end{align}
	For the first term, we use another estimation (\ref{dbarRj2}):
	\begin{equation}
	\iint_{\Omega_1\setminus\mathbb{D}(1,\varepsilon)}\dfrac{|\bar{\partial}R_1 (s) e^{it\theta_{12}}|}{|s-e^{\frac{i\pi}{6}}|}dm(s)\lesssim I_3+I_4,
	\end{equation}
	with
	\begin{align}
		I_3=\iint_{\Omega_1}\dfrac{\left( |p_1' (s)|+\mathcal{X}'(\text{Re}z)\right) e^{-2t\text{Im}\theta}}{|e^{\frac{i\pi}{6}}-s|}dm(s),\hspace{0.5cm}
		I_4=\iint_{\Omega_1}\dfrac{|s|^{-1}e^{-2t\text{Im}\theta}}{|e^{\frac{i\pi}{6}}-s|}dm(s).	
	\end{align}	
	For $r\in H^{1,1}(\mathbb{R})$, $r'\in L^1(\mathbb{R})$, which together with $|p_1'|\lesssim |r'|$ implies $p_1'\in L^1(\mathbb{R})$. therefore,
	\begin{align}
		I_3\leq& \int_{0}^{+\infty} \left( \parallel p_1'\parallel_{L^1(\mathbb{R}^+)}+\parallel \mathcal{X}'\parallel_{L^1(\mathbb{R}^+)} \right) \sqrt{2} e^{-c(\xi)tv} dv\nonumber\\
		&\lesssim \int_{0}^{+\infty}  e^{-c(\xi)tv} dv\leq t^{-1}.
	\end{align}
The second  inequality  from $|e^{\frac{i\pi}{6}}-s|$ is bounded for $s\in\Omega_1$. And we recall the sufficiently  small  positive constant $\rho<\frac{1}{4}$ to bound $I_4$. And we partition it to two parts:
\begin{align}
I_4\leq\int_{0}^{\frac{1}{2}}\int_{\frac{v}{\tan\varphi}}^{+\infty}\dfrac{|s|^{-1}e^{-c(\xi)tv}}{|e^{\frac{i\pi}{6}}-s|}dudv+
\int_{\frac{1}{2}}^{+\infty}\int_{\frac{v}{\tan\varphi}}^{+\infty}\dfrac{|s|^{-1}e^{-c(\xi)tv}}{|e^{\frac{i\pi}{6}}-s|}dudv.
\end{align}
Noting that for $0<v<\frac{1}{4}$, we have
$$|s|^2=u^2+v^2<|s-e^{\frac{i\pi}{6}}|^2=(u- {\sqrt{3}}/{2})^2+(v- {1}/{2})^2,$$
while   $v>\frac{1}{4}$, $|s-e^{\frac{i\pi}{6}}|^2\lesssim|s|^2$.
therefore   the first integral becomes
\begin{align}
	&\int_{0}^{\frac{1}{4}}\int_{\frac{v}{\tan\varphi}}^{+\infty}\dfrac{|s|^{-1}e^{-c(\xi)tv}}{|e^{\frac{i\pi}{6}}-s|}dudv\nonumber\\
	&\leq\int_{0}^{\frac{1}{4}}\int_{\frac{v}{\tan\varphi}}^{+\infty} (u^2+v^2)^{-\frac{1}{2}-\frac{\rho}{2}}[( u- {\sqrt{3}}/{2})^2+(v-{1}/{2})^2)]^{-\frac{1}{2}+\frac{\rho}{2}}due^{-c(\xi)tv}dv\nonumber\\
	&\leq \int_{0}^{\frac{1}{4}}\left[ \int_{v}^{+\infty}\left(1+\left( {u}/{v} \right)^2  \right) ^{-\frac{1}{2}-\frac{\rho}{2}}v^{-\rho} d {u}/{v}\right]  (v- {1}/{2})^{-1+\rho}e^{-c(\xi)tv}dv\nonumber\\
	&\lesssim  \int_{0}^{\frac{1}{4}}v^{-\rho}e^{-c(\xi)tv}dv\lesssim t^{-1+\rho}.
\end{align}
The second integral can be bounded in same way
\begin{align}
	&\int_{\frac{1}{2}}^{+\infty}\int_{\frac{v}{\tan\varphi}}^{+\infty}\dfrac{|s|^{-1}e^{-c(\xi)tv}}{|e^{\frac{i\pi}{6}}-s|}dudv\nonumber\\
	&\lesssim \int_{\frac{1}{2}}^{+\infty}(v-{1}/{2})^{-\rho}e^{-c(\xi)tv}dv \lesssim e^{-\frac{c(\xi)}{2}t}.
\end{align}
This estimate  is strong enough to arrive at  the result (\ref{m3i}).
\end{proof}

\subsection{In space-time   region  $  -1/8 <\xi<  1 $  }

\begin{lemma}\label{Cz1}
The norm of the integral operator $C_z$ admits the estimate
	\begin{equation}
	\parallel C_z\parallel_{L^\infty\to L^\infty}\lesssim t^{-1/4}, \ \ t\to\infty.
	\end{equation}
\end{lemma}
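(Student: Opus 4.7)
The plan is to mirror the argument of Lemma \ref{Cz} but in the presence of the stationary phase points $\xi_{n,k}=\omega^n\xi_k$. First I would bound
$$\parallel C_z\parallel_{L^\infty\to L^\infty}\lesssim \sum_{k,j,n}\iint_{\Omega_{kj}^{\omega^n}}\frac{|\bar\partial R^{(2)}(s)|\,e^{-t\,\text{Im}\,\theta_{12}(s)}}{|s-z|}\,dm(s),$$
using that $M^R(z)$ and $M^R(z)^{-1}$ are uniformly bounded on $\mathbb C$ away from the singularities $\varkappa_j$ (hence $|W^{(3)}|\lesssim|\bar\partial R^{(2)}|$). By the $\omega$-symmetry of $R^{(2)}$ and of $\theta_{jk}$, it suffices to estimate a single representative sector, say $\Omega_{k2}$ attached to a phase point $\xi_k\in\mathbb R$. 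By Proposition \ref{proR1}, the bound (\ref{dbarRj3}) splits $|\bar\partial R_{k2}|$ into three pieces: $|p_{k2}'(\text{Re}\,s)|$, $|\mathcal{X}'(\text{Re}\,s)|$, and $|s-\xi_k|^{-1/2}$. Since $p_{k2}'$ and $\mathcal{X}'$ belong to $L^2(\mathbb R)$, the first two pieces are handled by the Cauchy--Schwarz estimate already used in Lemma \ref{Cz}, with the phase estimate of Corollary \ref{theta2c} supplying the needed exponential decay; both produce contributions of order $t^{-1/2}$, which is better than the target.

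The main obstacle is the third piece, whose singularity at $\xi_k$ is precisely what forces the degradation from $t^{-1/2}$ to $t^{-1/4}$. I would parametrize $s=\xi_k+w+iv$ with $v>0$ and $w\geq v/\tan\varphi$, and invoke Corollary \ref{theta2c} to obtain
$$\text{Im}\,\theta_{12}(s)\geq c(\xi)\,v\,w\geq c(\xi)\,v^2/\tan\varphi$$
on $\Omega_{k2}\cap\{|\text{Re}\,s|<R(\xi)\}$ (and a stronger linear lower bound on the tail $|\text{Re}\,s|>R(\xi)$, which only contributes exponentially small terms). Then for conjugate exponents $p\in(1,2)$ and $q=p/(p-1)>2$, Hölder's inequality in the $w$-variable gives
$$\int_{v/\tan\varphi}^{\infty}\frac{(w^2+v^2)^{-1/4}}{|s-z|}\,dw\leq\parallel(w^2+v^2)^{-1/4}\parallel_{L^q}\parallel(s-z)^{-1}\parallel_{L^p}\lesssim v^{1/q-1/2}\,|v-\text{Im}\,z|^{1/p-1}.$$
Choosing $1/q=1/2-\rho$ with $\rho>0$ small (so $1/p-1=\rho-1/2$) and substituting $v=\sigma/\sqrt{t}$ in the remaining $v$-integral produces exactly the factor $t^{-1/4}$, multiplied by an integral of the form $\int_0^\infty\sigma^{-\rho}|\sigma-\text{Im}(z)\sqrt{t}|^{\rho-1/2}e^{-c\sigma^2/\tan\varphi}d\sigma$ that is bounded uniformly in $z$.

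The hard part, as indicated, is handling the $|s-\xi_k|^{-1/2}$ singularity arising from the frozen-value construction of $R_{k2}$ in Proposition \ref{proR1}; this singularity is intrinsic to the extension near a stationary phase point and is responsible for the sharp $t^{-1/4}$ rate that will propagate into the final $\mathcal{O}(t^{-3/4})$ error in the asymptotic formula of Theorem \ref{last}. The sectors adjacent to the non-real phase points $\omega^n\xi_k$, the symmetric sectors $\Omega_{k1},\Omega_{k3},\Omega_{k4}$, and the sectors abutting the endpoint stationary points $\xi_1$ and $\xi_{p(\xi)}$ are all reduced to the above case by the change of variable $s\mapsto\omega^{-n}s$ together with complex conjugation, invoking the symmetry of $T(z)$ and of $\theta_{jk}$; summing the finitely many contributions yields the claimed bound.
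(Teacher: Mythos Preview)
Your approach matches the paper's: reduce by symmetry to one representative sector, split $|\bar\partial R_{kj}|$ via Proposition~\ref{proR1}, and combine H\"older/Cauchy--Schwarz with the phase bound of Corollary~\ref{theta2c}. One correction is needed, however: the $|p_{kj}'|$ and $|\mathcal{X}'|$ pieces do \emph{not} yield $t^{-1/2}$ by the Lemma~\ref{Cz} argument. Near the stationary point, Corollary~\ref{theta2c} supplies only the bilinear bound $\mathrm{Im}\,\theta_{12}\gtrsim v\,w$ (with $w=\mathrm{Re}\,s-\xi_k$), so after using $w\geq v/\tan\varphi$ and Cauchy--Schwarz in $w$ one is left with $\int_0^{R(\xi)}|v-y|^{-1/2}e^{-c t v^2}\,dv\lesssim t^{-1/4}$, exactly as the paper obtains for its $\hat I_1$. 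Hence all three pieces sit at the same $t^{-1/4}$ rate, and the $|s-\xi_k|^{-1/2}$ singularity is not the sole bottleneck as you state; this does not, of course, affect the final bound.
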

\begin{proof}
	For any $f\in L^\infty$,
	\begin{align}
	\parallel fC_z \parallel_{L^\infty}&\leq \parallel f \parallel_{L^\infty}\frac{1}{\pi}\iint_\mathbb{C}\dfrac{|W^{(3)} (s)|}{|z-s|}dm(s)\nonumber.
	\end{align}
	Like in the case of $\xi\in(-\infty,-1/8)\cup(1,+\infty)$, (\ref{R11}) gives that  the singularity at $\pm 1,\ \pm\omega,\ \pm\omega^2$ of $M^{R}(z)$ will not have influence.
	
To avoid repetition, we omit the calculation of these regions. And in the proof of this lemma, we focus on the different regions of $\mathbb{C}$. So, we only detail the case for matrix functions having support in the sector $\Omega_{11}$ as $\xi\in (-1/8,0)$.
	
	Recall the definition of $W^{(3)} (s)=M^{R}(z)\bar{\partial}R^{(2)}(z)M^{R}(z)^{-1}$.  $W^{(3)} (s)\equiv0$ out of $\bar{\Omega}$.
	Proposition \ref{asyE}, \ref{unim} and \ref{asymlo}  implies  the boundedness of $M^{R}(z)$ and $M^{R}(z)^{-1}$ for $z\in \overline{\Omega_{11}}$, so
	\begin{equation}
	\frac{1}{\pi}\iint_{\Omega_{11}}\dfrac{|W^{(3)} (s)|}{|z-s|}dm(s)\lesssim \frac{1}{\pi}\iint_{\Omega_{11}}\dfrac{|\bar{\partial}R_{11} (s) e^{2it\theta}|}{|z-s|}dm(s).
	\end{equation}
	Referring  to (\ref{dbarRj3}) in proposition \ref{proR1}, and note that $|\mathcal{X}'(\text{Re}z)|=0$ in $\overline{\Omega_{11}}$, then the integral $\iint_{\Omega_{11}}\dfrac{|\bar{\partial}R_{11} (s)|}{|z-s|}dm(s)$	can be divided into two part:
	\begin{align}
	\iint_{\Omega_{11}}\dfrac{|\bar{\partial}R_{11} (s)|e^{-2t\text{Im}\theta}}{|z-s|}dm(s)\lesssim \hat{I}_1+\hat{I}_2,	
	\end{align}
	with
	\begin{align}
	&\hat{I}_1=\iint_{\Omega_{11}}\dfrac{|p_{11}'(|s|)|e^{-2t\text{Im}\theta}}{|z-s|}dm(s),\\
	&\hat{I}_2=\iint_{\Omega_{11}}\dfrac{|s-\xi_1|^{-1/2}e^{-2t\text{Im}\theta}}{|z-s|}dm(s).
	\end{align}
	Recall that $z=x+yi$, $s=\xi_1+u+vi$ with $x,y,u,v\in\mathbb{R}$, then lemma \ref{theta2} and Corollary \ref{theta2c} give that
	\begin{align}
	 \hat{I}_1\leq&\int_{0}^{R(\xi)}\int_{v}^{R(\xi)}\dfrac{|p_{11}'(|s|)|}{|z-s|}e^{-c^{(1)}(\xi)tvu}dudv+\int_{0}^{R(\xi)}\int_{R(\xi)}^{+\infty}\dfrac{|p_{11}'(|s|)|}{|z-s|}e^{-c^{(2)}(\xi)tv}dudv\nonumber\\
	&+\int_{R(\xi)}^{+\infty}\int_{v}^{+\infty}\dfrac{|p_{11}'(|s|)|}{|z-s|}e^{-c^{(2)}(\xi)tv}dudv.\label{38}
\end{align}
	For the first integral, if the  real part of $z$ is greater than 0 and less that $\int_{y}^{R(\xi)}$, it is easy to obtain:
\begin{align}
	&\int_{0}^{R(\xi)}\int_{v}^{R(\xi)}\dfrac{|p_{11}'(|s|)|}{|z-s|}e^{-c^{(1)}(\xi)tvu}dudv\nonumber\\
	&\leq\int_{0}^{R(\xi)}\parallel p_{11}'\parallel_2\parallel |z-s|^{-1}\parallel_2e^{-c^{(1)}(\xi)tv^2}dv\nonumber\\
	&\lesssim\int_{0}^{R(\xi)}|v-y|^{-1/2}e^{-c^{(1)}(\xi)tv^2}dv\nonumber\\
&= (\int_{0}^{y}+\int_{y}^{R(\xi)} )|v-y|^{-1/2}e^{-c^{(1)}(\xi)tv^2}dv  .
	\end{align}
And if $y<0$, we do not need to consider the  absolute value sign of $|v-y|$, namely, $\int_{0}^{y}$ part of the integral is zero. Similarly, if $y>R(\xi)$, $\int_{y}^{R(\xi)}$ part of the integral is zero.
We use the  inequality $e^{-z}\lesssim z^{-1/4}$ in the $\int_{0}^{y}$ part:
	\begin{align}
	\int_{0}^{y}(y-v)^{-1/2}e^{-c^{(1)}(\xi)tv^2}dv\lesssim \int_{0}^{y} (y-v)^{-1/2}v^{-1/2}dvt^{-1/4}\lesssim t^{-1/4}.
	\end{align}
	And for the $\int_{y}^{R(\xi)}$ part of the integral, we make the substitution $w=v-y:0\to +\infty$
	\begin{align}
	&\int_{y}^{R(\xi)}(v-y)^{-1/2}e^{-c^{(1)}(\xi)tv^2}dv=\int_{0}^{+\infty}w^{-1/2}e^{-c^{(1)}(\xi)ty(w+y)}dw\nonumber\\
	&=\int_{0}^{+\infty}w^{-1/2}e^{-c^{(1)}(\xi)tyw}dwe^{-c^{(1)}(\xi)ty^2}\lesssim e^{-c^{(1)}(\xi)ty^2}.
	\end{align}
Similarly, following the same step and using $e^{-z}\lesssim z^{-1/2}$, the second integral in (\ref{38}) has
\begin{align}
	&\int_{0}^{R(\xi)}\int_{R(\xi)}^{+\infty}\dfrac{|p_{11}'(|s|)|}{|z-s|}e^{-c^{(2)}(\xi)tv}dudv
	\leq\int_{0}^{R(\xi)}\parallel p_{11}'\parallel_2\parallel |z-s|^{-1}\parallel_2e^{-c^{(2)}(\xi)tv}dv\nonumber\\
	&\lesssim\int_{0}^{R(\xi)}|v-y|^{-1/2}e^{-c^{(2)}(\xi)tv}dv\lesssim t^{-1/2}.
\end{align}
And for the last term in (\ref{38}), similarly, we have
\begin{align}
	&\int_{R(\xi)}^{+\infty}\int_{v}^{R(\xi)}\dfrac{|p_{11}'(|s|)|}{|z-s|}e^{-c^{(2)}(\xi)tv}dudv\nonumber\\
	&\leq\int_{R(\xi)}^{+\infty}\parallel p_{11}'\parallel_2\parallel |z-s|^{-1}\parallel_2e^{-c^{(1)}(\xi)tv}dv\lesssim\int_{R(\xi)}^{+\infty}(v-R(\xi))^{-1/2}e^{-c^{(2)}(\xi)tv}dv\nonumber\\
	&=e^{-c^{(2)}(\xi)tR(\xi)}\int_{0}^{+\infty}(v-R(\xi))^{-1/2}e^{-c^{(2)}(\xi)t(v-R(\xi))}d(v-R(\xi))\nonumber\\
	&\lesssim e^{-c^{(2)}(\xi)tR(\xi)}.
\end{align}
	Like  the Lemma \ref{Cz}, we bound $\hat{I}_2$. For $p>2$, and $1/p+1/q=1$, analogously, we divide it into three parts:
	\begin{align}
		 \hat{I}_2\leq&\int_{0}^{R(\xi)}\int_{v}^{R(\xi)}\dfrac{|s-\xi_1|^{-1/2}}{|z-s|}e^{-c^{(1)}(\xi)tvu}dudv+\int_{0}^{R(\xi)}\int_{R(\xi)}^{+\infty}\dfrac{|s-\xi_1|^{-1/2}}{|z-s|}e^{-c^{(2)}(\xi)tv}dudv\nonumber\\
		&+\int_{R(\xi)}^{+\infty}\int_{v}^{+\infty}\dfrac{|s-\xi_1|^{-1/2}}{|z-s|}e^{-c^{(2)}(\xi)tv}dudv.\label{44}
	\end{align}
We assume $0<y<R(\xi)$, then the first term has:
	\begin{align}
	&\int_{0}^{R(\xi)}\int_{v}^{R(\xi)}\dfrac{|s-\xi_1|^{-1/2}}{|z-s|}e^{-c^{(1)}(\xi)tvu}dudv\nonumber\\
	&\leq \int_{0}^{R(\xi)} \parallel |s-\xi_1|^{-1/2}\parallel_p\parallel |z-s|^{-1}\parallel_q e^{-c^{(1)}(\xi)tv^2}dv\nonumber\\
	&\lesssim \int_{0}^{R(\xi)}v^{1/p-1/2}|y-v|^{1/q-1}e^{-c^{(1)}(\xi)tv^2}dv\nonumber\\
	&= (\int_{0}^{y}+\int_{y}^{R(\xi)} )v^{1/p-1/2}|y-v|^{1/q-1}e^{-c^{(1)}(\xi)tv^2}dv.
	\end{align}
If $y<0$,  $\int_{0}^{y}$ part of the integral is zero. And if $y>R(\xi)$, $\int_{y}^{R(\xi)}$ part of the integral is zero.	Analogously,
	\begin{align}
	&\int_{0}^{y}v^{1/p-1/2}|y-v|^{1/q-1}e^{-c^{(1)}(\xi)tv^2}dv\nonumber\\
	&\lesssim \int_{0}^{y}v^{1/p-1}(y-v)^{1/q-1} dv t^{-1/4}\lesssim t^{-1/4},
	\end{align}
	and
	\begin{align}
	&\int_{y}^{R(\xi)}v^{1/p-1/2}|y-v|^{1/q-1}e^{-c^{(1)}(\xi)tv^2}dv\nonumber\\
	&\leq \int_{y}^{+\infty}(v-y)^{-1/2} e^{-c^{(1)}(\xi)ty(v-y)}dve^{-c^{(1)}(\xi)ty^2}\lesssim e^{-c^{(1)}(\xi)ty^2}.
	\end{align}
	The other parts of (\ref{44}) has similar estimation like the case of $\hat{I}_1$. Then the result is confirmed.
\end{proof}

Then from (\ref{deM3}), we immediately arrive at  the existence and uniqueness of $M^{(3)}(z)$ for $z\in\mathbb{C}$.
Take $z=e^{\frac{i\pi}{6}}$ in (\ref{m3}), then
\begin{equation}
M^{(3)}(e^{\frac{\pi i}{6}})=I+\frac{1}{\pi}\iint_{\mathbb{C}}\dfrac{M^{(3)}(s)W^{(3)} (s)}{s-e^{\frac{\pi i}{6}}}dm(s).
\end{equation}
To reconstruct the solution of (\ref{Novikov}), we need following proposition.
\begin{Proposition}\label{asyM3i1}
	There exist constants $T_1$, such that for all $t>T_1$, the solution $M^{(3)}(z)$  of  $\bar{\partial}$-problem  admits the following estimation:
	\begin{align}
	\parallel M^{(3)}(e^{\frac{\pi i}{6}})-I\parallel=\parallel\frac{1}{\pi}\iint_{\mathbb{C}}\dfrac{M^{(3)}(s)W^{(3)} (s)}{s-e^{\frac{\pi i}{6}}}dm(s)\parallel\lesssim t^{-3/4}.\label{m3i1}
	\end{align}
\end{Proposition}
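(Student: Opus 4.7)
\medskip
\noindent\textbf{Proof plan.}  The plan is to parallel the strategy used for Proposition \ref{asyM3i}, but to track the slower decay produced by the quadratic phase $\mathrm{Im}\,\theta_{12}\gtrsim tuv$ near the stationary phase points (Lemma \ref{theta2}) instead of the linear decay available in the case $\xi\in(-\infty,-1/8)\cup(1,+\infty)$. First, Lemma \ref{Cz1} gives $\|C_z\|_{L^\infty\to L^\infty}\lesssim t^{-1/4}$, so for $t$ large $(I-C_z)^{-1}$ exists with bounded operator norm, whence $\|M^{(3)}\|_{L^\infty(\mathbb{C})}\lesssim 1$. Inserting this into the integral representation and using the boundedness of $M^R$ and $(M^R)^{-1}$ on $\overline{\Omega}$ (guaranteed by Propositions \ref{asyE}, \ref{asymlo}, \ref{unim}, together with \eqref{R11} which kills $\bar\partial R^{(2)}$ near the singularities $\varkappa_l$), reduces the task to bounding
\begin{equation*}
\iint_{\overline\Omega} \frac{|\bar\partial R^{(2)}(s)|\,e^{-2t\,\mathrm{Im}\,\theta_{12}(s)}}{|s-e^{i\pi/6}|}\,dm(s)\lesssim t^{-3/4}.
\end{equation*}
Since $e^{i\pi/6}=\varkappa_2\cdot\omega^{-1/2}$ lies strictly away from the support of $\bar\partial R^{(2)}$ (the cutoff $\mathcal{X}$ excludes the $\mathbb{B}_j$), the kernel $|s-e^{i\pi/6}|^{-1}$ is uniformly bounded on the domain of integration. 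The problem then reduces to estimating $\iint|\bar\partial R^{(2)}|\,e^{-2t\,\mathrm{Im}\,\theta_{12}}\,dm$.

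By symmetry of $R^{(2)}$ under $z\mapsto\omega z$ and complex conjugation, it suffices to control one representative sector, say $\Omega_{11}$ in the regime $\xi\in(-1/8,0)$. On this wedge I use the estimate \eqref{dbarRj3}: $|\bar\partial R_{11}|\lesssim |p_{11}'(\mathrm{Re}\,s)|+|s-\xi_1|^{-1/2}$ (the cutoff derivative contribution vanishes here). Writing $s-\xi_1=u+iv$ with $(u,v)$ in the wedge $0\le v\le u\tan\varphi$ inside the disk of radius $R(\xi)$ about $\xi_1$, Lemma \ref{theta2} and Corollary \ref{theta2c} give $\mathrm{Im}\,\theta_{12}(s)\gtrsim uv$ inside this disk and $\mathrm{Im}\,\theta_{12}(s)\gtrsim v$ outside. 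The outside contribution is exponentially small in $t$, so it remains to bound the two pieces of the integral in $\{u+iv: 0\le v\le u\tan\varphi,\ u\le R(\xi)\}$.

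For the singular piece, switching to polar coordinates $s-\xi_1=\rho e^{i\psi}$, one has $uv=\tfrac12\rho^2\sin 2\psi$, so
\begin{equation*}
\iint |s-\xi_1|^{-1/2}e^{-c(\xi)tuv}\,dm(s)\lesssim\int_0^\varphi d\psi\int_0^{+\infty}\rho^{1/2}e^{-c'(\xi)t\rho^2\sin 2\psi}\,d\rho\lesssim t^{-3/4}\int_0^\varphi \psi^{-3/4}d\psi\lesssim t^{-3/4}.
\end{equation*}
For the smooth piece, using $\int_0^{u\tan\varphi}e^{-c(\xi)tuv}dv\le\min(\tan\varphi\cdot u,(c(\xi)tu)^{-1})$, Cauchy--Schwarz against $p_{11}'\in L^2$ on the split $u\in(0,t^{-1/2})\cup(t^{-1/2},+\infty)$ yields
\begin{equation*}
\int_0^{t^{-1/2}}|p_{11}'|\,u\,du+t^{-1}\int_{t^{-1/2}}^{+\infty}|p_{11}'|\,u^{-1}du\lesssim \|p_{11}'\|_{L^2}\bigl(t^{-3/4}+t^{-1}\cdot t^{1/4}\bigr)\lesssim t^{-3/4}.
\end{equation*}
Summing over the $3p(\xi)$ wedges attached to the stationary phase points and over the neighborhoods of $\pm1,\pm\omega,\pm\omega^2$ (which are empty by \eqref{R11}), gives the claim.

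The main obstacle is the correct pairing of the quadratic phase decay $e^{-tuv}$ with the $|s-\xi_k|^{-1/2}$ endpoint singularity of $\bar\partial R_{kj}$: the naive Fubini order fails (the $u$-integral leaves a non-integrable $v^{-1}$ weight). Polar coordinates about the stationary phase point are essential — they convert $uv$ into $\rho^2\sin 2\psi$, allowing the $\rho$-integral to be computed as a Gaussian integral producing the sharp $t^{-3/4}$ rate.
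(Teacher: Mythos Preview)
Your near-field ($|s-\xi_1|<R(\xi)$) argument is correct and is a genuinely different route from the paper's. For the singular factor $|s-\xi_1|^{-1/2}$ you pass to polar coordinates about $\xi_1$, converting $uv\mapsto\tfrac12\rho^2\sin2\psi$ and reducing to a scalar Gamma integral; the paper instead applies H\"older in $u$ with an exponent $2<p<4$, leaving a $v$-integral $\int v^{2/p-3/2}e^{-ctv^2}\,dv$. For the smooth factor $|p_{11}'|$ you integrate in $v$ first and split at $u=t^{-1/2}$, whereas the paper integrates $u$ first via Cauchy--Schwarz and then handles $\int v^{-1/2}e^{-ctv^2}\,dv$. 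Both routes give $t^{-3/4}$.

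There is, however, a genuine gap in your far-field step. You claim the contribution outside the disk $|s-\xi_1|<R(\xi)$ is ``exponentially small in $t$'', but this is false: in the unbounded wedge $\Omega_{11}$ a point with $\rho>R(\xi)$ can have $v=\rho\sin\psi$ arbitrarily small (as $\psi\to0$), so the linear bound $e^{-c_2(\xi)tv}$ from Corollary~\ref{theta2c} gives no uniform decay. Worse, having already replaced $|s-e^{i\pi/6}|^{-1}$ by a constant, your far-field singular integral $\int_0^\varphi\int_{R(\xi)}^\infty\rho^{1/2}e^{-c_2t\rho\sin\psi}\,d\rho\,d\psi$ actually diverges, since the inner integral is $\sim(t\sin\psi)^{-3/2}$, not integrable at $\psi=0$. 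The fix is to keep the kernel at infinity: for $\rho$ large one has $|s-e^{i\pi/6}|^{-1}\lesssim\rho^{-1}$, which brings the integrand down to $\rho^{-1/2}e^{-c_2t\rho\sin\psi}$; splitting the $\psi$-integral at $\psi_*\sim (tR(\xi))^{-1}$ then gives an $\mathcal{O}(t^{-1})$ far-field contribution, safely absorbed by $t^{-3/4}$. This is precisely how the paper handles its pieces $\hat I_{42}$, $\hat I_{43}$, using $|s-\xi_1|^{-1/2}|e^{i\pi/6}-s|^{-1}\lesssim u^{-3/2}$ to secure integrability at infinity. (A side remark: the reason $e^{i\pi/6}\notin\overline\Omega$ is the angular restriction $\varphi<\pi/6$ on the lenses, not the cutoff $\mathcal X$ near the $\varkappa_l$.)
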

\begin{proof}
	 The proof proceeds along the same steps as the proof of  Proposition  \ref{asyM3i}. 	{Lemma \ref{Cz1}} and (\ref{deM3}) imply that for large $t$,   $\parallel M^{(3)}\parallel_\infty \lesssim1$. So we just need to bound $\iint_{\mathbb{C}}\dfrac{|W^{(3)} (s)|}{s-e^{\frac{\pi i}{6}}}dm(s)$.  And we only give the details on $\Omega_{11}$, the integral on other region can be obtained in the same way.  Referring  to (\ref{dbarRj3}) and (\ref{dbarRj4}) in proposition \ref{proR1}, and note that  $|\mathcal{X}'(\text{Re}z)|=0$ in $\overline{\Omega_{11}}$, this integral 	can be divided into four parts
	\begin{align}
	&\frac{1}{\pi}\iint_{\Omega_{11}}\dfrac{M^{(3)}(s)W^{(3)} (s)}{s-e^{\frac{\pi i}{6}}}dm(s)\lesssim \frac{1}{\pi}\iint_{\Omega_{11}}\dfrac{R_{11}(s)e^{-2t\text{Im}\theta}}{s-e^{\frac{\pi i}{6}}}dm(s)\nonumber\\ =&\frac{1}{\pi}\int_{0}^{R(\xi)}\int_{\frac{v}{\tan\varphi}}^{R(\xi)}\dfrac{\bar{\partial}R_{11}(s)e^{-2t\text{Im}\theta}}{s-e^{\frac{\pi i}{6}}}dudv+\int_{0}^{R(\xi)}\int_{R(\xi)}^{+\infty}\dfrac{\bar{\partial}R_{11}(s)e^{-2t\text{Im}\theta}}{s-e^{\frac{\pi i}{6}}}dudv\nonumber\\ &+\int_{R(\xi)}^{+\infty}\int_{\frac{v}{\tan\varphi}}^{+\infty}\dfrac{\bar{\partial}R_{11}(s)e^{-2t\text{Im}\theta}}{s-e^{\frac{\pi i}{6}}}dudv\lesssim\hat{I}_3+\hat{I}_{41}+\hat{I}_{42}+\hat{I}_{43}.
	\end{align}
	Here, near $\xi_1$ we use (\ref{dbarRj3}) and away from $\xi_1$ we use (\ref{dbarRj4}):
	\begin{align}
	&\hat{I}_3=\iint_{\Omega_{11}}\dfrac{|p_{11}'(s)|e^{-2t\text{Im}\theta}}{|e^{\frac{\pi i}{6}}-s|}dm(s),\nonumber\\
	&\hat{I}_{41}=\int_{0}^{R(\xi)}\int_{\frac{v}{\tan\varphi}}^{R(\xi)}|s-\xi_1|^{-1/2}|e^{\frac{\pi i}{6}}-s|^{-1}e^{-2t\text{Im}\theta}dudv,\nonumber\\
	&\hat{I}_{42}=\int_{0}^{R(\xi)}\int_{R(\xi)}^{+\infty}|s-\xi_1|^{-1/2}|e^{\frac{\pi i}{6}}-s|^{-1}e^{-2t\text{Im}\theta}dudv,\nonumber\\
	&\hat{I}_{43}=\int_{R(\xi)}^{+\infty}\int_{\frac{v}{\tan\varphi}}^{+\infty}|s-\xi_1|^{-1/2}|e^{\frac{\pi i}{6}}-s|^{-1}e^{-2t\text{Im}\theta}dudv.\nonumber	
	\end{align}
	For $\hat{I}_3$,  $|e^{\frac{\pi i}{6}}-s|^{-1}$ has nonzero  maximum. Together with  lemma \ref{theta2} and Corollary \ref{theta2c}, it has:
	\begin{align}
		\hat{I}_3\leq&\int_{0}^{R(\xi)}\int_{v}^{R(\xi)}|p_{11}'(|s|)|e^{-c^{(1)}(\xi)tvu}dudv+\int_{0}^{R(\xi)}\int_{R(\xi)}^{+\infty}|p_{11}'(|s|)|e^{-c^{(2)}(\xi)tv}dudv\nonumber\\
		&+\int_{R(\xi)}^{+\infty}\int_{v}^{+\infty}|p_{11}'(|s|)|e^{-c^{(2)}(\xi)tv}dudv.\label{56}
	\end{align}
For the first part in the right hand of above inequality
	\begin{align}
	&\int_{0}^{R(\xi)}\int_{v}^{R(\xi)}|p_{11}'(|s|)|e^{-c^{(1)}(\xi)tvu}dudv\leq \int_{0}^{R(\xi)} \parallel p_{11}'(s) \parallel_2\left( \int_v ^{+\infty}e^{-2c^{(1)}(\xi)tvu}du\right) ^{1/2}dv\nonumber\\
	&\lesssim t^{-1/2}\int_{0}^{R(\xi)}v ^{-1/2}e^{-2c^{(1)}(\xi)tv^2}dv=\mathcal{O} (t^{-3/4}). \nonumber
	\end{align}
	And the second term in (\ref{56}) has:
	\begin{align}
		&\int_{0}^{R(\xi)}\int_{R(\xi)}^{+\infty}|p_{11}'(|s|)|e^{-c^{(2)}(\xi)tv}dudv\leq\int_{0}^{R(\xi)} \parallel p_{11}'(s) \parallel_1e^{-c^{(2)}(\xi)tv} dv\nonumber\\
		&\lesssim \int_{0}^{R(\xi)}e^{-c^{(1)}(\xi)tv}dv=\mathcal{O} (t^{-3/2}).
	\end{align}
Similarly, the last integral in (\ref{56}) has
\begin{align}
	&\int_{R(\xi)}^{+\infty}\int_{v}^{+\infty}|p_{11}'(|s|)|e^{-c^{(2)}(\xi)tv}dudv\leq	\int_{R(\xi)}^{+\infty} \parallel p_{11}'(s) \parallel_1e^{-c^{(2)}(\xi)tv} dv\nonumber\\
	&\lesssim 	\int_{R(\xi)}^{+\infty}e^{-c^{(1)}(\xi)tv}dv=\mathcal{O} (t^{-3/2}e^{-c^{(1)}(\xi)tR(\xi)}).
\end{align}
	So $\hat{I}_3 \lesssim t^{-3/4}$. And for $\hat{I}_{41}$,
	lemma \ref{theta2} and Corollary \ref{theta2c} also give:
	\begin{align}
		\hat{I}_{41}\leq&\int_{0}^{R(\xi)}\int_{v}^{R(\xi)}|s-\xi_1|^{-1/2}e^{-c^{(1)}(\xi)tvu}dudv.\label{60}
	\end{align}	
	 Similarly we take $2<p<4$, and $1/p+1/q=1$, then the first integral in above  inequality has:
	\begin{align}
	&\int_{0}^{R(\xi)}\int_{v}^{R(\xi)}|s-\xi_1|^{-1/2}e^{-c^{(1)}(\xi)tvu}dudv\nonumber\\
	&\leq  \int_{0}^{R(\xi)} \parallel |s-\xi_1|^{-1/2} \parallel_p\left( \int_v ^{+\infty}e^{-qc^{(1)}(\xi)tvu}du\right) ^{1/q}dv\nonumber\\
	&= t^{-1/q} \int_{0}^{+\infty} v^{2/p-3/2}e^{-qc^{(1)}(\xi)tv^2}dv\lesssim t^{-3/4}.
	\end{align}
	 To bound $\hat{I}_{42}$, we use the fact that $|s-\xi_1|^{-1/2}|e^{\frac{\pi i}{6}}-s|^{-1}\lesssim u^{-3/2}$ in it integral domain:
	\begin{align}
		\hat{I}_{42}&\lesssim\int_{0}^{R(\xi)}\int_{R(\xi)}^{+\infty}u^{-3/2}e^{-c^{(2)}(\xi)tv}dudv\lesssim\int_{0}^{R(\xi)}e^{-c^{(2)}(\xi)tv}dv\lesssim t^{-1}.
	\end{align}
	Similarly,
	\begin{align}
		&\hat{I}_{43}\leq\int_{R(\xi)}^{+\infty}\int_{v}^{+\infty}u^{-3/2}e^{-c^{(2)}(\xi)tv}dudv\nonumber\\
		&\lesssim\int_{R(\xi)}^{+\infty}v^{-3/2}e^{-c^{(2)}(\xi)tv}dv\lesssim\int_{R(\xi)}^{+\infty}e^{-c^{(2)}(\xi)tv}dv\lesssim t^{-1}e^{-c^{(2)}(\xi)tR(\xi)}.\nonumber
	\end{align}
	The result is confirmed.
\end{proof}

\section{Long-time asymptotic  behavior  }\label{sec9}

\quad Now we begin to construct the long time asymptotics of the Novikov equation (\ref{Novikov}).
 Inverting the sequence of transformations (\ref{transm1}), (\ref{transm2}), (\ref{transm3}) and (\ref{transMr}), we have
\begin{align}
M(z)=&M^{(3)}(z)E(z;\xi)M^{(r)}(z)R^{(2)}(z)^{-1}T(z)^{-\sigma_3}.
\end{align}
To  reconstruct the solution $u(x,t)$ by using (\ref{recons u}),   we take $z=e^{\frac{\pi i}{6}}$. In this case,  $ R^{(2)}(z)=I$. Further using   Propositions \ref{proT}  and  \ref{asyM3i},  we can obtain that
\begin{align}
	M(e^{\frac{\pi i}{6}})=&M^{(3)}(e^{\frac{\pi i}{6}}) E(e^{\frac{\pi i}{6}};\xi)M^{(r)}_\lozenge(e^{\frac{\pi i}{6}})
	T(e^{\frac{\pi i}{6}})^{-1}.
\end{align}

\noindent$\blacktriangleright$ \ For $\xi\in(-\infty,-1/8)\cup(1,+\infty)$, we have
\begin{align}
	M(e^{\frac{\pi i}{6}})=& M^{(r)}_\lozenge(e^{\frac{\pi i}{6}})T(e^{\frac{\pi i}{6}})^{-1}
	+\mathcal{O}(t^{-1+\rho}),
\end{align}
Substitute above estimats into  (\ref{recons u}) and (\ref{recons x}), and obtain
\begin{align}
	&u(x,t)=u(y(x,t),t)\nonumber\\
	=&\frac{1}{2}\tilde{m}_1(y,t)\left(\frac{M_{33}(e^{\frac{i\pi}{6}};y,t)}{M_{11}(e^{\frac{i\pi}{6}};y,t)} \right)^{1/2}+ \frac{1}{2}\tilde{m}_3(y,t)\left(\frac{M_{33}(e^{\frac{i\pi}{6}};y,t)}{M_{11}(e^{\frac{i\pi}{6}};y,t)} \right)^{-1/2}-1
	\nonumber\\
	=&u^r(y(x,t);\tilde{\mathcal{D}}_\lozenge)\left( T_1(e^{\frac{\pi i}{6}})T_3(e^{\frac{\pi i}{6}})\right) ^{-1/2}+\left( T_1(e^{\frac{\pi i}{6}})T_3(e^{\frac{\pi i}{6}})\right) ^{-1/2}-1+\mathcal{O}(t^{-1+\rho}), \nonumber
\end{align}
where
\begin{align}
	&x(y,t)=y+c_+(x,t)=y+\frac{1}{2} \ln\frac{M_{33}(e^{\frac{i\pi}{6}};y,t)}{M_{11}(e^{\frac{i\pi}{6}};y,t)}\nonumber\\
	=&y+\frac{1}{2} \ln\left(\frac{[M^{(r)}_\lozenge]_{33}(e^{\frac{i\pi}{6}};y,t)}{[M^{(r)}_\lozenge]_{11}(e^{\frac{i\pi}{6}};y,t)} \right)+\frac{1}{2} \ln\left(\frac{T_1(e^{\frac{\pi i}{6}})}{T_3(e^{\frac{\pi i}{6}})}\right)+\mathcal{O}(t^{-1+\rho}) \nonumber\\
	=& c_+^r(x,t;\tilde{\mathcal{D}}_\lozenge)+\frac{1}{2} \ln\left[ T_1(e^{\frac{\pi i}{6}})T_3(e^{\frac{\pi i}{6}})\right]+\mathcal{O}(t^{-1+\rho}) ,\label{resultx}
\end{align}
where $u^r(x,t;\tilde{\mathcal{D}}_\lozenge)$, $[\tilde{m}^{(r)}_\lozenge]_j$ and $c_+^r(x,t;\tilde{\mathcal{D}}_\lozenge)$ shown in Corollary \ref{sol}.

\noindent$\blacktriangleright$ \ For  $\xi\in(-1/8,1)$, Propositions \ref{proT},  \ref{asyE},  \ref{asyM3i1} and Corollary \ref{sol} give that
\begin{align}
M(z)=& M^{(3)}(e^{\frac{\pi i}{6}})E(e^{\frac{\pi i}{6}})  M^{(r)}_\lozenge(e^{\frac{\pi i}{6}}) T(e^{\frac{\pi i}{6}})^{-1}\nonumber\\
=&\left(I+H^{(0)}t^{-1/2} \right) M^{(r)}_\lozenge  (e^{\frac{\pi i}{6}})
T(e^{\frac{\pi i}{6}})^{-\sigma_3}+\mathcal{O}(t^{-3/4}).
\end{align}
Similarly, substituting above estimation into  (\ref{recons u}) and (\ref{recons x}), then we arrive at:
\begin{align}
	&u(x,t)=u(y(x,t),t)\nonumber\\
	=&\frac{1}{2}\tilde{m}_1(y,t)\left(\frac{M_{33}(e^{\frac{i\pi}{6}};y,t)}{M_{11}(e^{\frac{i\pi}{6}};y,t)} \right)^{1/2}+ \frac{1}{2}\tilde{m}_3(y,t)\left(\frac{M_{33}(e^{\frac{i\pi}{6}};y,t)}{M_{11}(e^{\frac{i\pi}{6}};y,t)} \right)^{-1/2}-1
	\nonumber\\
	=&u^r(y(x,t);\tilde{\mathcal{D}}_\lozenge)\left[ T_1(e^{\frac{\pi i}{6}})T_3(e^{\frac{\pi i}{6}})\right]^{-1/2}+\left[ T_1(e^{\frac{\pi i}{6}})T_3(e^{\frac{\pi i}{6}})\right] ^{-1/2}-1\nonumber\\
	&+f_1(y(x,t),t)t^{-1/2}+\mathcal{O}(t^{-3/4}),\label{resultu1}
\end{align}
and
\begin{align}
	&x(y,t)=y+c_+(x,t) = y+\frac{1}{2} \ln\frac{M_{33}(e^{\frac{i\pi}{6}};y,t)}{M_{11}(e^{\frac{i\pi}{6}};y,t)}\nonumber\\
	=&y+\frac{1}{2} \ln\left[ {T_1(e^{\frac{\pi i}{6}})} T_3(e^{\frac{\pi i}{6}})^{-1}\right] \nonumber\\
	&+\frac{1}{2} \ln \frac{[M^{(r)}_\lozenge]_{33}(e^{\frac{i\pi}{6}};y,t)+\sum_{n=1}^3H^{(0)}_{3n}[M^{(r)}_\lozenge]_{n3}(e^{\frac{i\pi}{6}};y,t)t^{-1/2}
+\mathcal{O}(t^{-3/4})}{[M^{(r)}_\lozenge]_{11}(e^{\frac{i\pi}{6}};y,t)+\sum_{n=1}^3H^{(0)}_{1n}[M^{(r)}_\lozenge]_{n1}(e^{\frac{i\pi}{6}};y,t)t^{-1/2}+\mathcal{O}(t^{-3/4})}  \nonumber\\
	=& c_+^r(x,t;\tilde{\mathcal{D}}_\lozenge)+\frac{1}{2} \ln\left[ {T_1(e^{\frac{\pi i}{6}})} T_3(e^{\frac{\pi i}{6}})^{-1}\right]+f_2(y(x,t),t)t^{-1/2}+\mathcal{O}(t^{-3/4}) .\label{resultx1}
\end{align}

Summering obtained results above,  we achieve our  main results

\begin{theorem}\label{last}   Let $u(x,t)$ be the solution for  the Novikov equation  (\ref{Novikov})  with generic initial data   $u_0(x)\in \mathcal{S}(\mathbb{R})$ and
 its associated with  scatting data $\left\lbrace  r(z),\left\lbrace \zeta_n,c_n\right\rbrace^{6N_0}_{n=1}\right\rbrace$. Let $\xi=\frac{y}{t}$  and
  $q^r_\lozenge(x,t)$ be the $\mathcal{N}(\lozenge)$-soliton solution corresponding to   scattering data
$\tilde{\mathcal{D}_\lozenge}=\left\lbrace  0,\left\lbrace \zeta_n,C_nT^2(\zeta_n)\right\rbrace_{n\in\lozenge}\right\rbrace$ shown in Corollary \ref{sol}.  Then there exists a large constant $T_1=T_1(\xi)$ such that for all $T_1<t\to\infty$, we have
\begin{itemize}
\item[{\rm (i)}]  In space-time   regions   $  \xi <-1/8 $ and $   \xi >1 $,
\begin{align}
	u(x,t)=& u^r(y(x,t);\tilde{\mathcal{D}}_\lozenge)\left[ T_1(e^{\frac{\pi i}{6}})T_3(e^{\frac{\pi i}{6}})\right] ^{-1/2}\nonumber\\
	&+\left[ T_1(e^{\frac{\pi i}{6}})T_3(e^{\frac{\pi i}{6}})\right] ^{-1/2}-1+\mathcal{O}(t^{-1+\rho}), \ 0<\rho< {1}/{4},
\end{align}
where
\begin{align}
	x(y,t)= c_+^r(x,t;\tilde{\mathcal{D}}_\lozenge)+\frac{1}{2} \ln \left[ {T_1(e^{\frac{\pi i}{6}})} T_3(e^{\frac{\pi i}{6}})^{-1}\right]  +\mathcal{O}(t^{-1+\rho}),\nonumber
\end{align}
and   $u^r(x,t;\tilde{\mathcal{D}}_\lozenge)$, $c_+^r(x,t;\tilde{\mathcal{D}}_\lozenge)$ are given by Prop. \ref{proT} and Corol. \ref{sol}  respectively.

\item[{\rm (ii)}] In space-time   region  $  -1/8 <\xi<  1 $,
\begin{align}
u(x,t)=& u^r(y(x,t);\tilde{\mathcal{D}}_\lozenge)\left[ T_1(e^{\frac{\pi i}{6}})T_3(e^{\frac{\pi i}{6}})\right] ^{-1/2}\nonumber\\
&+\left[ T_1(e^{\frac{\pi i}{6}})T_3(e^{\frac{\pi i}{6}})\right] ^{-1/2}-1+f_1t^{-1/2}+\mathcal{O}(t^{-3/4}),
\end{align}
where
\begin{align}
x(y,t)&=  c_+^r(x,t;\tilde{\mathcal{D}}_\lozenge)+\frac{1}{2} \ln \left[ {T_1(e^{\frac{\pi i}{6}})} T_3(e^{\frac{\pi i}{6}})^{-1}\right]+f_2t^{-1/2}+\mathcal{O}(t^{-3/4}), \nonumber
\end{align}
and  $u^r(x,t;\tilde{\mathcal{D}}_\lozenge)$, $c_+^r(x,t;\tilde{\mathcal{D}}_\lozenge)$  are given by Prop. \ref{proT}, Corol. \ref{sol},  respectively;
and $f_{1}$ and $f_{2}$   are given by (\ref{f11}) and (\ref{f12}),  respectively.
\end{itemize}
\end{theorem}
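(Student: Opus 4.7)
The plan is to invert the entire chain of transformations and then plug the resulting representation of $M(z)$ at the distinguished point $z=e^{i\pi/6}$ into the reconstruction formulas (\ref{recons u})--(\ref{recons x}). Unwinding (\ref{transm1}), (\ref{transm2}), (\ref{transm3}) and (\ref{transMr}) gives the identity
\begin{equation*}
M(z)=M^{(3)}(z)\,E(z,\xi)\,M^{(r)}(z)\,R^{(2)}(z)^{-1}\,T(z)^{-1},
\end{equation*}
and one checks from the construction of $R^{(2)}$ (Propositions~\ref{proR}, \ref{proR1}) that the cross $\Sigma_{\ast}$ avoids the base point, so $R^{(2)}(e^{i\pi/6})=I$. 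After further replacing $M^{(r)}$ by the pure soliton piece $M^{(r)}_\lozenge$ via (\ref{transMr}) and absorbing the exponentially small correction (\ref{asymsol}) into the error, we arrive at
\begin{equation*}
M(e^{i\pi/6})=M^{(3)}(e^{i\pi/6})\,E(e^{i\pi/6},\xi)\,M^{(r)}_\lozenge(e^{i\pi/6})\,T(e^{i\pi/6})^{-1}+\mathcal{O}(e^{-\min\{\rho_0,\delta_0\}t}).
\end{equation*}

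The next step is to quantify each of the three non-trivial factors above using the Propositions proved in Sections~\ref{sec7}--\ref{sec8}. In the solitonless regime $\xi\in(-\infty,-1/8)\cup(1,+\infty)$, there are no stationary phase points, so $U(\xi)=\emptyset$ and $E(z,\xi)$ receives contributions only from $\partial\mathbb{B}_j$ and from the circles around the trapped poles; by (\ref{VE-I})--(\ref{VE}) together with (\ref{asyMB}) these give $E(e^{i\pi/6},\xi)=I+\mathcal{O}(t^{-1+\rho})$, while Proposition~\ref{asyM3i} yields $M^{(3)}(e^{i\pi/6})=I+\mathcal{O}(t^{-1+\rho})$. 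In the oscillatory regime $\xi\in(-1/8,1)$, Proposition~\ref{asyE} gives the sharper expansion $E(e^{i\pi/6},\xi)=I+t^{-1/2}H^{(0)}+\mathcal{O}(t^{-1+\rho})$ with $H^{(0)}$ as in (\ref{H0}), and Proposition~\ref{asyM3i1} gives $M^{(3)}(e^{i\pi/6})=I+\mathcal{O}(t^{-3/4})$. Multiplying these out and using the boundedness of $M^{(r)}_\lozenge(e^{i\pi/6})$ (Proposition~\ref{unim}) yields a two-term asymptotic formula for each matrix entry $M_{jk}(e^{i\pi/6};y,t)$.

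I would then substitute these expansions into the reconstruction formulas. For $x(y,t)$ one uses $\ln\bigl[(M^{(r)}_\lozenge)_{33}/(M^{(r)}_\lozenge)_{11}\bigr]=2\,c_+^r(x,t;\tilde{\mathcal D}_\lozenge)$ together with the symmetry $T_1(z)T_2(z)T_3(z)=1$ (item (b) of Proposition~\ref{proT}) to separate the soliton shift from the diagonal dressing contribution $\tfrac12\ln[T_1(e^{i\pi/6})/T_3(e^{i\pi/6})]$; the $t^{-1/2}$ correction in the oscillatory regime then collapses to the scalar $f_2$ defined in (\ref{f12}) upon a first-order logarithmic expansion. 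For $u(x,t)$ the same mechanism produces the leading soliton term $u^r(y(x,t);\tilde{\mathcal D}_\lozenge)[T_1T_3]^{-1/2}$ plus the constant $[T_1T_3]^{-1/2}-1$ coming from $\tilde m_1+\tilde m_3$ reduced modulo $\tilde m_2$, and in the oscillatory case the cross-terms between $H^{(0)}$ and $M^{(r)}_\lozenge$ combine into the coefficient $f_1$ of (\ref{f11}). Collecting the errors in each case yields the two asserted formulas.

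The principal technical obstacle is not the algebra of this final substitution but ensuring that the various error estimates are truly uniform in $\xi$ on compact subsets of the stated regions and that nothing explodes near the transition values $\xi=-1/8,0,1$; the constants $c(\xi),\,K_q,\,K_q'$ in Lemmas and Propositions of Sections~\ref{sec3}--\ref{sec8} all degenerate there, so the theorem is only stated in the open cones. A secondary subtlety is that the expansion of $\ln(M_{33}/M_{11})$ used to derive (\ref{resultx1}) requires a uniform positive lower bound on $|(M^{(r)}_\lozenge)_{11}(e^{i\pi/6})|$ and $|(M^{(r)}_\lozenge)_{33}(e^{i\pi/6})|$, which in turn follows from the explicit $\mathcal{N}(\lozenge)$-soliton formulas provided by Proposition~\ref{unim} and the sign condition $u_0-u_{0,xx}+1>0$ inherited from (\ref{Novikov}). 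Once these uniformity points are in place, the asymptotics (i) and (ii) follow line by line.
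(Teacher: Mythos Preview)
Your proposal is correct and follows essentially the same route as the paper: invert the chain of transformations to write $M(e^{i\pi/6})$ as a product of $M^{(3)}$, $E$, $M^{(r)}_\lozenge$ and $T^{-1}$ (with $R^{(2)}(e^{i\pi/6})=I$), insert the estimates of Propositions~\ref{asyE}, \ref{asyM3i}, \ref{asyM3i1} in the two regimes, and substitute into the reconstruction formulas (\ref{recons u})--(\ref{recons x}). One small inaccuracy: the identity $T_1T_2T_3\equiv 1$ you invoke is immediate from the definitions (\ref{T}) rather than from item (b) of Proposition~\ref{proT}, which records a different symmetry; this does not affect the argument.
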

Our results also show that the poles on curve soliton solutions of Novikov  equation has dominant contribution to the solution as $t\to\infty$.\vspace{6mm}

\noindent\textbf{Acknowledgements}

This work is supported by  the National Science
Foundation of China (Grant No. 11671095,  51879045).

\hspace*{\parindent}
\\

\end{document}